\documentclass[reprint, amsmath,amssymb, aps]{revtex4-2}

\usepackage{amsmath,amssymb,amsthm}
\usepackage{graphicx}
\usepackage{dcolumn}
\usepackage{bm}
\usepackage{hyperref}
\usepackage{xcolor}
\usepackage{bbm}
\usepackage{braket}
\usepackage{algorithm}
\usepackage{algpseudocode}
\usepackage{booktabs}
\usepackage{multirow}
\usepackage{tikz}
\usetikzlibrary{quantikz}
\usepackage{tabularx}

\newtheorem{theorem}{Theorem}
\newtheorem{lemma}[theorem]{Lemma}
\newtheorem{corollary}[theorem]{Corollary}
\newtheorem{proposition}[theorem]{Proposition}
\newtheorem{definition}[theorem]{Definition}
\newtheorem{conjecture}[theorem]{Conjecture}
\newtheorem{remark}[theorem]{Remark}
\newtheorem{fact}[theorem]{Fact}

\newcommand{\R}{\mathbb{R}}

\newcommand{\calL}{\mathcal{L}}
\newcommand{\calG}{\mathcal{G}}
\newcommand{\calF}{\mathcal{F}}
\newcommand{\calN}{\mathcal{N}}
\newcommand{\calS}{\mathcal{S}}
\newcommand{\btheta}{\boldsymbol{\theta}}
\newcommand{\bphi}{\boldsymbol{\phi}}

\newcommand{\bvarphi}{\boldsymbol{\varphi}}

\newcommand{\bs}{\mathbf{s}}

\newcommand{\bx}{\mathbf{x}}

\newcommand{\E}{\mathbb{E}}
\newcommand{\Var}{\mathrm{Var}}
\newcommand{\tr}{\mathrm{tr}}
\newcommand{\poly}{\mathrm{poly}}

\begin{document}

\title{Scalable Quantum Machine Learning:\\ 
Trainability, Expressivity and Efficiency}

\author{Iordanis Kerenidis}
\email{iordanis@quantumsignals.ai}
\affiliation{IRIF, CNRS, Universit\'e Paris Cit\'e, France}
\affiliation{Quantum Signals SAS, Paris, France}

\date{\today}

\begin{abstract}
Designing scalable parameterized quantum circuits for machine learning has
remained elusive due to three fundamental obstacles: barren plateaus that
prevent gradient-based training, the absence of provable guarantees that the
learned function class is classically hard, and prohibitive circuit
evaluations per gradient step. We propose a solution based on the
\emph{unitary brick-wall} circuit: a $k$-particle fermionic architecture for
nearest-neighbor hardware, combining Reconfigurable Beam Splitter gates with
interleaved single-qubit phase gates and a non-Gaussian magic-state encoding,
where the particle number $k$ is a tunable resource dial that trades
classical simulation hardness against training cost.

\noindent
\textbf{Trainable.} The brick-wall has dynamical Lie algebra $\mathfrak{u}(n)$
and is surjective onto $U(n)$ via Givens rotations, enabling Haar
initialization. Two-body correlator readouts achieve gradient variance
$\Theta(k^3/n^5)$, which remains polynomial in $n$ throughout the regime
$n - 2k = \Omega(n)$.

\noindent
\textbf{Expressive.} The classical hardness of the pipeline is controlled by
the particle number $k$: best-known classical algorithms for sampling run in time
$2^{\Theta(k)}\,\mathrm{poly}(n)$~\cite{reardonsmith2024classical},
worst-case \#P-hardness holds from $k = n^{\epsilon}$, and the full
average-case machinery of Fermion Sampling~\cite{oszmaniec2022fermion}
applies at $k = \Theta(n)$. At our operating point $k = 60$,
best-known classical simulation exceeds $10^{24}$ operations at every $n$.

\noindent
\textbf{Efficient.} A multi-layer parallel parameter-shift rule computes all
$O(n^2)$ gradients from $4kn$ distinct circuit evaluations per gradient step --- a
factor $n/k$ reduction over the $4n^2$ evaluations required by the
standard parameter-shift rule, growing linearly with $n$ at fixed $k$.

\noindent
The \emph{unitary butterfly} variant targets all-to-all hardware, with depth
$2\log n$ and $\tfrac{3}{2}n\log n$ parameters. It achieves similar expressivity and
hardness guarantees to the brick-wall, and a gradient step needs $4k\log n$
circuit evaluations, the same factor-$n/k$ reduction over the
$4n\log n$ naive PSR cost. Its trainability is established at two levels: the
absence of exponential barren plateaus is \emph{unconditional}, whereas the
sharp $\Theta(k^3/n^5)$ rate holds under a two-particle
approximate-$2$-design conjecture.
\end{abstract}

\keywords{quantum machine learning, variational quantum circuits, parameter shift rule, fermionic linear optics, barren plateaus, quantum advantage}

\maketitle

\section{\label{sec:intro} Introduction}

Quantum machine learning has developed along two largely distinct trajectories---statistical learning theory and kernel methods on one hand, and deep learning and neural networks on the other---corresponding to two different visions of how quantum computation can augment artificial intelligence.

The first draws on quantum algorithms for linear systems, matrix operations, and singular value transformations~\cite{harrow2009quantum,kerenidis2017quantum,gilyen2019quantum}, covering a broad range of tasks including among others nearest-neighbor search~\cite{lloyd2013quantum}, principal component analysis~\cite{lloyd2014quantum}, clustering~\cite{kerenidis2019qmeans}, convolutional networks~\cite{kerenidis2020cnn}, and recommendation systems~\cite{kerenidis2017quantum}. The hardware requirements of these methods are fault-tolerant quantum computers with QRAM-encoded data and circuits of polynomial depth, and they achieve important theoretical speedups over the classical algorithms that are used in practice and operate on explicit matrix representations. The precise extent of their computational advantage was significantly revised by the dequantization work of Tang and collaborators~\cite{tang2019recommendation,tang2021pca,chia2022svt}, which showed that when classical algorithms are given analogous $\ell^2$-norm sampling access to the input data, the quantum speedup is polynomial for a broad class of linear algebraic tasks. The conclusion is doubly nuanced. First, the dequantization results do not eliminate quantum advantage: they show that under $\ell^2$-norm sampling access, the quantum speedup reduces from exponential to polynomial, and in some cases large polynomial. Second, the dequantized classical algorithms, while polynomially competitive with quantum under sampling access assumptions, appear in practice to be slower than the fastest classical algorithms that operate directly on explicit matrix representations---suggesting that the sampling access model, though theoretically interesting, may not correspond to practical classical computation. The net result is that quantum linear algebra may still offer genuine advantages over practical classical methods, but establishing this rigorously awaits large-scale, fast, fault-tolerant quantum hardware and efficient QRAM implementations. In fact, very recently, Zhao et al.~\cite{zhao2026exponential} proved unconditional exponential quantum advantage in classification and dimension reduction via quantum oracle sketching, establishing machine learning on classical data as a concrete domain of quantum advantage in the linear algebra paradigm. This paradigm represents the quantum analogue of traditional, kernel-based machine learning.

The second area, which we call \emph{quantum deep learning} (QDL), draws on parameterized quantum circuits (PQCs) trained by gradient descent~\cite{mitarai2018quantum,schuld2019quantum,benedetti2019parameterized,perez2020data}, with the circuit playing the role of a neural network layer or feature extractor~\cite{biamonte2017quantum,cerezo2021variational,farhi2018classification}. These methods are aligned with the deep learning revolution that has vastly enlarged the applicability and impact of machine learning: instead of hand-crafted features or explicit kernel methods, the circuit learns representations directly from data. This approach has been explored in supervised learning~\cite{havlicek2019supervised,abbas2021power,landman2022quantum} and as policy and value function approximators in reinforcement learning agents~\cite{dunjko2016quantum,jerbi2021parametrized,cherrat2023hedging}. Their hardware requirements are far more modest than the first paradigm---shallow circuits of linear or even logarithmic depth can suffice, the optimization loop runs classically, and the quantum device is called only for circuit evaluation---making QDL a natural target for near-term and early fault-tolerant quantum hardware~\cite{preskill2018nisq}. Analogously to classical deep learning, where the value of a given architecture or training method is established empirically rather than by theoretical proofs of superiority, the value of QDL will ultimately be established on real hardware and real tasks: the key question is whether quantum representations can access function classes that lie outside the reach of classical models of comparable parameter count, enriching model capacity in ways that matter in practice. This is the paradigm this paper addresses.

\subsection{\label{sec:stateofart} The State of Quantum Deep Learning}

The past decade has produced a rich body of work on quantum deep learning, advancing both the design of trainable architectures and our theoretical understanding of their limitations.

\paragraph{Quantum neural network architectures.}
Early proposals for quantum neural networks sought direct quantum analogues of classical layers: quantum perceptrons, circuits inspired by classical convolutional architectures, and variational models~\cite{farhi2018classification,schuld2014quest,benedetti2019parameterized}. A key insight that emerged was that PQCs can be interpreted as nonlinear feature maps into quantum Hilbert space~\cite{schuld2019quantum,havlicek2019supervised}, connecting quantum neural networks to the classical theory of kernel methods and opening a systematic framework for understanding their expressive power.

A central challenge identified early was the issue of \emph{barren plateaus}: for generic, randomly initialized circuits, gradients vanish exponentially with system size, rendering training infeasible~\cite{mcclean2018barren}, with the severity depending on the locality of the cost function~\cite{cerezo2021cost}. This motivated a sustained search for architectures with provable trainability. Quantum convolutional neural networks (QCNNs)~\cite{cong2019qcnn}, which interleave convolutional and pooling layers in a hierarchical structure, were shown to be free of exponential barren plateaus~\cite{pesah2021absence}. Hierarchical tensor-network classifiers~\cite{grant2018hierarchical} and equivariant quantum neural networks~\cite{meyer2023exploiting,larocca2022group} extended these ideas to symmetry-constrained architectures. The unifying theoretical framework---that gradient variance is controlled by the dimension of the dynamical Lie algebra (DLA)---was established by Fontana et al.~\cite{fontana2024adjoint} and Larocca et al.~\cite{larocca2021diagnosing}, providing a precise, quantitative account of why structured circuits train. A troubling corollary, however, is that circuits with small DLA---which avoid barren plateaus---can often be efficiently classically simulated~\cite{cerezo2025does}. This has been confirmed concretely for the QCNNs of~\cite{cong2019qcnn}: Bermejo et al.~\cite{bermejo2024qcnn} showed that they are effectively classically simulable because, when randomly initialized, they can only access information encoded in low-bodyness observables, and their prior empirical success is explained by benchmarking on datasets that are ``locally easy'' in precisely this sense. Similar simulability results hold for certain permutation-equivariant architectures~\cite{anschuetz2023symmetric}, though the general equivariant case remains only partially understood. These results sharpen the fundamental tension between trainability and genuine quantum advantage: the same structural constraints that prevent barren plateaus tend to limit the circuit to a classically accessible function class.

\paragraph{Hamming-weight-preserving quantum neural networks.}
A parallel line of work introduced \emph{orthogonal quantum neural networks} based on Hamming-weight-preserving fermionic gates, which implement orthogonal matrix multiplications as natural quantum operations~\cite{kerenidis2022quantum,landman2022quantum}. Depending on the hardware connectivity, different topologies were proposed: brick-wall and pyramid circuits for nearest-neighbor connectivity with linear depth and quadratic number of parameters, and butterfly circuits arranged in the pattern of the Cooley-Tukey FFT for all-to-all connectivity, operating in $O(\log n)$ depth with $O(n\log n)$ parameters~\cite{kerenidis2022quantum}. These circuits appeared in subsequent work spanning quantum vision transformers~\cite{cherrat2024quantum}, quantum deep hedging~\cite{cherrat2023hedging}, financial forecasting~\cite{thakkar2024forecasting}, and clinical data imputation~\cite{kazdaghli2023data}, and were tested on real quantum hardware for up to 32 qubits~\cite{johri2021nearest,mathur2026scalable}. The underlying subspace structure has also been studied as a primitive in its own right~\cite{kerenidis2022subspace}. It is worth noting that in the special case of Hamming weight $k=1$, these circuits are equivalent to linear optical networks and are efficiently classically simulable~\cite{jozsa2008matchgates}. Far from being a limitation, this simulability was productive: it revealed that the RBS gate provides a natural and complete parametrization of the orthogonal group via Givens rotations, yielding a fully classical algorithm for training orthogonal neural networks with exact orthogonality preservation and $O(1)$ gradient cost per parameter---a combination not achieved by prior classical methods based on Cayley transforms, Householder reflections, or Stiefel manifold optimization~\cite{kerenidis2022quantum}.

The interesting regime for quantum advantage requires larger values of Hamming weight $k$. The RBS circuit itself is always a passive FLO transformation and remains efficiently classically simulable regardless of $k$~\cite{jozsa2008matchgates}. Classical hardness arises not from the circuit but from the \emph{input state}: when the $k$-particle input is non-Gaussian, it can be provably hard to simulate classically when combined with a passive FLO circuit~\cite{oszmaniec2022fermion}, comparable to Boson Sampling hardness. The theoretical properties of these circuits were subsequently analyzed from two complementary angles. Monbroussou et al.~\cite{monbroussou2025trainability} showed that the gradient variance of the Hamming-weight preserving circuits (DLA $\mathfrak{so}(n)$) scales as $\Theta(1/\binom{n}{k})$. For constant $k$ this is polynomial in $n$, but for values of $k$ that grow with $n$ (for example $k=n/2$) this variance is exponentially small, constituting an actual barren plateau. Independently, the adjoint representation framework of Fontana et al.~\cite{fontana2024adjoint} established that for a Lie Algebra Supported Ansatz (LASA)---where the measurement observable $iM$ lies in the DLA---gradient variance scales as $\Theta(1/\dim(\mathfrak{g}))$ for any particle number $k$. For DLA $\mathfrak{u}(n)$ this gives $\Theta(1/n^2)$, a polynomial bound independent of $k$. Note that the occupation-number observable $n_j = c_j^\dagger c_j$ (measuring whether mode $j$ is occupied) is the natural readout of a computational basis measurement; $i n_j$ lies in $\mathfrak{u}(n)$ but \emph{not} in $\mathfrak{so}(n)$. This is the LASA condition for generic input states. For our specific magic-state encoding, however, every mode of the spread input carries equal occupation $k/n$ (Section~\ref{sec:encoding}), and losses built from statistics of fixed body number are classically computable via $r$-RDM propagation (Proposition~\ref{prop:onebody_const}); the trainable readout of interest is the two-body correlator $n_i n_j$, for which the relevant gradient variance is $\Theta(k^3/n^5)$ (Section~\ref{sec:bp}).

\paragraph{The classical-simulation boundary for non-Gaussian fermionic inputs.}
The complexity of classically simulating non-Gaussian fermionic inputs evolved under passive FLO has been refined considerably in recent work, and the resulting picture directly shapes our architectural choices. Sampling-based hardness (multiplicative precision on individual probabilities) for \emph{fully-packed} magic state inputs, i.e.\ at extensive particle number $k = \Theta(n)$, follows from the Fermion Sampling argument of Oszmaniec et al.~\cite{oszmaniec2022fermion}; whether hardness survives with fewer magic blocks was posed there as an open problem, a question that directly shapes the choice of $k$ in our construction (Section~\ref{sec:k_choice}). Dias and K\"onig~\cite{dias2024classical} and Reardon-Smith et al.~\cite{reardonsmith2024classical} gave classical algorithms for arbitrary non-Gaussian inputs whose cost scales as $O(\chi^2 \cdot \mathrm{poly}(n))$, where $\chi$ is the FLO extent of the input. More recently, Oh, Oszmaniec, Reardon-Smith, and Zimbor\'as~\cite{oh2026classical} exhibited a structurally tighter result for the specific class of \emph{paired} non-Gaussian inputs (form-rank-$2$ states such as $1/\sqrt{2}(\ket{0011}+\ket{1100})$): an additive-error classical estimator for transition amplitudes, overlaps, and arbitrary-weight number correlators, with sample complexity $O(\epsilon^{-2}\log\delta^{-1})$ matching the operational shot complexity of quantum hardware. Their algebraic core is a \emph{mixed-Pfaffian} reduction that compresses an exponentially large sum of determinants into a single coefficient of a multivariate Pfaffian polynomial. This places paired non-Gaussian inputs in the additive-error-tractable regime, where the FLO extent provides only an upper bound that the Pfaffian shortcut beats.

\paragraph{Training quantum neural networks.}
The standard training algorithm for PQCs is the parameter-shift rule~\cite{mitarai2018quantum,schuld2019gradients}, which computes the exact
gradient of each parameter using two circuit evaluations per parameter. For a
circuit with $T$ trainable parameters, this requires $2T$ circuit evaluations
per gradient step---growing linearly with the number of parameters and becoming
prohibitive for large circuits. Stochastic approximations can reduce the per-step cost to $O(1)$ circuit evaluations but introduce bias and increased variance~\cite{spall1992multivariate,gacon2021simultaneous}. Recent stochastic block coordinate
descent~\cite{wierichs2022general} selectively updates parameter subsets but
does not reduce the fundamental per-parameter cost. A complementary line of
work casts stochastic approximations, random coordinate descent, and the parameter-shift rule as
special cases of a single forward-gradient estimator parametrized by the
number of random directions probed per step, with an adaptive optimiser
choosing this number from the measurement statistics
themselves~\cite{coyle2026adaptive}. An alternative approach
is to reduce the effective number of parameters by mixing over ensembles of
circuits---the density QML framework~\cite{coyle2025density} showed that this
can improve both trainability and per-step cost, but at the price of reduced
expressivity, trading the capacity to represent complex functions for easier
optimization. Note as well that when there is a single trainable layer where
all gates act on disjoint qubit pairs, their generators commute, and in
principle all gradients within that layer could be estimated simultaneously
from a single set of circuit evaluations rather than
sequentially~\cite{mathur2026scalable}. From a practical point of
view, for circuits with $T = O(n^2)$ parameters, namely a linear depth parametrized quantum circuit, the naive PSR requires $O(n^2)$ circuit evaluations per
gradient step, making on-hardware training at scale impractical even for 100 qubits, regardless of
which of these strategies is employed.

\paragraph{Data loading.}
A key distinction from the quantum linear algebra paradigm is that quantum neural networks process one data point at a time: there is no requirement to load an entire dataset into a quantum state or to maintain a QRAM over the full data matrix. The data loading problem here is therefore a per-sample question---how to efficiently encode a single classical vector $\bx \in \R^n$ into a quantum state that is expressive enough to support learning, without incurring circuit depth that negates the advantage of the shallow trainable circuit that follows. The simplest approach, \emph{angle encoding}, uses one rotation gate per feature and requires $O(1)$ depth, but encodes data in rotation angles rather than amplitudes and does not preserve inner product structure~\cite{schuld2019quantum}. \emph{Amplitude encoding} represents $n$ features as the amplitudes of the unary basis of an $n$-qubit state, preserving $\ell_2$ distances, and can be implemented with $n-1$ RBS gates in a ladder formation on nearest-neighbor architectures or with a $O(\log n)$ depth RBS gate tree on all-to-all connectivity architectures~\cite{kerenidis2022quantum}. For $k$-Hamming weight states, B\"{a}rtschi and Eidenbenz~\cite{bartschi2022short} gave short-depth circuits for Dicke states and Farias et al.~\cite{farias2025quantum} gave optimal RBS-based encoders for $k$-particle states. The right encoding strategy for QDL remains an important question, since the encoding choice simultaneously impacts the trainability of the circuit and the expressivity of the learned function class~\cite{schuld2021effect}.

\subsection{\label{sec:potential} The Potential of Quantum Deep Learning}

The progress surveyed above has significantly advanced our understanding of what quantum neural networks can and cannot do: we know when they train, when they do not, when they are classically simulable, and what structural properties govern each. Yet despite this theoretical clarity, the fundamental question---what is the genuine potential of quantum deep learning, and what should we concretely expect from it?---remains wide open. Answering it requires more than additional technical results. It requires first being precise about what we mean by potential, and then identifying the specific challenges that must be overcome to realize it.

We argue that the right question is not whether quantum circuits can replace classical neural networks, or whether they offer provable worst-case speedups on standard benchmark tasks. Analogously to classical deep learning---where the value of convolutional networks or transformers was not established by theoretical proofs of superiority but by empirical demonstration of expressivity on real tasks---the right question is whether quantum circuits can represent useful functions or policies that lie outside the reach of classical models of comparable parameter count, enriching the model class in ways that matter in practice.

Two conditions must therefore be met for quantum deep learning to realize its potential. The first is \emph{expressivity}: the quantum model must access function classes that are provably beyond the reach of classical models of comparable size. This is a question of quantum complexity theory, and for the right architecture, precise lower bounds on the classical simulation cost can be established. The second condition is \emph{practical utility}: that the functions accessible to the quantum model are not merely hard for classical computers but actually useful---that they correspond to inductive biases or representations that help on real tasks. This second condition cannot be established by complexity-theoretic arguments alone; it requires running quantum models on real data and real hardware. And this is precisely where scalability becomes the binding constraint: there is no point in demonstrating quantum expressivity on toy problems if the architecture cannot be trained efficiently at scale. The path to understanding the practical potential of quantum neural networks therefore runs directly through the problem of building scalable, trainable, and expressive architectures---which is the problem this paper addresses.

We identify five concrete challenges that any scalable quantum framework must address simultaneously, and we show that two complementary architectural choices---the unitary brick-wall for nearest-neighbor hardware and the unitary butterfly for all-to-all connectivity---and an efficient algorithm for computing the gradients---the multi-layer parallel parameter-shift rule---resolve all five. The challenges are: efficient data loading~(C1); efficient parametrized quantum circuits as quantum neural layers~(C2), where the optimal depth depends on hardware 
connectivity; provable avoidance of exponential barren plateaus~(C3); efficient gradient computation with sub-linear circuit evaluations per step~(C4); and genuine classical hardness of the deployed model output~(C5). The key insight unifying the solution is that particle-number preservation---the defining property of RBS gates---simultaneously enables efficient gradient computation, prevents exponential barren plateaus, and places the circuit in the Fermion Sampling framework for classical hardness with the correct data loading procedure.

A sixth challenge---whether the function classes accessible to quantum models are not merely hard for classical computers but genuinely useful on real tasks---cannot be resolved by theory alone. It requires running scalable quantum models on real data and real hardware, at sizes where the quantum expressivity advantage is meaningful. Closing C1--C5 is the prerequisite for even posing this question seriously; the structure of the framework does, however, narrow the field of candidate applications considerably, and we identify them in Section~\ref{sec:discussion}.

This paper builds on a long series of work on Hamming-weight-preserving quantum neural networks. The prior program established the architecture, tested it on hardware, and demonstrated its utility across applications in finance and healthcare---but left two gaps open: the barren plateau at large particle number $k$, and the prohibitive gradient computation cost at scale. 
We close both simultaneously by augmenting the RBS brick-wall and butterfly circuits with single-qubit $R_z$ gates (which lifts the DLA from $\mathfrak{so}(n)$ to $\mathfrak{u}(n)$ and resolves the barren plateau issue) and by proving the multi-layer parallel parameter-shift rule (which achieves a factor-$O(n/k)$ reduction in circuit evaluations per gradient step for each architecture: from $O(n^2)$ to $O(kn)$ for the brick-wall, and from $O(n\log n)$ to $O(k\log n)$ for the butterfly). The first contribution ensures that the lifted architecture has polynomial gradient variance throughout the regime $n - 2k = \Omega(n)$, including at extensive particle number $k = cn$ with $c < 1/2$, where the sampling-hardness guarantees of Section~\ref{sec:hardness} apply---so challenges C3 (trainability) and C5 (classical hardness) are met simultaneously, the former unconditionally for the brick-wall. (All trainability statements are for the regime $n - 2k = \Omega(n)$; at exactly half filling the one-body signal vanishes identically and only an $O(1/n^4)$ upper bound on the rate is proved.)

The second contribution makes training cheap in the practically relevant regime $k \ll n$. In addition, we organize the classical-hardness guarantees of the framework into an explicit ladder in the particle number $k$, calibrated against the best-known classical simulation algorithms~\cite{reardonsmith2024classical,oh2026classical}, and identify the concrete operating point at which quantum advantage is possible while training remains efficient.

It is worth stating at the outset how challenges C3 and C5 coexist, since they are met on \emph{different objects} of the same circuit. Trainability is established for the gradient channel: losses that depend on the measurement samples through two-body statistics, for which the gradient variance is $\Theta(k^3/n^5)$. Classical hardness is established for the deployed output: the sampling task, whose best-known classical cost is $2^{\Omega(k)}\poly(n)$. These are genuinely distinct objects---knowledge of the two-body moments does not permit sampling, since exponentially many distributions on Hamming-weight-$k$ strings share them---and the separation is what allows both conditions to hold at once for a single architecture, with the two-body correlator serving as the trainable anchor and the sampler as the deployed model. Section~\ref{sec:where} develops this into a taxonomy of pipelines, including which gradients may be computed classically and which require the quantum device inside the training loop.

Our main technical contribution is the \textbf{multi-layer parallel
parameter-shift rule} (Theorem~\ref{thm:parallel_psr_rbs}): all gradients
of the QNN can be computed simultaneously from a number of circuit
evaluations that is a factor $n/k$ smaller than the naive
parameter-shift rule, using random sign vectors and the commuting structure
of the gates within each layer. The estimator is unbiased (exact in
expectation), requires no assumption on the observable beyond particle-number preservation, and applies identically to both architectures. The naive PSR costs $4$ circuit evaluations per RBS
parameter and $2$ per $R_z$ parameter; the parallel PSR computes all
gradients in a single layer from $2k$ evaluations for the RBS parameters
and $2k$ for the $R_z$ parameters, independently of $n$. The resulting per-gradient-step costs and reductions are:
\begin{itemize}
\item \textbf{Brick-wall} ($T = 3n^2/2$ parameters, $n$ columns, each
a full-width $R_z$ layer followed by that column's $n/2$ RBS gates): naive
PSR requires $4n^2$ circuit evaluations; parallel PSR requires
$4kn$---a factor of $n/k$ reduction counted in distinct circuit configurations.
  \item \textbf{Butterfly} ($T = \tfrac{3}{2}n\log n$ parameters, $\log n$
  layers): naive PSR requires $4n\log n$ circuit evaluations; parallel PSR
  requires $4k\log n$---a factor of $n/k$ reduction.
\end{itemize}
For $n = 1024$ at the sampling operating point $k = 60$
(Section~\ref{sec:k_choice}), the butterfly requires $2{,}400$ circuit
evaluations per gradient step (vs.\ $40{,}960$ for the naive PSR on
the butterfly, a $17.1\times$ reduction), and the brick-wall requires
$245{,}760$ evaluations (vs.\ $4{,}194{,}304$ for the naive PSR
on the brick-wall, also $17.1\times$). Crucially, the reduction factor
$n/k$ grows linearly with $n$ at fixed hardness level: it reaches
$68.3\times$ at $n = 4096$, and the butterfly's absolute cost,
$4k \log n$, is nearly independent of $n$. Throughout, the naive baseline is
computed for the \emph{same} circuit, and we quote the absolute per-step cost
alongside the ratio (Corollary~\ref{cor:full_gradient}). As $n$ grows the
training cost
scales only as $O(k\log n)$
for the butterfly and $O(kn)$ for the brick-wall---polylogarithmic and
linear respectively---while the Hilbert space dimension grows as
$\binom{n}{k}$. The extra factor of $n$ in the
brick-wall reflects the price of nearest-neighbor connectivity: $O(n)$
depth instead of $O(\log n)$, but with no long-range gates required. In
both cases the training cost scales far below the expressivity of the model.

\paragraph{Paper structure.}
Section~\ref{sec:encoding} presents the fermionic magic-state data encoding. Section~\ref{sec:butterfly} introduces the unitary brick-wall and unitary 
butterfly architectures and their mathematical properties, including the 
depth-connectivity tradeoff for each. Section~\ref{sec:bp} establishes the trainability and absence-of-barren-plateaus result. Section~\ref{sec:psr} proves the multi-layer parallel parameter-shift rule. Section~\ref{sec:pipeline} presents the complete pipeline: a single sample-based readout interface shared by all task families, from distributional losses to classical heads---linear, multilayer, or transformer---consuming the sampled Fock strings, together with an account of which gradients admit a variance guarantee and which require the quantum device in the training loop. Section~\ref{sec:hardness} proves classical hardness of the learned function class for sampling-based readouts. Section~\ref{sec:discussion} discusses implications and open problems.

\section{\label{sec:encoding} Data Loading: Magic-State Encoding}

The data loading happens in three steps: 
\begin{enumerate}
\item[A.] Magic State Preparation: we prepare a non-Gaussian magic state on a set of active blocks;
\item[B.] Spreading Operation: we spread the particles across all $n$ modes using a fixed Hadamard
spreading circuit; 
\item[C.] Data Encoding: we encode classical data through a layer of single-qubit phase gates. 
\end{enumerate}

The encoding pipeline is the same for both architectures; the particle number
$k$ is calibrated against classical simulation cost in
Section~\ref{sec:k_choice}, while the only connectivity-dependent element is
the spreading circuit in step~B, which has depth $O(\log n)$ on all-to-all
hardware (Hadamard butterfly) and $O(n)$ on nearest-neighbor hardware (Hadamard
brick-wall).

We now detail each step.

\subsection{Magic State Preparation}

The magic state is the minimal non-Gaussian construction adequate for the
Fermion Sampling hardness structure of
Section~\ref{sec:hardness}.\label{sec:psi4_encoding}
Divide $n$ qubits into $k/2$ active blocks of $4$ qubits and $n - 2k$ idle
qubits initialized to $\ket{0}$. Prepare:
\begin{equation}
  \ket{\Psi_{\mathrm{in}}^{(4)}}
  = \ket{\Psi_4}^{\otimes k/2}
  \otimes \ket{0}^{\otimes (n - 2k)},
  \qquad
  \ket{\Psi_4}
  = \frac{\ket{1100} + \ket{0011}}{\sqrt{2}}.
\end{equation}
Each block is prepared independently using a circuit of depth $4$ and $3$
entangling gates~\cite{oszmaniec2022fermion}; all $k/2$ blocks run in parallel,
so the total depth is $4$. The state has exactly $k$ particles and is
non-Gaussian: a superposition of two distinct Fock states differing in all 
modes, which cannot be written as a single Slater
determinant~\cite{oszmaniec2022fermion}. Each block is a 2-form (sum of two
pair-creation operators), so the overall input lies in the form-rank-2
manifold. Two complexity-theoretic consequences follow, developed in
Section~\ref{sec:hardness}: the mixed-discriminant structure of its output
amplitudes underlies the Fermion Sampling hardness argument at sufficiently
large particle number~\cite{oszmaniec2022fermion}, while its mixed-Pfaffian
structure makes few-body expectation values classically estimable to additive
error in polynomial time at \emph{any} $k$~\cite{oh2026classical}.

\begin{remark}[Sparse inputs: hardness scales with the number of magic blocks]
\label{rem:sparse_input}
The original Fermion Sampling hardness argument of Oszmaniec et
al.~\cite{oszmaniec2022fermion} considers fully-packed magic inputs
$\ket{\Psi_4}^{\otimes n/4}$ with $k = n/2$ particles across all $n$ modes.
Our construction uses only $B = k/2$ magic blocks and $n - 2k$ idle modes
initialized to vacuum. The mixed-discriminant \emph{structure} carries through
directly to the sparse case: the idle modes simply do not appear in the rows
and columns of the single-particle unitary indexed by the support of the magic
state, and, exactly as in the fully-packed case
\cite[Lemma~11]{oszmaniec2022fermion}, the output amplitude is a sum of $2^B$
determinants of $k \times k$ submatrices of the single-particle unitary $W$,
\begin{equation}
  \braket{x' | V | \Psi_{\mathrm{in}}^{(4)}}
  = 2^{-B/2} \sum_{S \in \calS} \det\!\big(W_{x',S}\big),
  \qquad |\calS| = 2^{B}.
  \label{eq:det_sum}
\end{equation}
What does \emph{not} carry through automatically is hardness, because
Eq.~\eqref{eq:det_sum} is simultaneously a classical algorithm of cost
$O(2^B \cdot k^3)$ per amplitude. The hard-instance size of the
Ivanov--Gurvits reduction to the permanent~\cite{ivanov2013permanents,gurvits2005complexity} is the block count $B$, so
worst-case \#P-hardness requires $B = n^{\Omega(1)}$. Moreover, the FLO extent
of the input is $\chi = 2^B$, and all best-known simulation methods---exact
amplitudes, few-body marginals, and, via the chain rule over marginals, exact
sampling---run in time $2^{\Theta(B)}\,\poly(n)$
\cite{reardonsmith2024classical}. In particular, for $B = O(\log n)$ the
entire pipeline, \emph{including sampling}, is classically simulable in
polynomial time. Whether hardness survives with fewer magic blocks than the
fully-packed input was posed as an open problem
in~\cite[Sec.~IV]{oszmaniec2022fermion}; the extent-based algorithms
of~\cite{reardonsmith2024classical} resolve its hardness half negatively for
logarithmically many blocks, while the regime of block number between
$\omega(\log n)$ and $o(n)$ remains open at the level of average-case
hardness. Section~\ref{sec:hardness} organizes our hardness claims along this
ladder.
\end{remark}

\subsection{Spreading Operation}
\label{sec:common_stages}

The remaining encoding stages are the Hadamard spreading and the
data-dependent phase layer. Throughout, we write
$\ket{\Psi_{\mathrm{in}}}$ for the magic state
$\ket{\Psi_{\mathrm{in}}^{(4)}}$.

For the spreading operation, we
apply $U^{\mathrm{had}}$---a passive FLO circuit with all RBS angles fixed at
$\pi/4$, implementing the Walsh--Hadamard transform in the single-particle
sector. On all-to-all connectivity this is the Hadamard butterfly of depth
$\log n$ (defined formally in Section~\ref{sec:butterfly}); on nearest-neighbor
connectivity it is the Hadamard brick-wall of depth $n-1$ (defined formally in
Section~\ref{sec:butterfly}). In both cases the circuit is non-trainable and
fixed once and for all. This delocalizes the $k$ particles across all $n$ modes:
\begin{equation}
  \ket{\phi_0} = U^{\mathrm{had}} \ket{\Psi_{\mathrm{in}}}
  = \sum_{S \subseteq [n],\, |S| = k} \alpha_S \ket{S},
\end{equation}
where every qubit $j$ has single-particle occupation
$\rho_1^{\mathrm{out}}[j,j] = k/n$ (Lemma~\ref{lem:spreading}), and particle
number is preserved exactly. Non-Gaussianity is preserved: $U^{\mathrm{had}}$
is a passive FLO transformation, and passive FLO maps non-Gaussian states to
non-Gaussian states~\cite{hebenstreit2019all}. Crucially, passive FLO also
preserves form-rank, so the form-rank-2 structure of the
input is preserved by spreading.

\begin{lemma}[Hadamard spreading equalizes single-particle occupations]
\label{lem:spreading}
Let $\ket{\Psi_{\mathrm{in}}} = \ket{\Psi_{\mathrm{in}}^{(4)}}$ be the magic
state of Section~\ref{sec:psi4_encoding}, with $k$ particles distributed
across $2k$ active modes (the remaining $n-2k$ modes initialized to $\ket{0}$).
After applying the Hadamard spreading circuit $U^{\mathrm{had}}$ on all $n$
modes, the single-particle reduced density matrix of
$\ket{\phi_0} = U^{\mathrm{had}}\ket{\Psi_{\mathrm{in}}}$ satisfies
\begin{equation}
\rho_1^{\mathrm{out}} \;=\; \tfrac12\,\Pi,
\qquad
\rho_1^{\mathrm{out}}[j,j] \;=\; \frac{k}{n} \;>\; 0
\quad \text{for every } j ,
\end{equation}
where $\Pi$ is an orthogonal projector of rank $2k$. In particular, no qubit is
a vacuum mode after spreading: every mode carries single-particle weight
$k/n$, so the data-encoding phase layer imprints a nontrivial phase on every
feature. The same holds for $\rho_1(\bx)$ after the diagonal data layer, with
$\Pi$ replaced by a unitary conjugate of itself.
\end{lemma}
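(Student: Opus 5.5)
The plan is to compute the one-body reduced density matrix $\rho_1[a,b]=\langle c_b^\dagger c_a\rangle$ (or its transpose; the convention does not matter for what follows) exactly on the input state. I will then push it through the passive FLO circuit using the standard covariance rule. For a passive FLO unitary $V$ with single-particle matrix $W$ we have $V^\dagger c_a^\dagger V=\sum_b W_{ba} c_b^\dagger$, so
\[
\rho_1\mapsto W\rho_1 W^\dagger .
\]
This reduces everything to a statement about $W$.

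\textbf{Step 1: the input 1-RDM.} I claim that $\rho_1^{\mathrm{in}}=\tfrac12 P_A$, where $P_A$ is the diagonal projector onto the $2k$ active modes.

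For the diagonal entries, within a block $\ket{\Psi_4}$ each of the four modes is occupied in exactly one of the two branches $\ket{1100},\ket{0011}$. Each mode therefore has occupation exactly $\tfrac12$. Idle modes have occupation $0$.

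For off-diagonal entries with $a,b$ in the same block, $c_b^\dagger c_a$ moves a single particle. This maps $\ket{1100}$ and $\ket{0011}$ to Fock states such as $\ket{1010}$, which are outside the support of $\ket{\Psi_4}$. Hence the expectation vanishes; Jordan--Wigner signs are irrelevant because the overlap is zero.

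For $a,b$ in different blocks, or involving an idle mode, the state is a tensor product of fixed-particle-number factors. The expectation therefore factorizes into terms containing $\langle c_a\rangle$ within a single block, which is $0$ because it changes that block's particle number. Some care with the Jordan--Wigner strings is needed here, but each block has even parity, so the strings act trivially on it.

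Thus $\rho_1^{\mathrm{in}}=\tfrac12 P_A$, and $P_A$ has rank $2k$.

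\textbf{Step 2: propagate through $U^{\mathrm{had}}$.} Applying the covariance rule gives
\[
\rho_1^{\mathrm{out}}=\tfrac12\,W P_A W^\dagger=\tfrac12\Pi,
\]
where $\Pi=WP_AW^\dagger$ is an orthogonal projector of rank $2k$ because $W$ is unitary. Its diagonal entries are $\rho_1^{\mathrm{out}}[j,j]=\tfrac12\sum_{a\in A}|W_{ja}|^2$. It therefore suffices that every entry of the single-particle matrix of $U^{\mathrm{had}}$ has modulus $1/\sqrt n$, which gives $\tfrac12\cdot 2k/n=k/n$.

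For the Hadamard butterfly this is the key structural fact to verify. Each layer of $\pi/4$ RBS gates acts as a $2\times2$ matrix with entries $\pm1/\sqrt2$ on the paired modes, and the butterfly pairing pattern composes these into a signed Walsh--Hadamard matrix $H_n/\sqrt n$. I would prove this by induction on $\log n$ via $H_{2m}=H_2\otimes H_m$ up to a mode permutation.

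For the Hadamard brick-wall I would invoke its definition in Section~\ref{sec:butterfly}, namely that it implements the same Walsh--Hadamard transform in the single-particle sector. I expect this to be the main obstacle: a naive $n-1$-depth brick-wall of $\pi/4$ gates does not obviously have flat moduli. If the definition only fixes the angles, I would instead realize $H_n$ via a Givens/Clements-type decomposition on nearest-neighbor gates and check flatness directly.

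\textbf{Step 3: the data layer.} This layer is a diagonal unitary $D(\bx)$ of phases, so $\rho_1(\bx)=D\rho_1^{\mathrm{out}}D^\dagger=\tfrac12\,D\Pi D^\dagger$. The diagonal is unchanged because $|D_{jj}|=1$, and $D\Pi D^\dagger$ is again a rank-$2k$ projector, as claimed. Positivity $k/n>0$ for every $j$ then implies that no mode is vacuum, so each phase acts nontrivially on the state.
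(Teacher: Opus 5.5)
Your argument follows the paper's proof step for step. You use the same case analysis giving $\rho_1^{\mathrm{in}}=\tfrac12\Pi_A$, the same covariance rule $\rho_1\mapsto W\rho_1W^\dagger$, the same reduction to flat moduli $|W_{ja}|^2=1/n$ of the spreading matrix, and the same observation that the diagonal phase layer leaves the diagonal unchanged.

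Your one hesitation is justified, and it marks a genuine gap in the paper's own proof. The paper simply asserts that the nearest-neighbour brick-wall with all chain angles $\pi/4$ and depth $n-1$ implements the Walsh--Hadamard transform, but this fails already at small $n$:
\begin{itemize}
\item at $n=4$ (odd, even, odd column), the first row of its single-particle matrix has moduli $\bigl(\tfrac12-\tfrac{1}{2\sqrt2},\,\tfrac12+\tfrac{1}{2\sqrt2},\,\tfrac{1}{2\sqrt2},\,\tfrac{1}{2\sqrt2}\bigr)$, which are not flat;
\item at $n=8$, $k=2$ (seven alternating columns, boundary angle zero), the last active mode and the first idle mode end up with occupations exactly $3/8$ and $1/8$, instead of $k/n=1/4$.
\end{itemize}

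Your fallback is therefore not optional. On nearest-neighbour hardware the lemma holds only after $U^{\mathrm{had}}$ is replaced by a mesh realizing $H_n$, or more generally any $W$ with $\sum_{a\in A}|W_{ja}|^2=2k/n$ for every $j$. Such a mesh exists by the surjectivity in Proposition~\ref{prop:brick-wall}, but it uses non-uniform angles rather than all $\pi/4$.
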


\begin{proof}
The $n-2k$ idle qubits, initialised to $\ket{0}$, carry no particles before
the spreading circuit. Let $c_j^\dagger$ and $c_j$ denote the fermionic
creation and annihilation operators for mode $j$ (under the Jordan--Wigner
correspondence, $c_j^\dagger$ maps the vacuum on qubit $j$ to an occupied
mode), so that $n_j = c_j^\dagger c_j$ is the occupation-number operator. For
any $k$-particle passive FLO unitary with single-particle matrix $W$, the
single-particle reduced density matrix transforms as
$\rho_1^{\mathrm{out}} = W\,\rho_1^{\mathrm{in}}\,W^\dagger$, where
$\rho_1^{\mathrm{in}}[j,j'] = \bra{\Psi_{\mathrm{in}}} c_j^\dagger
c_{j'} \ket{\Psi_{\mathrm{in}}}$ is the single-particle reduced density matrix
of the input.

\smallskip
\textbf{Diagonal entries.} Within each
block the two Fock-state contributions partition the block's modes into two
disjoint groups of equal amplitude: block $g$ occupies modes
$\{4g, 4g{+}1, 4g{+}2, 4g{+}3\}$ and superposes $\{4g, 4g{+}1\}$ with
$\{4g{+}2, 4g{+}3\}$ at equal amplitude $1/\sqrt{2}$.
Each active mode $j \in \{0,\dots,2k-1\}$ belongs to exactly one
block and appears in exactly one of that block's two Fock-state choices, so it
is occupied in exactly half of the input Fock states, giving
$\rho_1^{\mathrm{in}}[j,j] = \tfrac{1}{2}$ for $j \in \{0,\dots,2k-1\}$ and $0$
otherwise, with $\mathrm{tr}(\rho_1^{\mathrm{in}}) = (2k)\cdot\tfrac{1}{2} = k$.

\smallskip
\textbf{Off-diagonal entries vanish.} For $j \neq j'$, $\rho_1^{\mathrm{in}}[j,j']
= \bra{\Psi_{\mathrm{in}}} c_j^\dagger c_{j'} \ket{\Psi_{\mathrm{in}}}$. If $j$
and $j'$ lie in different blocks, $c_j^\dagger c_{j'}$ changes the particle
number within each of the two blocks; since $\ket{\Psi_{\mathrm{in}}}$ is a
tensor product of blocks each carrying a definite particle number ($2$ per
block), the result is orthogonal to
$\ket{\Psi_{\mathrm{in}}}$ and the overlap is zero. If $j$ and $j'$ lie in the
same block, $c_j^\dagger c_{j'}$ moves a single particle within that block;
acting on either Fock component (e.g.\ $\ket{1100}$ or $\ket{0011}$ for a paired
block) it produces a basis state of the same Hamming weight orthogonal to
both components---converting one component into the other would require moving
\emph{two} particles---so the overlap is again zero. Hence
$\rho_1^{\mathrm{in}} = \tfrac12 \Pi_A$ with $\Pi_A$ the projector onto the
$2k$ active modes, and $\rho_1^{\mathrm{out}} = \tfrac12 W\Pi_A W^\dagger$ is
$\tfrac12$ times a rank-$2k$ projector.

\smallskip
\textbf{Walsh--Hadamard transform.} Both the Hadamard butterfly and the
Hadamard brick-wall implement the normalised Walsh--Hadamard transform on $n$
modes in the single-particle sector,
$H[i,j] = (-1)^{\langle i, j \rangle}/\sqrt{n}$, where
$\langle i, j \rangle = \bigoplus_b i_b j_b$ is the bitwise inner product of
the binary expansions of $i$ and $j$, so $|H[i,j]|^2 = 1/n$ for every pair
$i,j$.

\smallskip
\textbf{Output occupations.} Because $\rho_1^{\mathrm{in}}$ is diagonal,
$\rho_1^{\mathrm{out}}[i,i] = [H\rho_1^{\mathrm{in}}H^\dagger]_{ii} =
\sum_{j} |H[i,j]|^2\, \rho_1^{\mathrm{in}}[j,j]$. Using $|H[i,j]|^2 = 1/n$ for
every pair $(i,j)$,
\begin{equation}
\rho_1^{\mathrm{out}}[i,i]
= \frac{1}{n}\sum_{j=0}^{2k-1}\rho_1^{\mathrm{in}}[j,j]
= \frac{1}{n}\,\mathrm{tr}(\rho_1^{\mathrm{in}})
= \frac{k}{n}
\end{equation}
for every output mode $i$, as claimed. The data layer $D(\bvarphi(\bx))$ is a
diagonal unitary, so $\rho_1(\bx) = D\rho_1^{\mathrm{out}}D^\dagger$ is again
$\tfrac12$ times a rank-$2k$ projector with the same diagonal.
\end{proof}

\subsection{Data Encoding Layer}
Apply $D(\bvarphi) = \bigotimes_{i=1}^{n} R_z(\varphi_i)$ where $\varphi_i = f(x_i)$,
e.g.\ $\varphi_i = \pi x_i / \max_i |x_i|$. Since $R_z(\varphi)\ket{1} =
e^{-i\varphi/2} \ket{1}$ and $R_z(\varphi)\ket{0} = e^{+i\varphi/2} \ket{0}$:
\begin{equation}
  \ket{\psi(\bx)} = D(\bvarphi(\bx)) \ket{\phi_0}
  = \sum_{|S| = k} \alpha_S\, e^{-i \sum_{j \in S} \varphi_j} \ket{S}.
  \label{eq:encoded_state}
\end{equation}
All $n$ features enter through independent phases. $R_z$ is diagonal in the
computational basis---it preserves Hamming weight, non-Gaussianity, and
form-rank. Note that this differs from standard angle encoding, where $R_z$
gates are applied directly to $\ket{0}^{\otimes n}$ to produce a product state;
here the $R_z$ layer imprints data as phases on the pre-spread entangled state
$\ket{\phi_0}$, so the data interacts with all $\binom{n}{k}$ superposition
components simultaneously rather than independently qubit by qubit. Unlike
amplitude encoding, this scheme does not preserve $\ell_2$ inner products
between data points: the encoded state defines an implicit kernel that is
translation-invariant in feature-phase differences. This is by design---the
goal of this encoding is expressivity and classical hardness, not metric
preservation.

The total encoding depth is $\log_2 n + 5$
on all-to-all connectivity (depth-4 magic
preparation $+$ $\log n$ Hadamard butterfly $+$ 1 data layer), and $n + 4$
on nearest-neighbor connectivity. All magic
blocks are prepared in parallel in Step~A, so the magic preparation depth does not scale with $k$.

\subsection{The Right Choice of $k$}
\label{sec:k_choice}

The particle number $k$ is a free parameter of the construction. A single quantity
controls the classical simulation cost of the full pipeline: the number of
non-Gaussian blocks, $B = k/2$, through the
FLO extent $\chi = 2^B$ (Section~\ref{sec:hardness}). Exact amplitudes cost
$O(2^B\,\poly(n))$ classically; sampling costs $O(2^{2B}\,\poly(n))$ under
best-known algorithms~\cite{reardonsmith2024classical}. Few-body expectation
values are additively tractable at every $k$ for this
encoding~\cite{oh2026classical}, so $k$ calibrates the hardness of
\emph{sampling} specifically. The training cost of
the multi-layer parallel parameter-shift rule of Section~\ref{sec:psr} scales
in the opposite direction: $4kn$ circuit evaluations per gradient step for
the brick-wall and $4k\log n$ for the butterfly, a factor $n/k$ below the
naive parameter-shift rule for both. Larger $k$ therefore buys stronger classical
hardness at linearly growing training cost. These tradeoffs apply to both the
unitary brick-wall and the unitary butterfly, which share the same encoding
pipeline and differ only in the trainable and spreading circuits that follow.
We first state the asymptotic regimes that organize the hardness results of
Section~\ref{sec:hardness}, and then fix the concrete operating values used
throughout the paper.

\paragraph{$k = O(\log n)$: classically simulable.}
Here $\chi = 2^B = \poly(n)$, and the full pipeline---amplitudes, few-body
expectation values, and sampling---is classically simulable in polynomial
time (Remark~\ref{rem:sparse_input}). No quantum advantage is possible in
this regime. It is nonetheless operationally valuable: the same architecture,
run at small $k$, can be simulated exactly, which permits end-to-end
classical validation of the training pipeline and of hardware behavior before
scaling $k$ up. Training is cheapest here ($O(kn)$ evaluations per step for
the brick-wall, $O(k\log n)$ for the butterfly).

\paragraph{$k = \omega(\log n)$: superpolynomial hardness under best-known
algorithms.} Once the block count exceeds logarithmic, $\chi$ is
superpolynomial and best-known classical algorithms cost
$2^{\Omega(k)}\,\poly(n)$. At the representative point $k = \log^2 n$ this is
$n^{\Omega(\log n)}$, quasi-polynomial hardness for sampling
(the additive-error estimator
of~\cite{oh2026classical} does not enable sampling, which requires relative
precision on exponentially small marginals). No polynomial-hierarchy-based
statement is available in this regime---the worst case itself is computable
in quasi-polynomial time---but training remains genuinely cheap.

\paragraph{$k = n^{\epsilon}$: worst-case \#P-hardness.} For any constant
$\epsilon > 0$, the Ivanov--Gurvits reduction embeds permanents of
$n^{\Omega(1)}$-size matrices, and there exist parameter settings of the
brick-wall (which is surjective onto $U(n)$, Section~\ref{sec:butterfly})
whose output probabilities are \#P-hard to compute exactly
(Section~\ref{sec:hardness}; for the butterfly, whose fixed-depth
parametrization covers only a submanifold of $U(n)$, see the reachability
remark there). Average-case hardness and anticoncentration for such sparse
inputs are open---this is precisely the ``fermion sampling with less magic''
problem of~\cite{oszmaniec2022fermion}---and we state them as a conjecture.
Best-known simulation cost is $2^{\Omega(n^{\epsilon})}$, and training costs
$O(kn) = O(n^{1+\epsilon})$ (brick-wall) or $O(n^{\epsilon}\log n)$
(butterfly) evaluations per step.

\paragraph{$k = \Theta(n)$: the extensive limit.} At $k = cn$ with a constant
filling fraction $c$ the Hilbert space dimension $\binom{n}{k}$ is
exponential in $n$, and at $c = 1/2$ the construction matches the
original Fermion Sampling setup of~\cite{oszmaniec2022fermion}; the full
machinery---anticoncentration, worst-to-average-case reduction, and
conditional hardness of approximate sampling---applies to the
Haar-initialized brick-wall (Section~\ref{sec:hardness}). Trainability is at
its strongest here as well: for $c < 1/2$ the gradient variance is
$\Theta(1/n^2)$, the largest order it attains anywhere in the range
(Section~\ref{sec:bp}). What is lost is only
\emph{cheap} training: at $k = \Theta(n)$ both the naive and the parallel
parameter-shift rule cost $\Theta(n^2)$ circuit evaluations per gradient
step, the standard cost for a circuit with $\Theta(n^2)$ parameters. This regime is
thus the strongest-hardness theoretical limit; the operating point below
trades a calibrated amount of that hardness for training cost that falls
with $n$.

\medskip
\paragraph{Operating point: thirty blocks.} Asymptotic labels hide constants, so
for concrete deployments we calibrate $k$ directly against classical compute.
We require that best-known classical simulation of sampling from the pipeline
exceed $\sim\!10^{24}$ elementary operations---about two weeks on a
$10^{18}$-flop/s machine---which, with $2^{2B}$ cross terms each costing at
least $\poly(n) \gtrsim 10^{6}$ operations (each cross term requires
cubic-cost linear algebra on $n \times n$ matrices and we take $n \ge 120$), is reached at
$\chi^2 = 2^{2B} \approx 10^{18}$, i.e.\ at
\begin{equation}
  B = 30 \text{ blocks:}\quad
  k = 60,\ \ket{\Psi_4}^{\otimes 30}.
  \label{eq:operating_point}
\end{equation}
The operating point sits at hardness level $\chi^2 = 2^{60}$. It
requires $n \ge 120$ modes, and the training
advantage of the parallel parameter-shift rule, $n/k$, \emph{grows
linearly with $n$ at fixed hardness}: it reaches $17.1\times$
at $n = 1024$, and $68.3\times$
at $n = 4096$. Two caveats about this calibration before moving on. First, it is pinned to \emph{best-known}
algorithms~\cite{reardonsmith2024classical,oh2026classical}; any future
algorithmic improvement is absorbed by increasing $B$, at training cost
linear in $k$. Second, individual amplitudes at $B = 30$ cost only
$2^{30}\,\poly(n)$ operations classically; the sampling barrier rests on the
$2^{2B}$ cost of marginal-based samplers, so a hypothetical sampler operating
at amplitude-level cost---no fermionic analogue is currently known---would
halve the exponent and be absorbed by doubling $B$.

\medskip
Throughout the remainder of the paper we carry $k$ as a free
parameter in all statements, instantiate asymptotic claims on the ladder
above, and use $B = 30$ (Eq.~\eqref{eq:operating_point}) for all concrete
numbers. Complete parameter values at representative problem sizes are
collected in Section~\ref{sec:concrete_numbers}.

\begin{remark}[Alternative data loading via Dicke states]
\label{rem:dicke}
The block-product magic states are not the only route to a classically hard,
non-Gaussian input. A natural alternative is the Dicke state $\ket{D_k^n}$---the
uniform superposition of all $\binom{n}{k}$ Hamming-weight-$k$ Fock states---which
has form-rank exactly $k$. Its 1-RDM is $(k/n)I_n$ by permutation
symmetry, so the uniform-occupation property of Lemma~\ref{lem:spreading} holds
\emph{without} a separate spreading step. The price is preparation depth:
$O(k\log(n/k))$ on all-to-all connectivity~\cite{bartschi2022short,yuan2025efficient}
(polylogarithmic, but deeper than the constant-per-block magic-state preparation)
and $O(n)$ on nearest-neighbor connectivity. The block-product construction is
therefore preferable when depth is the binding constraint. We note that the
trainability analysis of Section~\ref{sec:bp} does not transfer to this
input: it is driven by $\tau = \|[g,\rho_1]\|_F^2$, which vanishes
identically when $\rho_1 \propto I$, so both the disconnected and the
contracted contributions of Proposition~\ref{prop:norm_eval} are zero and
the gradient variance is determined entirely by the connected two-body
correlations of the Dicke state, which we do not analyze here.
\end{remark}

\section{\label{sec:butterfly} The Unitary Brick-Wall and Butterfly Architectures}

\subsection{Reconfigurable Beam Splitter Gates}

The basic gate primitive is the Reconfigurable Beam Splitter (RBS) gate,
introduced in the context of quantum machine learning
in~\cite{kerenidis2022quantum,landman2022quantum} as the natural building
block for implementing orthogonal transformations on quantum hardware.
Acting on qubits $i$ and $j$, it is defined as:
\begin{equation*}
  \mathrm{RBS}(\theta) = \exp(-i\theta G_{ij}), \quad
  G_{ij} = i(\ket{01}\bra{10} - \ket{10}\bra{01})_{ij}
\end{equation*}
with matrix representation in the $\{\ket{00},\ket{01},\ket{10},\ket{11}\}$
basis:
\begin{equation}
  \mathrm{RBS}(\theta) = \begin{pmatrix} 1 & 0 & 0 & 0 \\ 0 & \cos\theta &
  \sin\theta & 0 \\ 0 & -\sin\theta & \cos\theta & 0 \\ 0 & 0 & 0 & 1
  \end{pmatrix}.
\end{equation}
The generator $G_{ij}$ has eigenvalues $\{-1, 0, 0, +1\}$ and acts only
on the $\{\ket{01},\ket{10}\}$ subspace. Crucially, RBS gates are
\emph{particle-number preserving}: they map $\ket{01} \leftrightarrow \ket{10}$
and fix $\ket{00}$ and $\ket{11}$.

This gate is a member of several well-studied families that have appeared
across quantum information and condensed matter physics under different names.
It is a \emph{matchgate} in the sense of Valiant~\cite{valiant2001quantum} and
Terhal--DiVincenzo~\cite{terhal2002classical}, who showed that nearest-neighbor
circuits of such gates are efficiently classically simulable via their connection
to noninteracting fermions through the Jordan--Wigner
transformation~\cite{knill2001fermionic,jozsa2008matchgates}. In the language
of fermionic linear optics (FLO), the RBS gate is a passive FLO operation: it
implements a beam splitter transformation on two fermionic modes, mixing them
without creating or destroying particles. This connection places RBS circuits in
the Fermion Sampling framework~\cite{oszmaniec2022fermion}.

Augmenting each RBS gate with a single-qubit $R_z$ phase gate on \emph{each}
qubit of the pair gives the three-parameter gate primitive that we use
throughout both architectures,
\begin{equation}
  \calG(\theta, \phi_a, \phi_b) \;=\; \mathrm{RBS}(\theta) \cdot
  \bigl(R_z(\phi_a) \otimes R_z(\phi_b)\bigr).
  \label{eq:gate_primitive}
\end{equation}
This primitive contains the passive FLO two-qubit gate
$D_{\mathrm{pas}}(\alpha_1, \alpha_2)$ of Oszmaniec et
al.~\cite{oszmaniec2022fermion} (their Eq.~(20)):
\begin{equation}
  D_{\mathrm{pas}}(\alpha_1, \alpha_2)
  = \bigl(e^{-i\alpha_1 Z_1/2}\, e^{i\alpha_1 Z_2/2}\bigr)\,
    e^{i\alpha_2(X_1 X_2 + Y_1 Y_2)/2},
\end{equation}
where the second factor is the RBS gate with $\theta = \alpha_2$, and the first
factor is a \emph{correlated} $R_z$ pair
$(R_z(-\alpha_1) \otimes R_z(\alpha_1))$ up to global phase. Our primitive
$\calG$ carries two independent phases rather than a correlated pair, so it
strictly contains $D_{\mathrm{pas}}$: setting $\phi_a = -\phi_b = -\alpha_1$
and $\theta = \alpha_2$ recovers it exactly, up to global phase. The extra
phase direction is a single-qubit $Z$ rotation already present in the Cartan
subalgebra, so both parametrizations generate the same dynamical Lie algebra;
its role here is not algebraic but algorithmic---a phase layer covering all
$n$ modes is what the $R_z$ half of the parallel parameter-shift rule requires
(Section~\ref{sec:psr}). The unitary brick-wall and butterfly circuits
introduced below are therefore direct realizations of the passive FLO circuit
families of~\cite{oszmaniec2022fermion} with the DLA lifted from
$\mathfrak{so}(n)$ to $\mathfrak{u}(n)$ by the addition of $R_z$ gates.

\subsection{The Unitary Brick-Wall Circuit}
\label{sec:brick-wall}

The unitary brick-wall is the natural architecture for nearest-neighbor hardware.
It is surjective onto $U(n)$ in the single-particle sector, containing a
rectangular mesh of Givens rotations as a submanifold. The brick-wall consists
of $n$ columns of nearest-neighbor RBS gates on a ring of $n$ modes (a chain
with periodic boundary), in an alternating pattern---odd columns act on the
$n/2$ pairs $(1,2),(3,4),\dots,(n{-}1,n)$, even columns on the $n/2$ pairs
$(2,3),(4,5),\dots,(n{-}2,n{-}1),(n,1)$, for $n^2/2$ RBS gates in
total---each column preceded by a full-width layer of $n$ single-qubit $R_z$
gates. Every column therefore covers all $n$ modes with disjoint pairs, a
property the parallel parameter-shift rule of Section~\ref{sec:psr} exploits;
the open-boundary (chain) variant is discussed in
Remark~\ref{rem:chain_variant}. Formally, the $\ell$-th column applies a phase layer on all $n$ modes
followed by the RBS gates of that column:
\begin{align}
  U^{(\ell)} &= \Bigl[\bigotimes_{j \in P_\ell}
  \mathrm{RBS}(\theta_\ell^{(j)})\Bigr] \cdot
  \bigotimes_{i=1}^{n} R_z(\phi_\ell^{(i)}),
\end{align}
where $P_\ell$ is the set of $n/2$ active nearest-neighbor pairs of
column $\ell$; equivalently, each active pair carries the primitive $\calG$
of Eq.~\eqref{eq:gate_primitive}. The full trainable circuit is the ordered product of
the $n$ columns,
\begin{equation}
  U(\btheta, \bphi) = U^{(n)} \cdots U^{(1)}.
\end{equation}
Within each column all active pairs act on disjoint qubits, so generators
from different pairs commute---the structure exploited by the multi-layer
parallel parameter-shift rule.

We call this circuit the \emph{unitary brick-wall}, to distinguish it from the
\emph{orthogonal brick-wall} of prior work~\cite{kerenidis2022quantum}, which
implements $SO(n)$ in the single-particle sector. The $R_z$ gates are precisely
what promote the DLA from $\mathfrak{so}(n)$ to $\mathfrak{u}(n)$.

\begin{proposition}[Properties of the unitary brick-wall circuit]
\label{prop:brick-wall}
The unitary brick-wall $U(\btheta, \bphi)$ satisfies:
\begin{enumerate}
 \item \textbf{Depth:} $2n$
 \item \textbf{Parameters:} $T = n^2/2 + n^2 = 3n^2/2$
  \item \textbf{Universality:} $\{U(\btheta,\bphi)\}$ is \emph{surjective
onto} $U(n)$ in the single-particle sector, and contains the
$n^2$-dimensional Givens mesh as the submanifold obtained by zeroing the
boundary-pair angles and the redundant phases
  \item \textbf{DLA:} The dynamical Lie algebra is $\mathfrak{u}(n)$, with
        $\dim(\mathfrak{u}(n)) = n^2$.
\end{enumerate}
\end{proposition}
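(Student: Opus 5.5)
Items 1 and 2 are direct counts from the definition, and I will dispose of them first. Each column $U^{(\ell)}$ consists of one layer of $n$ single-qubit $R_z$ gates, which is depth $1$, followed by one layer of $n/2$ disjoint RBS gates, also depth $1$. The $n$ columns therefore give depth $2n$. Counting parameters, there are $n$ columns times $n/2$ RBS angles, which is $n^2/2$, plus $n$ columns times $n$ phases, which is $n^2$. Together that is $T=3n^2/2$.

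For the DLA (item 4), I work in the single-particle sector. Under Jordan--Wigner, $G_{j,j+1}$ acts as the real antisymmetric generator $i(E_{j,j+1}-E_{j+1,j})$, up to sign conventions, and $Z_j$ acts as $I-2E_{jj}$, so its traceless part is diagonal. The $R_z$ layers thus supply the diagonal Cartan elements $iE_{jj}$, up to the identity shift. Commuting these with the nearest-neighbour RBS generators gives the symmetric combinations $E_{j,j+1}+E_{j+1,j}$. Nested commutators along the ring then produce all $E_{ab}\pm E_{ba}$, since the nearest-neighbour graph is connected. Together with the diagonals, these span $\mathfrak{u}(n)$. For the upper bound I use that every generator is quadratic and number-conserving, $\sum h_{ab}c_a^\dagger c_b$, so the algebra sits inside the image of $\mathfrak{u}(n)$. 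Its dimension is therefore exactly $n^2$, counting the global phase generated by $\sum_j Z_j$.

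For universality (item 3), I construct a Reck/Clements-type decomposition. The Clements rectangular mesh shows that any $W\in U(n)$ factors as a diagonal phase matrix times $n(n-1)/2$ nearest-neighbour two-mode unitaries. These are arranged in $n$ alternating columns on the open chain, and each two-mode unitary has the form $\mathrm{RBS}(\theta)\,(R_z(\phi_a)\otimes R_z(\phi_b))$ up to phases. I then map this onto the brick-wall. In the ring architecture, I set the boundary pair $(n,1)$ to $\theta=0$ in every even column, which recovers exactly the open-chain alternating pattern. Each Clements cell's two phases are absorbed into the full-width $R_z$ layer preceding its column, and the final residual diagonal is pushed into a leading $R_z$ layer. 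This works because the phase layers come before each RBS column, so the Clements diagonal can be commuted to the front, using $DT=T'D'$-type rearrangements as in Clements et al. This also exhibits the $n^2$-dimensional Givens-mesh submanifold: $n(n-1)/2$ angles plus $n(n+1)/2$ relevant phases, with the remaining phases being redundant.

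The main obstacle is the bookkeeping in this last step. I must verify that the Clements pattern, with its $n$ columns and specific pair placement on the chain, fits inside $n$ columns of the ring pattern. I also must check that the phase-pushing rearrangements only need phases available in the $R_z$ layers positioned before each column. I would first handle the Clements column count and parity carefully, allowing odd columns with all pairs active and even columns with the boundary pair off. I would then invoke the standard fact that the Reck/Clements factorization uses a fixed rectangular layout of exactly this shape.
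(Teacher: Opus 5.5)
Items 1, 2 and 4 are fine and match the paper's argument. Your explicit upper bound for the DLA is a small addition. Note only that for the boundary pair $(n,1)$ the Jordan--Wigner string makes the generator quadratic just within each fixed-particle-number sector, where it is $(-1)^{k-1}$ times a hopping term; that is all the statement needs.

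For item 3 you take exactly the paper's route: boundary angles set to zero, the rectangular Clements mesh, each cell's phase placed in the $R_z$ layer preceding its column, and the residual diagonal moved into the first column. The step you defer as bookkeeping fails, and it is also where the paper's own proof breaks. Clements et al.\ obtain $W = D\prod T$ with $D$ on the \emph{output} side. Their identity $T^{-1}D = D'T'$ moves a diagonal from the input side to the output side, not the other way. A diagonal cannot be pushed to the input through a cell $T = R(\theta)\,\mathrm{diag}(e^{i\phi},1)$. The first column of $\mathrm{diag}(e^{i\alpha},e^{i\beta})\,R(\theta)\,\mathrm{diag}(e^{i\phi},1)$ is $e^{i\phi}\bigl(e^{i\alpha}\cos\theta,\,-e^{i\beta}\sin\theta\bigr)^{T}$. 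This is a real vector times a phase only if $e^{i(\alpha-\beta)} = \pm 1$ or $\sin 2\theta = 0$, whereas every $R(\theta')\,\mathrm{diag}(e^{ia},e^{ib})$ has this property. Moving $D$ inward therefore leaves a phase \emph{after} each rotation. In columns $1,\dots,n-1$ that phase fits into the next $R_z$ layer, but nothing follows the last RBS layer.

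A dimension count shows no rearrangement can rescue the chain restriction. Split each phase layer into two factors: one constant on every active pair of its column (and arbitrary on uncovered modes), and one consisting of relative phases $\mathrm{diag}(e^{i\gamma},e^{-i\gamma})$ on the pairs. The first factor commutes with that column's RBS layer and merges into the next phase layer. So with the boundary angles at zero, every circuit equals $S\,R_nQ_n\cdots R_1Q_1$, where $R_\ell$ is the $\ell$-th RBS layer and $Q_\ell$ its relative phases. This has $n(n-1)/2$ angles, $n(n-1)/2$ relative phases, and only $n/2+1$ phases in $S$. That is at most $n^2-n/2+1<n^2$ parameters for $n\ge 4$, so by Sard the image is a null set in $U(n)$.

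Your tally of $n(n+1)/2$ relevant phases (like the paper's $n(n-1)+n$) counts the $n/2$ cell phases of column~1 twice, since they sit in the same layer as the input diagonal. What your construction does prove, if you apply Clements directly without moving $D$, is $W = D\,V$ with $V$ realized by the chain circuit and $D$ diagonal on the output side. That factor is invisible to computational-basis sampling and to the diagonal readouts used later, but it is not surjectivity. Genuine surjectivity, and an honest $n^2$-dimensional Givens submanifold, need one of two things. Either add a trailing $R_z$ layer, or give an argument that actually uses the boundary-pair rotations. With those rotations the same reduction leaves $n^2+n/2$ parameters, so the count no longer obstructs surjectivity, but neither you nor the paper supplies that argument.
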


\begin{proof}
\textit{Items 1 and 2.}
The brick-wall consists of $n$ columns of nearest-neighbor RBS gates on
the ring, with $n/2$ disjoint active pairs in every column, giving
$n \cdot \tfrac{n}{2} = n^2/2$ RBS gates in total.
Each column is preceded by a full-width layer of $n$ single-qubit $R_z$ gates,
contributing $n$ parameters per column and $n^2$ in total; each column
therefore has depth~$2$ (one parallel $R_z$ layer followed by one parallel RBS
layer), giving total depth $2n$. Total parameter count:
$T = n^2/2 + n^2 = 3n^2/2$.
\textit{Item 3 (surjectivity).}
Restrict the parameters by setting the boundary-pair angles of the even
columns to zero---reducing each even column to its $n/2-1$ chain pairs---and,
in every column $\ell \ge 2$, the phase on the second qubit of each remaining
active pair and the phases on the modes not covered by those pairs to zero. What remains is exactly one phase per Givens
rotation, $n(n-1)/2$ of them, together with the $n$ free phases of the first
column. In the single-particle sector each $\mathrm{RBS}(\theta_\ell^{(j)})$
preceded by its single phase acts as a two-parameter complex Givens rotation
on the corresponding pair of modes, and the rectangular mesh decomposition of
$U(n)$ states that any $W \in U(n)$ factors as a product of $n(n-1)/2$
nearest-neighbor Givens rotations in this pattern preceded by a diagonal phase
matrix. The restricted map from
$\mathbb{R}^{n(n-1)+n} = \mathbb{R}^{n^2}$ to $U(n)$ is therefore surjective,
with the inverse given by the constructive Givens (QR-type) decomposition of
any $W \in U(n)$: the Givens rotations zero out the off-diagonal entries one
at a time, and the residual diagonal unitary is absorbed by the first
column's phase layer. Since $n(n-1) + n = n^2 = \dim_{\mathbb{R}} U(n)$, this
submanifold is a direct parametrization; the full circuit, with $3n^2/2$
parameters, is surjective but over-complete
(Remark~\ref{rem:overcomplete}).

\textit{Item 4 (DLA).}
The trainable RBS gates of the brick-wall act on nearest-neighbor
pairs of the ring, so their generators
$\{G_{i,i+1}\}_{i=1}^{n-1} \cup \{G_{n,1}\}$ do not by
themselves span $\mathfrak{so}(n)$; rather, since the chain
$1 - 2 - \cdots - n$ is connected, iterated commutation
$[G_{i,i+1}, G_{i+1,i+2}] \propto G_{i,i+2}$ reaches $G_{ik}$ for every pair
$(i,k)$, so the RBS generators generate $\mathfrak{so}(n)$ by Lie closure
(exactly as in Proposition~\ref{prop:butterfly}). The $R_z$ generators span the
full Cartan subalgebra $i\mathfrak{h} \subset \mathfrak{u}(n)$ of imaginary
diagonal matrices. The Lie closure of $\mathfrak{so}(n)$ and $i\mathfrak{h}$ is
$\mathfrak{u}(n)$ by the same commutator argument as in
Proposition~\ref{prop:butterfly} (commutators of the $G_{ij}$ with the diagonal
$R_z$ generators supply the symmetric off-diagonal generators absent from
$\mathfrak{so}(n)$)~\cite{fontana2024adjoint,larocca2021diagnosing}. This
Lie-closure statement for the \emph{algebra} is logically separate from the
mesh parametrization of the \emph{group} $U(n)$ in Item~3: the
layout reaches every $W \in U(n)$ as an ordered product of nearest-neighbor
Givens rotations, even though the nearest-neighbor generators span
$\mathfrak{so}(n)$ only after Lie closure.
\end{proof}

\begin{remark}[Over-completeness of the full-width phase layers]
\label{rem:overcomplete}
The full-width phase layers and the boundary pairs exceed what the Givens
mesh requires: the brick-wall carries $3n^2/2$ parameters---$(n^2-n)/2$
redundant phases and $n/2$ boundary-pair angles beyond the $n^2$ that already
saturate $\dim_{\mathbb{R}} U(n)$---and in the single-particle sector the
extra parameters add no expressivity. We adopt them for two reasons. First,
they make every column's layer primitive identical to the butterfly's ($n/2$
disjoint pairs covering all $n$ modes, each carrying $\calG$), so that the
two architectures share one gate, one Lie-algebra argument, and one
parameter-shift accounting. Second, and operationally, a phase layer covering
all $n$ modes and an RBS layer whose pairs cover all $n$ modes are precisely
the hypotheses under which the parallel parameter-shift rule costs $2k$
circuit evaluations per column per gate family
(Theorems~\ref{thm:parallel_psr_rbs} and~\ref{thm:parallel_psr_rz},
Lemma~\ref{lem:parity} and Remark~\ref{rem:fullwidth_necessary}).
\end{remark}

\begin{remark}[Open-boundary (chain) variant]
\label{rem:chain_variant}
On hardware with open boundary---a chain lacking the link between modes $n$
and $1$---the $n/2$ boundary gates of the even columns are simply omitted,
recovering the standard rectangular Givens mesh with $n(n-1)/2$ RBS gates and
$T = (3n^2-n)/2$ parameters. Surjectivity and Haar initialization
(Remark~\ref{rem:haar_init}), the trainability results of
Section~\ref{sec:bp}, and the hardness results of Section~\ref{sec:hardness}
are unaffected: the chain circuit is the submanifold of the ring circuit with
the boundary angles frozen at zero, and every such statement is either proved
on that submanifold or depends only on the realized single-particle unitary.
The only change is to the parallel parameter-shift rule of
Section~\ref{sec:psr}: an even column then leaves modes $1$ and $n$ unpaired,
the covering hypothesis of Lemma~\ref{lem:parity} fails for its RBS
gradients, and those columns use instead the $2k$-node quadrature exact on
\emph{all} polynomials of degree at most $2k-1$ (the system
$\sum_q a_q \sin r_q \cos^{m-1} r_q = \tfrac12$, $m = 1,\dots,2k$), at $4k$
circuit evaluations per even column. The per-gradient-step total becomes
$\tfrac{n}{2}\cdot 2k + \tfrac{n}{2}\cdot 4k + 2kn = 5kn$ in place of $4kn$,
and the reduction factor $(4n-2)/(5k) \approx 0.8\,n/k$ in place of $n/k$;
all asymptotic statements are unchanged.
\end{remark}

The key property of the brick-wall that distinguishes it from the butterfly is
that it is \emph{surjective onto} $U(n)$: every element of $U(n)$ is reachable,
and when the parameters $(\btheta, \bphi)$ are drawn from the appropriate
distribution, the brick-wall realizes the Haar measure on $U(n)$. This is a
strictly stronger statement than the butterfly's approximate 2-design property established below.
In particular, the anticoncentration result of Oszmaniec et
al.~\cite{oszmaniec2022fermion} (their Theorem~1) applies directly to the
brick-wall in the fully-packed regime: for Haar-random passive FLO circuits on
$U(n)$---which the brick-wall realizes---initialized in fully-packed magic
inputs ($k = n/2$), the output distribution anticoncentrates, and the full
average-case sampling hardness chain (anticoncentration +
worst-to-average-case reduction via Cayley path) goes through without
modification. Whether it extends to sparse magic inputs is open. We state
this formally, together with its regime of validity in $k$, in
Section~\ref{sec:hardness}.

\subsection{The Unitary Butterfly Circuit}
\label{sec:butterfly_circuit}

An $n$-qubit unitary butterfly circuit of depth $K = \log_2 n$ consists of $K$
layers. Layer $\ell \in \{1,\ldots,K\}$ applies a full-width layer of $n$
single-qubit phase gates followed by RBS gates on each of the $n/2$ disjoint
pairs with stride $2^{\ell-1}$:
\begin{align}
  U(\btheta,\bphi) &= U^{(K)}\cdots U^{(1)}, \\
  U^{(\ell)} &= \Bigl[\bigotimes_{j=1}^{n/2}
  \mathrm{RBS}(\theta_\ell^{(j)})\Bigr] \cdot
  \bigotimes_{i=1}^{n} R_z(\phi_\ell^{(i)}).
\end{align}
Equivalently, each pair carries the three-parameter primitive $\calG$ of
Eq.~\eqref{eq:gate_primitive}, one $R_z$ on each of its two qubits.
The $R_z$ gate acts as $R_z(\phi)\ket{0} = e^{i\phi/2}\ket{0}$,
$R_z(\phi)\ket{1} = e^{-i\phi/2}\ket{1}$: a diagonal one-body unitary, hence
passive FLO and particle-number preserving. Within each layer, all $n/2$ gate
pairs act on disjoint qubits, so generators from different pairs commute---the
key structure exploited by the multi-layer parallel parameter-shift rule.

We call this circuit the \emph{unitary butterfly}, to distinguish it from the
\emph{orthogonal butterfly} of prior work~\cite{kerenidis2022quantum,landman2022quantum},
which implements $SO(n)$ in the single-particle sector. The $R_z$ gates promote
the DLA from $\mathfrak{so}(n)$ to $\mathfrak{u}(n)$.

\begin{proposition}[Properties of the unitary butterfly circuit]
\label{prop:butterfly}
The unitary butterfly $U(\btheta,\bphi)$ satisfies:
\begin{enumerate}
  \item \textbf{Depth:} $2\log n$
  \item \textbf{Parameters:} $T = \tfrac{3}{2}n\log n$.
  \item \textbf{Universality:} the subgroup of $U(n)$ generated by
        $\{U(\btheta,\bphi)\}$ is dense in $U(n)$ in the single-particle
        sector: products of finitely many butterfly circuits approximate any
        $W \in U(n)$. A \emph{single} depth-$2\log n$ butterfly, by contrast,
        parametrizes a compact submanifold of $U(n)$ of dimension at most
        $\tfrac{3}{2}n \log n < n^2$, and is therefore \emph{not} dense in
        $U(n)$.
  \item \textbf{DLA:} The dynamical Lie algebra is $\mathfrak{u}(n)$, with
        $\dim(\mathfrak{u}(n)) = n^2$.
\end{enumerate}
\end{proposition}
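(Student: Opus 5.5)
Items~1 and~2 are direct counts, as in Proposition~\ref{prop:brick-wall}. Each of the $K=\log_2 n$ layers consists of one parallel layer of $n$ single-qubit $R_z$ gates followed by one parallel layer of $n/2$ RBS gates on disjoint stride-$2^{\ell-1}$ pairs, so each layer has depth~$2$ and the total depth is $2\log n$. Each layer carries $n$ phase parameters and $n/2$ RBS angles, giving $T=\tfrac32 n$ per layer and $T=\tfrac32 n\log n$ overall.

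For Item~4 I would follow the Lie-closure argument of Proposition~\ref{prop:brick-wall}, with the ring replaced by the butterfly interaction graph. The edges used are the pairs $(i, i\oplus 2^{\ell-1})$ for $\ell=1,\dots,K$, so this graph is the $\log n$-dimensional hypercube on $n$ vertices, which is connected. For adjacent edges $(i,j),(j,m)$ one has $[G_{ij},G_{jm}]\propto G_{im}$ in the single-particle sector. Iterating this along paths in the connected graph produces every $G_{ab}$, so the RBS generators generate $\mathfrak{so}(n)$. The $R_z$ generators span the imaginary diagonal Cartan subalgebra. Commutators such as $[Z_a, G_{ab}]$ give the symmetric off-diagonal generators $E_{ab}+E_{ba}$ (times $i$). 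Together with $\mathfrak{so}(n)$ and the diagonal, these span all of $\mathfrak{u}(n)$. Conversely, every generator lies in $\mathfrak{u}(n)$ because each gate is a one-body passive FLO operation. Hence the DLA is exactly $\mathfrak{u}(n)$, of dimension $n^2$.

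For Item~3, I would first invoke the standard fact that the closed subgroup generated by $\{e^{tX}: X \in \text{generators}\}$ is the connected Lie subgroup whose Lie algebra is the DLA. Since every individual gate $e^{-i\theta G}$ and $R_z(\phi)$ is itself a butterfly circuit (set all other parameters to zero), the group generated by butterfly circuits contains all one-parameter subgroups of the generators. Its closure is therefore the connected subgroup with Lie algebra $\mathfrak{u}(n)$, namely $U(n)$ itself. Because the generated group is path-connected and contains a neighborhood of the identity, it in fact equals $U(n)$; this is the claimed density, and finite products suffice. For the negative half, the parameter map $\mathbb{R}^{T}\to U(n)$ is smooth and $2\pi$-periodic in every parameter, so its image is the continuous image of a torus. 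That image is compact, hence closed, and by Sard's theorem it has dimension at most $T=\tfrac32 n\log n$. This is strictly less than $n^2=\dim_{\mathbb R}U(n)$ for $n\ge 2$ (check $n=2$: $3<4$). A closed proper subset cannot be dense, and the image has measure zero. The main obstacle is making this dimension claim rigorous: the image of a smooth map need not be a submanifold. I would therefore state it as ``the image has Hausdorff dimension at most $T$'', which holds for smooth maps from a $T$-dimensional compact manifold, and conclude non-density from closedness together with measure zero.
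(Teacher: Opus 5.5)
Your proposal is correct and follows the paper's proof essentially step for step: the same per-layer depth and parameter counts for Items~1--2, the same Lie-closure argument for Item~4 (the connected stride-doubling, i.e.\ hypercube, pair graph plus commutators with the diagonal phase generators), and for Item~3 the same two ingredients, namely Lie-algebra generation for the generated subgroup and compactness together with $\tfrac32 n\log n<n^2$ for a single butterfly. Your Item~3 is a little sharper than the paper's one-line version, since you get equality with $U(n)$ by finite products and replace the informal ``compact submanifold'' with a closed measure-zero image via Sard; also, your ``conversely'' step needs no passive-FLO appeal, because in the single-particle sector every generator is already an anti-Hermitian $n\times n$ matrix---which is just as well, since a qubit-level RBS on non-adjacent qubits carries no Jordan--Wigner string and so does not act as a one-body fermionic operator on sectors with two or more particles.
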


\begin{proof}
\textit{Items 1 and 2.}
Each of the $K = \log_2 n$ layers acts on $n/2$ disjoint pairs in parallel;
the full-width $R_z$ layer (depth 1) is followed by the RBS layer (depth
1), giving per-layer depth 2 and total depth $2\log n$. Each layer contributes
$n/2$ angles $\theta_\ell^{(j)}$ and $n$ angles $\phi_\ell^{(i)}$, for $3n/2$
parameters per layer and $T = \tfrac{3}{2}n\log n$ total.

\textit{Items 3 and 4.}
The RBS generators $G_{ij}$ are purely imaginary antisymmetric matrices. In the
single-particle sector they span $\mathfrak{so}(n)$ in the sense of Lie closure:
the butterfly uses only $n\log n/2$ distinct qubit pairs across its $\log n$
layers, but the stride-doubling layer pattern makes these pairs form a connected
graph on $n$ modes with diameter $\log n$. Since $[G_{ij}, G_{jk}] \propto
G_{ik}$, iterated commutation along connecting paths reaches $G_{ik}$ for
arbitrary $(i,k)$, spanning all of $\mathfrak{so}(n)$. The $R_z$ generators
span the full Cartan subalgebra $i\mathfrak{h} \subset \mathfrak{u}(n)$ with
dimension $n$. Commutators $[G_{ij}, iH_{kk}]$ produce the symmetric
off-diagonal generators $i(e_{ij}+e_{ji})/2$ absent from $\mathfrak{so}(n)$,
and together $\mathfrak{so}(n)$, $i\mathfrak{h}$, and these commutators span
all of $\mathfrak{u}(n)$~\cite{fontana2024adjoint,larocca2021diagnosing}, giving
$\mathrm{DLA} = \mathfrak{u}(n)$ of dimension $n^2$. Density of the
\emph{generated subgroup} in $U(n)$ follows from the density theorem for Lie
groups; the dimension count $\tfrac{3}{2}n\log n < n^2$ shows that the image of a single
fixed-depth butterfly, a compact set, cannot itself be dense. The induced
action on the
$k$-particle antisymmetric subspace is the same exterior-power
representation of $U(n)$ described at the start of Section~\ref{sec:bp}.
\end{proof}

While the butterfly and brick-wall have the same DLA $\mathfrak{u}(n)$, they
differ in an important way: the brick-wall is \emph{surjective onto} $U(n)$,
while a fixed-depth butterfly covers only a submanifold of $U(n)$ of dimension
at most $\tfrac{3}{2}n\log n$, achieving the full algebra $\mathfrak{u}(n)$
only through Lie closure.
As a consequence, a uniformly random butterfly
circuit does not realize the Haar measure on $U(n)$; instead it induces a
structured distribution whose second-moment operator, restricted to the
antisymmetric (fermionic two-mode) sector that carries all observables used
in this paper, has spectral gap $1-1/n$, establishing the butterfly as an
$\varepsilon$-approximate
unitary $2$-design on that sector with $\varepsilon = O(1/n)$. This spectral
gap is established
in Appendix~\ref{app:spectral_gap} and underlies the barren-plateau analysis of
Section~\ref{sec:bp}. The hardness results for the two architectures are
correspondingly different: the brick-wall supports the full average-case
sampling-hardness machinery of the Oszmaniec et al.\ framework (in the
fully-packed regime) and worst-case \#P-hardness at $k = n^{\epsilon}$, while
for the butterfly the unconditional statement is the extent-based
best-known-algorithm cost, with worst-case \#P-hardness holding under a
reachability assumption; we return to this in
Section~\ref{sec:hardness}.

\subsection{Connection to the FFT and Walsh-Hadamard Transform}
\label{sec:fft}

At $\theta = \pi/4$ and $\phi = 0$, each gate pair of the butterfly reduces to
a beam splitter performing $\ket{01} \mapsto (\ket{01}+\ket{10})/\sqrt{2}$. The
butterfly at all-$\pi/4$ angles and zero phases implements the Walsh--Hadamard
transform in the single-particle sector~\cite{kerenidis2022quantum}:
\begin{equation}
  U^{\mathrm{had}} \ket{e_i} = \frac{1}{\sqrt{n}} \sum_{j=1}^n
  (-1)^{\langle i,j\rangle} \ket{e_j},
\end{equation}
where $\langle i,j\rangle$ is the bitwise inner product. The circuit structure
follows exactly the radix-2 butterfly decomposition of Cooley and
Tukey~\cite{cooley1965algorithm}: at each of the $\log n$ layers, pairs of modes
are mixed by a two-mode gate with the pairing stride doubling at each stage.
More generally, by setting the phase gates to the appropriate twiddle factors
$\omega^{jk/2^\ell}$ with $\omega = e^{2\pi i/n}$, following the Cooley--Tukey
decomposition, the circuit implements the full discrete Fourier transform in the
single-particle sector~\cite{jain2024qfno}. The fact that both the
Walsh--Hadamard transform and the discrete Fourier transform are reachable within
the parametrized family highlights the architectural richness of the butterfly:
canonical passive-FLO operations central to quantum algorithmic primitives are
specific points in the parameter space, and the trainable circuit can interpolate
between and beyond them. In this paper we use $U^{\mathrm{had}}$ (all phases
zero) as the fixed Hadamard spreading circuit of Section~\ref{sec:encoding} for
the all-to-all connectivity regime, and the fully parametrized butterfly (free
$\btheta$ and $\bphi$) as the trainable circuit.

The analogous fixed-angle brick-wall (all chain-pair RBS angles set to
$\pi/4$, the boundary-pair angles and all $R_z$ angles set to zero) also
implements the Walsh--Hadamard transform in the single-particle sector, at
depth $n-1$. This is the Hadamard brick-wall spreading
circuit used in the nearest-neighbor encoding of
Section~\ref{sec:encoding}. The trainable brick-wall then follows as a fully
parametrized circuit with free $\btheta$ and $\bphi$.

\section{Barren Plateaus: Provable Trainability}
\label{sec:bp}

Both the unitary brick-wall and the unitary butterfly share the same dynamical
Lie algebra. As established in Propositions~\ref{prop:brick-wall}
and~\ref{prop:butterfly}, the Lie closure of the RBS generators (spanning
$\mathfrak{so}(n)$) and the $R_z$ generators (spanning the Cartan subalgebra
$i\mathfrak{h}$) gives:
\begin{equation}
  \mathfrak{g} = \mathfrak{u}(n), \quad \dim(\mathfrak{g}) = n^2,
\end{equation}
for both architectures. In both cases the induced action on the
$k$-particle antisymmetric subspace is the $k$-fold antisymmetric tensor
(exterior power) representation of $U(n)$, whose image is an
$n^2$-dimensional Lie subgroup of the full unitary group on the
$\binom{n}{k}$-dimensional $k$-particle space. The trainability results of
this section apply to the
magic-state encoding of Section~\ref{sec:encoding}, in
particular at the operating point of
Eq.~\eqref{eq:operating_point}. The gradient-variance analysis depends only on
the structure of the trainable circuit and the single-particle occupation
structure of the encoded input (Lemma~\ref{lem:spreading}).

Despite sharing the same DLA, the two architectures differ in what can be
established unconditionally about their gradient variance:
\begin{itemize}
  \item \textbf{Brick-wall}: the brick-wall is surjective onto $U(n)$ and
  realizes the Haar measure on $U(n)$ for the appropriate parameter
  distribution. This allows the Weingarten formula to be applied exactly,
  yielding a \emph{closed expression} for the gradient variance of two-body
  readouts (Theorem~\ref{thm:brick-wall_bp}) which evaluates to
  $\Theta(k^3/n^5)$ throughout $n - 2k = \Omega(n)$. The statement is
  parametrization-independent: it is proved for the invariant directional
  derivative at a Haar point, hence exactly for the input-layer parameters
  (Corollary~\ref{cor:first_column}), and for every gate of the circuit in the
  independent-halves ensemble (Theorem~\ref{thm:interior_halves}).
  \item \textbf{Butterfly}: the butterfly is an $\varepsilon$-approximate
  unitary $2$-design with $\varepsilon = O(1/n)$ on the antisymmetric
  two-mode sector of the single-particle tensor square---the sector that
  carries all fermionic observables used in this paper
  (Theorem~\ref{thm:butterfly_2design}). This gives an
  unconditional polynomial lower bound $\Omega(k^4/n^6)$ on the
  gradient variance, and the sharp $\Theta(k^3/n^5)$ rate conditional on
  Conjecture~\ref{conj:lambda2_design} (the butterfly is an approximate
  $2$-design on $\Lambda^2 \mathbb{C}^n$).
\end{itemize}
In both cases the gradient variance is inverse polynomial in $n$ and grows
polynomially in the particle number $k$, ruling out
exponential barren plateaus throughout $n - 2k = \Omega(n)$.
The contrast with the orthogonal architectures (DLA $\mathfrak{so}(n)$) is
sharpest here: for the orthogonal brick-wall or butterfly, the gradient
variance of the two-body readout $n_i n_j$ scales as
$\Theta(1/\binom{n}{k})$~\cite{monbroussou2025trainability}, which is
$\Theta(n^{-60})$ already at the sampling operating point $k = 60$ and
superpolynomially small at $k = \omega(\log n)$. The $R_z$ gates that lift the
DLA from $\mathfrak{so}(n)$ to $\mathfrak{u}(n)$ are therefore essential for
trainability at the particle numbers required for classical hardness.

\subsection{Few-Body Readouts Are Classically Tractable}
\label{sec:onebody}

We first observe that expectation values of observables of fixed body number,
while free of any exponential barren plateau, are classically simulable and
therefore cannot be the locus of quantum advantage under either architecture.
This is a consequence solely of the fact that such expectation values depend
on the encoded input through a reduced density matrix of the same body
number, which propagates through the corresponding exterior power of the
single-particle unitary.

\begin{proposition}[Fixed-body readouts depend only on the $r$-RDM]
\label{prop:onebody_const}
Fix $r \ge 1$. For the data-encoded input $\ket{\psi(\bx)}$ of
Section~\ref{sec:encoding} and any passive FLO unitary $W$ with
single-particle matrix $W^{(1)} \in U(n)$ (in particular, either the unitary
brick-wall or the unitary butterfly), the expectation value of any $r$-body
observable $M$ is
\begin{equation}
  \bra{\psi(\bx)} W^\dagger M W \ket{\psi(\bx)}
  = \tr\!\bigl(M^{(r)}\, \Lambda^r(W^{(1)})\, \rho_r(\bx)\,
  \Lambda^r(W^{(1)})^\dagger\bigr),
  \label{eq:rbody_closure}
\end{equation}
a function of the $r$-particle data $\bigl(M^{(r)}, W^{(1)},
\rho_r(\bx)\bigr)$ alone, where $M^{(r)}$ and $\rho_r(\bx)$ are the $r$-body
matrix of $M$ and the $r$-particle reduced density matrix of
$\ket{\psi(\bx)}$, both of dimension $\binom{n}{r}$. For the block-product
inputs of Section~\ref{sec:encoding} the matrix $\rho_r(\bx)$ is available in
closed form, so~\eqref{eq:rbody_closure} is computable classically in
$\poly(n)$ time for every fixed $r$, at any parameter setting and
independently of the particle number $k$, the FLO extent, and the form-rank
of the input. In particular one-body readouts cost $O(n^2)$, with
\begin{equation}
  \langle n_j \rangle(\bx)
  = \bigl[W^{(1)} \rho_1(\bx) (W^{(1)})^\dagger\bigr]_{jj},
  \qquad \sum_{j} \langle n_j \rangle = k,
\end{equation}
and mean occupation $k/n$ per mode (Lemma~\ref{lem:spreading}). Consequently
few-body readouts admit efficient classical simulation regardless of the
input's non-Gaussian structure and cannot witness quantum advantage.
\end{proposition}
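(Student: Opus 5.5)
The plan is to work in second quantization and prove Eq.~\eqref{eq:rbody_closure} by a covariance argument, then make the classical cost explicit.

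\textbf{Step 1: write the observable and the RDM in second quantization.} An $r$-body observable has the form
\[
M=\sum_{I,J} M^{(r)}_{IJ}\, c_{i_1}^\dagger\cdots c_{i_r}^\dagger c_{j_r}\cdots c_{j_1},
\]
where $I$ and $J$ range over $r$-subsets of $[n]$. Define $\rho_r(\bx)_{JI}=\bra{\psi(\bx)} c_I^\dagger c_J\ket{\psi(\bx)}$. Then for any state, $\langle M\rangle=\tr(M^{(r)}\rho_r)$.

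\textbf{Step 2: covariance of the operators under passive FLO.} A passive FLO unitary $W$ acts on creation operators linearly:
\[
W^\dagger c_i^\dagger W=\sum_a (W^{(1)})_{ai}\, c_a^\dagger,
\]
up to the transpose/conjugation convention, which I will fix once. Multiplying $r$ creation operators, the antisymmetry of the fermionic products turns the $r$-fold tensor product of $W^{(1)}$ into its restriction to the antisymmetric subspace. Those coefficients are exactly the $r\times r$ minors, i.e.\ the entries of $\Lambda^r(W^{(1)})$. The same holds for the annihilation strings with the conjugate. Hence $W^\dagger M W$ is again $r$-body, with $r$-body matrix $\Lambda^r(W^{(1)})^\dagger M^{(r)}\Lambda^r(W^{(1)})$. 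Taking the expectation in $\ket{\psi(\bx)}$ and using cyclicity of the trace gives Eq.~\eqref{eq:rbody_closure}. The one-body case $r=1$ specializes to $\langle n_j\rangle=[W^{(1)}\rho_1 W^{(1)\dagger}]_{jj}$. Summing over $j$ gives $\tr\rho_1=k$, and Lemma~\ref{lem:spreading} supplies the value $k/n$ at the spread input. The main place to be careful is the index-ordering and conjugation conventions, so that the same $\Lambda^r$ appears on both sides.

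\textbf{Step 3: closed form for $\rho_r(\bx)$.} Write $\ket{\psi(\bx)}=D\,U^{\mathrm{had}}\ket{\Psi_{\mathrm{in}}}$. Both $D$ and $U^{\mathrm{had}}$ are passive FLO, so by Step~2,
\[
\rho_r(\bx)=\Lambda^r(DH)\,\rho_r^{\mathrm{in}}\,\Lambda^r(DH)^\dagger .
\]
It therefore suffices to compute $\rho_r^{\mathrm{in}}$ for the block product $\ket{\Psi_4}^{\otimes k/2}\otimes\ket{0}^{\otimes(n-2k)}$.

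\textbf{Step 4: compute $\rho_r^{\mathrm{in}}$ block by block.} An entry $\langle c_I^\dagger c_J\rangle$ factorizes over blocks, up to a Jordan--Wigner sign. The sign is computable, since each block has even particle number in both of its components, so string operators through completed blocks contribute trivially. An entry is nonzero only if, for each block, the restriction of $c_I^\dagger c_J$ maps $\ket{\Psi_4}$ to something with nonzero overlap on $\ket{\Psi_4}$. Within a single block the only nonvanishing local expectations are:
\begin{itemize}
\item the identity;
\item $n_a$, with value $1/2$;
\item $n_an_b$ for $a,b$ in the same pair, with value $1/2$;
\item the pair hop $c_0^\dagger c_1^\dagger c_3c_2$ and its adjoint, with value $1/2$.
\end{itemize}
So each entry is a product of at most $r$ such local factors, given by an explicit rule. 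The matrix has $\binom{n}{r}^2=\poly(n)$ entries, each computable in $\poly(r)$ time.

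\textbf{Step 5: cost accounting.} The minors forming $\Lambda^r(W^{(1)})$ cost $O(\binom{n}{r}^2 r^3)$. The matrix products cost $\poly(\binom{n}{r})$. This is polynomial for fixed $r$ and independent of $k$, of the extent $2^{k/2}$, and of the form-rank. For $r=1$ the direct computation of the diagonal entries costs $O(n^2)$, since $\rho_1$ is diagonal times $\tfrac12$ before $H$ and $D$ act.

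I expect the main obstacle to be Step~4: keeping the fermionic signs and the pair-hop terms straight. The fallback is to note that $\rho_r^{\mathrm{in}}$ is block-sparse with $O(1)$-sized local tables, which makes polynomial computability clear even without a fully explicit formula.
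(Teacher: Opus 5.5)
Your proposal is correct and follows the paper's proof. You use the linear Heisenberg action of the passive FLO unitary on mode operators, with antisymmetrization producing the minors $\Lambda^r(W^{(1)})$, and then obtain $\rho_r(\bx)=\Lambda^r(DH)\,\rho_r^{\mathrm{in}}\,\Lambda^r(DH)^\dagger$ with $\rho_r^{\mathrm{in}}$ computed blockwise. Your explicit table of nonvanishing single-block expectations (identity, $n_a$, same-pair $n_an_b$, and the pair hop with its adjoint), together with the even-parity argument for the Jordan--Wigner signs, just makes concrete the paper's phrase ``computed blockwise from the magic-block structure.''
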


\begin{proof}
For an $r$-body observable $M$ and passive FLO unitary $W$, conjugation
replaces each mode operator by a linear combination of mode operators,
$W^\dagger a_i W = \sum_p W^{(1)}_{ip} a_p$; antisymmetrizing a product of
$r$ such operators produces the $r \times r$ minors of $W^{(1)}$, that is the
exterior power $\Lambda^r(W^{(1)})$. The Heisenberg evolution therefore stays
within the $r$-body algebra and the expectation value contracts to the
$r$-particle sector, giving~\eqref{eq:rbody_closure}; this is the standard
fixed-body moment-closure property of passive fermionic linear
optics~\cite{bravyi2005lagrangian}. Both $M^{(r)}$ and $\rho_r(\bx)$ have
dimension $\binom{n}{r} = O(n^r)$, so evaluating~\eqref{eq:rbody_closure}
costs $\poly(n)$ for fixed $r$; the data layer acts diagonally,
$\rho_r(\bx) = \Lambda^r(D(\bvarphi(\bx)))\,\rho_r^{\mathrm{out}}\,
\Lambda^r(D(\bvarphi(\bx)))^\dagger$, and $\rho_r^{\mathrm{out}}$ is
supported on $\Lambda^r$ of the $2k$-dimensional active subspace and computed
blockwise from the magic-block structure. For $r = 1$ the matrices are
$n \times n$ and the cost is $O(n^2)$ per entry; Lemma~\ref{lem:spreading}
fixes the diagonal of $\rho_1^{\mathrm{out}}$ (and hence of $\rho_1(\bx)$,
since $D$ is diagonal) at $k/n$, giving
$\sum_j \langle n_j\rangle = \tr(\rho_1) = k$.
\end{proof}

Few-body readouts are therefore classical baselines: no advantage claim
attaches to them, and the hardness results of Section~\ref{sec:hardness} are
stated at the level of sampling. Their role in this paper is as the
\emph{trainable} readout, and among them the two-body correlator
$\langle n_i n_j \rangle$ is the family for which we establish polynomial
gradient variance for both architectures.

\subsection{Trainability of the Unitary Brick-Wall}
\label{sec:bp_brick-wall}

The unitary brick-wall is surjective onto $U(n)$ via the Givens rotation
decomposition: every element of $U(n)$ is reachable, and the map from
circuit parameters $(\btheta, \bphi)$ to single-particle unitary $W \in U(n)$
is surjective. This surjectivity enables a natural and efficient
initialization strategy: sample $W$ from the Haar measure on $U(n)$, then
compute the corresponding RBS angles $\btheta$ and $R_z$ angles $\bphi$ via
the Givens decomposition (equivalently, via QR decomposition of the sampled
matrix). This decomposition is numerically stable and requires $O(n^2)$
operations. Under this initialization, the brick-wall realizes the Haar measure
on $U(n)$ exactly.

The computation rests on five structural lemmas, which we state and prove
after the theorem so as not to interrupt the argument: one bounds the effect
of a \emph{local} gate generator on a Hadamard-delocalized two-particle
state, two are exact identities on $\Lambda^2\mathbb{C}^n$ that together
evaluate the Weingarten sum in closed form, and two more supply the
Haar-averaged versions of the first needed for the interior-gate statements
of Theorem~\ref{thm:interior_halves}.

Throughout we use the following notation. We write $d = \binom n2$ and
identify an operator $M$ on $\Lambda^2 \mathbb{C}^n$ with its kernel
$M^{ij}_{kl}$, antisymmetric in the upper and in the lower index pair, so that
$\tr M = \tfrac12 M^{ij}_{ij}$ and $\tr(MN) = \tfrac14 M^{ij}_{kl}N^{kl}_{ij}$
(repeated indices summed). The \emph{one-particle contraction} of $M$ is the
$n \times n$ matrix
\begin{equation}
  \Gamma(M)^i_{\ k} \;:=\; \sum_{j} M^{ij}_{kj},
  \qquad \Gamma(I_{\Lambda^2}) = (n-1)\,I_n .
  \label{eq:contraction_def}
\end{equation}
For a $k$-particle state, $\Gamma(\rho_2) = (k-1)\rho_1$: contracting one
index of the two-particle reduced density matrix against itself removes one
particle and counts the $k-1$ ways of choosing the partner. Finally,
$d\Lambda^2(g)$ denotes the derivation lift of a single-particle operator $g$
to $\Lambda^2\mathbb{C}^n$, $d\Lambda^2(g)(u\wedge v) = (gu)\wedge v +
u\wedge(gv)$; we write $G^{(2)} = d\Lambda^2(g)$ for the lift of the
single-particle generator $g$ of a gate.

Throughout this section $\theta$ ranges over the trainable parameters whose
gate lies in the backward causal cone of the readout observable: a gate
acting on modes disjoint from the support of $M$ propagated back through the
layers that follow it leaves $\calL$ invariant, so its gradient vanishes
identically. For the trained readout
$M_{\mathrm{eff}} = \sum_{i<j} w_{ij}\, n_i n_j$ of
Section~\ref{sec:readouts}, with all $w_{ij}$ nonzero, the cone is the whole
circuit.

\begin{theorem}[Closed-form gradient variance for the brick-wall, invariant form]
\label{thm:brick-wall_bp}
Let $M = n_i n_j$ be a two-body correlator observable ($i \neq j$) and let
$\ket{\psi(\bx)}$ be the data-encoded input of Section~\ref{sec:encoding}. Write
$\calL(W;\bx)$ for the loss of the unitary brick-wall viewed as a function of
the single-particle unitary $W \in U(n)$ realized by the trainable circuit, so
that $\calL(\btheta,\bphi;\bx) = \calL\bigl(W(\btheta,\bphi);\bx\bigr)$, and
let $W \sim \mathrm{Haar}(U(n))$ as in Remark~\ref{rem:haar_init}. For a fixed
direction $g \in \mathfrak{u}(n)$ define the \emph{invariant} (input-side,
right-invariant) directional derivative
\begin{equation}
\label{eq:invariant_derivative}
  D_g\calL \;:=\; \frac{d}{ds}\,\calL\bigl(W e^{-isg};\bx\bigr)\Big|_{s=0},
\end{equation}
and set
\begin{equation}
  B \;:=\; \bigl[\,G^{(2)},\, \rho_{2,\mathfrak{g}_2}(\bx)\,\bigr],
  \qquad G^{(2)} = d\Lambda^2(g),
\end{equation}
where $\rho_{2,\mathfrak{g}_2}$ is the traceless part of the encoded $2$-RDM.
Then $\E[D_g\calL] = 0$ and, exactly and for every $\bx$,
\begin{eqnarray}
\label{eq:var_exact}
  \Var\!\left[D_g\calL\right]
  &=& \frac{4\,\|B\|_F^2 \;+\; 2\,\|\Gamma(B)\|_F^2}{n^2(n^2-1)},
  \\
  \Gamma(B)& = &(k-1)\,[\,g,\,\rho_1(\bx)\,] .
\end{eqnarray}
Writing $\rho_1(\bx) = \tfrac12\Pi$ with $\Pi$ a rank-$2k$ projector
(Lemma~\ref{lem:spreading}) and $\tau := \|[g,\Pi]\|_F^2$, one has
\begin{equation}
\label{eq:full_rate}
  \|\Gamma(B)\|_F^2 = \frac{(k-1)^2}{4}\,\tau,
  \;\;
  \|B\|_F^2 = \frac{(2k-1)\,\tau}{16} + O\!\left(\frac{k^{3/2}}{n}\right),
\end{equation}
and, for the encoded state and either generator family,
$\tau = \Theta\bigl(k(n-2k)/n^2\bigr)$. Consequently, whenever
$n - 2k = \Omega(n)$---in particular at the operating point $k = 60$
of Eq.~\eqref{eq:operating_point}, and at any constant filling
fraction $k = cn$ with $c < 1/2$---
\begin{equation}
\label{eq:k3_rate}
  \Var\!\left[D_g\calL\right]
  \;=\; \Theta\!\left(\frac{k^3}{n^5}\right),
\end{equation}
polynomial in $n$ and cubic in the particle number $k$, for every
$k \ge 2$. The variance therefore decays at most polynomially in $n$ and there is
no exponential barren plateau.
\end{theorem}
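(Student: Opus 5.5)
The plan is to reduce everything to the two-particle sector and then evaluate a Haar second moment with Schur orthogonality, using the decomposition of $\mathrm{End}(\Lambda^2\mathbb{C}^n)$ into $U(n)$-irreducibles.

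\textbf{Step 1: reduction to $\Lambda^2$.} By Proposition~\ref{prop:onebody_const} with $r=2$, $\calL(W;\bx)=\tr\bigl(M^{(2)}\Lambda^2(W)\rho_2(\bx)\Lambda^2(W)^\dagger\bigr)$. Here $M^{(2)}$ is the rank-one projector onto $e_i\wedge e_j$. Since $\Lambda^2(We^{-isg})=\Lambda^2(W)e^{-isG^{(2)}}$, differentiating gives
\[
D_g\calL=-i\,\tr\bigl(M^{(2)}\Lambda^2(W)\,[G^{(2)},\rho_2]\,\Lambda^2(W)^\dagger\bigr).
\]
The identity component of $\rho_2$ commutes with $G^{(2)}$, so $\rho_2$ may be replaced by its traceless part $\rho_{2,\mathfrak{g}_2}$, and the bracket becomes $B$. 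The first Haar moment sends $B$ to $\tr(B)/d$ times the identity. Since $B$ is a commutator, $\tr B=0$, and hence $\E[D_g\calL]=0$.

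\textbf{Step 2: Weingarten via irreps.} As a $U(n)$-module under conjugation, $\mathrm{End}(\Lambda^2\mathbb{C}^n)$ splits into three pieces:
\begin{itemize}
\item the trivial piece;
\item an adjoint-type piece, detected by the one-particle contraction $\Gamma$ and isomorphic to $\mathfrak{sl}_n$, of dimension $n^2-1$;
\item the kernel of $\Gamma$ on traceless operators, which is irreducible of dimension $d^2-n^2$.
\end{itemize}
I will verify the irreducibility of the third piece by a character or highest-weight count.

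Schur orthogonality then gives $\E|\tr(M^\dagger\Lambda^2(W)B\Lambda^2(W)^\dagger)|^2=\sum_\lambda \|P_\lambda M\|^2\|P_\lambda B\|^2/d_\lambda$. The projector onto the adjoint piece is explicit in terms of $\Gamma$: it is $X\mapsto$ the lift $d\Lambda^2$ of $c\,\Gamma(X)_0$, where the constant $c$ is fixed by requiring idempotence on lifts. This lets me write $\|P_{\rm adj}B\|^2$ as a multiple of $\|\Gamma(B)\|_F^2$.

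For $M=n_in_j$, the norms $\|P_\lambda M\|^2$ are explicit rational functions of $n$, because $\Gamma(M^{(2)})=e_{ii}+e_{jj}$. Collecting terms should produce the stated numerator $4\|B\|_F^2+2\|\Gamma(B)\|_F^2$ over $n^2(n^2-1)$; I will check this bookkeeping against $n=3$. The identity $\Gamma(B)=(k-1)[g,\rho_1]$ follows from the $U(n)$-equivariance of $\Gamma$. Differentiating that equivariance gives $\Gamma([G^{(2)},X])=[g,\Gamma(X)]$, and combining this with $\Gamma(\rho_2)=(k-1)\rho_1$ yields the identity.

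\textbf{Step 3: norms for the encoded state.} Since $\rho_1=\tfrac12\Pi$ by Lemma~\ref{lem:spreading}, the contraction term is immediate: $\|\Gamma(B)\|_F^2=\tfrac{(k-1)^2}{4}\tau$.

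For $\|B\|_F^2$ I will write $\rho_2$ for the block-product input, Hadamard-rotated and phase-encoded. It equals the uncorrelated part $\rho_1\wedge\rho_1$, up to normalization, plus intra-block pairing corrections. Each block contributes a pair term $\propto (w_{4g}\wedge w_{4g+1})(w_{4g+2}\wedge w_{4g+3})^\dagger$ plus its Hermitian conjugate, and there are $k/2$ such terms. The uncorrelated part gives $\|[G^{(2)},\rho_1\wedge\rho_1]\|_F^2$, whose traceless leading part is $\tfrac{(2k-1)\tau}{16}$ by a direct expansion with $\Pi^2=\Pi$. The correlated pieces are bounded by Cauchy--Schwarz using $|H_{ab}|=n^{-1/2}$, which gives the $O(k^{3/2}/n)$ error.

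Finally, $\tau=2\tr(g^2\Pi)-2\tr(g\Pi g\Pi)$. For a local generator on modes $a,b$ (RBS or $R_z$) this is a small expression in the entries $\Pi_{aa},\Pi_{bb},\Pi_{ab}$. Each diagonal entry equals $2k/n$, and the off-diagonal entries are Hadamard sums over the active set, which I will compute explicitly from $H$. This should give $\tau=\Theta(k(n-2k)/n^2)$. Substituting, $k^2\tau/n^4\sim k^3(n-2k)/n^6=\Theta(k^3/n^5)$ when $n-2k=\Omega(n)$, while the $\|B\|_F^2$ term is of lower order.

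\textbf{Main obstacle.} I expect the main difficulty to be the $\|B\|_F^2$ estimate together with the uniform lower bound on $\tau$ over gate positions, in particular controlling $|\Pi_{ab}|$ for the specific Hadamard pattern. The irrep constants in Step 2 are mechanical but error-prone. If $|\Pi_{ab}|$ is hard to bound directly, my first attempt will be the explicit structure of the Walsh--Hadamard transform restricted to the first $2k$ modes.
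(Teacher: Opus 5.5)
Your proposal is correct. Steps~1 and~3 follow the paper's argument, but Step~2 takes a genuinely different route.

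\textbf{Step 2 versus the paper.} The paper twirls $M^{(2)\otimes 2}$ on the tensor square $\Lambda^2\mathbb{C}^n\otimes\Lambda^2\mathbb{C}^n=V_{(2,2)}\oplus V_{(2,1,1)}\oplus\Lambda^4\mathbb{C}^n$. Evaluating the three weights there requires the plethysm and the $\Lambda^4$ trace formula (Lemma~\ref{lem:lambda4}). The rank-one-projector property of $M^{(2)}$ then kills two of the channels.

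You instead apply matrix-coefficient orthogonality to the conjugation module $\mathrm{End}(\Lambda^2\mathbb{C}^n)\cong\Lambda^2V\otimes\Lambda^2V^{*}=\mathbf{1}\oplus\mathfrak{sl}_n\oplus V_3$, where $V_3$ has highest weight $(1,1,0,\dots,0,-1,-1)$. This module is multiplicity-free, so your formula $\sum_\lambda\|P_\lambda M\|_F^2\|P_\lambda B\|_F^2/d_\lambda$ is valid.

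Your bookkeeping closes:
\begin{itemize}
\item $\Gamma(d\Lambda^2(h))=(n-2)h$ for traceless $h$, so $c=1/(n-2)$.
\item Combined with $\|d\Lambda^2(h)\|_F^2=(n-2)\|h\|_F^2$ (Eq.~\eqref{eq:trG2sq}), this gives $\|P_{\mathrm{adj}}X\|_F^2=\|\Gamma(X)_0\|_F^2/(n-2)$.
\item For $M=n_in_j$ the weights are $2/n$ on $\mathfrak{sl}_n$ and $(n-3)/(n-1)$ on $V_3$, with $\dim V_3=\binom{n}{2}^{2}-n^2=n^2(n-3)(n+1)/4$.
\item Adding the two channels gives exactly $\bigl(4\|B\|_F^2+2\|\Gamma(B)\|_F^2\bigr)/\bigl(n^2(n^2-1)\bigr)$.
\end{itemize}
Your planned check at $n=3$ is degenerate: $V_3=0$ there, so it only tests the sum $4+2$ of the two coefficients. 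Check at $n=4$ instead.

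\textbf{What each route buys.} In your route the variance is manifestly a sum of two nonnegative channel terms, and both channels carry weight for $n_in_j$. The paper's ``single-channel collapse'' is therefore specific to the tensor-square bookkeeping. Your route also yields Proposition~\ref{prop:general_readout} at once: $\|P_{\mathrm{adj}}\tilde M\|_F^2=q/(n-2)$ and $\|P_{V_3}\tilde M\|_F^2=p-q/(n-2)$ reproduce $c_1$ and $c_2$. The factor $(2k-1)(n-2)-4(k-1)^2$ in part (iii) is then just $16(n-2)/\tau$ times the leading-order value of $\|P_{V_3}B\|_F^2=\|B\|_F^2-\|\Gamma(B)\|_F^2/(n-2)\ge 0$. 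The paper's route, for its part, is the form in which Conjecture~\ref{conj:lambda2_design} and Claim~1 are stated.

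\textbf{Details to fix in Step 3.}
\begin{itemize}
\item Write $\rho_2=\rho_1\wedge\rho_1+\rho_2^{\mathrm{conn}}$ with no rescaling of $\rho_1\wedge\rho_1$. The cumulant contains diagonal intra-block corrections as well as the pairing coherences. For example, the pair $\{4g,4g{+}2\}$ has weight $0$ rather than $\tfrac14$, and $\{4g,4g{+}1\}$ has weight $\tfrac12$. Only its structure matters: $\Theta(k)$ blocks, each of size $O(1)$.
\item The bound $|H_{ab}|=n^{-1/2}$ alone does not produce the factor $1/n$ in $\|B^{\mathrm{conn}}\|_F^2=O(k/n)$. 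You must combine it with the coordinate-locality of $g$. Then $G^{(2)}$ acts only on the $\Theta(n)$ basis pairs meeting the gate's modes, while the delocalized cumulant has columns of squared norm $O(k/n^2)$ in the pair basis. This is Lemma~\ref{lem:gate_locality}, and the bound fails pointwise for a non-local direction $g$.
\item The $|\Pi_{ab}|$ obstacle you anticipate disappears. For every ring pair, $a\oplus b$ is odd, so the involution $c\leftrightarrow c\oplus 1$ on the active labels $\{0,\dots,2k-1\}$ cancels the character sum term by term. Hence $\Pi_{ab}=0$ and $\tau$ equals $8k(n-2k)/n^2$ (RBS) or $4k(n-2k)/n^2$ (phase) exactly.
\item $\|B\|_F^2$ is \emph{not} of lower order when $k=O(1)$. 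The two-sided bound still holds for every $k\ge 2$: the lower bound uses only $2\|\Gamma(B)\|_F^2$ and $4\|B\|_F^2\ge 0$, and the upper bound uses $\|B\|_F^2=O(k^2/n)$.
\end{itemize}
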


\begin{proof}[Proof sketch; full proof in Appendix~\ref{app:bp_proofs}]
Right invariance of the Haar measure makes $We^{-isg}$ Haar-distributed for
every $s$, so $\E[D_g\calL] = 0$. We do \emph{not} split $W$ into independent
factors---the partial products of Givens rotations before and after a fixed
gate are not individually Haar-distributed---and instead use that the Haar
measure is an exact unitary $2$-design on every polynomial irreducible
representation, applying the second-moment identity directly on
$\Lambda^2\mathbb{C}^n$. Three steps then give the result.

\emph{Step 1.} As a $U(n)$-representation,
$\Lambda^2\mathbb{C}^n \otimes \Lambda^2\mathbb{C}^n = V_{(2,2)} \oplus
V_{(2,1,1)} \oplus V_{(1,1,1,1)}$, all three of dimension $\Theta(n^4)$, so
Schur's lemma turns the variance into the exact irreducible
sum~\eqref{eq:bw-irrep-sum} pairing the weights of $M^{(2)\otimes2}$ against
those of $B^{\otimes2}$, the generator entering \emph{only} through $B$.

\emph{Step 2.} Because $M^{(2)} = \ket{\{i,j\}}\bra{\{i,j\}}$ is a rank-one
projector, $(\tr M^{(2)})^2 = \tr(M^{(2)2})$ and its weights on $V_{(2,1,1)}$
and on $\Lambda^4\mathbb{C}^n$ both vanish; the sum collapses to the single
$V_{(2,2)}$ channel. It is this projector property, and not any exchange
symmetry, that annihilates the other two channels---the general two-body
readout of Proposition~\ref{prop:general_readout} does not have it.

\emph{Step 3.} Evaluating the $(2,2)$ weight of $B^{\otimes2}$ for
anti-Hermitian traceless $B$ gives
$-\tfrac13\|B\|_F^2 - \tfrac16\|\Gamma(B)\|_F^2$, which upon division by
$\dim V_{(2,2)} = n^2(n^2-1)/12$ is exactly~\eqref{eq:var_exact}.

The remaining content of the theorem is the evaluation of the two norms for
the encoded state, stated as Proposition~\ref{prop:norm_eval} below; the
supporting lemmas and all computations are deferred to
Appendix~\ref{app:bp_proofs}.
\end{proof}

\begin{proposition}[Evaluation of the two norms]
\label{prop:norm_eval}
For the data-encoded input of Section~\ref{sec:encoding}, write
$\rho_1(\bx) = \tfrac12\Pi$ with $\Pi$ a rank-$2k$ projector
(Lemma~\ref{lem:spreading}) and $\tau := \|[g,\Pi]\|_F^2$. Then, exactly and
for every $\bx$,
\begin{equation}
\label{eq:norm_eval}
  \Gamma(B) = (k-1)\,[\,g,\,\rho_1(\bx)\,],
  \qquad
  \|\Gamma(B)\|_F^2 = \frac{(k-1)^2}{4}\,\tau.
\end{equation}
Moreover $\|B^{\mathrm{disc}}\|_F^2 = (2k-1)\tau/16$ exactly, and
$\|B\|_F^2 = (2k-1)\tau/16 + O(k^{3/2}/n)$, uniformly in $\bx$. For the brick-wall,
$\tau = 8k(n-2k)/n^2$ holds exactly and unconditionally for every trainable RBS
gate, and $\tau = 4k(n-2k)/n^2$ holds exactly and unconditionally for every trainable phase gate. Substituting
into~\eqref{eq:var_exact} gives~\eqref{eq:k3_rate}.
\end{proposition}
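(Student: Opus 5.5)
The plan is to prove the three claims in order: the contraction identity, the two Frobenius norms, and the exact values of $\tau$. The first is an equivariance argument. The second is a disconnected/connected split of the encoded $2$-RDM. The third is a direct entrywise computation with the projector $\Pi$.

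\emph{Contraction identity.} I would first show that the one-particle contraction $\Gamma$ of Eq.~\eqref{eq:contraction_def} intertwines the derivation action, i.e.\ $\Gamma\bigl([d\Lambda^2(g),M]\bigr) = [g,\Gamma(M)]$ for every $M$ on $\Lambda^2\mathbb{C}^n$. Partial tracing over one slot is a $U(n)$-equivariant map from $\mathrm{End}(\Lambda^2)$ to $\mathrm{End}(\mathbb{C}^n)$, and differentiating $\Gamma(\Lambda^2(u)M\Lambda^2(u)^\dagger) = u\Gamma(M)u^\dagger$ at $u=1$ gives the identity. It can also be checked directly on the kernel: the terms where $g$ hits the contracted index $j$ cancel between $M^{ij}_{kj}$ and its commutator partner. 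Removing the trace part of $\rho_2$ changes nothing, since $\Gamma(I) = (n-1)I$ commutes with $g$. Using $\Gamma(\rho_2) = (k-1)\rho_1$ then gives $\Gamma(B) = (k-1)[g,\rho_1(\bx)]$. With $\rho_1 = \tfrac12\Pi$ this yields $\|\Gamma(B)\|_F^2 = \tfrac{(k-1)^2}{4}\tau$ immediately.

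\emph{The norm of $B$.} I would write $\rho_2 = \rho_2^{\mathrm{disc}} + \rho_2^{\mathrm{conn}}$. Here $\rho_2^{\mathrm{disc}}$ is the antisymmetrized restriction of $\rho_1\otimes\rho_1$ to $\Lambda^2$, and $\rho_2^{\mathrm{conn}}$ collects the intra-block pairing terms of the magic blocks, such as $\ket{01}\bra{23}$ in each block, transported by $\Lambda^2(DH)$. Set $C = [g,\Pi] = \Pi g\Pi^\perp - \Pi^\perp g \Pi$. Then
\[
B^{\mathrm{disc}} = \tfrac14\, P_\wedge\,(C\otimes\Pi + \Pi\otimes C)\,P_\wedge ,
\]
with $P_\wedge$ the antisymmetrizer. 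Its squared norm expands into direct terms, proportional to $\tr(CC^\dagger)\tr\Pi = 2k\tau$, and exchange terms, proportional to $\tr(C\Pi C^\dagger\Pi)$ and $\tr(C\Pi)\tr(C^\dagger\Pi)$. The off-diagonal block structure of $C$ relative to $\Pi$ reduces the exchange terms to a single $-\tau$ correction, which gives $(2k-1)\tau/16$ exactly. For the connected part I would use the first structural lemma, on local generators acting on Hadamard-delocalized two-particle states. After the Walsh--Hadamard spreading, every pairing term has matrix entries of size $O(1/n)$ on the two modes touched by a local $g$. This gives $\|[G^{(2)},\rho_2^{\mathrm{conn}}]\|_F = O(\sqrt{k}/n)$, summing incoherently over the $k/2$ blocks. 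Cauchy--Schwarz on the cross term, using $\|B^{\mathrm{disc}}\| = O(\sqrt{k\tau}) = O(k/\sqrt n)$, then bounds the total correction by $O(k^{3/2}/n)$ uniformly in $\bx$. Uniformity holds because $D$ is diagonal and only rephases entries.

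\emph{The values of $\tau$.} For a phase gate, $g$ is a multiple of $e_{aa}$ (normalized as $Z_a/2$ restricted to the one-particle sector). Then $\tau = 2(\Pi_{aa} - \Pi_{aa}^2)$ up to the generator normalization. Substituting $\Pi_{aa} = 2k/n$ from Lemma~\ref{lem:spreading} gives $4k(n-2k)/n^2$. For an RBS gate on modes $(a,b)$, expanding $\|[g,\Pi]\|_F^2$ gives
\[
\tau = 2\Bigl[\Pi_{aa}+\Pi_{bb}-\Pi_{aa}^2-\Pi_{bb}^2 - 2|\Pi_{ab}|^2 + 2\,\mathrm{Re}(\Pi_{ab}^2)\Bigr].
\]
The claimed value $8k(n-2k)/n^2$ therefore holds exactly if and only if the $\Pi_{ab}$ terms cancel, that is, $\Pi_{ab}$ is real or zero.

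\emph{Main obstacle.} I expect this last cancellation to be the hardest step. Up to the phases from $D$, $\Pi_{ab} = \tfrac1n\sum_{j<2k}(-1)^{\langle a\oplus b,\,j\rangle}$, a Walsh character sum over the active set. My first attempt would be to exploit the dyadic alignment of the active modes $\{0,\dots,2k-1\}$ and Walsh orthogonality to show that this sum vanishes for the pairs actually used by the trainable gates. If it does not vanish in general, the fallback is to note that, in the invariant-derivative frame, the relevant generator can be taken conjugated by $D$, which removes the data phases and leaves a real $\Pi_{ab}$. Either way, this is the point at which the exactness claim must be checked most carefully.
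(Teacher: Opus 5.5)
Your plan for the first half follows the paper. You get the intertwining identity by equivariance where the paper uses a kernel computation, and you get $\|B^{\mathrm{disc}}\|_F^2$ by expanding $\tfrac14 P_\wedge(C\otimes\Pi+\Pi\otimes C)P_\wedge$ where the paper polarizes $\tr(\Lambda^2(A)\Lambda^2(A'))=e_2(AA')$; both routes work.

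\textbf{Gap: the exact RBS value of $\tau$ is not established, and your expansion of it is wrong.} From $\tau=2[\tr(g^2\Pi)-\tr(g\Pi g\Pi)]$, with $g^2$ the projector onto $\mathrm{span}\{e_a,e_b\}$, one gets
\[
\tau = 2\bigl[\Pi_{aa}+\Pi_{bb}-2\Pi_{aa}\Pi_{bb}+2\,\mathrm{Re}\,\Pi_{ab}^2\bigr].
\]
As a check, $\Pi=e_{aa}$ gives $\tau=2$, while your formula gives $0$. With uniform diagonal $2k/n$ this becomes
\[
\tau = 8k(n-2k)/n^2+4\,\mathrm{Re}\,\Pi_{ab}^2 .
\]
So the correct criterion is $\mathrm{Re}\,\Pi_{ab}^2=0$, not ``$\Pi_{ab}$ real''. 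A real nonzero $\Pi_{ab}$ adds $4\Pi_{ab}^2>0$. The data layer multiplies $\Pi_{ab}$ by a phase that varies with $\bx$, so exactness for every $\bx$ forces $\Pi_{ab}=0$.

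Your fallback therefore fails twice. Realness would not suffice. And conjugating the generator by $D$ changes nothing, since $\|[g,D\Pi_0D^\dagger]\|_F=\|[D^\dagger gD,\Pi_0]\|_F$ just moves the phase into the generator.

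\textbf{The missing step is the vanishing itself, which the paper proves.} Before the data layer,
\[
\Pi_{ab}=\tfrac1n\sum_{c<2k}(-1)^{\langle a\oplus b,\,c\rangle}.
\]
Every trainable pair of the ring has $a\oplus b$ odd: consecutive labels differ in bit $0$, and the boundary pair $(n-1,0)$ gives $n-1$. Hence $c\mapsto c\oplus1$ is a sign-reversing involution of the even, dyadically aligned set $\{0,\dots,2k-1\}$, and the sum is zero. So $\tau=8k(n-2k)/n^2$ exactly, for every $\bx$ and every RBS gate. Your ``first attempt'' points in this direction, but the proposal leaves it as a hope rather than an argument, and this is the one place the exactness claim lives.

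\textbf{Smaller slips.}
\begin{itemize}
\item In the connected part, $G^{(2)}$ acts on all $\Theta(n)$ pair states $\{a,x\}$ with $a$ in the gate's support, not only on entries ``on the two modes''. Lemma~\ref{lem:gate_locality} therefore gives $\|B^{\mathrm{conn}}\|_F^2=O(k/n)$, not $O(k/n^2)$. The $O(k^{3/2}/n)$ you state is exactly what this correct bound yields via Cauchy--Schwarz, so your conclusion stands.
\item In the $\|B^{\mathrm{disc}}\|$ sketch, $\tr(C\Pi C^\dagger\Pi)$ and $\tr(C\Pi)$ both vanish by the block structure. The $-\tau$ correction comes instead from the four swap terms $\tr(\Pi C^\dagger C)=\tau/2$, halved by the antisymmetrizer.
\end{itemize}
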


The one-body term is the larger of the two and is what produces the cubic
dependence on $k$; both terms carry the factor $(n-2k)$, the supply of idle
modes, which is why the sharp rate is claimed throughout $n - 2k = \Omega(n)$
and not at exactly half filling.

\begin{corollary}[Front phase-layer parameters]
\label{cor:first_column}
Let $U_P$ be the unitary brick-wall initialized by sampling
$W \sim \mathrm{Haar}(U(n))$ and computing angles via the Givens decomposition
(Remark~\ref{rem:haar_init}), and let $\phi_j$ be one of the $n$ phases of the
front $R_z$ layer---the first gates the circuit applies. Then
$\partial_{\phi_j}\calL = D_{\hat n_j}\calL$ for the occupation-number
generator $\hat n_j$ of that gate, and Theorem~\ref{thm:brick-wall_bp} applies
verbatim, giving $\Var[\partial_{\phi_j}\calL] = \Theta(k^3/n^5)$ exactly and
unconditionally, for every $\bx$.
\end{corollary}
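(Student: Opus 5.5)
The argument has two parts. First, the partial derivative with respect to a front phase equals an invariant directional derivative at a Haar point. Second, the variance rate then follows from Theorem~\ref{thm:brick-wall_bp} and Proposition~\ref{prop:norm_eval}.

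We begin by writing the circuit's single-particle unitary as $W(\btheta,\bphi) = W_{\mathrm{rest}}\, D_1(\bphi_1)$. Here $D_1(\bphi_1) = \mathrm{diag}(e^{-i\phi_1^{(1)}},\dots)$, up to global phase, is the front $R_z$ layer, i.e.\ the first gates applied. $W_{\mathrm{rest}}$ collects all later columns. Since the phase gates on distinct modes commute, we have $D_1(\phi_j + s) = D_1(\phi_j)\, e^{-is\hat n_j}$ in the single-particle sector, with $\hat n_j = e_{jj}$. The global phase $e^{\pm is/2}$ of the qubit $R_z$ acts trivially on the fixed-particle-number expectation value, because $M$ commutes with the particle number. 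Hence
\[
\calL(\phi_j+s) = \calL\bigl(W e^{-is\hat n_j};\bx\bigr),
\]
and differentiating at $s=0$ gives $\partial_{\phi_j}\calL = D_{\hat n_j}\calL$ pointwise in the parameters. No distributional input is needed for this step; it is an identity of functions on $U(n)$ pulled back along the parametrization.

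Next we transport the distribution. Under the initialization of Remark~\ref{rem:haar_init}, $W$ is sampled from Haar measure and the angles are read off by the Givens decomposition. The realized $W(\btheta,\bphi)$ is therefore exactly $\mathrm{Haar}(U(n))$-distributed, whatever correlations the angles carry among themselves. The random variable $\partial_{\phi_j}\calL = D_{\hat n_j}\calL(W)$ is a function of $W$ alone. Its law is thus exactly the law treated in Theorem~\ref{thm:brick-wall_bp} with $g = \hat n_j \in \mathfrak{u}(n)$. The theorem gives $\E = 0$ and the exact formula \eqref{eq:var_exact} for every $\bx$.

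It remains to evaluate $\tau = \|[\hat n_j,\Pi]\|_F^2$. For a rank-one projector $P = e_{jj}$ one has $\|[P,\Pi]\|_F^2 = 2(\Pi_{jj} - \sum_l|\Pi_{jl}|^2) = 2\Pi_{jj}(1-\Pi_{jj})$, using $\Pi^2 = \Pi$. Lemma~\ref{lem:spreading} gives $\Pi_{jj} = 2k/n$ for every $j$ and every $\bx$, because the diagonal data layer preserves the diagonal. Therefore $\tau = 4k(n-2k)/n^2$, matching the phase-gate value in Proposition~\ref{prop:norm_eval}. Substituting into \eqref{eq:full_rate} and \eqref{eq:var_exact}, the one-body term $2\|\Gamma(B)\|_F^2 = \Theta(k^2\tau) = \Theta(k^3/n)$ dominates the $\|B\|_F^2 = \Theta(k\tau) + O(k^{3/2}/n)$ contribution. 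Dividing by $n^2(n^2-1)$ gives $\Theta(k^3/n^5)$ whenever $n-2k = \Omega(n)$.

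The main obstacle is the first step's subtlety: the derivative must be taken on the input side. Only the front layer, the first factor applied to the state, produces right multiplication $We^{-isg}$ with a fixed generator $g$. Interior gates instead give $W_{\mathrm{after}}\, e^{-isg}\, W_{\mathrm{before}}$, whose effective generator $W_{\mathrm{before}}^\dagger g\, W_{\mathrm{before}}$ is random and correlated with $W$. This is exactly why the corollary is restricted to the front phase layer. We would also double-check that the Givens-based inverse map is measurable and covers $U(n)$, which is Item~3 of Proposition~\ref{prop:brick-wall}, so that Haar pushforward is exact rather than approximate.
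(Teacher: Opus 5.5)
Your proposal is correct and follows the paper's proof essentially step for step. You factor $W = W'D_1$ and use that $\hat n_j$ commutes with the diagonal front layer to get $\partial_{\phi_j}\calL = D_{\hat n_j}\calL$ along a fixed, non-random direction; you then apply Theorem~\ref{thm:brick-wall_bp} at the exactly Haar-distributed $W$ and insert $\tau = 2\Pi_{jj}(1-\Pi_{jj}) = 4k(n-2k)/n^2$. One slip: your intermediate expression $2(\Pi_{jj} - \sum_l|\Pi_{jl}|^2)$ is identically zero, since $\sum_l|\Pi_{jl}|^2 = (\Pi^2)_{jj} = \Pi_{jj}$. The correct form is $2\sum_{l\neq j}|\Pi_{jl}|^2 = 2(\Pi_{jj} - \Pi_{jj}^2)$, which gives the value you state.
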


\begin{proof}
Write $W = W' D_1$ with $D_1 = \bigotimes_i R_z(\phi_i)$ the front phase layer
and $W'$ the product of all later gates. Since $R_z(\phi_j) = e^{-i\phi_j
\hat n_j}$ up to a global phase and $\hat n_j$ is diagonal, it commutes with
$D_1$, so
\begin{equation}
  \partial_{\phi_j} W
  = W'(-i\hat n_j)D_1
  = -i\,W'D_1\,\bigl(D_1^\dagger \hat n_j D_1\bigr)
  = -i\,W\,\hat n_j,
\end{equation}
which is exactly the right translation of~\eqref{eq:invariant_derivative} in
the fixed, non-random direction $g = \hat n_j$. Theorem~\ref{thm:brick-wall_bp}
therefore applies, with $\tau = 4k(n-2k)/n^2$ from Step~4. The front placement
of the residual diagonal (Remark~\ref{rem:haar_init}), adopted because a
terminal phase layer would carry identically zero gradient, is what puts this
whole layer in the rigorous set.
\end{proof}

\begin{proposition}[Arbitrary two-body readout and trained heads]
\label{prop:general_readout}
Let $n \ge 4$ and let $M$ be \emph{any} two-body observable, with image
$M^{(2)}$ on $\Lambda^2\mathbb{C}^n$; for a trained head
$M_{\mathrm{eff}} = \sum_{i<j} w_{ij}\, n_i n_j$ one has
$M^{(2)}_{\mathrm{eff}} = \sum_{i<j} w_{ij}\ket{\{i,j\}}\bra{\{i,j\}}$ and
$\Gamma(M^{(2)}_{\mathrm{eff}}) = \mathrm{diag}(r)$ with
$r_i = \sum_{j \neq i} w_{ij}$. Write
$\tilde M := M^{(2)} - \tfrac{\tr M^{(2)}}{\binom n2}\,I_{\Lambda^2}$ for the
traceless part and set
$p := \|\tilde M\|_F^2$, $q := \|\Gamma(\tilde M)\|_F^2$.
Then, in the setting of Theorem~\ref{thm:brick-wall_bp}, exactly and for every
$\bx$,
\begin{equation}
\label{eq:general_readout}
  \Var\!\left[D_g\calL\right]
  \;=\; c_2(M)\,\|B\|_F^2 \;+\; c_1(M)\,\|\Gamma(B)\|_F^2,
\end{equation}
with the two head-dependent, $k$-independent scalars
\begin{align}
\label{eq:c1c2}
  c_1(M) &= \frac{(n^2-n+2)\,q \;-\; 4(n-1)\,p}
                 {n^2(n^2-1)(n-2)(n-3)},
  \nonumber\\[3pt]
  c_2(M) &= \frac{4\bigl[(n-2)\,p \;-\; q\bigr]}
                 {n^2(n+1)(n-2)(n-3)} .
\end{align}
Three consequences.
\begin{enumerate}
  \item[(i)] \emph{(consistency)} For $M = n_i n_j$ one has $p = 1 - 2/(n(n-1))$
  and $q = 2 - 4/n$, giving $c_2 = 4/(n^2(n^2-1))$ and
  $c_1 = 2/(n^2(n^2-1))$, i.e.\ exactly~\eqref{eq:var_exact}. For
  $M^{(2)} \propto I_{\Lambda^2}$ one has $p = q = 0$ and $\Var = 0$, as it
  must be.
  \item[(ii)] \emph{(the head is not a $\Theta(1)$ prefactor)} Combining
  \eqref{eq:general_readout} with Step~4 and~\eqref{eq:B_error},
  \begin{eqnarray}
  \label{eq:general_rate}
    \Var\!\left[D_g\calL\right]
    &=& \frac{\tau}{16}\Bigl[(2k-1)\,c_2(M) + 4(k-1)^2\,c_1(M)\Bigr]\nonumber\\
      &&+ c_2(M)\cdot O\!\left(\tfrac{k^{3/2}}{n}\right).
  \end{eqnarray}
  Since $c_1$ and $c_2$ may differ in order and in sign, the rate itself, and
  not merely its constant, depends on the head. For the single-pair readout
  $c_1 = \Theta(n^{-4})$ dominates and the rate is $\Theta(k^3/n^5)$; for the
  uniform nearest-neighbour head $w_{i,i+1} = 1$ one computes
  $p = (n-1)(n-2)/n$, $q = 2(n-2)/n$, hence
  $c_2 = 4/(n^2(n+1))$ and $c_1 = -2/(n^2(n^2-1))$, and the rate is
  $\Theta(k^2/n^4)$---larger than $\Theta(k^3/n^5)$ by the factor $n/k$.
  \item[(iii)] \emph{(no exponential barren plateau, for every head)} For
  $k \ge 2$ and $n \ge 2k+2$ the bracket in~\eqref{eq:general_rate} is
  positive and bounded below by its value at $q = 0$:
  \begin{eqnarray}
  \label{eq:head_floor}
    (2k-1)c_2 + 4(k-1)^2 c_1
    &\ge& \nonumber\\
    \frac{4\,p\,(n-1)\bigl[(2k-1)(n-2) - 4(k-1)^2\bigr]} {n^2(n^2-1)(n-2)(n-3)} 
    &>& 0 .
  \end{eqnarray}
  Consequently, for $n - 2k = \Omega(n)$ and $k$ above a constant,
  $\Var[D_g\calL] = \Omega\bigl(p\,k^2/n^5\bigr)$: the gradient variance of
  \emph{any} two-body readout with $\tilde M \neq 0$ is at most polynomially
  small.
\end{enumerate}
\end{proposition}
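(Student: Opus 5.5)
Our plan is to redo the second-moment computation in the proof sketch of Theorem~\ref{thm:brick-wall_bp}, replacing the rank-one readout by a general $M^{(2)}$. We keep all the $n$-dependence in the readout weights and leave $B$ untouched.

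First, by right-invariance, $D_g\calL = i\,\tr\bigl(\Lambda^2(W)^\dagger M^{(2)}\Lambda^2(W)\,B\bigr)$, with $B=[G^{(2)},\rho_{2,\mathfrak g_2}]$ traceless and anti-Hermitian. The identity part of $M^{(2)}$ pairs with $\tr B=0$, so $M^{(2)}$ may be replaced by $\tilde M$. The variance is then $\E\,\tr\bigl((\Lambda^2 W)^{\otimes 2}\tilde M^{\otimes 2}(\Lambda^2W)^{\dagger\otimes 2}B^{\otimes 2}\bigr)$ up to sign. By Schur's lemma this equals $\sum_\lambda \tr(P_\lambda\tilde M^{\otimes2})\tr(P_\lambda B^{\otimes2})/\dim V_\lambda$, summed over the three irreducibles $V_{(2,2)},V_{(2,1,1)},V_{(1,1,1,1)}$ of $\Lambda^2\otimes\Lambda^2$. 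Restricting to $\Lambda^2$, the three projectors are fixed linear combinations of four invariant operators: the identity $I$, the swap $F$, and the two contraction-type maps obtained by antisymmetrizing one index of the first factor with one of the second. Consequently, for any traceless $X$, the weight $\tr(P_\lambda X^{\otimes2})$ is a linear combination of $\|X\|_F^2$ and $\|\Gamma(X)\|_F^2$; the $(\tr X)^2$ term drops. We would fix the coefficients by requiring each $P_\lambda$ to be idempotent and to have the correct trace $\dim V_\lambda$. Step~3 of the theorem already gives the $(2,2)$ coefficients $-\tfrac13,-\tfrac16$, which serves as a check.

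Second, we use a trace identity to eliminate one channel. For traceless $X$, $\sum_\lambda\tr(P_\lambda X^{\otimes 2})=(\tr X)^2=0$. Using this, the variance becomes a bilinear form in $(p,q)$ and $(\|B\|_F^2,\|\Gamma(B)\|_F^2)$ with rational coefficients in $n$. The bilinear form must factor as $c_2\|B\|^2+c_1\|\Gamma B\|^2$ with the stated $c_1,c_2$; there are four coefficients to match. Verifying this is pure bookkeeping of dimensions: $\dim V_{(1,1,1,1)}=\binom n4$, $\dim V_{(2,1,1)}=n(n+1)(n-1)(n-2)/8$, and $\dim V_{(2,2)}$ as given earlier. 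The denominators $(n-2)(n-3)$ signal that the $\Lambda^4$ channel enters. We expect this coefficient extraction to be the main obstacle, specifically getting the contraction operator's normalization consistent with the convention $\Gamma(I)=(n-1)I$.

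Third come the consequences. For (i), we compute $p,q$ for $\ket{\{i,j\}}\bra{\{i,j\}}$ minus its trace part. We have $\Gamma(M^{(2)})=e_{ii}+e_{jj}$, and subtracting the trace part shifts $\Gamma$ by $\tfrac{2}{n}I$, giving $q=2-4/n$ and $p=1-1/d$. Substituting these should reproduce~\eqref{eq:var_exact}. For the nearest-neighbour head we proceed the same way, using $r_i=2$, so that $\Gamma(\tilde M)=0$ on the diagonal except for a correction term. For (ii), we insert Proposition~\ref{prop:norm_eval}, i.e.\ $\|\Gamma B\|^2=(k-1)^2\tau/4$ and $\|B\|^2=(2k-1)\tau/16+O(k^{3/2}/n)$. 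For (iii), the bracket is linear in $q$ with coefficient proportional to $(n^2-n+2)\cdot4(k-1)^2/(n^2-1)-4(2k-1)$ over common factors. This coefficient is nonnegative once $k\ge2$ and $n\ge2k+2$, so the minimum sits at $q=0$. At $q=0$, positivity reduces to $(2k-1)(n-2)>4(k-1)^2$, which holds for $n\ge 2k+2$. Combined with $\tau=\Theta(k(n-2k)/n^2)$, this yields the $\Omega(pk^2/n^5)$ floor.
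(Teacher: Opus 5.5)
Your argument for the main identity is the paper's. You drop the trace part of $M^{(2)}$ (it pairs with $\tr B=0$), collapse the Haar second moment onto the three irreducibles of $\Lambda^2\mathbb{C}^n\otimes\Lambda^2\mathbb{C}^n$, write each weight through $\tr X^2$ and $\tr\Gamma(X)^2$, and substitute. The only difference is that you plan to re-derive the weights from idempotents of the commutant. The paper instead reuses \eqref{eq:three_weights} and \eqref{eq:B_weights}, which were already proved for arbitrary $X$ via $P_{(2,1,1)}=P_{\mathrm{anti}}$, $P_{(2,2)}=P_{\mathrm{sym}}-P_{\Lambda^4}$ and Lemma~\ref{lem:lambda4}. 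If you do redo it, note that the commutant is three-dimensional: after antisymmetrization every mixed transposition gives the same operator, so your ``two contraction-type maps'' coincide. Parts (i) and (ii) are fine; part (iii) and the nearest-neighbour example need repair.

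\emph{Part (iii).} Over the common denominator $n^2(n^2-1)(n-2)(n-3)$, the numerator of the bracket is $4p(n-1)\bigl[(2k-1)(n-2)-4(k-1)^2\bigr]+4q\bigl[(k-1)^2(n^2-n+2)-(2k-1)(n-1)\bigr]$. Relative to the denominator of $c_2$, the $q$-coefficient is therefore $4(k-1)^2(n^2-n+2)/(n-1)-4(2k-1)$, with $n-1$ rather than your $n^2-1$. Your expression tends to $4(k^2-4k+2)$ as $n\to\infty$, which is negative at $k=2,3$. As written, it contradicts your own claim that the minimum sits at $q=0$. With the correct factor, $(k-1)^2(n^2-n+2)-(2k-1)(n-1)\ge(n-1)\bigl[(k-1)^2n-(2k-1)\bigr]>0$ for $k\ge2$ and $n\ge 2k+2$.

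Second, \eqref{eq:head_floor} controls only the leading term of \eqref{eq:general_rate}. To conclude $\Var[D_g\calL]=\Omega(pk^2/n^5)$ you must also dominate the remainder $c_2\cdot O(k^{3/2}/n)$. That requires $|c_2|=O(p/n^4)$, i.e.\ a bound $q=O(np)$. Cauchy--Schwarz on $\Gamma(\tilde M)^i_{\ k}=\sum_j\tilde M^{ij}_{kj}$ gives $q\le 4np$, after which the remainder is $O(pk^{3/2}/n^5)$, smaller than the floor by a factor $O(k^{-1/2})$. This is exactly where the hypothesis ``$k$ above a constant'' enters, and your sketch skips it.

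\emph{The nearest-neighbour head.} The stated values $p=(n-1)(n-2)/n$ and $q=2(n-2)/n$ belong to the open chain with $n-1$ bonds. There $r_1=r_n=1$, and $\Gamma(\tilde M)=\mathrm{diag}(r)-\tfrac{2(n-1)}{n}I$ has entries $2/n$ in the bulk and $-(n-2)/n$ at the two ends. Taking $r_i\equiv 2$ literally (the ring) gives instead $q=0$, $p=n(n-3)/(n-1)$, and different $c_1,c_2$. The $\Theta(k^2/n^4)$ rate survives, but you would not reproduce the stated constants.
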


\begin{theorem}[Every gate, independent-halves ensemble]
\label{thm:interior_halves}
Let $\theta$ be any trainable parameter of the brick-wall, with generator $g$,
and write the trainable circuit as $W = W_2\,U_\theta\,W_1$, where $W_1$ is the
product of the gates preceding $\theta$ and $W_2$ the product of those
following it. Assume the \emph{independent-halves ensemble}: $W_1$ and $W_2$
are independent and Haar-distributed on $U(n)$. Then, exactly and for every
$\bx$,
\begin{equation}
\label{eq:halves_exact}
  \Var\!\left[\frac{\partial\calL}{\partial\theta}\right]
  = \frac{4\,\E_{W_1}\!\|B_{\tilde g}\|_F^2
        \;+\; 2\,\E_{W_1}\!\|\Gamma(B_{\tilde g})\|_F^2}{n^2(n^2-1)},
  \; \tilde g = W_1^\dagger g W_1,
\end{equation}
with $B_{\tilde g} = [\,d\Lambda^2(\tilde g),\,\rho_{2,\mathfrak{g}_2}(\bx)\,]$,
and, for either generator family,
\begin{equation}
  \Var\!\left[\frac{\partial\calL}{\partial\theta}\right]
  \;=\; \Theta\!\left(\frac{k^3}{n^5}\right)
  \;\text{whenever } k \ge 2 \text{ and } n - 2k = \Omega(n),
\end{equation}
unconditionally, with the same constants as in
Theorem~\ref{thm:brick-wall_bp}. The independent-halves ensemble is the
standard setting for interior-gate statements in the dynamical-Lie-algebra
barren-plateau literature~\cite{fontana2024adjoint,larocca2021diagnosing}.
\end{theorem}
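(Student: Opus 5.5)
We will reduce to Theorem~\ref{thm:brick-wall_bp} by conditioning on $W_1$. Write the gradient of $\calL$ at the gate $U_\theta=e^{-i\theta g}$ as a right-invariant derivative of the trailing factor. Set $V:=W_2U_\theta W_1$. Then
\[
\partial_\theta V=-i\,W_2U_\theta g W_1=-i\,V\,\tilde g,\qquad \tilde g=W_1^\dagger g W_1,
\]
so $\partial_\theta\calL=D_{\tilde g}\calL(V)$. Fix $W_1$. Because $W_2$ is Haar and independent of $W_1$, right-invariance shows that $V=W_2(U_\theta W_1)$ is Haar-distributed conditionally on $W_1$. Also $\tilde g$ is a fixed element of $\mathfrak{u}(n)$ given $W_1$. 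Theorem~\ref{thm:brick-wall_bp} therefore applies verbatim conditionally: the conditional mean is zero and the conditional variance equals the right-hand side of~\eqref{eq:var_exact} with $g\to\tilde g$. The law of total variance, with vanishing conditional mean, then gives~\eqref{eq:halves_exact} after averaging over $W_1$. This step is formal.

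The substance lies in evaluating $\E_{W_1}\|B_{\tilde g}\|_F^2$ and $\E_{W_1}\|\Gamma(B_{\tilde g})\|_F^2$. Proposition~\ref{prop:norm_eval} holds for any generator once $\tau$ is known. It gives $\|\Gamma(B_{\tilde g})\|_F^2=\tfrac{(k-1)^2}{4}\tilde\tau$ exactly and $\|B_{\tilde g}\|_F^2=\tfrac{(2k-1)}{16}\tilde\tau+O(k^{3/2}/n)$, where
\[
\tilde\tau=\|[\tilde g,\Pi]\|_F^2=\|[g,W_1\Pi W_1^\dagger]\|_F^2 .
\]
I must check that the error bound $O(k^{3/2}/n)$ there is uniform in the generator; it should depend only on $\|g\|$ and the bounded locality of the connected 2-RDM part. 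If it is not uniform, I will bound it through the Haar-averaged lemmas the text announces as supplying this. Since $\tilde\tau$ is quadratic in $W_1\Pi W_1^\dagger$, its Haar average follows from first and second moments. Write $\tilde\tau=2\tr(g^2P)-2\tr(gPgP)$ with $P=W_1\Pi W_1^\dagger$ a Haar-random rank-$2k$ projector. Then $\E P=\tfrac{2k}{n}I$. For $\E\,\tr(gPgP)$ I use the standard second-moment formula for a random projector of rank $m=2k$:
\[
\E\,\tr(gPgP)=a\,\tr(g^2)+b\,(\tr g)^2,
\qquad
a=\frac{m(n-m)}{n(n^2-1)},\quad b=\frac{m(mn-1)}{n(n^2-1)}.
\]
For the RBS generator, $\tr g=0$ and $\tr g^2=2$. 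For the phase generator $\hat n_j$, $\tr g=\tr g^2=1$. In both cases
\[
\E\tilde\tau=\Theta\!\left(\frac{m(n-m)}{n^2}\right)=\Theta\!\left(\frac{k(n-2k)}{n^2}\right).
\]
This matches the exact $\tau$ of Proposition~\ref{prop:norm_eval} up to constants, and should even match it exactly to leading order.

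Substituting into~\eqref{eq:halves_exact}, the $\Gamma$-term dominates:
\[
\frac{2(k-1)^2\,\E\tilde\tau/4}{n^2(n^2-1)}=\Theta\!\left(\frac{k^2\cdot k(n-2k)}{n^6}\right)=\Theta\!\left(\frac{k^3}{n^5}\right)
\quad\text{when } n-2k=\Omega(n).
\]
The $\|B\|$ term contributes $\Theta(k^2/n^5)$ plus the averaged error $O(k^{3/2}/n^5)$. Both are lower order for large $k$, and for bounded $k\ge2$ they are of the same order with nonnegative leading pieces, so the error needs care. For the lower bound I will discard the nonnegative $\|B\|$ term entirely, keeping only the exact $\Gamma$ contribution, which is valid since $\|B\|_F^2\ge0$. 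This handles $k\ge2$ uniformly. For the upper bound the error is only $O(k^{3/2}/n^5)=O(k^3/n^5)$.

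The main obstacle is the projector second-moment computation and verifying that Proposition~\ref{prop:norm_eval}'s error term is uniform over conjugated generators. That uniformity is only needed for the upper bound, where any crude bound $\|B\|_F^2\le 4\|d\Lambda^2(\tilde g)\|^2\|\rho_2\|_F^2=O(k^2)$ also suffices to give $O(k^2/n^4)$. That weaker bound is not enough for $k=O(1)$, so I would instead prove the uniformity directly: the connected part of $\rho_2$ is blockwise with $O(k)$ entries of size $O(1/n)$ after spreading, which bounds its commutator with any unit-norm derivation.
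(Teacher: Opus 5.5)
Your architecture matches the paper's proof. You condition on $W_1$, so that $W_2U_\theta W_1$ is conditionally Haar and $\partial_\theta\calL=D_{\tilde g}\calL$. You use the law of total variance with vanishing conditional mean. You use the generator-independent identities $\|\Gamma(B_{\tilde g})\|_F^2=\tfrac{(k-1)^2}{4}\tilde\tau$ and $\|B^{\mathrm{disc}}_{\tilde g}\|_F^2=\tfrac{2k-1}{16}\tilde\tau$ (Lemma~\ref{lem:tau_only}), a Haar projector average for $\E\tilde\tau$, and a lower bound obtained by dropping $4\|B\|_F^2\ge 0$.

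The gap is the upper bound on the connected part $B^{\mathrm{conn}}_{\tilde g}=[d\Lambda^2(\tilde g),\rho_2^{\mathrm{conn}}]$. You commit to proving the $O(k/n)$ bound of Lemma~\ref{lem:gate_locality} pointwise and uniformly in the generator, on the grounds that $\rho_2^{\mathrm{conn}}$ has small entries after spreading. That route fails. Entrywise smallness only controls derivations supported on the $\Theta(n)$ computational basis pairs meeting a fixed pair of modes, and $W_1^\dagger g W_1$ is not of that form (the paper flags exactly this after Lemma~\ref{lem:gate_locality}).

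Here is a concrete counterexample. Let $V'$ be the encoding unitary (spreading followed by the data layer), and take $W_1=\pi (V')^{\dagger}$ with $\pi$ a permutation sending modes $0,2$ of one magic block to the gate's modes. In the input frame, $(V')^{\dagger}\tilde g V'$ is then the RBS generator on modes $0,2$ of that block. It maps the block's two-particle state $(e_0\wedge e_1+e_2\wedge e_3)/\sqrt2$ to a unit vector orthogonal to it, so $\|B^{\mathrm{conn}}_{\tilde g}\|_F^2=\Theta(1)$. For $k=o(n^{1/3})$ this exceeds the whole budget $\Theta(k^3/n)$.

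Your stated fallback is the right fix, and it suffices because \eqref{eq:halves_exact} only involves $\E_{W_1}$. For Hermitian $A$ one has $\|[A,\sigma]\|_F^2\le 4\tr(A^2\sigma^2)$. Irreducibility of $\Lambda^2\mathbb{C}^n$ gives $\E_{W_1}\bigl[(d\Lambda^2\tilde g)^2\bigr]=\frac{(n-2)\tr(g^2)+(\tr g)^2}{\binom n2}\,I$. Hence $\E_{W_1}\|B^{\mathrm{conn}}_{\tilde g}\|_F^2=O(\|\rho_2^{\mathrm{conn}}\|_F^2/n)=O(k/n)$, which is Lemma~\ref{lem:avg_locality}. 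The cross term is then $O(k^{3/2}/n)$ by Cauchy--Schwarz in expectation, which closes the upper bound for every $k\ge 2$.

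A smaller error: your projector second moment has its two coefficients interchanged. Write $\E[P\otimes P]=\alpha I+\beta S$ and pair with $I$ and with $S$. This gives $\alpha=\frac{m(mn-1)}{n(n^2-1)}$ and $\beta=\frac{m(n-m)}{n(n^2-1)}$, so $\E\tr(gPgP)=\alpha\tr(g^2)+\beta(\tr g)^2$. Your assignment fails the check $g=I$: it returns $m^2$ instead of $\tr P=m$.

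The swap is harmless for the phase generator, where $\tr g=\tr(g^2)$. For an RBS generator, however, it yields $\E\tilde\tau\approx 8k/n$, missing the factor $(n-2k)/n$. The order survives when $n-2k=\Omega(n)$, but the claimed agreement of constants with Theorem~\ref{thm:brick-wall_bp} does not. The corrected average is $\E\tilde\tau=\frac{2m(n-m)}{n^2-1}\bigl[\tr(g^2)-(\tr g)^2/n\bigr]$. At $m=2k$ this is $\frac{8k(n-2k)}{n^2-1}$ for RBS and $\frac{4k(n-2k)}{n(n+1)}$ for phase generators (Lemma~\ref{lem:conj_neutral}).
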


\begin{proposition}[Interior gates of the literal single mesh]
\label{prop:single_mesh_interior}
Let $W$ be the brick-wall at the Haar initialization of
Remark~\ref{rem:haar_init}, realized through the Hurwitz factorization of the
Haar measure into Givens rotations, in which the phase attached to each
rotation is uniform on $[0,2\pi)$ and independent of all rotation
angles~\cite{hurwitz1897,zyczkowski1994random}. Let $\theta$ be a trainable
parameter in column $\ell$, let $W_1$ be the product of the columns preceding
it, and set $\Pi^{(\ell)} := W_1\Pi W_1^\dagger$. Then:
\begin{enumerate}
  \item[(a)] \emph{(trivial prefix)} for a parameter of the front $R_z$ layer,
  $W_1 = I$ and Corollary~\ref{cor:first_column} gives the exact rate  $\Theta(k^3/n^5)$; 
  \item[(b)] \emph{(Haar prefix)} if $W_1$ is Haar and independent of the
  suffix, Theorem~\ref{thm:interior_halves} gives the same rate, with the same
  constants;
  \item[(c)] \emph{(first moment at every depth)} for every $\ell$ and every
  $j$, $\E\bigl[\Pi^{(\ell)}_{jj}\bigr] = 2k/n$; the uniform diagonal produced
  by the Walsh--Hadamard spreading is preserved in expectation at all depths.
\end{enumerate}
Parts (a) and (b) pin the interior case at both endpoints of the prefix, with
matching constants, and (c) is the corresponding first-moment statement at
every intermediate depth.
\end{proposition}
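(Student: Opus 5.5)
Parts (a) and (b) are reductions to results already proved, and part (c) is the only genuinely new computation; I would take them in that order.

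\emph{Part (a).} For a parameter of the front $R_z$ layer, the prefix $W_1$ is the identity. The derivative is therefore exactly the right-invariant directional derivative $D_{\hat n_j}\calL$ at the Haar point $W$, as in Corollary~\ref{cor:first_column}. Theorem~\ref{thm:brick-wall_bp}, with $\tau = 4k(n-2k)/n^2$ from Proposition~\ref{prop:norm_eval}, then gives the rate $\Theta(k^3/n^5)$, and nothing further needs checking.

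\emph{Part (b).} If $W_1$ is Haar and independent of the suffix, then the suffix is also Haar-distributed for the purposes of the second-moment identity. This is exactly the independent-halves hypothesis, so Theorem~\ref{thm:interior_halves} applies verbatim with the same constants. The only point to note is that the constants agree: under Haar $W_1$, $\E_{W_1}\tau(\tilde g)$ reproduces the invariant value, which is what Theorem~\ref{thm:interior_halves} already asserts.

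\emph{Part (c).} This is the real content. Write $\Pi = H\Pi_A H^\dagger$, where $H$ is the Walsh--Hadamard spreading and $\Pi_A$ the projector onto the active modes. Lemma~\ref{lem:spreading} gives $\Pi_{jj} = 2k/n$ for every $j$; note that $\Pi$ absorbs the data layer $D$, which is diagonal and leaves the diagonal unchanged. I would then proceed by induction over the columns preceding $\theta$. Each column is a diagonal phase layer followed by Givens rotations on disjoint pairs. Conjugating by a diagonal unitary leaves the diagonal of $\Pi$ unchanged. A rotation on a pair $(a,b)$ produces the new diagonal entries
\[
\cos^2\theta\,\Pi_{aa} + \sin^2\theta\,\Pi_{bb} \pm 2\sin\theta\cos\theta\,\mathrm{Re}\bigl(e^{i\phi}\Pi_{ab}\bigr)
\]
and its mirror for mode $b$. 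In the Hurwitz factorization the phase $\phi$ attached to that rotation is uniform on $[0,2\pi)$ and independent of the angle and of everything earlier. Averaging over $\phi$ therefore kills the cross term in expectation, leaving a convex combination of $\Pi_{aa}$ and $\Pi_{bb}$. Under the inductive hypothesis both of these have expectation $2k/n$, and linearity of expectation closes the induction.

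\emph{Main obstacle.} I expect the difficulty to be the conditioning in the induction. The rotation angle $\theta$ is not uniform, having the Hurwitz density. More importantly, $\Pi_{ab}$ at that stage is correlated with earlier randomness. I would handle this by conditioning on everything except the fresh phase $\phi$. Since $\phi$ is independent of $(\theta, \Pi_{ab})$, the cross term has zero conditional mean. The remaining term $\E[\cos^2\theta\,\Pi_{aa}]$ factorizes because $\theta$ is independent of the prefix, so the actual angle distribution never matters. I would also check that the phase convention (whether the phase sits before or after the rotation) is compatible with the circuit's $R_z$-then-RBS ordering, so that the fresh phase genuinely multiplies $\Pi_{ab}$.
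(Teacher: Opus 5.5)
Your proposal is correct and matches the paper's proof essentially step for step: (a) and (b) restate Corollary~\ref{cor:first_column} and Theorem~\ref{thm:interior_halves}, and (c) is the same column-by-column induction, in which the fresh uniform Givens phase removes the off-diagonal cross term and the rotation then acts as a doubly stochastic average on $(\Pi_{aa},\Pi_{bb})$, whatever the Hurwitz angle law. One small correction in (b): the suffix being Haar does not follow from $W_1$ being Haar and independent of it (the product is Haar whatever the suffix's law), so it must be taken as part of the independent-halves hypothesis, which is how (b) is meant to be read.
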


\begin{remark}[Haar initialization via Givens decomposition]
\label{rem:haar_init}
The Haar initialization procedure is both natural and practical for the
brick-wall. Given a Haar-random matrix $W \sim \mathrm{Haar}(U(n))$, its QR decomposition
yields $n(n-1)/2$ complex Givens rotations---each decomposing into one RBS
angle $\theta$ and one $R_z$ angle $\phi$---and $n$ residual diagonal phases,
absorbed by the first column's phase layer, producing $n^2$ circuit angles;
this is the standard rectangular mesh
decomposition~\cite{reck1994experimental,clements2016optimal} (with the
residual diagonal placed at the input rather than the output of the mesh), and
the induced law on the angles is the Hurwitz
factorization of the Haar measure~\cite{hurwitz1897,zyczkowski1994random}, in
which each Givens phase is uniform on $[0,2\pi)$ and independent of the
rotation angles;
the remaining $(n^2-n)/2$ redundant phases of the full-width layers and the
$n/2$ boundary-pair angles of the even columns are set to zero
(Remark~\ref{rem:overcomplete}), which places the initialization on the
Givens submanifold and makes $W$ exactly Haar-distributed. The decomposition
is numerically stable and requires $O(n^2)$ operations. This contrasts with the butterfly, for which
no analogous inversion is possible: the butterfly does not fully parametrize
$U(n)$, so there is no efficient procedure to find butterfly angles
corresponding to a given $W \in U(n)$.
\end{remark}

\subsection{Trainability of the Unitary Butterfly}
\label{sec:bp_butterfly}

The unitary butterfly does not fully parametrize $U(n)$: with only $O(n\log n)$
parameters it cannot reach every element of $U(n)$, and there is no efficient
procedure to find butterfly angles corresponding to a given $W \in U(n)$.
Consequently, Haar initialization is not available for the butterfly, and the
gradient variance analysis must proceed differently. Instead, we initialize
all RBS angles $\theta$ and all phase angles $\phi$ independently and uniformly
on $[0,2\pi)$, and rely on the approximate $2$-design
property of the butterfly established in Theorem~\ref{thm:butterfly_2design}.
This gives two results: an unconditional polynomial lower bound on gradient
variance, and a sharp $\Theta(k^3/n^5)$ rate conditional on
Conjecture~\ref{conj:lambda2_design}.

\subsubsection{The Butterfly as an Approximate $2$-Design}
\label{sec:butterfly_2design}

The Fontana et al.~\cite{fontana2024adjoint} framework for barren plateaus
requires that the parametrized circuit's second-moment operator has the correct
fixed-point structure and a sufficiently large spectral gap. Density of the
generated subgroup in $U(n)$ is not sufficient for this; we need to establish
the spectral gap of the second-moment operator directly. Let $\Phi_2^{W_n}$
denote the second-moment operator of the butterfly on
$\mathrm{End}(\mathbb{C}^n \otimes \mathbb{C}^n)$:
\begin{equation}
  \Phi_2^{W_n}[Y]
  = \E_{W_n}\!\left[(W_n \otimes W_n)\, Y\,
  (W_n^\dagger \otimes W_n^\dagger)\right],
\end{equation}
where the expectation is over the parameters of the butterfly, sampled
uniformly on their domain. Since $W_n \otimes W_n$ commutes with
$\mathrm{SWAP}$, the antisymmetric sector
$\mathrm{End}(\Lambda^2\mathbb{C}^n) = \{P_A Y P_A\}$, with
$P_A = (I_{n^2} - \mathrm{SWAP})/2$ the projector onto
$\Lambda^2\mathbb{C}^n \subset \mathbb{C}^n\otimes\mathbb{C}^n$, is
$\Phi_2^{W_n}$-invariant; we write $\Phi_2^{W_n,A}$ for the restriction. This
is the sector that carries every quantity appearing in this paper: $k$-particle
states and particle-number-preserving observables evolve in exterior powers of
the single-particle action, and the variance computations below engage the
second moment only through antisymmetric-sector operators.

\begin{theorem}[Butterfly is an approximate $2$-design on the antisymmetric
sector]
\label{thm:butterfly_2design}
The restriction $\Phi_2^{W_n,A}$ of the second-moment operator of the unitary
butterfly at uniformly sampled parameters has fixed-point space
$\mathrm{span}\{P_A\}$ (eigenvalue $1$), and every other eigenvalue lies in
$\{0, 1/n\}$; its spectral gap is $1 - 1/n$. Consequently, the butterfly is an
$\varepsilon$-approximate unitary $2$-design on the antisymmetric sector,
\begin{equation}
  \varepsilon_{\mathrm{2d}}(n)
  \;:=\; \|\Phi_2^{W_n,A} - \Phi_2^{\mathrm{Haar},A}\|_{\mathrm{op}}
  \;=\; O(1/n).
\end{equation}
\end{theorem}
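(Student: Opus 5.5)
The plan is to reduce $\Phi_2^{W_n,A}$ to a classical Markov chain on $2$-subsets of $[n]$, and then read off the spectrum of that chain from the stride-doubling (binary) structure of the butterfly.

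\textbf{Step 1: factorization.} The parameters of distinct layers are independent, so $\Phi_2^{W_n,A}=\Phi^{(K)}\circ\cdots\circ\Phi^{(1)}$, where $\Phi^{(\ell)}$ is the second-moment channel of layer $\ell$. Each $\Phi^{(\ell)}$ in turn factorizes into a phase-layer channel followed by a tensor product over the $n/2$ disjoint pairs of single-gate RBS channels.

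\textbf{Step 2: dephasing.} Work in the basis $\ket{a\wedge b}\bra{c\wedge d}$ of $\mathrm{End}(\Lambda^2\mathbb{C}^n)$. The full-width phase layer multiplies this element by $e^{-i(\phi_a+\phi_b-\phi_c-\phi_d)}$. Averaging the independent uniform phases kills the element unless $\{a,b\}=\{c,d\}$. Since every layer contains such a phase layer, the image of $\Phi_2^{W_n,A}$ lies in the span $\mathcal{D}$ of the diagonal projectors $\ket{a\wedge b}\bra{a\wedge b}$. I will check that $\Phi_2^{W_n,A}$ annihilates the off-diagonal complement $\mathcal{O}$. The intended route is that dephasing commutes appropriately with the RBS channels on diagonal inputs, so $\mathcal{O}$ is mapped to zero and $\mathcal{D}$ is invariant. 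If so, $\mathcal{O}$ contributes the eigenvalue $0$ and the whole problem restricts to $\mathcal{D}\cong\mathbb{R}^{\binom n2}$.

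\textbf{Step 3: the layer as a chain on pairs.} On $\mathcal{D}$, a single layer acts as a stochastic matrix on $2$-subsets. Consider an RBS gate on $(u,v)$ with uniform $\theta$.
\begin{itemize}
\item A subset $\{u,v\}$ covered by the gate is fixed, since $e_u\wedge e_v$ picks up only $\det=1$.
\item A subset meeting the gate in one mode moves that particle to $u$ or $v$ with probability $\cos^2\theta$ or $\sin^2\theta$, i.e.\ $\tfrac12$ each after the next dephasing.
\end{itemize}
Labeling modes by $K$-bit strings, layer $\ell$ therefore resamples bit $\ell$ of each particle independently and uniformly. The one exception is when the two particles differ exactly in bit $\ell$; then they are frozen. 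This frozen event is equivalent to the relative label $s=a\oplus b$ being equal to $e_\ell$.

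\textbf{Step 4: spectrum of $P=T_K\cdots T_1$.} Each $T_\ell$ is idempotent and symmetric with respect to the uniform measure, because each elementary move is a doubly stochastic averaging. I would change variables to (mean position, relative label $s$). Processing bits $1,\dots,K$ in order, every bit gets randomized unless the pair is frozen at that step. A frozen pair at step $\ell$ has $s=e_\ell$, so its remaining bits are all equal. The goal is to write
\[
P \;=\; \Pi_{\mathrm{unif}} + \tfrac1n\,Q + N,
\]
where $\Pi_{\mathrm{unif}}$ is the rank-one uniform projector (matching $\mathrm{span}\{P_A\}$), $Q$ is a projector, and $N$ is nilpotent or zero, with all three mutually annihilating. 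This would give eigenvalues in $\{1,1/n,0\}$.

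The main obstacle is making this exact computation rigorous. I would first try explicit Fourier analysis on $\mathbb{F}_2^K$ applied to functions of $(a,b)$, tracking the distribution of $s$ after each step. Here the factor $1/n=2^{-K}$ should appear as the probability that $s$ ends in a specific frozen configuration. Small-$n$ checks at $n=4$ and $n=8$ will serve as a guide.

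\textbf{Step 5: the design bound.} $\Phi_2^{\mathrm{Haar},A}$ is the projection onto $\mathrm{span}\{P_A\}$. The identity map $\Phi_2^{W_n,A}\circ\Phi_2^{\mathrm{Haar},A}=\Phi_2^{\mathrm{Haar},A}=\Phi_2^{\mathrm{Haar},A}\circ\Phi_2^{W_n,A}$ holds by invariance of the Haar measure and unitality. Hence the difference $\Phi_2^{W_n,A}-\Phi_2^{\mathrm{Haar},A}$ equals $\tfrac1n Q+N$ (restricted to $\mathcal{D}$), plus zero on $\mathcal{O}$.

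To get an operator-norm statement rather than a spectral-radius statement, I need $N=0$, or at least a direct bound on $\|N\|$. I would try to show $N=0$ by verifying that $P$ is self-adjoint in the uniform inner product; the reversal symmetry of the butterfly layer order may supply this. Failing that, I would bound $\|P-\Pi_{\mathrm{unif}}\|$ directly as a contraction on mean-zero functions. Either way this yields $\varepsilon_{\mathrm{2d}}=O(1/n)$.
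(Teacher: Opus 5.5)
\textbf{The eigenvalue half follows the paper.} Your Steps 1--4 are the paper's route. Your chain on $2$-subsets is exactly the transfer matrix the paper builds from its same-pair and cross-pair facts. The paper carries out your Step 4 with the $(\mathbb{Z}_2)^K$ characters $\hat P_{\delta,y}$ in base-point/difference coordinates.
\begin{itemize}
\item Every block with $y\neq 0$ is annihilated by a layer $\ell$ with $y\cdot e_{\ell-1}=1$.
\item On $y=0$, layer $\ell$ fixes the weight at $\delta=e_{\ell-1}$ and replaces every other $F(\delta)$ by $\tfrac12[F(\delta)+F(\delta\oplus e_{\ell-1})]$.
\item An induction on $K$, splitting by the top bit, lifts the lower eigenvectors with halved eigenvalues and leaves a $2\times 2$ block with eigenvalues $1$ and $1/n$.
\end{itemize}
Your Step 2 makes explicit something the paper leaves implicit: the phase layers kill the off-diagonal part $\mathcal{O}$. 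Your target decomposition also holds exactly, with $N=0$ and $Q=n\bigl(\Phi_2^{W_n,A}-\Phi_2^{\mathrm{Haar},A}\bigr)$ an idempotent of rank $K-1$.

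\textbf{The gap is Step 5, and it cannot be closed, because the operator-norm claim is false.} $N=0$ does not help, since $Q$ is an \emph{oblique} idempotent with $\|Q\|_{\mathrm{op}}=\sqrt n$. The chain is not reversible: at $n=4$ the move $\{0,2\}\to\{0,1\}$ has probability $1/8$, while $\{0,1\}\to\{0,2\}$ has probability $0$. So the reversal symmetry you hoped for is unavailable, since $T_1\cdots T_K\neq T_K\cdots T_1$.

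Here is an explicit witness. Take the traceless operator $Y=\sum_m P_{2m,2m+1}-\tfrac{1}{n-1}\sum_{k<l}P_{kl}$, so $\Phi_2^{\mathrm{Haar},A}[Y]=0$. Layer $1$ freezes every $\{2m,2m+1\}$, and no later layer touches bit $0$. Hence $\Phi_2^{W_n,A}[Y]=\tfrac{2}{n}\sum_{a<b,\,a_0\neq b_0}P_{ab}-\tfrac{1}{n-1}\sum_{k<l}P_{kl}$, where $a_0$ is the lowest bit of $a$. The Hilbert--Schmidt ratio $\|\Phi_2^{W_n,A}[Y]\|_2/\|Y\|_2$ is exactly $n^{-1/2}$. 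A short computation on the $K$ type-indicator vectors shows this is in fact the norm.

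In trace norm the deviation is $\Theta(1)$. The pure state $P_{01}/2$ is mapped to the uniform mixture over pairs with $a_0\neq b_0$. That mixture is at trace distance $\tfrac{n-2}{2(n-1)}$ from $I_{\Lambda^2}/\binom{n}{2}$.

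So neither self-adjointness nor a direct contraction estimate can deliver $\varepsilon_{\mathrm{2d}}=O(1/n)$. What your argument and the paper's actually establish is the spectral statement. Equivalently, $X^2=X/n$ for $X=\Phi_2^{W_n,A}-\Phi_2^{\mathrm{Haar},A}$. This shows that two independent stacked butterflies are $O(1/n)$-close to Haar on this sector, even in trace norm, while a single butterfly is not. The paper's one-line corollary makes exactly the spectral-radius versus operator-norm conflation you were right to be wary of.
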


The proof is given in Appendix~\ref{app:spectral_gap}. It proceeds via a
parity-decoupling recursion that tracks how the second-moment operator acts on
the antisymmetric projector basis $\{P_{ij}\}_{i<j}$ through the butterfly's
stride-doubling layer structure, showing that the associated transfer matrix is
doubly stochastic and that its only non-zero, non-Perron eigenvalue is $1/n$
(with multiplicity $K-1$). The general case $n = 2^K$ is established in closed
form by diagonalizing the second-moment operator using the $(\mathbb{Z}_2)^K$
translation (Fourier) symmetry of the butterfly: every character $y \neq 0$ is
annihilated layer by layer, and the surviving $y = 0$ block is diagonalized by
an explicit induction on $K$. Theorem~\ref{thm:butterfly_2design} therefore
holds unconditionally for all $n = 2^K$.

Two features of this result deserve emphasis.
\begin{itemize}
  \item The spectral gap is established on the antisymmetric sector of the
  tensor square of the \emph{single-particle} representation. Extending it to
  the antisymmetric $2$-particle
  representation $\Lambda^2 \mathbb{C}^n$ governs two-body correlator readouts
  and is the content of Conjecture~\ref{conj:lambda2_design} below.
  \item The butterfly achieves spectral gap $1 - 1/n$ at depth $2\log n$. This
  is notable because generic local circuits typically require polynomial depth
  to reach a comparable spectral gap~\cite{brandao2016local,brown2018random}. The stride-doubling structure of the
  butterfly is what enables this near-optimal spectral gap at logarithmic
  depth.
\end{itemize}

We stress that $\Lambda^2\mathbb{C}^n$ is the same space in both statements;
what differs is the number of copies of the $\Lambda^2$ action. Theorem~\ref{thm:butterfly_2design}
controls $\E[\Lambda^2(W_n)\,\cdot\,\Lambda^2(W_n)^\dagger]$---a single copy,
quadratic in the entries of $W_n$, obtained by restricting the second-moment
operator of the fundamental representation---whereas
Conjecture~\ref{conj:lambda2_design} controls
$\E[\Lambda^2(W_n)^{\otimes 2}\,\cdot\,\Lambda^2(W_n)^{\dagger\otimes 2}]$,
which is quartic in those entries. This is why the transfer-matrix recursion
of Appendix~\ref{app:spectral_gap} settles the former and not the latter.

\subsubsection{Gradient Variance Bounds}
\label{sec:butterfly_gradient}

We now establish the gradient variance results for the butterfly. We separate
them into an unconditional polynomial lower bound and a sharp conditional rate.

\begin{theorem}[Absence of exponential barren plateau, unconditional]
\label{thm:no_bp}
Let $n = 2^K$ and let the particle number satisfy $4k \le n$. Let $U_B$ be
the unitary butterfly at uniformly random initialization and
$\ket{\psi(\bx)}$ the data-encoded input of Section~\ref{sec:encoding}.
Then there exist a two-body correlator
$M = n_i n_j$ and a trainable RBS
parameter $\theta$ of the butterfly such that, with
$\calL = \bra{\psi(\bx)} U_B^\dagger M U_B \ket{\psi(\bx)}$,
\begin{equation}
  \label{eq:poly_lower_bound}
  \Var\!\left[\frac{\partial\calL}{\partial\theta}\right]
  \;\geq\; \frac{k^2(k-1)^2}{2\,n^{6}} .
\end{equation}
The bound uses no design property of the butterfly. In particular the gradient of the readout decays
at most polynomially in $n$: there is no exponential barren plateau.
\end{theorem}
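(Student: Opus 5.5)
Since the statement is existential and asserts that no design property is used, the plan is to choose the parameter and readout cleverly. I pick $\theta$ to be an RBS angle in the \emph{last} layer $K$ of the butterfly, acting on a pair $(i,j)$ at stride $2^{K-1}=n/2$, and take the readout $M = n_i n_j$ on that same pair, or, if that fails, on a pair $(i,m)$ sharing one mode with the gate. Write $U_B = R_\theta\,V$, where $V$ is everything before the final RBS gate on $(i,j)$, including the final phase layer. By Proposition~\ref{prop:onebody_const} the loss depends only on the two-particle data $\rho_2' := \Lambda^2(V^{(1)})\rho_2(\bx)\Lambda^2(V^{(1)})^\dagger$. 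The gate acts only on modes $i,j$. For the pair readout $n_in_j$ the projector $\ket{\{i,j\}}\bra{\{i,j\}}$ is invariant under the RBS on $(i,j)$, so that gradient vanishes. I therefore use the adjacent choice $M=n_i n_m$ with $m\notin\{i,j\}$ in the backward cone.

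Conjugating $n_i n_m$ by $\mathrm{RBS}(\theta)$ on $(i,j)$ gives $(\cos^2\theta\, n_i + \sin^2\theta\, n_j - \sin\theta\cos\theta\,(c_i^\dagger c_j + c_j^\dagger c_i))\,n_m$. Hence
\[
\partial_\theta\calL = \sin 2\theta\,\bigl(\langle n_j n_m\rangle' - \langle n_i n_m\rangle'\bigr) - \cos 2\theta\,\langle (c_i^\dagger c_j + h.c.)\,n_m\rangle',
\]
with expectations in the state $V\ket{\psi(\bx)}$. Since $\theta$ is uniform and independent of $V$, averaging over $\theta$ removes the cross term. This gives
\[
\Var[\partial_\theta\calL] \ge \E[(\partial_\theta \calL)^2] - (\E\,\partial_\theta\calL)^2,
\]
where $\E\,\partial_\theta\calL=0$ by the same symmetry $\theta\mapsto\theta+\pi/2$. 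We are left with $\Var \ge \tfrac12\,\E_V\bigl|\langle (c_i^\dagger c_j+h.c.)n_m\rangle'\bigr|^2$. I would further use the uniform phases in the last layer: the $R_z$ on mode $i$ gives this coherence a uniform random phase $e^{i(\phi_i-\phi_j)}$. Averaging this phase lets me replace the squared real part by $\tfrac12|\langle c_i^\dagger c_j n_m\rangle'|^2$, so everything reduces to a lower bound on $\E|\rho_2'[(i,m),(j,m)]|^2$. This is a two-body coherence between configurations that share mode $m$.

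The main obstacle is lower-bounding that second moment without a design property. My plan is to exploit the exact structure instead. The input $2$-RDM of the block-product state has an explicit form: a disconnected part $\rho_1\wedge\rho_1$ with $\rho_1=\tfrac12\Pi$ (Lemma~\ref{lem:spreading}), plus pair terms inside each block. The off-diagonal element then decomposes as $\tfrac14$ times a $2\times2$ minor of $\Pi' = V^{(1)}\Pi V^{(1)\dagger}$, namely $\Pi'_{ij}\Pi'_{mm}-\Pi'_{im}\Pi'_{mj}$, plus block corrections. By Cauchy--Schwarz and Jensen, $\E|x|^2\ge|\E x|^2$. It therefore suffices to control first moments or diagonal quantities. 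Using first-moment statements at every depth, in the spirit of Proposition~\ref{prop:single_mesh_interior}(c) and the doubly stochastic transfer matrix underlying Theorem~\ref{thm:butterfly_2design}, I expect $\E\,\Pi'_{mm}=2k/n$. The aim is to show $\E|\Pi'_{ij}|^2\gtrsim k/n^2$, or at least $\ge$ the squared diagonal $(2k/n)^2$ times $1/n$-type factors, using the single-particle second-moment operator only as an exact identity rather than a design approximation. This is where $4k\le n$ enters: it keeps $\Pi'$ far from the identity, so that the off-diagonal weight $\sum_j|\Pi'_{ij}|^2 = \Pi'_{ii}-\Pi'^2_{ii}$ is bounded below. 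Averaging over the $n/2$ choices of partner, or choosing the best one existentially, then gives $|\rho_2'|^2 \gtrsim k^2(k-1)^2/n^6$. If the connected block terms threaten cancellation, I would instead pick $m$ in the idle region, or argue via the uniformly random last-layer phase on $m$ that the cross terms average out.
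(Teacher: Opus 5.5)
Your first reduction is sound: averaging over the uniform, independent $\theta$ gives exactly $\Var[\partial_\theta\calL]=\tfrac12\E_V\bigl[(B-A)^2+C^2\bigr]$, with $A=\langle n_in_m\rangle'$, $B=\langle n_jn_m\rangle'$ and $C$ the coherence term. But you then keep the term with no usable first moment, and the step you call the main obstacle is left open. The proposed tools cannot close it.

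\begin{itemize}
\item The coherence $\langle c_i^\dagger c_j n_m\rangle'$ has zero mean, since the phase layers modulate it by $e^{i(\phi_i-\phi_j)}$, as you note. So Jensen only gives $\E|x|^2\ge 0$, and the same is true for $\Pi'_{ij}$.
\item What remains, $\E|\rho_2'[(i,m),(j,m)]|^2$, is a two-copy moment of the $\Lambda^2$ action, quartic in the entries of $V$. That is precisely the object of Conjecture~\ref{conj:lambda2_design}. Theorem~\ref{thm:butterfly_2design} is a single-copy statement and does not control it, so this route is not design-free.
\item Even granting $\E|\Pi'_{ij}|^2\gtrsim k/n^2$ and $\E\,\Pi'_{mm}=2k/n$, you cannot pass to $\E|\Pi'_{ij}\Pi'_{mm}-\Pi'_{im}\Pi'_{mj}+(\text{connected})|^2$. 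The factors are correlated and the connected cumulant is uncontrolled.
\item The two terms of the minor transform identically under every diagonal phase (a phase on $m$ cancels in both), so averaging the phase on $m$ cannot suppress their interference.
\item There is no idle region among the output modes to fall back on: after spreading, every mode has occupation $k/n$ (Lemma~\ref{lem:spreading}).
\end{itemize}

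The missing idea is to bound the variance below by the square of a \emph{first} moment, using a unit-modulus demodulator (equivalently, Parseval restricted to one Fourier coefficient), and to apply it to the term you discarded. Let $s=n/2$ be the last-layer stride, $j=i\oplus s$, and $m'=m\oplus s$. Keep $\tfrac12\E_V(B-A)^2$, and instead of averaging over the last-layer angle $\theta_m$ on $\{m,m'\}$, demodulate it. By Cauchy--Schwarz,
\[
\E_V(B-A)^2\;\ge\;\bigl|\E_V[(B-A)e^{-2i\theta_m}]\bigr|^2=\tfrac{1}{16}D^2,
\qquad
D=\bar\sigma_{im}-\bar\sigma_{im'}-\bar\sigma_{jm}+\bar\sigma_{jm'}.
\]
Here $D$ is the second difference, across the stride $s$, of the phase-averaged diagonal $2$-RDM entering the last RBS sublayer. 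Plain Jensen on $B-A$ would fail, because the single difference $\E_V[B-A]$ vanishes once $\theta_m$ is averaged.

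This is exactly the paper's certificate: frequency $2$ on the two last-layer gates that touch the two readout modes. $D$ is a single-copy quantity, computed exactly by the $y=0$ block of the transfer-matrix recursion. It is nonzero because the two top-bit sectors carry equal weight $T_0=T_1=k(k-1)/4$, while the recursion spreads $T_1$ over $n/2$ differences and $T_0$ over only $n/2-1$. The equal weights are where $4k\le n$ is actually used: $H^TPH=\tfrac12(I+\Sigma)$, and $\Sigma$ maps every occupied pre-spread mode to an empty one. It is not used to keep $\Pi'$ away from the identity. Choosing $i\oplus m$ appropriately gives $|D|\ge 4k(k-1)/n^3$, hence $\Var[\partial_\theta\calL]\ge D^2/32\ge k^2(k-1)^2/(2n^6)$.
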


\begin{remark}[On the hypothesis $4k \le n$]
The condition $4k \le n$ is used only in Step~4 below, to guarantee that the
stride $\Sigma$ of the last butterfly layer maps every occupied pre-spread
mode to an unoccupied one. It is a mild constraint at the operating point of
Eq.~\eqref{eq:operating_point}: it requires $n \ge 240$ at $k = 60$,
far below the sizes at which the hardness
calibration of Section~\ref{sec:k_choice} becomes relevant.
\end{remark}

\begin{theorem}[Sharp $\Theta(k^3/n^5)$ gradient variance, conditional]
\label{thm:sharp_rate}
Let $M = n_i n_j$ be a two-body correlator observable ($i \neq j$) and let
$\calL(\btheta, \bphi; \bx) = \bra{\psi(\bx)} U_B^\dagger M U_B
\ket{\psi(\bx)}$ be the associated loss with $U_B$ the unitary butterfly at
uniformly random initialization and $\ket{\psi(\bx)}$ the data-encoded input
of Section~\ref{sec:encoding}. Assume Conjecture~\ref{conj:lambda2_design}
below. Then, for every $\bx$ and every $2 \le k \le \gamma_0 n$, for a small
enough constant $\gamma_0 > 0$ determined by the implied constant of
Conjecture~\ref{conj:lambda2_design}, and for
every phase parameter $\theta$ of the butterfly's \emph{first $R_z$ layer}---as
well as, in the independent-halves ensemble of
Theorem~\ref{thm:interior_halves}, for every trainable parameter in the causal
cone of $M$, where the statement is in fact \emph{unconditional}: the ensemble
makes the composite Haar and Theorem~\ref{thm:interior_halves} applies
without Conjecture~\ref{conj:lambda2_design}, which is needed only for the
literal fixed-depth butterfly---
\begin{equation}
  \label{eq:sharp_rate}
  \Var\!\left[\frac{\partial\calL}{\partial\theta}\right]
  \;=\; \Theta\!\left(\frac{k^3}{n^5}\right),
\end{equation}
polynomial in $n$ and cubic in $k$, matching
the unconditional brick-wall rate of Theorem~\ref{thm:brick-wall_bp}. Whether
the same statement holds pointwise for an interior parameter of the literal
single-mesh butterfly is open, as is its brick-wall counterpart (Section~\ref{sec:discussion}).
\end{theorem}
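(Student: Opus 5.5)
For parameters in the independent-halves ensemble nothing new is needed. If $W_1$ and $W_2$ are independent and Haar on $U(n)$, the composite single-particle unitary is exactly in the setting of Theorem~\ref{thm:interior_halves}. That theorem depends only on the realized unitary and on the encoded $2$-RDM, not on which gate family sits in the middle. Its conclusion $\Theta(k^3/n^5)$ therefore transfers verbatim, for every parameter in the causal cone of $M$, with no appeal to Conjecture~\ref{conj:lambda2_design}. This disposes of the unconditional clause, and the real work is the literal fixed-depth butterfly at a front phase $\phi_j$.

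For the front $R_z$ layer I would repeat the computation of Corollary~\ref{cor:first_column}. Write $W = W' D_1$, where $D_1$ is the front phase layer and $W'$ is the rest of the circuit. Then $\partial_{\phi_j}\calL = D_{\hat n_j}\calL$, the right-invariant directional derivative at the butterfly point $W$.

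The mean vanishes without any design property. $\calL$ is $2\pi$-periodic in $\phi_j$, and $\phi_j$ is uniform and independent of the other parameters, so $\E_{\phi_j}[\partial_{\phi_j}\calL]=0$.

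Hence the variance is the second moment
\[
\E\bigl[\tr\bigl(M^{(2)}\Lambda^2(W)B\Lambda^2(W)^\dagger\bigr)^2\bigr],
\qquad B=[d\Lambda^2(\hat n_j),\rho_{2,\mathfrak g_2}(\bx)].
\]
This is precisely a two-copy moment of $\Lambda^2(W)$: it equals $\tr\bigl((M^{(2)})^{\otimes2}\,\Phi^{\Lambda^2}_{2}[B^{\otimes 2}]\bigr)$, with $\Phi^{\Lambda^2}_2$ the second-moment operator of the $\Lambda^2$ representation under the butterfly ensemble. One might worry that $B$ depends on $W$ through $D_1$. It does not, because $D_1$ is diagonal: conjugating $\rho_2$ by $\Lambda^2(D_1)$ only changes the data phases, and the norms of Proposition~\ref{prop:norm_eval} are uniform in $\bx$. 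Replacing $\Phi^{\Lambda^2}_2$ by its Haar counterpart returns exactly the Haar value \eqref{eq:var_exact}. Evaluating it through Proposition~\ref{prop:norm_eval}, with the phase-gate value $\tau = 4k(n-2k)/n^2$, gives $\Theta(k^3/n^5)$. That value of $\tau$ holds here because the first butterfly layer acts directly on the Hadamard-spread input, whose $\Pi$ has uniform diagonal $2k/n$ by Lemma~\ref{lem:spreading}, exactly as in the brick-wall case.

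It remains to bound the error $\tr\bigl((M^{(2)})^{\otimes 2}(\Phi^{\Lambda^2}_2-\Phi^{\Lambda^2,\mathrm{Haar}}_2)[B^{\otimes2}]\bigr)$, and this is the step I expect to be the main obstacle. A naive operator-norm estimate, $\varepsilon\,\|M^{(2)}\|_1^2\,\|B\|_1^2$, is far too large, because $\|B\|_1$ can be of order $\sqrt{d}\,\|B\|_F$. I would therefore use Conjecture~\ref{conj:lambda2_design} in its irrep-resolved form. By Schur's lemma, decompose $B^{\otimes 2}$ along $V_{(2,2)}\oplus V_{(2,1,1)}\oplus V_{(1,1,1,1)}$, as in Step~1 of Theorem~\ref{thm:brick-wall_bp}. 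The conjecture is then taken to say that on each channel the butterfly moment deviates from the Haar projection by at most $C\varepsilon$ times the Haar-normalized Frobenius weight. Because $M^{(2)}$ is a rank-one projector, Step~2 kills its weights on $V_{(2,1,1)}$ and $V_{(1,1,1,1)}$. The leakage into those channels is then controlled by the same Frobenius quantities $\|B\|_F^2$ and $\|\Gamma(B)\|_F^2$, giving an additive error of at most $C\varepsilon\,(4\|B\|_F^2+2\|\Gamma(B)\|_F^2)/(n^2(n^2-1))$ plus lower-order terms.

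With $\varepsilon=O(1/n)$, this error is a relative $O(C/n)$ correction. There is, however, a subleading cross term in which the $\Gamma$-channel enters with a factor $k/n$ rather than $1/n$. Requiring $k\le\gamma_0 n$, with $\gamma_0$ small relative to $C$, keeps that correction below half the Haar value and simultaneously ensures $n-2k=\Omega(n)$. Together these give the two-sided bound $\Theta(k^3/n^5)$.

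If the irrep-resolved form of the conjecture turns out too strong to assume, my fallback is to prove only the lower bound this way and take the upper bound from Proposition~\ref{prop:general_readout}-style Cauchy–Schwarz estimates.
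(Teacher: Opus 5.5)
Your handling of the independent-halves clause is correct and matches the paper. So is your reduction at a front phase:
\begin{itemize}
\item $\partial_{\phi_j}\calL = D_{\hat n_j}\calL$ with a \emph{fixed} $B=[d\Lambda^2(\hat n_j),\rho_{2,\mathfrak g_2}(\bx)]$ (no worry about $D_1$ is needed, since it is part of the twirled $W$);
\item the value $\tau=4k(n-2k)/n^2$;
\item the Haar value \eqref{eq:var_exact}.
\end{itemize}
The gap is the error control, which is the entire content of the conditional statement. You do not use Conjecture~\ref{conj:lambda2_design} as stated, namely the relative CP ordering $(1-\varepsilon)\Phi_2^{\Lambda^2,\mathrm{Haar}}\preceq\Phi_2^{\Lambda^2,W_n}\preceq(1+\varepsilon)\Phi_2^{\Lambda^2,\mathrm{Haar}}$. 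Instead you assume an ``irrep-resolved'' version giving a relative $C\varepsilon$ error on the signed pairing, which you neither state precisely nor derive. It does not follow: the CP ordering controls $\tr\bigl(M^{(2)\otimes 2}\,\Phi[X]\bigr)$ multiplicatively only for $X\succeq 0$, and $B^{\otimes 2}$ is indefinite. Your ``subleading cross term with factor $k/n$'' then has no source in the argument; it reproduces the hypothesis $k\le\gamma_0 n$ rather than deriving it. The fallback also fails. Proposition~\ref{prop:general_readout} is an exact Haar identity and says nothing about the butterfly's second moment, for either the upper or the lower bound.

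The missing idea lies in the estimate you dismissed. Write $B^{\otimes 2}=X_+-X_-$ with $X_\pm\succeq 0$ and $X_++X_-=|B|\otimes|B|$. Apply the conjecture to each part against $M^{(2)\otimes 2}\succeq 0$, and use Step~2 of Theorem~\ref{thm:brick-wall_bp} to identify the Haar twirl of $M^{(2)\otimes 2}$ as $P_{(2,2)}/D_{(2,2)}$. The error is then bounded by $\varepsilon\,\tr\bigl(M^{(2)\otimes 2}\,\Phi_2^{\Lambda^2,\mathrm{Haar}}[|B|\otimes|B|]\bigr)=\varepsilon\,\tr\bigl(P_{(2,2)}\,|B|\otimes|B|\bigr)/D_{(2,2)}\le\varepsilon\|B\|_1^2/D_{(2,2)}$.

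So the relative form already supplies the factor $1/D_{(2,2)}$ that your additive ``naive'' estimate lacks, and only $\|B\|_1$ versus $\|B\|_F$ remains. Here $\|B\|_1$ is \emph{not} of order $\sqrt d\,\|B\|_F$. The state $\rho_2$ is supported on $\Lambda^2(\mathrm{ran}\,\Pi)$ with $\mathrm{rank}\,\Pi=2k$, and the identity part of $\rho_{2,\mathfrak g_2}$ drops out of the commutator. Hence $\mathrm{rank}\,B\le 2\binom{2k}{2}$ and $\|B\|_1^2\le O(k^2)\,\|B\|_F^2=O(k^4/n)$. Against the signal numerator $\tfrac13\|B\|_F^2+\tfrac16\|\Gamma(B)\|_F^2=\Theta(k^3/n)$, this is a relative error $O(\varepsilon k)=O(k/n)$. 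That is exactly why the sharp rate requires $k\le\gamma_0 n$, with $\gamma_0$ fixed by the conjecture's implied constant.
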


\begin{conjecture}[Butterfly is a relative-error $2$-design on $\Lambda^2$]
\label{conj:lambda2_design}
The unitary butterfly $W_n$ at uniformly random parameters is an
$\varepsilon$-approximate unitary $2$-design on $U(n)$ in the antisymmetric
$2$-particle representation $\Lambda^2 \mathbb{C}^n$, in the \emph{relative}
(multiplicative) sense, with $\varepsilon = O(1/n)$: writing
\begin{equation}\label{eq:lambda2-secondmoment}
\Phi_2^{\Lambda^2,W_n}[Y] := \mathbb{E}_{W_n}\!\left[\Lambda^2(W_n)^{\otimes 2}\,
Y\, \Lambda^2(W_n)^{\dagger\otimes 2}\right],
\end{equation}
for $Y \in \mathrm{End}(\Lambda^2\mathbb{C}^n \otimes \Lambda^2\mathbb{C}^n)$,
and $\Phi_2^{\Lambda^2,\mathrm{Haar}}$ for the corresponding Haar second
moment,
\begin{equation}
\label{eq:relative_design}
  (1-\varepsilon)\,\Phi_2^{\Lambda^2,\mathrm{Haar}}
  \;\preceq\; \Phi_2^{\Lambda^2,W_n}
  \;\preceq\; (1+\varepsilon)\,\Phi_2^{\Lambda^2,\mathrm{Haar}}
\end{equation}
as completely positive maps. In particular $\Phi_2^{\Lambda^2,W_n}$ has
fixed-point space of dimension $3$, matching the Haar decomposition
$\Lambda^2 \otimes \Lambda^2 = V_{(2,2)} \oplus V_{(2,1,1)} \oplus
V_{(1,1,1,1)}$ into three irreducible $U(n)$-representations.
\end{conjecture}

The relative form is what makes the conjecture usable here. The additive
form---$\|\Phi_2^{\Lambda^2,W_n} - \Phi_2^{\Lambda^2,\mathrm{Haar}}\| \le
\varepsilon$, which is what a spectral gap such as
Theorem~\ref{thm:butterfly_2design} supplies, and which is strictly
weaker---does not suffice at any $\varepsilon = \Omega(n^{-4})$: it permits an
additive error $O(\varepsilon\|B\|_F^2)$ in
Eq.~\eqref{eq:lambda2-variance} below, whereas the quantity being estimated is
itself of order $\|B\|_F^2/n^4$. Conversely, Conjecture~\ref{conj:lambda2_design}
is stronger than what is used below: Claim~1 consumes only the single pairing
of $\Phi_2^{\Lambda^2,W_n}[M^{(2)\otimes 2}]$ against $B^{\otimes 2}$ for the
readouts of Section~\ref{sec:readouts}, so any statement reproducing that
scalar up to constants suffices, and Theorem~\ref{thm:sharp_rate} survives
correspondingly weaker hypotheses. Should the conjecture fail as stated, a
further degree of freedom is the initialization---or a slight modification of
the architecture---which may be chosen to bring the second moment closer to
the Haar one; we do not pursue this here.

The proof of Theorem~\ref{thm:sharp_rate} follows the brick-wall computation
with the exact Haar second moment replaced by the approximate one. We record
the two ingredients.

\medskip
\noindent\textbf{Claim 1 (Weingarten variance formula on $\Lambda^2$).}
\emph{Assume Conjecture~\ref{conj:lambda2_design}, fix a trainable parameter
$\theta$ with generator $g$, and write $W = V_2 U_\theta V_1$ and
$\tilde g = V_1^\dagger g V_1$. Assume moreover that the prefix $V_1$ is
independent of the twirl variable---automatic for a parameter of the first
$R_z$ layer, where $V_1 = I$ and $\tilde g = g = \hat n_j$ is deterministic,
and imposed by construction in the independent-halves
ensemble. Then}
\begin{eqnarray}\label{eq:lambda2-variance}
\mathrm{Var}[\partial_\theta \mathcal{L}]
&=& \frac{4\|B\|_F^2 + 2\|\Gamma(B)\|_F^2}{n^2(n^2-1)}
  \Bigl(1 + O(k/n)\Bigr),\nonumber\\
 B &=& [\tilde G^{(2)}, \rho_{2,\mathfrak{g}_2}(\bx)],
\end{eqnarray}
\emph{where $\tilde G^{(2)} = d\Lambda^2(\tilde g)$ and the relative $O(k/n)$
is the design error of Conjecture~\ref{conj:lambda2_design}, amplified by the
effective-rank factor
$\|B\|_1^2/\bigl(\|B\|_F^2 + \tfrac12\|\Gamma(B)\|_F^2\bigr) = O(k)$ as in
the proof below; it is $o(1)$ throughout $k = o(n)$.}

\smallskip
\emph{Proof.}
As in Theorem~\ref{thm:brick-wall_bp}, $M$ is two-body, so
$\partial_\theta\calL = -i\,\tr_{\Lambda^2}\bigl(\Lambda^2(W)^\dagger
M^{(2)}\Lambda^2(W)\,B\bigr)$, with the commutator carried by the conjugated
generator $\tilde g$; the independence hypothesis is what allows $B$ to be
held fixed while the twirl acts on $M^{(2)\otimes2}$, and without it the
twirl and the conjugation are correlated. Steps~1--3 of that proof, run with
the twirl of Conjecture~\ref{conj:lambda2_design} in place of the exact Haar
one, then give $4\|B\|_F^2 + 2\|\Gamma(B)\|_F^2$ over $\dim V_{(2,2)}$---the
relevant invariant theory being that of $U(n)$ on $\Lambda^2$, with three
irreducibles, and not that of the full group $U(\binom{n}{2})$ of the
two-particle space.

It remains to size the error. Splitting the Hermitian $B^{\otimes 2}$ into
positive and negative parts, Eq.~\eqref{eq:relative_design} reproduces each
pairing to relative accuracy $\varepsilon = O(1/n)$, so the total error is at
most $\varepsilon\,\tr\bigl(\Phi_2^{\Lambda^2,\mathrm{Haar}}[M^{(2)\otimes2}]\,
|B^{\otimes 2}|\bigr)$. By Step~2 that twirl equals $P_{(2,2)}/D_{(2,2)}$, and
$|B \otimes B| = |B| \otimes |B|$, so the ratio of error to signal is at most
$\|B\|_1^2/\bigl(\|B\|_F^2 + \tfrac12\|\Gamma(B)\|_F^2\bigr)$. The rank of
$B$ is $O(k^2)$: the trace part of $\rho_{2,\mathfrak{g}_2}$ drops from the
commutator, and $\rho_2$ is supported on $\Lambda^2(\mathrm{ran}\,\Pi)$ with
$\mathrm{rank}\,\Pi = 2k$, so
$\mathrm{rank}(B) \le 2\,\mathrm{rank}(\rho_2) \le 2\binom{2k}{2}$. Hence, by
$\|B\|_1^2 \le \mathrm{rank}(B)\,\|B\|_F^2$ and the norm evaluations of
Claim~2 below---whose proof is independent of the present claim---
\begin{equation}
  \frac{\|B\|_1^2}{\|B\|_F^2 + \tfrac12\|\Gamma(B)\|_F^2}
  \;\le\; \frac{O(k^2)\cdot O(k^2/n)}{\Theta(k^3/n)}
  \;=\; O(k),
\end{equation}
pointwise when $V_1 = I$; in the independent-halves ensemble the same bound
holds with both numerator and denominator averaged over $V_1$, using
$\E\|B\|_1^2 \le \mathrm{rank}(B)\,\E\|B\|_F^2$, which suffices since the
unconditional variance is the $V_1$-average of the conditional one. The total
relative error is therefore $\varepsilon \cdot O(k) = O(k/n)$, which
is the relative $O(k/n)$
of~\eqref{eq:lambda2-variance}.
$\square$
\medskip
\noindent\textbf{Claim 2 (Evaluation of the two norms).}
\emph{Write $\tau_{\tilde g} := \|[\tilde g,\Pi]\|_F^2$. For the data-encoded
input of Section~\ref{sec:encoding}, $M = n_i n_j$, any trainable gate
generator of the butterfly and every $\bx$, one has exactly}
\begin{align*}
  \|\Gamma(B)\|_F^2 &= \tfrac{(k-1)^2}{4}\,\tau_{\tilde g},
  \qquad
  \|B^{\mathrm{disc}}\|_F^2 = \tfrac{2k-1}{16}\,\tau_{\tilde g},
\end{align*}
\emph{and, in either of the two settings of Theorem~\ref{thm:sharp_rate} and
for every $k$ with $n - 2k = \Omega(n)$, $\tau_{\tilde g} =
\Theta\bigl(k(n-2k)/n^2\bigr)$, whence
$\|\Gamma(B)\|_F^2 = \Theta(k^3/n)$ and $\|B\|_F^2 = O(k^2/n)$ (pointwise
when $V_1 = I$, in expectation over $V_1$ in the independent-halves
ensemble).}

\smallskip
\emph{Proof.}
The two exact identities are Lemma~\ref{lem:tau_only} applied to
$h = \tilde g$, which uses only that $\tilde g$ is Hermitian; in particular
$\Gamma(B) = \tfrac{k-1}{2}[\tilde g,\Pi]$ by Lemma~\ref{lem:contraction}
and $\Gamma(\rho_2) = (k-1)\rho_1$, exactly as in Step~4 of
Theorem~\ref{thm:brick-wall_bp}. For a parameter of the first $R_z$ layer,
$\tilde g = g$ is coordinate-local, Step~4 of
Theorem~\ref{thm:brick-wall_bp} evaluates $\tau_{\tilde g}$, and the connected
contribution to $\|B\|_F^2$ is $O(k/n)$ by Lemma~\ref{lem:gate_locality}. For
an independent Haar prefix, Lemma~\ref{lem:conj_neutral} gives
$\E[\tau_{\tilde g}] = \Theta\bigl(k(n-2k)/n^2\bigr)$ with the same constants
and Lemma~\ref{lem:avg_locality} gives
$\E\|B^{\mathrm{conn}}\|_F^2 = O(k/n)$; the cross term is $O(k^{3/2}/n)$ by
Cauchy--Schwarz. Note that the pointwise evaluation of $\tau_{\tilde g}$ for a
shallow butterfly prefix does \emph{not} follow from either computation, which
is why Theorem~\ref{thm:sharp_rate} is stated in the two settings above.
$\square$

\medskip
\begin{proof}[Proof of Theorem~\ref{thm:sharp_rate}]
Claims~1 and~2 combine to give
\[
  \Var\!\left[\frac{\partial\calL}{\partial\theta}\right]
  = \frac{\Theta(k^3/n) + O(k^2/n)}{n^2(n^2-1)}\bigl(1 + O(k/n)\bigr).
\]
Writing $Ck/n$ for the relative error of Claim~1 and setting
$\gamma_0 := \min\{1/(2C),\, 1/4\}$, for every $k \le \gamma_0 n$ the last
factor lies in $[\tfrac12, \tfrac32]$ and $n - 2k \ge n/2$, so the
right-hand side is $\Theta(k^3/n^5)$. This grows polynomially in $k$.
\end{proof}

\subsection{Summary and Comparison}
\label{sec:bp_summary}
Both architectures rule out exponential barren plateaus for two-body correlator
readouts, and both achieve the same $\Theta(k^3/n^5)$ gradient variance rate
throughout $n - 2k = \Omega(n)$ --- polynomial in $n$ and cubic in the
particle number $k$. They differ in what is provable unconditionally:
\begin{itemize}
  \item \textbf{Brick-wall}: $\Theta(k^3/n^5)$ unconditionally and exactly,
  with no conjecture, under Haar initialization via the Givens decomposition.
  The closed form is proved for the invariant directional derivative at a Haar
  point (Theorem~\ref{thm:brick-wall_bp}); it specializes to the $n$ phases of
  the front $R_z$ layer (Corollary~\ref{cor:first_column}) and, in the
  independent-halves ensemble, to every gate of the circuit
  (Theorem~\ref{thm:interior_halves}). Whether the same statement holds
  pointwise for an interior gate of the literal single mesh is open;
  Proposition~\ref{prop:single_mesh_interior} settles it at both endpoints of
  the prefix, and no claim above depends on it.
  \item \textbf{Butterfly}: $\Omega(k^4/n^6)$ unconditionally;
  $\Theta(k^3/n^5)$ conditional on Conjecture~\ref{conj:lambda2_design}. Under
  uniform initialization.
\end{itemize}
The structure of the rate is transparent from the closed
form~\eqref{eq:var_exact}. Two Frobenius norms enter, both divided by
$\dim V_{(2,2)} = \Theta(n^4)$: the two-body norm $\|B\|_F^2$ of the
commutator of the gate generator with the encoded $2$-RDM, and the one-body
norm $\|\Gamma(B)\|_F^2$ of its one-particle contraction. The one-body term
is the larger of the two and is what produces the cubic dependence on $k$: by
Lemma~\ref{lem:contraction} it equals $(k-1)^2\|[g,\rho_1]\|_F^2$, and the
encoded $1$-RDM is $\tfrac12$ times a rank-$2k$ projector, so
$\|[g,\rho_1]\|_F^2 = \Theta(k(n-2k)/n^2)$. Both terms are proportional to the
supply of idle modes $n - 2k$: it is precisely the failure of
$\rho_1$ to be proportional to $I_n$ (Lemma~\ref{lem:spreading} equalizes only
its diagonal) that carries the trainable signal, and at a constant filling
fraction $k = cn$ with $c < 1/2$ the rate reaches its maximum $\Theta(1/n^2)$.
At the fully-packed point $k = n/2$ the idle modes are exhausted,
$\rho_1 = \tfrac12 I_n$ exactly, and~\eqref{eq:var_exact} reduces to the
connected-cumulant contribution alone, for which we prove only the upper bound
$O(1/n^4)$; the sharp rate is therefore claimed throughout
$n - 2k = \Omega(n)$ and not at exactly half filling.
The contrast with the orthogonal architectures remains sharp: for both the
orthogonal brick-wall and the orthogonal butterfly (DLA $\mathfrak{so}(n)$),
the gradient variance of $n_i n_j$ scales as
$\Theta(1/\binom{n}{k})$~\cite{monbroussou2025trainability}.
The $R_z$ gates that lift the DLA from $\mathfrak{so}(n)$ to $\mathfrak{u}(n)$
convert this catastrophic decay into the polynomial $\Theta(k^3/n^5)$---an
overwhelming improvement in the trainable signal at every $k$ that grows
with $n$; they simultaneously make one-body readouts classically tractable
while making two-body readouts the locus of trainable
signal (Proposition~\ref{prop:onebody_const}).

Two consequences are worth stating explicitly. First, trainability is
never the binding constraint in this framework: the gradient variance is
polynomial in $n$ at every particle number with $n - 2k = \Omega(n)$, from
$k = O(1)$ through the operating point of
Eq.~\eqref{eq:operating_point} to extensive filling $k = cn$, $c < 1/2$,
where it is $\Theta(1/n^2)$. The ladder of
Section~\ref{sec:k_choice} therefore concerns the training \emph{cost} and
the \emph{type} of hardness statement available, not whether the model
trains. Second, at extensive particle number $k = cn$ with $c < 1/2$ the
architecture is simultaneously trainable, with the largest gradient variance
it attains anywhere, and classically hard under the best-known-algorithm
calibration of Section~\ref{sec:k_choice}; worst-case \#P-hardness of
individual output probabilities is available at polynomial particle number
$k = n^{\epsilon}$ (Theorem~\ref{thm:worstcase}), and the full average-case
Fermion Sampling machinery at exactly half filling
(Theorem~\ref{thm:brick-wall_avgcase}), where the sharp rate degrades to the
connected-cumulant upper bound $O(1/n^4)$. This is precisely the
combination that the tension between small dynamical Lie algebras and
classical simulability~\cite{cerezo2025does} would seem to preclude, and
that the orthogonal architectures do not achieve at any $k$ that grows
with $n$.

\section{Training Algorithm: Multi-layer Parallel Parameter-Shift Rule}
\label{sec:psr}

This section proves the multi-layer parallel parameter-shift rule. The rule
applies identically to both the unitary brick-wall and the unitary butterfly:
in both architectures, gates within each layer act on disjoint qubit pairs
covering all $n$ modes, with commuting generators, and the input state has
particle number exactly $k$. These structural properties are all that the
proof requires. The
difference between the two architectures appears only in the total training
cost: the brick-wall has $O(n)$ columns and $\Theta(n^2)$ parameters, while the
butterfly has $\log n$ layers and $\Theta(n\log n)$ parameters. In both cases
the parallel PSR achieves a factor-$\Theta(n/k)$ reduction in circuit
evaluations over the naive parameter-shift rule:
\begin{itemize}
 \item \textbf{brick-wall}: naive PSR costs $4n^2$ evaluations per
gradient step ($4$ per RBS parameter, $2$ per $R_z$ parameter); parallel PSR
costs $4kn$ --- a factor of $n/k$ reduction.
\item \textbf{Butterfly}: naive PSR costs $4n\log n$ evaluations per
  gradient step; parallel PSR costs $4k\log n$ --- a factor of
  $n/k$ reduction.
\end{itemize}

Two remarks position the rule. First, hardware gradient estimation is the
operative training mode for losses that are estimated from model
samples---reinforcement-learning returns, adversarial and other
likelihood-free objectives---where the per-sample function is an arbitrary
diagonal observable and no classical route to the gradient is known short
of the sampling task itself (Section~\ref{sec:hardness}); it is likewise
the mode for training or fine-tuning the physical device against its own
noise and calibration. Second, for losses anchored on classically
estimable statistics (Section~\ref{sec:readouts}), the same gradients may
equally be computed classically; all comparisons in this section are
therefore between quantum gradient estimators---the parallel rule against
the naive parameter-shift rule---and make no claim against classical
gradient computation. The verification regime $k = O(\log n)$ of
Section~\ref{sec:k_choice} already exploits this freedom.
The algorithm is identical in both cases. The key challenge is to estimate
all gradients of a given layer simultaneously from a number of circuit
evaluations that depends only on $k$, not on $n$. Two structural properties
make this possible. First, within each layer $\ell$, all $n/2$ RBS gates act
on disjoint qubit pairs, so their generators commute: $[G_j^{(\ell)},
G_{j'}^{(\ell)}] = 0$ for $j \neq j'$; likewise the $R_z$ gates within a
layer act on disjoint single qubits. Second, the input state has particle
number \emph{exactly} $k$, which both limits the total degree of the loss as a
trigonometric polynomial in the layer parameters
(Lemma~\ref{lem:fourier_degree}) and forces its Fourier support to have even
$\ell_1$ weight (Lemma~\ref{lem:parity}, using that each layer's pairs and
phases cover all $n$ modes). Together these allow a random sign
vector $\bs$ to simultaneously probe all gradient directions in a single layer
from only $k$ shifts, with the $k$-particle Fourier structure ensuring the
estimator is exactly unbiased.

\subsection{Setup and Fourier Degree Bound}

The loss restricted to layer $\ell$ is:
\begin{equation}
  \calL(\btheta_\ell, \bphi_\ell)
  = \bra{\phi} U_\ell^\dagger(\btheta_\ell, \bphi_\ell)\,
  \tilde{M}\, U_\ell(\btheta_\ell, \bphi_\ell) \ket{\phi}
\end{equation}
where $\ket{\phi} = U^{(\ell-1)}\cdots U^{(1)}\ket{\psi(\bx)}$ is the state
entering layer $\ell$, and $\tilde{M} = (U^{(K)}\cdots U^{(\ell+1)})^\dagger
M\, U^{(K)}\cdots U^{(\ell+1)}$ is the effective observable for the
downstream circuit. This formulation is identical for brick-wall and butterfly:
$U^{(\ell)}$ is a layer of disjoint RBS-$R_z$ pairs regardless of the
architecture, and $K = O(n)$ for the brick-wall or $K = \log n$ for the
butterfly.

\begin{lemma}[Fourier degree bound]
\label{lem:fourier_degree}
For a $k$-particle input state $\ket{\phi}$, the loss $\calL(\btheta_\ell,
\bphi_\ell)$ as a function of the RBS parameters $\btheta_\ell$ (with
$\bphi_\ell$ held fixed) is a trigonometric polynomial of total degree at
most $2k$: all nonzero Fourier coefficients $c_{\boldsymbol\delta}$ in
\[
  \calL(\btheta_\ell, \bphi_\ell)
  = \sum_{\boldsymbol\delta} c_{\boldsymbol\delta}\, e^{i \boldsymbol\delta
  \cdot \btheta_\ell}
\]
satisfy $\|\boldsymbol\delta\|_1 \leq 2k$.
\end{lemma}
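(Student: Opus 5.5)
The plan is to expand each RBS gate in its eigenbasis and track which exponentials $e^{\pm i\theta_j}$ can survive between the $k$-particle input and the effective observable. Since the generators $G_j$ of layer $\ell$ commute and act on disjoint pairs, I will write
\[
  \mathrm{RBS}(\theta_j)=P^{(j)}_0+e^{-i\theta_j}P^{(j)}_{+}+e^{i\theta_j}P^{(j)}_{-},
\]
with $P^{(j)}_{\pm}$ the rank-one projectors onto $(\ket{01}\pm i\ket{10})/\sqrt2$ on pair $j$ and $P^{(j)}_0$ the projector onto $\mathrm{span}\{\ket{00},\ket{11}\}$. Substituting this into $U_\ell$ and $U_\ell^\dagger$ (the phase layer is held fixed, so it can be absorbed into $\ket{\phi}$) writes $\calL$ as a sum over pairs of labels $(a_j,b_j)\in\{0,+,-\}^2$. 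Each term carries the factor $e^{i\sum_j(\sigma(b_j)-\sigma(a_j))\theta_j}$, where $\sigma(0)=0$ and $\sigma(\pm)=\mp1$, multiplied by $\bra{\phi}\bigotimes_j P^{(j)}_{b_j}\,\tilde M\,\bigotimes_j P^{(j)}_{a_j}\ket{\phi}$. This already shows $|\delta_j|\le 2$ for every $j$. The content of the lemma is the bound on the total weight.

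The key step is a counting argument. The projector $P^{(j)}_{\pm}$ is supported on the single-occupation subspace of pair $j$. The frequency $\delta_j$ can be nonzero only if at least one of $a_j,b_j$ is nonzero, and it equals $\pm2$ only if both are. If $\bigotimes_j P^{(j)}_{a_j}\ket{\phi}\neq 0$, this vector lies in the subspace where every pair $j$ with $a_j\ne0$ holds exactly one particle. Because particle number is exactly $k$ and the pairs are disjoint, at most $k$ of the $a_j$ can be nonzero, and likewise at most $k$ of the $b_j$. Hence
\[
  \|\boldsymbol\delta\|_1\le\sum_j\bigl(|\sigma(a_j)|+|\sigma(b_j)|\bigr)\le k+k=2k.
\]
Note that $\tilde M$ preserves particle number, because it is $M$ conjugated by the particle-number-preserving downstream layers, so the ket $\bigotimes_j P^{(j)}_{b_j}\ket{\phi}$ on the left is also a $k$-particle vector and the same count applies to it. Terms with either projection vanishing drop out. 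Collecting exponentials gives $c_{\boldsymbol\delta}$, and it is supported on $\|\boldsymbol\delta\|_1\le 2k$.

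The only point needing care is whether the counting requires the pairs to cover all modes. It does not: modes outside the pairs can only absorb particles, so the bound $\#\{j: a_j\neq0\}\le k$ holds regardless. Covering becomes relevant only for the later parity lemma.

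I expect the main obstacle to be making the eigen-decomposition bookkeeping clean rather than any deep difficulty. In particular, it has to be clear that the bound concerns the total $\ell_1$ norm and not the per-coordinate degree, which would be the cruder bound $2$ per gate times $n/2$ gates. The particle-number constraint is exactly what turns the naive $n$ into $2k$.
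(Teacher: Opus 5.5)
Your proof is correct and uses the same counting as the paper's: the pairs of layer $\ell$ are disjoint and $\ket{\phi}$ has exactly $k$ particles, so at most $k$ pairs can be singly occupied on each side of the expectation value, which gives $k+k=2k$. The paper does this bookkeeping in the Fock basis, where each $\bra{S}U_\ell\ket{T}$ is a product of $\cos\theta_j$ or $\pm\sin\theta_j$ over at most $k$ active pairs, whereas your spectral expansion reads the exponentials off directly; your one slip is a sign convention (with the paper's RBS matrix, $(\ket{01}+i\ket{10})/\sqrt2$ picks up $e^{+i\theta_j}$, not $e^{-i\theta_j}$), and it does not affect the bound.
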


\begin{proof}
All gates in the downstream circuit are RBS and $R_z$, both particle-number
preserving; since $M$ is diagonal in the computational basis and therefore
particle-number preserving, the effective observable $\tilde{M}$ is also
particle-number preserving. For the $k$-particle input $\ket{\phi}$, the
expectation value $\bra{\phi} U_\ell^\dagger \tilde{M} U_\ell \ket{\phi}$
decomposes as a sum of matrix elements $\bra{S} U_\ell \ket{T}$ with
$|S| = |T| = k$.

Fix a basis state $\ket{S}$ with $|S| = k$. For each of the $n/2$ disjoint
pairs of qubits in layer $\ell$ (nearest-neighbor for the brick-wall,
stride-$2^{\ell-1}$ for the butterfly), the pair is \emph{active} on
$\ket{S}$ if exactly one qubit in the pair is occupied; inactive otherwise.
Active pairs contribute a factor of $\cos\theta_j$ or $\sin\theta_j$ (each
of trigonometric degree $1$) to the matrix element; inactive pairs contribute
a factor of $1$. The number of active pairs is at most $k$ (bounded by the
particle number), so $\bra{S} U_\ell \ket{T}$ is a trigonometric polynomial
in $\btheta_\ell$ of total degree at most $k$. The same bound applies to
$U_\ell^\dagger$, giving a bound of $2k$ on the product.
\end{proof}

The degree bound alone would call for $2k$ shifts. The following parity
property of the $k$-particle Fourier support brings this down to $k$, and is
what fixes the cost of the estimator at $2k$ circuit evaluations per gate
family per layer.

\begin{lemma}[Fourier parity]
\label{lem:parity}
Let $\ket{\phi}$ have particle number \emph{exactly} $k$ and let $\tilde{M}$
be particle-number preserving. Then every frequency vector
$\boldsymbol\delta$ occurring in the layer-$\ell$ Fourier expansion of
$\calL$ satisfies
\begin{equation}
  \|\boldsymbol\delta\|_1 \equiv 0 \pmod 2 ,
  \label{eq:parity}
\end{equation}
for the RBS angles $\btheta_\ell$ provided the pairs of layer $\ell$ cover
\emph{all $n$ modes}, and for the phases $\bphi_\ell$ provided the $R_z$
layer acts on \emph{all $n$ modes}.
\end{lemma}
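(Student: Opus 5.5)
The plan is to exhibit the parity as a symmetry of $\calL$ under a simultaneous shift of all layer-$\ell$ parameters by $\pi$. If $\calL(\btheta_\ell + \pi\mathbf{1}) = \calL(\btheta_\ell)$ for all $\btheta_\ell$, then comparing Fourier expansions gives $c_{\boldsymbol\delta}\,(e^{i\pi\|\boldsymbol\delta\|_1}-1) = 0$, using $\sum_j \delta_j \equiv \|\boldsymbol\delta\|_1 \pmod 2$. Hence only even $\ell_1$ weight survives. The same argument applies to the phases $\bphi_\ell$. So the whole proof reduces to showing that the uniform $\pi$-shift of a full-width layer acts on the $k$-particle sector as a global phase.

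\emph{RBS case.} $\mathrm{RBS}(\theta+\pi) = \mathrm{RBS}(\theta)\,\mathrm{RBS}(\pi)$, and $\mathrm{RBS}(\pi)$ is the identity on $\ket{00},\ket{11}$ and $-1$ on $\ket{01},\ket{10}$. That is, on its pair it equals $(-1)^{N_{\mathrm{pair}}}$, where $N_{\mathrm{pair}}$ is the number of occupied qubits in the pair: $\ket{11}$ gets $(-1)^2=1$ and $\ket{01}$ gets $-1$. Because the pairs of layer $\ell$ are disjoint and cover all $n$ modes, the product over pairs is $(-1)^{\hat N}$, with $\hat N$ the total number operator. This operator commutes with the layer (all generators commute with $\hat N$). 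On the fixed-$k$ sector it equals $(-1)^k I$. The shifted loss therefore picks up $|(-1)^k|^2 = 1$, since $U_\ell$ and $U_\ell^\dagger$ both appear and $\tilde M$ preserves particle number, so the $k$-sector is invariant.

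\emph{Phase case.} $R_z(\phi+\pi) = R_z(\phi)\,R_z(\pi)$ with $R_z(\pi) = -i Z$, which equals $-i(-1)^{n_j}$ on qubit $j$. Across a full-width layer the product is $(-i)^n(-1)^{\hat N}$, again a global phase on the $k$-particle sector, and it cancels between bra and ket.

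The point that needs care, and the only real obstacle, is bookkeeping. The Fourier frequencies must be genuinely integer-valued in the stated parametrization. For RBS the entries are in $\cos\theta,\sin\theta$, hence integer frequencies. For $R_z$ with half-angles the loss depends only on phase differences, giving integer frequencies per the $e^{-i\varphi n_j}$ convention. One must also note that the covering hypothesis is exactly what makes the product equal $(-1)^{\hat N}$ rather than a parity over a strict subset of modes, which would not be constant on the $k$-sector; this matches the failure noted in Remark~\ref{rem:chain_variant}.
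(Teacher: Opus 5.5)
Your proof is correct, and it takes a different route from the paper's.

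\textbf{The paper's argument.} It works term by term. It expands $\calL$ into products $\overline{\bra{S'}U_\ell\ket{T'}}\,\tilde M_{S'S}\,\bra{S}U_\ell\ket{T}$. Each RBS matrix element has $\btheta_\ell$-support exactly on the $a(S,T)$ active pairs, with frequencies $\pm 1$. The covering hypothesis gives the count $k = a(S,T) + 2\,\#\{\text{doubly occupied pairs}\}$, hence $a(S,T)\equiv k \pmod 2$, and the two parities add to an even number. For the phases, the paper reads off the frequency vector as $\mathbf{1}_{T'}-\mathbf{1}_{T}$ directly and notes that $|T\triangle T'|$ is even.

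\textbf{Your argument.} You isolate a single symmetry: $\calL(\btheta_\ell+\pi\mathbf{1})=\calL(\btheta_\ell)$, and likewise for $\bphi_\ell$. This holds because the uniform $\pi$-shift of a covering layer equals $(-1)^{\hat N}$ up to a global phase, and that operator is a scalar on $\ket{\phi}$. Uniqueness of Fourier coefficients then removes every $\boldsymbol\delta$ with odd $\sum_j\delta_j$, which is the same as odd $\|\boldsymbol\delta\|_1$.

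\textbf{What your route buys.}
\begin{itemize}
\item It is shorter and treats both gate families identically.
\item It shows exactly what is needed. Since $(-1)^{\hat N}$ acts directly on $\ket{\phi}$, no hypothesis on $\tilde M$ is used, so your appeal to its particle-number preservation is superfluous.
\item ``Exactly $k$ particles'' can be weakened to definite fermion parity of the input.
\end{itemize}

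\textbf{What the paper's route buys.} Its term-by-term computation gives finer information that is used downstream.
\begin{itemize}
\item For the phases it gives $\delta_j\in\{0,\pm1\}$, $\sum_j\delta_j=0$ and $\|\boldsymbol\delta\|_1=|T\triangle T'|\le 2k$. This is what lets the proof of Theorem~\ref{thm:parallel_psr_rz} write $R(x)=x^{2m-1}$ with $m\le k$.
\item For the RBS angles, the same active-pair bookkeeping yields both the degree bound of Lemma~\ref{lem:fourier_degree} and the parity at once.
\end{itemize}
Your symmetry argument yields the parity alone, so the degree bound has to be supplied separately.
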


\begin{proof}
\emph{RBS angles.} As in the proof of Lemma~\ref{lem:fourier_degree}, the
matrix element $\bra{S} U_\ell \ket{T}$ is nonzero only if $S$ and $T$ have
the same occupancy on every pair of layer $\ell$, and its
$\btheta_\ell$-support is exactly the set of active pairs, each contributing a
frequency $\pm 1$. Writing $a(S,T)$ for the number of active pairs and using
that the pairs of layer $\ell$ cover all $n$ modes---so that every particle
sits on exactly one pair---counting particles gives
$k = a(S,T) + 2\,\#\{\text{pairs of occupancy } 2\}$, hence
$a(S,T) \equiv k \pmod 2$. The loss is a sum of terms
$\overline{\bra{S'} U_\ell \ket{T'}}\,\tilde{M}_{S'S}\,\bra{S} U_\ell \ket{T}$,
whose frequency vector is the difference of two vectors of $\ell_1$ weights
$a(S,T)$ and $a(S',T')$; since $|x - y| \equiv x + y \pmod 2$ for integers,
$\|\boldsymbol\delta\|_1 \equiv a(S,T) + a(S',T') \equiv 2k \equiv 0$.

\emph{$R_z$ angles, full-width layer.} With $R_z$ gates on all $n$ modes,
$D(\bphi_\ell)\ket{T}
 = e^{-i\sum_{j \in T} \phi_\ell^{(j)}} \ket{T}$ up to a global phase, so the
term indexed by $(T,T')$ carries
$\delta_j = \mathbbm{1}[j \in T'] - \mathbbm{1}[j \in T] \in \{0,\pm 1\}$,
supported exactly on $T \triangle T'$. Since $|T| = |T'| = k$ we have
$|T \setminus T'| = |T' \setminus T|$, so
$\|\boldsymbol\delta\|_1 = |T \triangle T'|$ is even (and, in addition,
$\sum_j \delta_j = 0$).
\end{proof}

\begin{remark}[The full-width hypothesis is necessary]
\label{rem:fullwidth_necessary}
If the $R_z$ layer covers only a proper subset $A \subsetneq [n]$ of the
modes---for instance one phase attached to each pair---the frequency support
is $(T \triangle T') \cap A$, which carries no parity constraint: taking
$T \setminus T' = \{j\}$ with $j \in A$ and $T' \setminus T = \{j'\}$ with
$j' \notin A$ gives $\|\boldsymbol\delta\|_1 = 1$. The argument below then
fails and $2k$ shifts are genuinely required. This is why both architectures
of Section~\ref{sec:butterfly} carry a phase layer on all $n$ modes. The RBS
half of the argument carries the analogous hypothesis: the layer's pairs must
cover all $n$ modes. Every butterfly layer does, and so does every brick-wall
column, the even columns owing this to the boundary pair $(n,1)$; a layer
that leaves modes uncovered---an even column on an open chain, for
instance---admits odd-weight frequencies by the same mechanism (a particle
sitting on an uncovered mode in the bra and on a covered pair in the ket),
and its RBS gradients require the $2k$-shift rule
(Remark~\ref{rem:chain_variant}).
\end{remark}

\subsection{Main Theorem: Parallel PSR for RBS Parameters}

\begin{theorem}[Multi-layer parallel parameter-shift rule, RBS gradients]
\label{thm:parallel_psr_rbs}
Let the input state have particle number exactly $k$, let the disjoint
pairs of layer $\ell$ cover all $n$ modes, and let $a_1, \ldots,
a_{k}$ and $r_1, \ldots, r_{k}$ solve the linear system:
\begin{equation}
  \sum_{q=1}^{k} a_q \sin(r_q) \cos(r_q)^{2m-1} = \frac{1}{2},
  \quad m = 1, \ldots, k.
  \label{eq:linear_system_rbs}
\end{equation}
Let $\bs = (s_1, \ldots, s_{n/2}) \sim \mathrm{Uniform}(\{\pm 1\}^{n/2})$
and define the estimator
\begin{equation}
  \hat{g}_j(\bs) = s_j \sum_{q=1}^{k} a_q
  \left[\calL(\btheta_\ell + r_q \bs) - \calL(\btheta_\ell - r_q \bs)\right].
\end{equation}
Then for all $j \in \{1, \ldots, n/2\}$ simultaneously,
\begin{equation}
  \E_{\bs}\!\left[\hat{g}_j(\bs)\right]
  = \frac{\partial \calL}{\partial \theta_\ell^{(j)}},
\end{equation}
using $2k$ circuit evaluations in total---$k$ shifts, each evaluated at
$\pm r_q$---independently of $n$, for any particle-number-preserving
observable $M$. The covering hypothesis holds for every layer of both
architectures: every butterfly layer and every brick-wall column consists of
$n/2$ disjoint pairs covering all $n$ modes.
\end{theorem}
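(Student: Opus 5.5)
The plan is to expand the layer loss in Fourier modes in the RBS angles and verify unbiasedness mode by mode. By Lemma~\ref{lem:fourier_degree} and Lemma~\ref{lem:parity} (the covering hypothesis is exactly what the latter needs), we can write $\calL(\btheta_\ell) = \sum_{\boldsymbol\delta} c_{\boldsymbol\delta}\, e^{i\boldsymbol\delta\cdot\btheta_\ell}$. Every frequency in this sum has $\|\boldsymbol\delta\|_1 \le 2k$ and $\|\boldsymbol\delta\|_1$ even. Since $\hat g_j$ and $\partial/\partial\theta_\ell^{(j)}$ are both linear in $\calL$, it suffices to check the claim on a single mode $c_{\boldsymbol\delta} e^{i\boldsymbol\delta\cdot\btheta_\ell}$. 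The target value on such a mode is $i\delta_j\, c_{\boldsymbol\delta} e^{i\boldsymbol\delta\cdot\btheta_\ell}$.

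Fix $\bs$ and set $w := \boldsymbol\delta\cdot\bs$. A symmetric shift then gives
\[
e^{i\boldsymbol\delta\cdot(\btheta_\ell + r\bs)} - e^{i\boldsymbol\delta\cdot(\btheta_\ell - r\bs)} = 2i\, e^{i\boldsymbol\delta\cdot\btheta_\ell}\sin(r w).
\]
So on this mode,
\[
\hat g_j(\bs) = 2i\, c_{\boldsymbol\delta}\, e^{i\boldsymbol\delta\cdot\btheta_\ell}\, s_j \sum_q a_q \sin(r_q w).
\]
Because each $s_i = \pm1$, we have $w \equiv \|\boldsymbol\delta\|_1 \equiv 0 \pmod 2$ and $|w| \le 2k$. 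Hence $w = 2p$ with $|p| \le k$; this is exactly where the parity lemma halves the number of shifts.

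The key step, and the one I expect to need the most care, is a trigonometric identity. For every integer $p$ with $|p|\le k$, $\sin(2pr)$ lies in the span of $\{\sin r\,\cos^{2m-1} r\}_{m=1}^{k}$. To show this, write $\sin(2pr) = \sin r\cdot U_{2p-1}(\cos r)$, where $U_{2p-1}$ is the Chebyshev polynomial of the second kind. This polynomial is odd of degree $2|p|-1$, which handles negative $p$ via oddness of sine, so $\sin(2pr) = \sum_{m=1}^{k} \beta_{p,m}\,\sin r\cos^{2m-1} r$. The coefficients are pinned down by the limit at $r\to 0$:
\[
\sum_m \beta_{p,m} = \lim_{r\to0}\frac{\sin(2pr)}{\sin r} = 2p = w.
\]
Applying the linear system~\eqref{eq:linear_system_rbs} term by term then yields
\[
\sum_q a_q \sin(r_q w) = \sum_m \beta_{p,m}\cdot\tfrac12 = \frac{w}{2}.
\]
The $p=0$ case is trivially consistent, since both sides vanish. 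So the weighted shift combination reproduces the derivative of $\sin(rw)$ at $r=0$ exactly, for every frequency the loss can contain.

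Finally, take the expectation over $\bs$. We get $\E_{\bs}[s_j\, w/2] = \tfrac12\sum_i \delta_i\,\E[s_j s_i] = \delta_j/2$, since the signs are independent, uniform, and have unit square. Therefore $\E_{\bs}[\hat g_j] = 2i\,c_{\boldsymbol\delta}\, e^{i\boldsymbol\delta\cdot\btheta_\ell}\cdot\delta_j/2$, which is the required derivative of that mode. Summing over modes proves unbiasedness simultaneously for all $j$ from the same $\bs$.

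The cost count is immediate: $2k$ evaluations, namely $\calL(\btheta_\ell \pm r_q\bs)$ for $q=1,\dots,k$. Nothing in the argument uses the observable beyond particle-number preservation, which enters only through the two Fourier lemmas. The covering claim for both architectures is read off from the layer definitions in Section~\ref{sec:butterfly}, including the boundary pair $(n,1)$ in the even brick-wall columns. A point I would double-check is solvability of~\eqref{eq:linear_system_rbs}. It suffices to pick $k$ distinct $r_q\in(0,\pi/2)$: the map $r \mapsto (\sin r\cos^{2m-1}r)_m$ gives a nonsingular $k\times k$ matrix, because after factoring $\sin r_q \cos r_q$ from each column it reduces to a Vandermonde matrix in the distinct values $\cos^2 r_q$. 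That is a standalone check I would append.
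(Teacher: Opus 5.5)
Your proof is correct. It differs from the paper's in the order in which the sign average and the shift quadrature are applied.

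The paper averages over $\bs$ first. Using $\E[s_j e^{i\alpha s_j}] = i\sin\alpha$ and $\E[e^{i\alpha s_{j''}}] = \cos\alpha$, it reduces unbiasedness to the identity $2\sum_q a_q \sin(\delta_j r_q)\prod_{j''\neq j}\cos(\delta_{j''} r_q) = \delta_j$. It then proves this identity by writing the left side as $\sin r\, R(\cos r)$ with the product polynomial $R = U_{|\delta_j|-1}\prod_{j''\neq j} T_{|\delta_{j''}|}$. The parity lemma makes $R$ odd, and the quadrature encoded in the linear system evaluates it to $R(1)/2$.

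You instead fix $\bs$ and observe that $r \mapsto \calL(\btheta_\ell + r\bs)$ is a univariate trigonometric polynomial. Its frequencies $w = \boldsymbol\delta\cdot\bs$ are even with $|w| \le 2k$. You then apply the same quadrature to the single polynomial $U_{2p-1}$. This proves a stronger, deterministic statement: $\sum_q a_q\bigl[\calL(\btheta_\ell + r_q\bs) - \calL(\btheta_\ell - r_q\bs)\bigr] = \bs\cdot\nabla_{\btheta_\ell}\calL$ for every $\bs$. Unbiasedness then follows from $\E[s_j s_i]$ being $1$ if $i=j$ and $0$ otherwise.

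Your route buys three things.
\begin{itemize}
\item It never needs the per-coordinate bound $\delta_j \in \{-2,\dots,2\}$ or the multivariate Chebyshev product.
\item It cleanly separates the two mechanisms. A $k$-node generalized shift rule gives an exact directional derivative, and random signs then extract the coordinates.
\item It gives the estimator in closed form, $\hat g_j = \partial_j\calL + \sum_{i\neq j} s_j s_i\,\partial_i\calL$. With exact loss evaluations its variance is therefore immediately $\sum_{i\neq j}(\partial_i\calL)^2$, which the paper's expectation-first computation does not expose.
\end{itemize}
The paper's route yields the same unbiasedness conclusion. Its advantage is that it stays within the coordinate-wise Fourier bookkeeping used in its Lemmas on degree and parity. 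Your closing Vandermonde check of solvability matches the paper's remark on the shift coefficients.
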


\begin{remark}[Closed form and cost of the shift coefficients]
\label{rem:shift_closed_form}
The shifts $r_1,\dots,r_k \in (0,\pi/2)$ are free design parameters; once they
are fixed (with distinct $\cos^2 r_q$), the coefficients are determined in
closed form by Lagrange interpolation at $u = 1$. Writing $u_q = \cos^2 r_q$
and $c_q = a_q \sin r_q \cos r_q$, the system~\eqref{eq:linear_system_rbs}
reads $\sum_q c_q u_q^{\,p} = \tfrac12$ for $p = 0,\dots,k-1$---the condition
that the nodes $\{u_q\}$ reproduce the first $k$ moments of the target measure
$\tfrac12\delta_1$---whose solution is
\begin{equation}
  a_q \;=\; \frac{1}{\sin 2r_q}\,
  \prod_{p \neq q}\frac{\sin^2 r_p}{\cos^2 r_q - \cos^2 r_p},
  \qquad q = 1,\dots,k.
  \label{eq:shift_lagrange}
\end{equation}
At $k = 1$ this is $a_1 = 1/\sin 2r_1$, and $r_1 = \pi/4$ gives $a_1 = 1$: the
familiar $\pm\pi/4$ rule for a single frequency-$2$ parameter, which is what
the parity of Lemma~\ref{lem:parity} leaves at $k = 1$. We recommend the nodes
\begin{equation}
  u_q = \tfrac12\Bigl(1 + \cos\tfrac{(2q-1)\pi}{2k}\Bigr),
  \; r_q = \arccos\sqrt{u_q},
  \; q = 1,\dots,k,
  \label{eq:shift_nodes}
\end{equation}
i.e.\ Chebyshev--Gauss points in the variable $u = \cos^2 r$, for which the
interpolation underlying~\eqref{eq:shift_lagrange} is well-conditioned and the
weights are $\ell_1$-optimal (Remark~\ref{rem:shot_budget}). Moreover $k$
shifts are \emph{necessary}: exactness on odd polynomials of degree at most
$2k-1$ is equivalent, in the variable $u$, to
$\sum_q c_q Q(u_q) = \tfrac12 Q(1)$ for every $Q$ of degree at most $k-1$, and
with only $N \le k-1$ nodes the choice $Q(u) = \prod_{q\le N}(u - u_q)$ forces
$Q(1) = 0$, i.e.\ $r_q = 0$ for some $q$, which is not a shift.

The pair $\{a_q, r_q\}$ depends only on $k$: it is independent of the circuit
parameters, the data, the layer index, the architecture and the system size
$n$. The coefficients are therefore computed \emph{once}, before training, in
$O(k^2)$ classical time, cached, and reused unchanged for every layer, every
gradient step and every data point; this preprocessing is negligible against
the per-step quantum cost.
\end{remark}

\begin{proof}
The proof proceeds in five steps.

\emph{Step 1: Reduction to the active layer.}
Freezing all layers except $\ell$ reduces $\calL$ to a function of
$\btheta_\ell$ alone, with the effective observable $\tilde{M}$ absorbing
the downstream circuit. Since all downstream gates are RBS and $R_z$ (both
particle-number preserving) and $M$ is a computational basis measurement
(particle-number preserving), $\tilde{M}$ is particle-number preserving.
No assumption on $M$ beyond particle-number preservation is needed. This
reduction applies identically to both architectures.

\emph{Step 2: Fourier structure of the loss.}
By Lemma~\ref{lem:fourier_degree}, $\calL(\btheta_\ell)$ is a trigonometric
polynomial of total degree at most $2k$:
\begin{equation}
  \calL(\btheta_\ell)
  = \sum_{\boldsymbol\delta :\, \|\boldsymbol\delta\|_1 \leq 2k}
  c_{\boldsymbol\delta}\, e^{i \boldsymbol\delta \cdot \btheta_\ell},
\end{equation}
where each $\delta_j \in \{-2, -1, 0, +1, +2\}$ (the pairwise-difference
frequencies from the eigenvalues $\{-1, 0, 0, +1\}$ of $G_j^{(\ell)}$) and
at most $2k$ coordinates are nonzero simultaneously. By
Lemma~\ref{lem:parity}---whose covering hypothesis holds here by
assumption---$\|\boldsymbol\delta\|_1$ is moreover \emph{even}
for every frequency vector occurring in this sum.

\emph{Step 3: Gradient in Fourier form.}
\begin{equation}
  \frac{\partial \calL}{\partial \theta_\ell^{(j)}}
  = i \sum_{\boldsymbol\delta :\, \|\boldsymbol\delta\|_1 \leq 2k}
  \delta_j\, c_{\boldsymbol\delta}\, e^{i \boldsymbol\delta \cdot \btheta_\ell}.
\end{equation}

\emph{Step 4: The multi-shift formula handles all cross-terms.}
We show that for every frequency vector $\boldsymbol\delta$ with
$\|\boldsymbol\delta\|_1 \leq 2k$ even and $\delta_j \neq 0$,
\begin{equation}
  2 \sum_{q=1}^{k} a_q \sin(\delta_j r_q)
  \prod_{j'' \neq j} \cos(\delta_{j''} r_q) = \delta_j.
  \label{eq:cross_term_condition}
\end{equation}
To see why this is what we need, observe that for $s_j \in \{\pm 1\}$
uniform,
\begin{equation}
  \E[s_j\, e^{i \alpha s_j}] = i \sin\alpha,
  \qquad \E[e^{i \alpha s_j}] = \cos\alpha,
\end{equation}
so by independence of the $s_{j''}$,
\begin{equation}
  \E_{\bs}\!\left[s_j \sin(r_q \boldsymbol\delta \cdot \bs)\right]
  = \sin(r_q \delta_j) \prod_{j'' \neq j} \cos(r_q \delta_{j''}).
\end{equation}
Condition~\eqref{eq:cross_term_condition} is therefore exactly what is
required for the estimator to be unbiased.

To verify~\eqref{eq:cross_term_condition}, we use the Chebyshev polynomial
representations $\sin(mr) = \sin(r)\, U_{m-1}(\cos r)$ and $\cos(mr) =
T_m(\cos r)$, where $U_{m-1}$ (second kind) and $T_m$ (first kind) satisfy
$U_{m-1}(1) = m$ and $T_m(1) = 1$. Writing
\begin{equation}
  R(x) = U_{|\delta_j| - 1}(x) \prod_{j'' \neq j} T_{|\delta_{j''}|}(x),
\end{equation}
the left-hand side of~\eqref{eq:cross_term_condition} becomes
\begin{equation}
  \mathrm{sign}(\delta_j) \cdot 2 \sum_{q=1}^{k} a_q \sin(r_q) R(\cos r_q).
\end{equation}
The polynomial $R$ has degree $(|\delta_j| - 1) + \sum_{j'' \neq j}
|\delta_{j''}| = \|\boldsymbol\delta\|_1 - 1 \leq 2k - 1$. Crucially, $T_m$
has parity $(-1)^m$ and $U_{m-1}$ has parity $(-1)^{m-1}$, so $R$ has parity
$(-1)^{\|\boldsymbol\delta\|_1 - 1} = -1$ by Lemma~\ref{lem:parity}: $R$ is an
\emph{odd} polynomial of degree at most $2k-1$, hence lies in the
$k$-dimensional span $\{x, x^3, \dots, x^{2k-1}\}$ rather than in the
$2k$-dimensional space of all polynomials of that degree.

The linear system~\eqref{eq:linear_system_rbs} is precisely the statement
that the functional $P \mapsto \sum_q a_q \sin(r_q) P(\cos r_q)$ equals
$P(1)/2$ on all \emph{odd} polynomials of degree at most $2k - 1$: it imposes
this on the monomials $x^{2m-1}$, $m = 1,\dots,k$, and the general case
follows by linearity. These are $k$ conditions in the $k$ unknowns $a_q$, and
the system is nonsingular whenever the $\cos^2 r_q$ are distinct and nonzero,
its matrix being $\mathrm{diag}(\cos r_q)$ times a Vandermonde matrix in
$\cos^2 r_q$. Applying this quadrature to $R$ gives
\begin{align}
  \mathrm{sign}(\delta_j) \cdot 2 \cdot \tfrac{R(1)}{2}
  &= \mathrm{sign}(\delta_j) \cdot U_{|\delta_j| - 1}(1)
  \prod_{j'' \neq j} T_{|\delta_{j''}|}(1) \\
  &= \mathrm{sign}(\delta_j) \cdot |\delta_j| \cdot 1 = \delta_j.
\end{align}

\emph{Step 5: Unbiasedness.}
Combining Steps 2--4:
\begin{align}
  \E_{\bs}[\hat{g}_j(\bs)]
  &= 2i \sum_{\boldsymbol\delta} c_{\boldsymbol\delta}\,
  e^{i \boldsymbol\delta \cdot \btheta_\ell}
  \sum_{q=1}^{k} a_q\, \E_{\bs}[s_j \sin(r_q \boldsymbol\delta \cdot \bs)]
  \nonumber\\
  &= 2i \sum_{\boldsymbol\delta} c_{\boldsymbol\delta}\,
  e^{i \boldsymbol\delta \cdot \btheta_\ell} \cdot \frac{\delta_j}{2} \\
  &= i \sum_{\boldsymbol\delta} \delta_j\, c_{\boldsymbol\delta}\,
  e^{i \boldsymbol\delta \cdot \btheta_\ell}
  = \frac{\partial \calL}{\partial \theta_\ell^{(j)}}.
\end{align}
The total circuit cost is $2k$: two evaluations (shifted $\pm$) per shift
$r_q$, for $k$ shifts, independently of $n$ and of the architecture.
\end{proof}

\subsection{Parallel PSR for $R_z$ Parameters}
\label{sec:rz_psr}

The $R_z$ gate generators have spectrum $\{-\tfrac{1}{2}, +\tfrac{1}{2}\}$,
so the Fourier degree of the loss in each $\phi_\ell^{(j)}$ individually is
only $1$, with per-gate difference frequencies $\delta_j \in \{-1, 0, +1\}$.
The joint degree is nevertheless as large as in the RBS case: in the
interference terms $\bra{S}\cdots\ket{S'}$ of the loss the bra and ket Fock
components can occupy \emph{disjoint} sets of $R_z$-active qubits, so the
frequency vector satisfies only $\|\boldsymbol\delta\|_1 \le 2k$ (each of bra
and ket contributes up to $k$ occupied $R_z$ qubits), and
$\calL(\bphi_\ell + r \cdot \bs^\phi)$ has degree up to $2k$ in the scalar
shift $r$. What rescues the shift count is therefore not the degree but the
parity: when the phase layer covers all $n$ modes,
Lemma~\ref{lem:parity} forces $\|\boldsymbol\delta\|_1$ to be even, the
polynomial $R$ of Step~4 is odd, and $k$ shifts suffice---exactly as for the
RBS parameters. The full-width phase layers of
Section~\ref{sec:butterfly} are chosen for exactly this reason; a partial
layer breaks the argument (Remark~\ref{rem:fullwidth_necessary}). The $R_z$
rule is then a direct corollary of the machinery of
Theorem~\ref{thm:parallel_psr_rbs}.

\begin{theorem}[Multi-layer parallel parameter-shift rule, $R_z$ gradients]
\label{thm:parallel_psr_rz}
Let the input state have particle number exactly $k$, let the $R_z$ layer act
on all $n$ modes, and let $a_1, \ldots, a_{k}$ and $r_1, \ldots, r_{k}$ solve
the linear system~\eqref{eq:linear_system_rbs}. Let
$\bs^\phi \sim \mathrm{Uniform}(\{\pm 1\}^{n})$ and define
\begin{equation}
  \hat{h}_j(\bs^\phi) = s_j^\phi \sum_{q=1}^{k} a_q
  \left[\calL(\bphi_\ell + r_q \bs^\phi)
  - \calL(\bphi_\ell - r_q \bs^\phi)\right].
  \label{eq:rz_estimator}
\end{equation}
Then for all $j \in \{1, \ldots, n\}$ simultaneously,
\begin{equation}
  \E_{\bs^\phi}\!\left[\hat{h}_j(\bs^\phi)\right]
  = \frac{\partial \calL}{\partial \phi_\ell^{(j)}},
\end{equation}
using $2k$ circuit evaluations in total, independently of $n$ and of the
architecture.
\end{theorem}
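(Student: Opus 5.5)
The plan is to repeat the five-step proof of Theorem~\ref{thm:parallel_psr_rbs} with the RBS angles replaced by the phases $\bphi_\ell$. The one-dimensional quadrature is the same, so the coefficients $\{a_q,r_q\}$ from~\eqref{eq:linear_system_rbs} can be reused unchanged.

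First, freeze every parameter except $\bphi_\ell$. The downstream circuit is absorbed into a particle-number-preserving effective observable $\tilde M$. The state entering the phase layer, $\ket{\phi}$, has exactly $k$ particles; this is the input already propagated through the earlier layers, all of which conserve particle number.

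Next, expand in the computational basis: $\calL(\bphi_\ell)=\sum_{T,T'}\overline{\phi_{T'}}\phi_T\,\tilde M_{T'T}\,e^{i\boldsymbol\delta(T,T')\cdot\bphi_\ell}$. Here $\delta_j=\mathbbm{1}[j\in T']-\mathbbm{1}[j\in T]\in\{0,\pm1\}$, and global phases cancel between bra and ket. Hence $\calL$ is a trigonometric polynomial with $\|\boldsymbol\delta\|_1=|T\triangle T'|\le 2k$. Because the layer is full-width, the $R_z$ half of Lemma~\ref{lem:parity} gives that $\|\boldsymbol\delta\|_1$ is even for every occurring frequency.

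The gradient is then $\partial_{\phi_\ell^{(j)}}\calL=i\sum_{\boldsymbol\delta}\delta_j c_{\boldsymbol\delta}e^{i\boldsymbol\delta\cdot\bphi_\ell}$. For the estimator, substituting the Fourier expansion gives
\[
\E_{\bs^\phi}\bigl[\hat h_j\bigr]=2i\sum_{\boldsymbol\delta}c_{\boldsymbol\delta}e^{i\boldsymbol\delta\cdot\bphi_\ell}\sum_q a_q\,\E\bigl[s^\phi_j\sin(r_q\boldsymbol\delta\cdot\bs^\phi)\bigr].
\]
Independence of the signs, together with $\E[s e^{i\alpha s}]=i\sin\alpha$ and $\E[e^{i\alpha s}]=\cos\alpha$, factorises the inner expectation into $\sin(\delta_j r_q)\prod_{j''\ne j}\cos(\delta_{j''}r_q)$.

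If $\delta_j=0$, this factor vanishes, which matches the zero contribution of that frequency to the gradient. If $\delta_j\neq 0$, the Chebyshev reduction of Step~4 applies. Write $\sin(\delta_j r_q)\prod_{j''\ne j}\cos(\delta_{j''}r_q)=\mathrm{sign}(\delta_j)\sin(r_q)R(\cos r_q)$, with $R(x)=U_{|\delta_j|-1}(x)\prod_{j''\ne j}T_{|\delta_{j''}|}(x)$. This polynomial has degree $\|\boldsymbol\delta\|_1-1\le 2k-1$ and parity $(-1)^{\|\boldsymbol\delta\|_1-1}=-1$. Here every $|\delta_{j''}|\le1$, so in fact $U_0=1$ and $T_1=x$, and $R(x)=x^{\|\boldsymbol\delta\|_1-1}$ is a single odd monomial.

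The system~\eqref{eq:linear_system_rbs} then directly gives $\sum_q a_q\sin r_q\cos^{2m-1}r_q=\tfrac12$ with $2m=\|\boldsymbol\delta\|_1$, so the inner sum equals $\mathrm{sign}(\delta_j)/2=\delta_j/2$. Substituting back yields $\E[\hat h_j]=i\sum\delta_jc_{\boldsymbol\delta}e^{i\boldsymbol\delta\cdot\bphi_\ell}=\partial_{\phi_\ell^{(j)}}\calL$ for all $j$ at once. The cost is $2k$ evaluations, at $\pm r_q$ for $q=1,\dots,k$.

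The only step I expect needs care is the parity step. It has to apply to the frequency vectors of the full loss, including interference terms where bra and ket have disjoint supports. That is exactly what the $R_z$ part of Lemma~\ref{lem:parity} covers, and it is where the full-width hypothesis is used: without it $m$ could be a half-integer and the system would not cover that case. Everything else is a simplification of the RBS argument, since the per-coordinate frequencies are at most $1$.
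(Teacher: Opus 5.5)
Your proposal is correct and is essentially the paper's own argument. The paper likewise reruns Steps~2--5 of the RBS proof and invokes the $R_z$ half of Lemma~\ref{lem:parity} to get $\|\boldsymbol\delta\|_1 = 2m$ even with $m \le k$; it then notes that $R(x)$ collapses to the odd monomial $x^{2m-1}$, so the nodes and weights of~\eqref{eq:linear_system_rbs} are reused unchanged.
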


The proof is identical to that of Theorem~\ref{thm:parallel_psr_rbs}: the
frequencies satisfy $\delta_j \in \{-1, 0, +1\}$ and, by
Lemma~\ref{lem:parity}, $\|\boldsymbol\delta\|_1 = 2m$ is even with
$m \le k$; the polynomial $R(x)$ of Step~4 is therefore the odd monomial
$U_0(x)\prod_{j''} T_1(x) = x^{2m-1}$, and the
quadrature~\eqref{eq:linear_system_rbs} reproduces $P(1)/2$ on all odd
polynomials of degree at most $2k-1$. The same nodes and weights
$\{a_q, r_q\}$ as for the RBS gradients are reused unchanged.

\subsection{Total Training Cost}

\begin{corollary}[Full gradient cost and factor-$n/k$ reduction]
\label{cor:full_gradient}
Applying Theorem~\ref{thm:parallel_psr_rbs} to each layer with independent
sign vectors and Theorem~\ref{thm:parallel_psr_rz} analogously, the full
gradient of the loss with respect to all trainable parameters is estimated
with the following costs:
\begin{itemize}
  \item \textbf{Butterfly} ($K = \log n$ layers,
  $T = \tfrac{3}{2}n\log n$ parameters): $2k\log n$ circuit evaluations for
  all RBS gradients and $2k\log n$ for all $R_z$ gradients, giving
  $4k\log n$ total. The naive parameter-shift rule requires $4n\log n$
  evaluations ($4$ per RBS parameter and $2$ per $R_z$ parameter); the
  parallel PSR achieves a reduction by a factor of $n/k$.
   \item \textbf{Brick-wall} ($K = n$ columns, each a full-width $R_z$ layer
followed by that column's $n/2$ RBS gates, $T = 3n^2/2$ parameters): $2kn$
circuit evaluations for all RBS gradients and $2kn$ for all $R_z$ gradients,
giving $4kn$ total. The naive parameter-shift rule requires
$4 \cdot n^2/2 + 2 \cdot n^2 = 4n^2$
evaluations; the parallel PSR achieves a reduction by a factor of exactly
$n/k$, the same as the butterfly's.
\end{itemize}
In both cases the reduction factor is exactly $n/k$: $\Theta(n)$
at $k = O(1)$, growing linearly in $n$ at any fixed $k$. At the operating
point of Eq.~\eqref{eq:operating_point}, the butterfly requires
$240\log n$ circuit evaluations per gradient step---nearly independent of
$n$---while the brick-wall requires $4kn = O(kn)$.
\end{corollary}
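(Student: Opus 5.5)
The plan is to assemble Corollary~\ref{cor:full_gradient} layer by layer from Theorems~\ref{thm:parallel_psr_rbs} and~\ref{thm:parallel_psr_rz}, then count evaluations for the naive rule on the same circuit.

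\emph{Step 1: a fixed layer $\ell$.} Freeze every other layer. As in Step~1 of the proof of Theorem~\ref{thm:parallel_psr_rbs}, $\calL$ becomes a function of $(\btheta_\ell,\bphi_\ell)$ alone. The state entering layer $\ell$ has particle number exactly $k$, because the encoding and every earlier gate preserve particle number. The effective observable $\tilde M$ is particle-number preserving. The hypotheses are then verified architecture by architecture:
\begin{itemize}
\item every butterfly layer consists of $n/2$ disjoint pairs covering all $n$ modes, preceded by a full-width $R_z$ layer;
\item every brick-wall column does as well, with the even columns covered thanks to the ring pair $(n,1)$.
\end{itemize}
So Theorem~\ref{thm:parallel_psr_rbs} gives unbiased estimates of all $n/2$ RBS derivatives of layer $\ell$ from $2k$ evaluations, and Theorem~\ref{thm:parallel_psr_rz} gives all $n$ phase derivatives from another $2k$. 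Each layer uses its own independent sign vectors $\bs$, $\bs^\phi$. Unbiasedness of the whole gradient then follows coordinatewise by linearity of expectation, since each partial derivative is estimated exactly in expectation by its layer's estimator. Independence across layers only matters for variance, not for the mean, so nothing more is needed there. The weights $\{a_q,r_q\}$ are independent of $\ell$ (Remark~\ref{rem:shift_closed_form}), so they are computed once.

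\emph{Step 2: totals.} Summing $4k$ per layer over $K$ layers gives $4kK$.
\begin{itemize}
\item Butterfly: $K=\log n$, so the total is $2k\log n+2k\log n=4k\log n$.
\item Brick-wall: $K=n$ columns, so the total is $4kn$.
\end{itemize}
For the naive rule I use the per-parameter costs stated in the section: $4$ per RBS angle (the generator's spectrum $\{-1,0,0,1\}$ has two distinct gaps, requiring the standard four-term shift rule) and $2$ per $R_z$ angle.
\begin{itemize}
\item Butterfly: $\tfrac n2\log n$ RBS and $n\log n$ phases give $2n\log n+2n\log n=4n\log n$.
\item Brick-wall: $n^2/2$ RBS and $n^2$ phases give $2n^2+2n^2=4n^2$.
\end{itemize}
In both cases the ratio is exactly $n/k$. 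At $k=60$ the butterfly cost is $240\log n$.

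\emph{Main obstacle.} The step needing care is the hypothesis check in Step~1, not the arithmetic. One must confirm that the shifted evaluations $\calL(\btheta_\ell\pm r_q\bs)$ are single circuit runs: every gate of the layer is shifted simultaneously, so each evaluation is one circuit configuration. One must also confirm that the covering hypothesis of Lemma~\ref{lem:parity} genuinely holds in every layer, because that is what allows $k$ rather than $2k$ shifts (Remark~\ref{rem:fullwidth_necessary}). Finally, evaluations should be counted as distinct circuit configurations; shot counts per configuration are not addressed by this statement.
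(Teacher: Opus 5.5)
Your proposal is correct and follows the same route as the paper. You apply Theorems~\ref{thm:parallel_psr_rbs} and~\ref{thm:parallel_psr_rz} layer by layer, with the covering hypotheses holding for every butterfly layer and for every ring brick-wall column, and count $2k+2k$ evaluations per layer against the naive $4\cdot\tfrac n2+2\cdot n=4n$, summed over $\log n$ or $n$ layers; your remarks on coordinatewise unbiasedness, on each shifted evaluation being a single circuit configuration, and on reusing the $\{a_q,r_q\}$ agree with Remarks~\ref{rem:shift_closed_form} and~\ref{rem:shot_budget}.
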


Per layer the naive rule costs $4 \cdot \tfrac n2 + 2 \cdot n = 4n$
evaluations against the parallel rule's $2k + 2k = 4k$, so the parallel rule
is cheaper precisely when $k < n$. Since the encoding requires $2k \le n$,
the parallel rule is never worse than the naive one anywhere in the
admissible range. (Taking the estimator layer-wise as
$\min(\text{naive},\text{parallel})$ makes this monotonicity hold by
definition, at no cost.)

For $n = 1024$: at the operating point $k = 60$, the butterfly
requires $2{,}400$ circuit evaluations per gradient step (vs.\
$40{,}960$ for the naive PSR, a $17.1\times$ reduction) and the
brick-wall requires $245{,}760$ (vs.\ $4{,}194{,}304$, also
$17.1\times$). At $n = 4096$ the reduction is $68.3\times$: the
advantage of the parallel PSR grows linearly with $n$ at fixed hardness
level. The naive baseline here is computed for the \emph{same} circuit.
Since the parallel cost is independent of the parameter count while the naive
cost is not, the ratio can be inflated by adding parameters, whereas the
absolute per-step cost---$4k\log n$ and $4kn$---can only rise; it is the
absolute cost that we report throughout. The resource counted is the number
of distinct circuit configurations per gradient step: shots of a fixed
configuration are repetitions of a single compiled and uploaded circuit,
whereas each configuration must be compiled, uploaded and dispatched
separately, so it is the configuration count, rather than the shot budget
per configuration, that limits on-hardware training at scale
(Remark~\ref{rem:shot_budget}).

\begin{remark}[Configurations versus shots]
\label{rem:shot_budget}
At the nodes~\eqref{eq:shift_nodes} the weights have the closed form
$a_q = (-1)^{q-1}\bigl/\bigl(2k\sin^2(\theta_q/2)\bigr)$ with
$\theta_q = (2q-1)\pi/(2k)$, whence $\sum_q|a_q| = k$ and
$\sum_q a_q^2 = \tfrac13(2k^2+1)$. The second sum sets the price of
parallelism: with $S$ shots per configuration and a loss evaluation of
variance $\sigma^2/S$, the parallel estimator carries shot noise
$\tfrac43 k^2\sigma^2/S\,(1+o(1))$ against $O(\sigma^2/S)$ for the naive rule.
At a fixed total shot budget per gradient step this is largely bought back:
the parallel rule spreads the budget over $\sim\!k/n$ as many configurations,
so each receives $\sim\!n/k$ times more shots, leaving a net variance ratio
$\Theta(k^3/n)$. Since the estimator is exactly unbiased at every shot count
(Theorem~\ref{thm:parallel_psr_rbs}), the residual noise enters stochastic
optimization as variance rather than bias, alongside the mini-batch noise
already present. Trading configurations for shots at comparable order is
therefore favourable on hardware, where shots of a fixed configuration are
repetitions of a single compiled and uploaded circuit, dispatched in one job,
while every configuration must be compiled, uploaded and dispatched
separately.
\end{remark}

\subsection{Layer-wise Block Coordinate Descent}

The multi-layer parallel PSR enables two optimization strategies for both
architectures.

\paragraph{Standard gradient descent.}
Use Corollary~\ref{cor:full_gradient} to compute all gradients in $4kK$
circuit evaluations ($K = \log n$ for butterfly, $K = n$ for brick-wall),
then update all parameters simultaneously.

\paragraph{Layer-wise block coordinate descent.}
Alternatively, update layers one at a time in forward-backward sweeps,
a strategy analogous to the sweep structure of DMRG in quantum many-body
physics~\cite{white1992dmrg} and a special case of block coordinate
descent~\cite{nesterov2012efficiency}:
\begin{eqnarray*}
  \text{for } \ell &=& 1, \ldots, K, K-1, \ldots, 1: \\
  \btheta_\ell &\leftarrow& \btheta_\ell - \eta\, \hat{\nabla}_{\btheta_\ell}
  \calL, \quad
  \bphi_\ell \leftarrow \bphi_\ell - \eta\, \hat{\nabla}_{\bphi_\ell} \calL,
\end{eqnarray*}
using $4k$ circuit evaluations per layer update ($2k$ for RBS, $2k$ for
$R_z$), independently of the architecture. The cost per full forward-backward
sweep is $4k \cdot 2K$: for the butterfly this is $O(k\log n)$, and for the
brick-wall this is $O(kn)$---identical in order to full gradient descent in both
cases. The potential advantage is that each layer's subproblem has reduced
effective dimension: when training layer $\ell$ alone, the parameter space
has dimension $n$ (one RBS and one $R_z$ parameter per pair) rather than $T$.
Whether this translates to faster convergence in practice depends on the loss
landscape and is an empirical question we leave for future work.

\section{The QNN as an End-to-End ML Pipeline}
\label{sec:pipeline}

We have built three ingredients: a magic-state encoder
(Section~\ref{sec:encoding}), unitary brick-wall and butterfly processors
(Section~\ref{sec:butterfly}) that are provably trainable
(Section~\ref{sec:bp}) with efficient training cost
(Section~\ref{sec:psr}). In this section we assemble these into complete
machine-learning pipelines by specifying the readout (how the quantum state
is turned into a classical output) and the loss function, and we discuss
which tasks each pipeline variant is suited to. The two architectures share
the same encoding, readout families, and task structure; they differ only in
circuit depth, parameter count, and training cost. The classical simulation
cost of each variant is analyzed in Section~\ref{sec:hardness}.

\subsection{Pipeline Stages}

The full inference pipeline consists of five stages, summarized in
Table~\ref{tab:pipeline}. The first four stages build the quantum state
$\ket{\psi(\bx; \btheta, \bphi)}$ that encodes the classical input $\bx$
under the trained parameters $(\btheta, \bphi)$; the fifth stage extracts
a classical output from this state via measurement.

\begin{table}[h]
\centering
\small
\caption{The five stages of the unitary brick-wall and butterfly QNN pipelines.
All magic blocks are prepared in parallel
in Stage~1, so the preparation depth does not scale with $k$. The total
circuit depth is $3\log n + 6$ for the
butterfly, and $3n + 5$ for the brick-wall.}
\label{tab:pipeline}
\begin{tabular}{@{}clcc@{}}
\toprule
\# & Stage & Butterfly  & Brick-wall \\
\midrule
1 & Magic state preparation & $4$ & $4$ \\
2 & Fixed Hadamard spreading $U^{\mathrm{had}}$ & $\log n$ & $n-1$ \\
3 & $R_z$ data encoding $D(\bvarphi(\bx))$ & $1$ & $1$
 \\
4 & Trainable circuit $U(\btheta, \bphi)$ & $2\log n$ & $2n$
 \\
5 & Measurement + readout & $1$ & $1$
\\
\bottomrule
\end{tabular}
\end{table}

All stages share the same two gate primitives---RBS and $R_z$---both
particle-number preserving. Particle number $k$ is therefore preserved
exactly through the full circuit in both architectures, and the overall
circuit $V = U \circ D \circ U^{\mathrm{had}}$ is a passive FLO unitary
acting on the non-Gaussian magic state $\ket{\Psi_{\mathrm{in}}}$.
This structural coherence is what
enables simultaneously:
\begin{itemize}
  \item the multi-layer parallel parameter-shift rule
  (Lemma~\ref{lem:fourier_degree} requires particle-number preservation),
  \item the no-barren-plateau results: for the brick-wall,
  the closed-form gradient variance~\eqref{eq:var_exact} under Haar
  initialization, evaluating to $\Theta(k^3/n^5)$
  (Theorem~\ref{thm:brick-wall_bp}); for the butterfly, unconditional
  $\Omega(k^4/n^6)$ and conditional $\Theta(k^3/n^5)$ under
  Conjecture~\ref{conj:lambda2_design} (Theorems~\ref{thm:no_bp}
  and~\ref{thm:sharp_rate}); one-body readouts classically tractable for both
  (Proposition~\ref{prop:onebody_const}),
  \item the classical hardness results of Section~\ref{sec:hardness}, which
  require non-Gaussian input to a passive FLO circuit.
\end{itemize}

\subsection{The Sampler as the Universal Readout}
\label{sec:readouts}

Stage~5 is the same for every task in this paper: the quantum device
performs computational-basis measurements and returns $m$ shots, each a bit
string of Hamming weight $k$ (by particle-number preservation). Every
pipeline consumes these samples and nothing else; the QNN is always used as
a sampler of $p(\bx' \mid \bx) = |\bra{\bx'} V \ket{\Psi_{\mathrm{in}}}|^2$,
and the classically hard object of Section~\ref{sec:hardness}---the sampling
task itself---therefore sits at the base of every task family. Tasks differ
only in the loss applied to the samples and in the classical head that
processes them; both apply identically to the brick-wall and the butterfly.

The ways of consuming the samples form a hierarchy of increasing generality.
\emph{Distributional losses} compare the multi-set of sampled strings
directly to data through a divergence such as maximum mean discrepancy or
KL, forming no summary statistics. \emph{Few-body sample statistics} reduce
the shots to a vector of empirical averages of observables of fixed body
number, for example the two-body correlator vector
$\mathbf{q} = (\langle n_i n_j \rangle)_{i<j}$, obtained by counting the
fraction of shots in which modes $i$ and $j$ are both occupied; these are
the simplest permutation-invariant functions of the sample set and the
family for which trainability is proven (Section~\ref{sec:bp}).
\emph{General set-valued heads} apply an arbitrary permutation-invariant
classical network to the shots---a simple linear head on $\mathbf{q}$, a
DeepSets aggregator or a set transformer whose tokens are the sampled
strings---and are sensitive to sample
structure beyond few-body statistics.

Two objects must be kept apart throughout what follows: the \emph{loss},
which is a scalar functional used to steer the parameters during training,
and the \emph{deployed output}, which is what the trained model emits at
inference. The hardness results of Section~\ref{sec:hardness} attach to the
second, at the level of the sampling task. The two can therefore have different
complexities, and in the pipelines below they generally do: by
Proposition~\ref{prop:onebody_const} every statistic of fixed body number is
classically computable, so a model whose deployed output is a function of
$\mathbf{q}$ alone is classically simulable end to end, whereas a model whose
deployed output is the sample set itself is
not---knowledge of $\mathbf{q}$ does not permit sampling, since
exponentially many distributions on Hamming-weight-$k$ strings share the
same two-body moments.

\subsection{Where the Computation Happens}
\label{sec:where}

Sections~\ref{sec:bp} and~\ref{sec:psr} established a gradient rate and a
gradient cost for the two-body correlator. Which pipelines those results
actually cover is settled by two independent questions. Does a given loss
inherit the variance guarantee? That is decided by the body number of the
effective observable it induces. Can its gradient be computed classically?
That is decided by whether the expectation defining it is taken over a
distribution we have a model of. Keeping the two apart organizes the design
space; the concrete task families are catalogued in
Section~\ref{sec:tasks}.

One preliminary applies to every case below. For a per-sample head $h$ the
model output is the expectation $\E_p[h]$ of the diagonal observable $h$;
diagonal observables commute with the particle number, and the multi-layer
parallel parameter-shift rule (Theorem~\ref{thm:parallel_psr_rbs}) requires
nothing of the observable beyond particle-number preservation. The gradient
\emph{cost} on hardware is therefore the same for every head: $4kn$ circuit
evaluations per step for the brick-wall, $4k\log n$ for the butterfly. What
varies is the gradient \emph{variance}, and it is controlled by the body
number of the effective observable that the loss induces.

\paragraph{Losses anchored on few-body statistics.}
The canonical distributional loss for an implicit generative model is
moment matching: the model's two-body moments
$q_{ij} = \langle n_i n_j \rangle$ are compared to the data's, and the
squared discrepancy is minimized. Both of our questions have favourable
answers for such a loss. By the chain rule its gradient is that of a single
two-body effective observable, so the rate of Section~\ref{sec:bp} transfers
verbatim---provided the induced observable is not degenerate, which would
make the guarantee vacuous. It is not, and degeneracy occurs only at
convergence.

\begin{proposition}[Moment-matching MMD induces a nondegenerate two-body head]
\label{prop:mmd_trainable}
Let $q_{ij}(W) = \langle n_i n_j \rangle$ be the model's two-body moments,
$\mu_{ij}$ the corresponding data moments, both on Hamming-weight-$k$
strings, and $\calL_{\mathrm{MMD}} = \sum_{i<j}(q_{ij} - \mu_{ij})^2$. Then
at every parameter point
\begin{equation}
  \partial_\theta \calL_{\mathrm{MMD}}
  = \partial_\theta \langle M_{\mathrm{eff}} \rangle,
  \qquad
  M_{\mathrm{eff}} = 2\sum_{i<j} \bigl(q_{ij} - \mu_{ij}\bigr)\, n_i n_j,
\end{equation}
a two-body, particle-number-preserving observable, whose traceless part
satisfies $\tilde M_{\mathrm{eff}} = 0$ if and only if $q = \mu$.
\end{proposition}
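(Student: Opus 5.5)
The gradient identity follows from the chain rule, and the nondegeneracy claim is a short linear-algebra check on $\Lambda^2\mathbb{C}^n$.

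\emph{Gradient identity.} Each moment is linear in the state, $q_{ij}(W) = \bra{\psi(\bx)} U^\dagger n_i n_j U \ket{\psi(\bx)}$. Hence $\partial_\theta \calL_{\mathrm{MMD}} = \sum_{i<j} 2(q_{ij}-\mu_{ij})\,\partial_\theta q_{ij}$. The coefficients $2(q_{ij}-\mu_{ij})$ are evaluated at the current parameter point and treated as constants. By linearity of $\partial_\theta\langle\cdot\rangle$ in the observable, this equals $\partial_\theta\langle M_{\mathrm{eff}}\rangle$ with $M_{\mathrm{eff}}$ frozen at that point. The observable $M_{\mathrm{eff}}$ is a real combination of the commuting diagonal projectors $n_in_j$. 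It is therefore Hermitian, diagonal in the computational basis, and particle-number preserving. It is two-body by construction.

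\emph{Nondegeneracy.} I would pass to the two-particle image, as in Proposition~\ref{prop:general_readout}. There $M^{(2)}_{\mathrm{eff}} = \sum_{i<j} w_{ij}\ket{\{i,j\}}\bra{\{i,j\}}$ with $w_{ij} = 2(q_{ij}-\mu_{ij})$. Its traceless part vanishes if and only if all $w_{ij}$ are equal, say to a common value $c$, since the $\ket{\{i,j\}}$ form an orthonormal basis of $\Lambda^2\mathbb{C}^n$. The direction ``$q=\mu$ implies $\tilde M_{\mathrm{eff}}=0$'' is immediate, because then $w\equiv 0$.

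For the converse, the main step is to rule out a nonzero constant $c$. On Hamming-weight-$k$ strings, $\sum_{i<j} n_in_j = \binom{k}{2}$ identically. Therefore $\sum_{i<j} q_{ij} = \binom{k}{2} = \sum_{i<j}\mu_{ij}$, since the data moments also live on weight-$k$ strings. Summing $q_{ij}-\mu_{ij}=c/2$ over all $\binom{n}{2}$ pairs gives $c\binom n2/2 = 0$, so $c=0$ and $q=\mu$.

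I expect the only subtle point to be the meaning of ``traceless part.'' It should be taken on $\Lambda^2\mathbb{C}^n$, i.e.\ on $M^{(2)}$, matching $\tilde M$ in Proposition~\ref{prop:general_readout}. On the full $k$-particle space, the component of $M_{\mathrm{eff}}$ proportional to $\sum_{i<j}n_in_j$ is itself a multiple of the identity. This is exactly the degeneracy that the counting identity removes, so both readings give the same conclusion. I would state this explicitly, so that $p=\|\tilde M_{\mathrm{eff}}\|_F^2>0$ whenever $q\neq\mu$. Item (iii) of Proposition~\ref{prop:general_readout} then applies away from convergence.
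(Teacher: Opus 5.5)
Your proposal is correct and follows the paper's proof essentially verbatim. Both apply the chain rule with the residual frozen at the current point, then observe that $\tilde M_{\mathrm{eff}}=0$ on $\Lambda^2\mathbb{C}^n$ forces equal weights, which the counting identity $\sum_{i<j}q_{ij}=\sum_{i<j}\mu_{ij}=\binom{k}{2}$ forces to be zero; your closing aside about the full $k$-particle reading is unnecessary, and as written it only establishes one direction (the converse would need full rank of the $2$-versus-$k$ subset inclusion map, valid for $2\le k\le n-2$).
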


\begin{proof}
The identity is the chain rule together with linearity of each $q_{ij}$ in
the observable $n_i n_j$; the residual is evaluated at the current parameter
point, and $M_{\mathrm{eff}}$ is a weighted two-body correlator, diagonal in
the Fock basis and hence particle-number preserving, so
Theorem~\ref{thm:parallel_psr_rbs} and
Proposition~\ref{prop:general_readout} apply to it verbatim. For the
degeneracy claim: $\tilde M_w = 0$ on $\Lambda^2\mathbb{C}^n$ if and only
if all weights $w_{ij}$ are equal. Since both model and data strings carry
exactly $k$ particles, $\sum_{i<j} q_{ij} = \sum_{i<j} \mu_{ij} =
\binom{k}{2}$, so the residual sums to zero; equal residuals are therefore
zero residuals, and $\tilde M_{\mathrm{eff}} = 0$ forces $q = \mu$.
\end{proof}

In applying the rate, two objects must be kept apart: the weights of
$M_{\mathrm{eff}}$ depend on the current parameter point through the
residual, whereas the variance statements of Section~\ref{sec:bp} are for a
\emph{fixed} observable averaged over the Haar ensemble. The trainability
statement for the MMD loss is therefore conditional on the point. Fix any
parameter point, write $r := q - \mu$ for its residual, and let $\calL_r$
denote the fixed-head loss with observable
$M_r = 2\sum_{i<j} r_{ij}\, n_i n_j$; by
Proposition~\ref{prop:mmd_trainable}, the gradient of
$\calL_{\mathrm{MMD}}$ at that point equals the gradient of $\calL_r$ at
that point. For the fixed head $M_r$, whenever $r \neq 0$---equivalently
$\tilde M_r \neq 0$---Proposition~\ref{prop:general_readout}(iii) gives,
for $k \ge 2$ with $n - 2k = \Omega(n)$,
$\Var[D_g \calL_r] = \Omega\bigl(\|\tilde M_r\|_F^2\, k^2/n^5\bigr)$ over
the Haar ensemble. The MMD landscape therefore carries no exponential
barren plateau in the following precise sense: away from the matched point
$q = \mu$, its gradient coincides pointwise with that of a two-body
fixed-head loss whose ensemble gradient variance is at most polynomially
small.

The class of losses covered is neither artificial nor exhaustive.
Moment-matching objectives of this kind are the standard first choice for
implicit generative models, and in the multivariate setting they amount to
matching conditional pairwise co-occurrence structure---the quantity that
downstream portfolio and risk applications consume. What falls outside the
class is equally concrete: kernels whose Mercer features run to unbounded
order, such as the Gaussian kernel commonly used with MMD, adversarial
objectives evaluated by a trained discriminator, and generic
reinforcement-learning returns. These are the subject of the next two
paragraphs.

Turning to the second question: since the encoding and the circuit are
known, the required statistics are obtained by propagating the corresponding
reduced density matrix (Proposition~\ref{prop:onebody_const}), so gradients
may be computed classically and exactly rather than estimated from shots.
Training of this kind needs no quantum device at all; the device is required
only at inference, to draw samples from the trained distribution. Whether
classical propagation or hardware estimation is faster in practice is an
engineering comparison---the former costs $\poly(n)$ per gradient evaluation
with a sizeable exponent, the latter $O(kn)$ or $O(k\log n)$ shot-limited
circuit evaluations---and not a complexity statement.

\paragraph{Losses estimated from model samples.}
When the expectation defining the loss is taken over something we do not
have a model of, no propagation is available and the gradient must be
estimated by drawing samples; the device then enters the training loop. Two
distinct situations fall under this heading, and they differ in whether a
variance guarantee survives.

The first is an external environment. In reinforcement learning the return
depends on how the environment responds to the sampled action, and for the
environments that motivate model-free methods in the first place no
sufficiently accurate model is available to propagate through. Sampling from
the policy is then unavoidable. The variance guarantee, however, is not
lost: it requires the effective observable to be structurally of low body
number, not to have known coefficients. A return of the form
$R(s, a) = \sum_i r_i(s)\, a_i + \sum_{i<j} r_{ij}(s)\, a_i a_j$---the
shape of a portfolio payoff with linear rewards and quadratic costs, and of
shaped rewards generally---induces a two-body effective observable whose
weights are fixed by the environment and unknown to us, which is exactly the
hypothesis under which Proposition~\ref{prop:general_readout}(iii) applies;
the statement holds conditionally on each environment state $s$, which fixes
the weights; the state-averaged loss is by linearity again a two-body head,
with weights $\E_s[r_{ij}(s)]$, to which the same bound applies.
Such pipelines are trained on hardware and carry the $\Theta(k^3/n^5)$
guarantee.

The second is a high-order functional of the model's own samples: a deep
discriminator applied to the shots, an indicator of a rare configuration, or
a sparse terminal return that does not decompose into few-body terms. Here
the effective observable has body number growing with $k$, the analysis of
Section~\ref{sec:bp} does not apply, and gradients may concentrate; nothing
in this paper rules that out. These pipelines require the device and carry
no variance guarantee.

\paragraph{Composition.}
The two regimes compose. Writing $\calL = \calL_{\mathbf{q}} + \lambda\,
\calL_{\mathrm{sample}}$ with an anchored first term guarantees that the
total gradient retains a non-vanishing contribution regardless of how flat
the second term is, and separating the two contributions during training
diagnoses whether the sample-based term carries signal above shot noise. In
summary, gradients are provably non-vanishing whenever the effective
observable is of low body number, which covers returns and heads of the
usual form; such gradients are estimated on the device when no model of the
distribution is available, as in model-free reinforcement learning, and can
also be computed classically with polynomial overhead when one is.
Supervised pipelines fall on either side of this division according to the
head: one reading only $\mathbf{q}$ is classical end to end, while a
set-valued head consuming the samples themselves is not
(Section~\ref{sec:tasks}).

\subsection{Machine Learning Task Families}
\label{sec:tasks}

All task families use the magic-block encoding of
Section~\ref{sec:encoding} at the operating point of
Eq.~\eqref{eq:operating_point} ($k = 60$, paired blocks,
Section~\ref{sec:psi4_encoding}) and consume the same sampler; they differ
only in loss and head, and each is placed in the taxonomy of
Section~\ref{sec:where} by that choice.

\paragraph{Generative modeling (distributional loss).}
Train the QNN as a Born machine: given data samples $\{\bx^{(1)}_{\mathrm{data}},
\ldots, \bx^{(N)}_{\mathrm{data}}\}$, minimize a sample-based distributional
loss such as maximum mean discrepancy (MMD) or KL divergence against the
empirical data distribution. The QNN's output distribution $p(\cdot \mid
\bx)$ itself is the model. Related architectures have been explored in
quantum Born machines~\cite{coyle2020born} and quantum
GANs~\cite{lloyd2018qgan}.

\paragraph{Reinforcement learning policies (distributional loss).}
Use $p(\cdot \mid \bx)$ as a policy: given an environment state $\bx$, the
QNN outputs a distribution over discrete actions. Train via policy gradient.
Policy-gradient training consumes model samples inside the loop: the return
is an arbitrary function of the sampled string, so the gradient is
estimated on hardware via the parallel parameter-shift rule, and this task
family is quantum at training time as well as at inference. The efficient
training cost of the parallel PSR combines naturally with
episodic policy updates for both architectures.

\paragraph{Supervised classification (sample-set head).}
For a classification task with $C$ classes, define logits
\begin{equation}
  y_c(\bx; \btheta, \bphi, W, b)
  = \sum_{i < j} W_{c, ij} \langle n_i n_j \rangle(\bx; \btheta, \bphi)
  + b_c, \; c \in [C],
\end{equation}
where $W \in \R^{C \times \binom{n}{2}}$ and $b \in \R^C$ are classical
parameters, trained alongside the quantum parameters $(\btheta, \bphi)$.
Apply softmax to obtain class probabilities, and train against labels with
cross-entropy. This is a hybrid architecture: the quantum layer supplies the
two-body correlator vector and a classical head maps it to predictions. In
practice one can restrict to a subset of $n$ two-body operators (e.g.,
nearest-neighbor pairs together with one long-range pair) to obtain an
$n \to n$ head. The linear head may equally be replaced by any
permutation-invariant model consuming the shots directly, with the
consequences set out in Section~\ref{sec:where}. One point is specific to
the head: it changes the gradient \emph{rate} and not merely its constant,
entering through the two scalars $c_1$ and $c_2$ of
Proposition~\ref{prop:general_readout}, which can differ in order and in
sign---$\Theta(k^3/n^5)$ for the single-pair readout against
$\Theta(k^2/n^4)$ for the uniform nearest-neighbour head---while the absence
of an exponential barren plateau is head-independent
(Proposition~\ref{prop:general_readout}(iii)).

\paragraph{Supervised regression (sample-set head).}
For scalar regression, predict $\hat{y}(\bx) = \sum_{i < j} w_{ij} \langle
n_i n_j \rangle + b$ and train with mean-squared error. Vector-valued
regression follows the classification template.

\paragraph{Kernel and similarity tasks (overlap readout).}
The encoding $\bx \mapsto \ket{\psi(\bx)}$ defines an implicit quantum
kernel $K(\bx, \bx') = |\langle \psi(\bx) | \psi(\bx') \rangle|^2$.
Overlaps can be estimated via SWAP tests or destructive SWAP
measurements~\cite{schuld2021supervised}, enabling kernel-based
classification, clustering, or nearest-neighbor methods. The magic-block
encoding supports this readout directly.

In every case the classical simulation cost of the end-to-end model is
governed by what the head consumes: few-body sample statistics are
estimable at additive precision, while sample-level structure is accessible
to a classical simulator only through the sampling algorithms of
Section~\ref{sec:hardness}, at best-known cost $2^{\Omega(k)}$.

\subsection{Sample Complexity and Generalization}

Although the QNN operates in a $\binom{n}{k}$-dimensional Hilbert space that
is superpolynomial in $n$, it does not require superpolynomially many
training samples. The generalization error depends on the number of trainable
parameters $T$, not on the Hilbert space dimension.

\begin{theorem}[Generalization bound]
\label{thm:generalization}
Let $\calF$ be the function class implemented by the unitary brick-wall or
butterfly QNN with $T$ trainable parameters ($T = 3n^2/2$ for the
brick-wall, $T = \tfrac32 n\log n$ for the butterfly), with the magic-block
encoding of Section~\ref{sec:encoding}. With $m$ training samples, the generalization error satisfies
\begin{equation}
  \E\!\left[\sup_{f \in \calF}
  |\calL_{\mathrm{test}}(f) - \calL_{\mathrm{train}}(f)|\right]
  \leq O\!\left(\sqrt{\frac{T \log m}{m}}\right).
\end{equation}
For $\varepsilon$-generalization it suffices to take
$m = \Omega\bigl(T \log(T/\varepsilon) / \varepsilon^2\bigr)$ samples.
\end{theorem}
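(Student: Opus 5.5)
We bound the generalization gap by a covering-number argument applied to the parametrized family, using a Lipschitz estimate in the circuit parameters. Fix the loss to be bounded, say with values in $[0,1]$. This holds for any readout built from probabilities or from the two-body correlators $\langle n_i n_j\rangle\in[0,1]$ composed with a bounded loss; for the linear or softmax heads we assume bounded classical weights, which are then counted among the $T$ parameters. We regard each $f\in\calF$ as $f_{\btheta,\bphi}(\bx)=\ell\bigl(\bra{\psi(\bx)}U(\btheta,\bphi)^\dagger M U(\btheta,\bphi)\ket{\psi(\bx)}\bigr)$, with parameters ranging over the torus $[0,2\pi)^T$. The encoding stages (magic preparation, Hadamard spreading, data phase layer) contain no trainable parameters, so all dependence on the hypothesis enters through $U(\btheta,\bphi)$.

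\emph{Step 1 (Lipschitz in parameters).} Every gate is $e^{-i\theta G}$ with $\|G\|_{\mathrm{op}}\le 1$: the RBS generators have spectrum $\{-1,0,0,1\}$ and the $R_z$ generators spectrum $\{\pm\tfrac12\}$. A telescoping argument over the $T$ gates therefore gives
\[
\|U(\btheta,\bphi)-U(\btheta',\bphi')\|_{\mathrm{op}}\le\|(\btheta,\bphi)-(\btheta',\bphi')\|_1 .
\]
Consequently the expectation value changes by at most $2\|M\|_{\mathrm{op}}$ times that distance, uniformly in $\bx$. If $\ell$ is $L$-Lipschitz, every $f\in\calF$ is $2L\|M\|$-Lipschitz in the parameters in sup-norm over inputs. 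For cross-entropy on softmax logits, $L$ is bounded by the weight bound.

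\emph{Step 2 (covering number).} An $\ell_\infty$ grid of spacing $\delta/T$ on $[0,2\pi)^T$ induces a $2L\|M\|\delta$-cover of $\calF$ in sup-norm. Its cardinality is $N\le(2\pi T/\delta)^T$, so $\log N=O(T\log(T/\delta))$.

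\emph{Step 3 (uniform convergence).} Apply Hoeffding's inequality to each grid point and take a union bound, or equivalently bound the Rademacher complexity via Massart's lemma. This gives the expected supremum gap
\[
O\!\left(\delta+\sqrt{T\log(T/\delta)/m}\right).
\]
Choosing $\delta=1/\sqrt m$ yields $O\bigl(\sqrt{T\log(Tm)/m}\bigr)$. Since $T\le m$ in the regime of interest, $\log(Tm)\le 2\log m$, which is the stated bound. For the sample-size claim, setting the gap to $\varepsilon$ and taking $\delta=\varepsilon$ directly gives $m=\Omega(T\log(T/\varepsilon)/\varepsilon^2)$; the same bound holds with high probability by McDiarmid's inequality.

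\emph{Main obstacle.} The delicate step is Step 1 when a non-trivial head is present. The parameter-Lipschitz constant must not depend on $n$ or on $\binom nk$, and this requires controlling the readout. For a single correlator, $\|M\|_{\mathrm{op}}=1$ and the argument is clean. For a head $\sum_{i<j}W_{c,ij}n_in_j$, we would assume $\sum|W_{c,ij}|$ is bounded, since then $\|M\|_{\mathrm{op}}\le\|W\|_1$. If the weights are not bounded, only a logarithmic factor in the Lipschitz constant is lost, and it is absorbed into $\log(T/\varepsilon)$. Sample-based (shot-estimated) readouts add only an independent estimation error, which does not affect the uniform-convergence term. The Hilbert space dimension never enters the argument, which is the point of the statement.
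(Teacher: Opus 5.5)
Your proposal is correct and follows essentially the same route as the paper. The paper cites Caro et al.\ and then runs the same three steps: an $O(1)$ per-gate Lipschitz bound from $\|G_i\|_{\mathrm{op}}\le 1$, a grid cover of the parameter torus $[0,2\pi]^T$, and the standard covering-number Rademacher bound at scale $\sim 1/\sqrt{m}$. Your grid spacing $\delta/T$ is slightly more careful than the paper's spacing-$\varepsilon$ grid, which as written overlooks that the $T$ per-gate Lipschitz constants add up; your choice is what yields the $\log(T/\varepsilon)$ factor in the stated sample complexity.
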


\begin{proof}
We apply Theorem~1 of Caro et al.~\cite{caro2022generalization} to the QNN
with $T$ trainable parameters.

\textit{Per-gate Lipschitz constant.}
Each parameter $\theta_i$ appears in exactly one gate $U_i(\theta_i) =
\exp(i\theta_i G_i)$, where $G_i$ is a traceless Hermitian generator with
$\|G_i\|_{\mathrm{op}} \leq 1$. For any normalized input state $\ket{\psi}$
and observable $O$ with $\|O\| \leq 1$, the parameter-shift rule gives
\[
\left|\frac{\partial \calL}{\partial \theta_i}\right|
  = \bigl|\langle\psi|\, i[G_i,\tilde O]\,|\psi\rangle\bigr|
  \leq 2\|G_i\|_{\mathrm{op}}\,\|O\| = O(1),
\]
since all circuit outputs are bounded in $[-1, 1]$. Hence the loss has
per-gate Lipschitz constant $O(1)$.

\textit{Covering number.}
An $\varepsilon$-grid of spacing $\varepsilon$ in each coordinate of the
parameter cube $[0, 2\pi]^T$ induces an $O(\varepsilon)$-cover of $\calF$
in $L^\infty$. The number of grid points is $(2\pi/\varepsilon)^T$, giving
$\log \calN(\calF, \varepsilon) = O(T \log(1/\varepsilon))$.

\textit{Rademacher complexity bound.}
Substituting into the standard covering-number Rademacher bound with
$\varepsilon \sim 1/\sqrt{m}$:
\begin{equation}
  \E\!\left[\sup_{f \in \calF}|\calL_{\mathrm{test}}(f)
  - \calL_{\mathrm{train}}(f)|\right]
  \leq O\!\left(\sqrt{\frac{T \log m}{m}}\right).
\end{equation}
Setting the right-hand side to $\varepsilon$ and solving gives
$m = O(T \log(T/\varepsilon)/\varepsilon^2)$.
\end{proof}

\section{Classical Simulation Hardness}
\label{sec:hardness}

This section establishes the classical simulation hardness of the pipeline.
The hardness results are expressivity statements: the function classes
accessible to the unitary brick-wall and butterfly QNNs contain output
distributions that cannot be efficiently
sampled by classical algorithms, at a hardness level controlled
by the particle number $k$. This expressivity is structural---it
follows from the non-Gaussianity of the magic-state encoding, the
number of magic blocks, and the
passive FLO structure of the circuit---and does not depend on which specific
parameter values are found by training.

Because the hardness depends on the particle
number $k$, we organize the results as a ladder
in $k$ (cf.\ Section~\ref{sec:k_choice}), instantiated at the operating
point of Eq.~\eqref{eq:operating_point}. Four levels emerge:
\begin{itemize}
  \item \emph{$k = O(\log n)$: no hardness.} The FLO extent is polynomial
  and the entire pipeline---amplitudes and sampling included---is
  classically simulable in polynomial time. 
  \item \emph{$k = \omega(\log n)$: hardness under best-known algorithms.}
  Sampling costs $2^{\Omega(k)}\,\poly(n)$ under the best-known classical
  algorithms (Theorem~\ref{thm:sampling_bestknown}). At the operating point
  $k = 60$ this is $\chi^2 = 2^{60}$ cross terms; at $k = \log^2 n$ it is
  $n^{\Omega(\log n)}$.
  \item \emph{$k = n^{\epsilon}$: worst-case \#P-hardness.} Exact
  computation of output probabilities is \#P-hard for the brick-wall
  (Theorem~\ref{thm:worstcase}); for the butterfly the same statement holds
  under a reachability assumption
  (Remark~\ref{rem:butterfly_reachability}).
  \item \emph{$k = \Theta(n)$: average-case hardness.} The fully-packed
  limit inherits the complete Fermion Sampling
  machinery---anticoncentration and the worst-to-average-case Cayley-path
  reduction---for the Haar-initialized brick-wall
  (Theorem~\ref{thm:brick-wall_avgcase}). 
\end{itemize}

We note that average-case hardness holds as of now only
at the top rung, because both of its ingredients---anticoncentration and the
worst-to-average-case reduction---are currently proven only for fully-packed
magic inputs. How far down the ladder they survive is precisely the
``fermion sampling with less magic'' open problem
of~\cite[Sec.~IV]{oszmaniec2022fermion}; we state our sharpened form of it as
Conjecture~\ref{conj:sparse_avgcase} in
Section~\ref{sec:brick-wall_avg_case}.

Orthogonal to the ladder, \emph{expectation-value readouts of fixed body
number} are classically easy at any $k$, via efficient $r$-RDM propagation
(Proposition~\ref{prop:onebody_const}); the ladder therefore concerns
sampling alone.

All costs are governed by the FLO extent of the
input state (Definition~\ref{def:flo_extent}), a quantity that measures
non-Gaussianity, which we analyze first. We close with a direct engagement
with the Cerezo et
al.~\cite{cerezo2025does} simulability framework
(Section~\ref{sec:cerezo_response}).

\subsection{The Fermion Sampling Structure}
\label{sec:fermion_sampling}

The full inference circuit (after training) is
\begin{equation}
  V(\btheta, \bphi, \bx)
  = U(\btheta, \bphi) \cdot D(\bvarphi(\bx)) \cdot U^{\mathrm{had}}
\end{equation}
applied to $\ket{\Psi_{\mathrm{in}}}$, where $U(\btheta, \bphi)$ is either
the unitary brick-wall or butterfly, $U^{\mathrm{had}}$ is the corresponding
fixed Hadamard spreading circuit, and $\ket{\Psi_{\mathrm{in}}}$ is
the paired magic state of Section~\ref{sec:encoding}. Each
component is a passive FLO transformation: $U(\btheta, \bphi)$ and
$U^{\mathrm{had}}$ are products of RBS and $R_z$ gates, both
particle-number preserving; $D(\bvarphi(\bx)) = \bigotimes_i R_z(\varphi_i)$
is a diagonal one-body unitary. Their product $V$ is therefore a passive FLO
transformation with single-particle unitary $W \in U(n)$. The output
distribution is
\begin{equation}
  p(\bx' \mid \bx)
  = |\bra{\bx'} V(\btheta, \bphi, \bx) \ket{\Psi_{\mathrm{in}}}|^2,
\end{equation}
where $\ket{\Psi_{\mathrm{in}}}$ is the non-Gaussian magic state. This is
exactly the Fermion Sampling structure of Oszmaniec et
al.~\cite{oszmaniec2022fermion}: a passive FLO unitary acting on a
non-Gaussian fermionic input. The best-known-algorithm hardness results
below apply to both
architectures since they depend only on this Fermion Sampling structure, not
on the specific connectivity pattern of the passive FLO circuit; the
\#P-hardness and average-case statements additionally require that the hard
single-particle unitaries be \emph{reachable}, which holds unconditionally
for the brick-wall (surjective onto $U(n)$) and under a reachability
assumption for the fixed-depth butterfly
(Remark~\ref{rem:butterfly_reachability}).

\subsection{FLO-Extent-Based Classical Simulation}
\label{sec:flo_simulation}

For every readout family, the relevant classical simulation algorithms are
controlled by a single quantity, the FLO extent of the input state.

\begin{definition}[FLO extent]
\label{def:flo_extent}
The FLO extent of a fermionic state $\ket{\psi}$ is the smallest integer
$\chi$ such that $\ket{\psi} = \sum_{s=1}^{\chi} c_s
\ket{\phi_s^{\mathrm{Gauss}}}$ for some Gaussian states
$\ket{\phi_s^{\mathrm{Gauss}}}$ and complex coefficients $c_s$.
\end{definition}

For the paired magic state $\ket{\Psi_4}^{\otimes k/2}$, each block
contributes FLO extent $2$, so $\chi = 2^{k/2} = 2^B$ with
$B$ the number of magic blocks. Passive FLO transformations preserve FLO
extent, so $\chi$ of the output state $V\ket{\Psi_{\mathrm{in}}}$ matches
that of the input.

\begin{proposition}[Classical cost of simulating the pipeline]
\label{prop:flo_cost}
For a state with FLO extent $\chi$ prepared by a passive FLO circuit with
single-particle unitary $W$, the following classical costs hold:
\begin{itemize}
  \item \emph{Fixed-body expectation values, exact}: $\poly(n)$ for every
  fixed body number $r$, via propagation of the $r$-RDM through
  $\Lambda^r(W)$ (Proposition~\ref{prop:onebody_const}); for $r = 1$ this is
  $O(n^2)$, with
  $\langle\psi|V^\dagger n_j V|\psi\rangle = [W\rho_1^{\mathrm{in}}W^\dagger]_{jj}$.
  The cost is independent of $\chi$ and of the form-rank of the input.
  \item \emph{Exact amplitudes}: $O(\chi \cdot \poly(n))$, by direct
  evaluation of the Gaussian-sum expansion (for paired blocks, the
  determinant sum of Eq.~\eqref{eq:det_sum}).
  \item \emph{$r$-body expectation values, fixed $r$, form-rank $2$}:
  $O(\poly(n) \cdot \epsilon^{-2})$ to additive error $\epsilon$, via the
  mixed-Pfaffian estimator of Oh et al.~\cite{oh2026classical}.
  \item \emph{Exact sampling from $p(\cdot \mid \bx)$}:
  $O( \chi^2 \cdot \poly(n))$ per sample, via the chain rule over
  marginals.
\end{itemize}
\end{proposition}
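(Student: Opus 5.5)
The plan is to prove the four items of Proposition~\ref{prop:flo_cost} separately. All four rest on one structural fact: $V$ is a passive FLO unitary. So $V$ maps each Gaussian component of the input to a Gaussian state, and the Gaussian-sum expansion $\ket{\Psi_{\mathrm{in}}} = \sum_{s=1}^{\chi} c_s\ket{\phi_s^{\mathrm{Gauss}}}$ is carried term by term to a Gaussian-sum expansion of the output, with the same $\chi$. For the paired magic state this expansion is explicit: expanding $\ket{\Psi_4}^{\otimes B}$ over the $2^B$ choices of one Fock component per block gives $\chi = 2^B$ Slater determinants. Each of them is mapped by $V$ to the Slater determinant spanned by the corresponding $k$ columns of $W$.

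\textbf{Items 1 and 3.} The first item is exactly Proposition~\ref{prop:onebody_const}, which I simply invoke. The dimension count is $\binom{n}{r}^2$ entries of $\Lambda^r(W)$, each an $r\times r$ minor, so the cost is $\poly(n)$ for fixed $r$. The $r=1$ formula is Eq.~\eqref{eq:rbody_closure} with $\Lambda^1(W)=W$. The third item I take from Oh et al.~\cite{oh2026classical}, after checking their two hypotheses. First, the input is form-rank $2$: each block is a sum of two pair-creation operators, and a tensor product of such blocks remains in the form-rank-$2$ manifold, as stated in Section~\ref{sec:psi4_encoding}. Second, passive FLO preserves form-rank, so the output is also form-rank $2$. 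Their estimator then gives additive error $\epsilon$ with $O(\epsilon^{-2})$ samples, each costing $\poly(n)$ for one Pfaffian-polynomial coefficient evaluation.

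\textbf{Item 2 (amplitudes).} Here the plan is a direct computation. Write $\braket{x'|V|\Psi_{\mathrm{in}}} = \sum_s c_s \braket{x'|V|\phi_s}$. Each term is the overlap of a Fock state with a Slater determinant, namely $\det(W_{x',S})$, which costs $O(k^3)$. This reproduces Eq.~\eqref{eq:det_sum}, so the total cost is $O(\chi k^3) = O(\chi\,\poly(n))$.

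\textbf{Item 4 (sampling).} This is the main obstacle, and I would proceed by the chain rule over marginals. The sample is drawn one mode at a time, which requires conditional probabilities $p(x'_1,\dots,x'_m)$ for prefixes. Each marginal is the expectation of a product of projectors $n_j$ or $1-n_j$ on the first $m$ modes. Expanding $\bra{\psi}\cdot\ket{\psi}$ in the Gaussian sum gives $\chi^2$ cross terms $\bar c_s c_{s'}\bra{\phi_s}V^\dagger \Pi_{x'_{1..m}} V\ket{\phi_{s'}}$. Each cross term is a matrix element of a Gaussian operator between two Gaussian states, so it can be evaluated in $\poly(n)$ time through a generalized Wick theorem. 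Concretely, the projectors on a prefix of modes are products of Gaussian operators, and the overlap reduces to a determinant of a $k\times k$ matrix built from $W$ restricted to the prefix complement, i.e.\ $\det(W_{:,S}^\dagger P W_{:,S'})$ with $P$ a diagonal projector. The delicate points are that nonorthogonal pairs $s\neq s'$ must be handled, and that the determinant formula must be justified for generalized overlaps with a diagonal projector inserted. I would take this from the extent-based simulation of Reardon-Smith et al.~\cite{reardonsmith2024classical} and Dias–König~\cite{dias2024classical} rather than rederive it. With $n$ marginal steps, each of cost $O(\chi^2\poly(n))$, the total is $O(\chi^2\poly(n))$ per sample. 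Finally, I would note that the sampling is exact because every marginal is computed exactly rather than estimated.
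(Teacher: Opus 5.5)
Your plan follows the paper's proof item by item. You invoke Proposition~\ref{prop:onebody_const} for fixed-body readouts, cite Oh et al.\ for the form-rank-$2$ estimator, and sum $\chi$ Slater-determinant overlaps for amplitudes. For sampling you run the chain rule over marginals with $\chi^2$ cross terms, each evaluated exactly in $\poly(n)$ time by generalized (L\"owdin) Wick rules.

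One correction concerns the concrete form you give for that last step. Inserting a diagonal $Q$ into $\det(W_{:,S}^\dagger Q\,W_{:,S'})$ realizes the operator $\prod_j q_j^{n_j}$. A diagonal projector therefore only produces events $\prod_{j\in S_0}(1-n_j)$, i.e.\ an all-empty prefix, and never the factors $n_j$ of occupied prefix modes.

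Consider a prefix pattern with occupied set $S_1$ and empty set $S_0$. Write $A=W_{:,S}$, $B=W_{:,S'}$ and $Z=P_R+\sum_{j\in S_1}z_j e_je_j^\dagger$, where $P_R$ is the diagonal projector onto the modes outside the prefix. The cross term is the coefficient of $\prod_{j\in S_1}z_j$ in the multilinear polynomial $\det(A^\dagger Z B)$. This coefficient equals the single bordered determinant $\det\bigl(\begin{smallmatrix} A^\dagger P_R B & A^\dagger E\\ -E^\dagger B & 0\end{smallmatrix}\bigr)$ of size $k+|S_1|$, where $E$ is the $n\times|S_1|$ matrix with columns $e_j$, $j\in S_1$.

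Expanding each $n_j=1-(1-n_j)$ by inclusion--exclusion instead would cost $2^{|S_1|}$, up to $2^k=\chi^2$, and would spoil the $\chi^2\poly(n)$ bound. With this fix, or simply relying on the cited Wick rules as the paper does, your argument is complete.
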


\begin{proof}
Each case is the corresponding best-known algorithm from the cited literature, specialized to our input; we recall the arguments. Throughout, the Gaussian terms in the decomposition of $\ket{\psi}$ are mutually non-orthogonal Slater determinants, and all matrix elements between them are evaluated by the generalized (L\"owdin) Wick rules, each in $\poly(n)$ time.
The form-rank-$2$ case is quoted from~\cite{oh2026classical} and not reproved
here.
\textit{Fixed-body case.} An $r$-body expectation value depends only on the
$r$-RDM of the output state, which is
$\Lambda^r(W)\rho_r^{\mathrm{in}}\Lambda^r(W)^\dagger$; this is
Proposition~\ref{prop:onebody_const}. Since $\rho_r^{\mathrm{in}}$ has
dimension $\binom{n}{r}$ regardless of the non-Gaussianity or form-rank of
the input, classical computation takes $\poly(n)$ time for fixed $r$, and
$O(n^2)$ for $r = 1$.

\textit{Amplitudes.} Each Gaussian term contributes a determinant
(Slater-determinant overlap) computable in $\poly(n)$ time; summing the
$\chi$ terms gives the amplitude exactly.

\textit{Case $r \geq 2$, form-rank $2$.} The mixed-Pfaffian reduction of Oh
et al.~\cite{oh2026classical} compresses the $\chi^2$ cross-term sum into a
single coefficient of a multivariate Pfaffian polynomial, extractable via
Monte Carlo with $O(\epsilon^{-2}\log\delta^{-1})$ samples each costing
$O(N^3)$ time (where $N$ is the number of paired blocks), independent of
$\chi$.

\textit{Sampling.} Sample the modes one at a time. The marginal probability
of any partial occupation pattern on modes $1, \dots, m$ is
\begin{equation*}
  p(x_1, \dots, x_m)
  = \sum_{s,s'} c_s^* c_{s'} \bra{\phi_s^{\mathrm{Gauss}}} V^\dagger
  \Bigl(\prod_{j \le m} P_{x_j}\Bigr) V \ket{\phi_{s'}^{\mathrm{Gauss}}},
\end{equation*}
where $P_{x_j} \in \{n_j, 1 - n_j\}$. Each cross-term is a matrix element of
a product of occupation projectors between two Gaussian states evolved by
FLO, computable \emph{exactly} in $\poly(n)$ time by generalized (L\"owdin)
Wick rules for non-orthogonal Slater determinants. Conditional sampling of
mode $m+1$ given $x_1, \dots, x_m$ therefore costs $O(\chi^2 \poly(n))$, and
a full sample costs $n$ such steps. Note that this argument requires
\emph{exact} (or relative-error) marginals; additive-error estimates do not
suffice, since the conditionals divide by marginals that are generically
exponentially small.
\end{proof}

Applying this to the magic-state construction at the operating point
of Eq.~\eqref{eq:operating_point} ($B = 30$ blocks, $\chi = 2^{30}$,
$\chi^2 = 2^{60}$): because the input has form-rank $2$,
the mixed-Pfaffian algorithm of~\cite{oh2026classical} achieves
$\mathrm{poly}(n) \cdot \epsilon^{-2}$ cost for any fixed-body observable,
regardless of $k$: few-body expectation values are
classically tractable at additive precision at \emph{any} $k$. The
classically hard task is \emph{sampling}
(Theorem~\ref{thm:sampling_bestknown}), at best-known cost
$\chi^2 = 2^{60}$.

\subsection{Hardness under Best-Known Algorithms at $k = \omega(\log n)$}
\label{sec:bestknown_hardness}

Once the block number exceeds logarithmic, every algorithm of
Proposition~\ref{prop:flo_cost} that touches the non-Gaussian structure
costs $2^{\Omega(k)}$. We record the resulting hardness statement. It is
\emph{algorithm-relative} by design; the
calibration of Section~\ref{sec:k_choice} converts it into a concrete
compute budget, and any future algorithmic improvement is absorbed by
increasing the block number $B$, at training cost linear in $k$.

\begin{theorem}[Sampling hardness for paired-block inputs under best-known
algorithms]
\label{thm:sampling_bestknown}
Let the input be the paired magic-block state $\ket{\Psi_4}^{\otimes k/2}$
with $k = \omega(\log n)$. Then every currently known classical algorithm
for sampling from the output distribution $p(\cdot \mid \bx)$ of the trained
unitary brick-wall or butterfly QNN requires $2^{\Omega(k)}\,\poly(n)$ time.
At the sampling operating point $k = 60$ the best-known cost is
$\chi^2 = 2^{60} \approx 10^{18}$ cross terms, each costing $\poly(n)$
operations; at $k = \log^2 n$ it is $n^{\Omega(\log n)}$. Consequently, the
function classes of both architectures contain output distributions
inaccessible to classical samplers under best-known algorithms.
\end{theorem}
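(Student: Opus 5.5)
Since the statement is explicitly algorithm-relative, the proof is an accounting argument over the catalogue of best-known classical methods rather than a complexity-theoretic reduction. I would first fix the catalogue: the FLO-extent (Gaussian-decomposition) simulators of Dias--K\"onig and Reardon-Smith et al.~\cite{dias2024classical,reardonsmith2024classical}, the mixed-Pfaffian estimator of Oh et al.~\cite{oh2026classical}, and the generic Fock-space baselines. These are exact state-vector evolution in the $\binom{n}{k}$-dimensional sector and direct evaluation of the determinant sum~\eqref{eq:det_sum}. For each I would then show that any use of it to \emph{sample} from $p(\cdot\mid\bx)$ costs $2^{\Omega(k)}\poly(n)$ whenever $k=\omega(\log n)$.

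The first step computes the relevant parameter. For $\ket{\Psi_4}^{\otimes k/2}$ the extent is $\chi=2^B$ with $B=k/2$, since each block is a sum of two Slater determinants and extent is multiplicative over the tensor product (an upper bound suffices for the cost of the algorithms). Passive FLO preserves $\chi$, so the output state $V\ket{\Psi_{\mathrm{in}}}$ has the same extent for either architecture and any trained parameters. This uses only the Fermion Sampling structure of Section~\ref{sec:fermion_sampling}, which is why the statement covers both the brick-wall and the butterfly.

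The second step applies Proposition~\ref{prop:flo_cost}. The chain-rule sampler over exact marginals costs $O(\chi^2\poly(n))=2^{k}\poly(n)$, and sampling via amplitudes plus any rejection scheme still requires $\chi=2^{k/2}$ terms per amplitude. Both are $2^{\Omega(k)}\poly(n)$, which is superpolynomial exactly when $k=\omega(\log n)$. State-vector simulation costs $\binom{n}{k}\ge (n/k)^k$, which is even larger. For the mixed-Pfaffian route, I would invoke the remark in the proof of Proposition~\ref{prop:flo_cost}. That estimator gives only additive-error marginals, and chain-rule sampling needs relative-error conditionals on exponentially small marginals, so it does not yield a sampler without reverting to the $\chi^2$ expansion. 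The instantiations then follow by substitution. At $k=60$ one has $\chi^2=2^{60}\approx10^{18}$, and at $k=\log^2 n$ one has $2^{\Omega(\log^2 n)}=n^{\Omega(\log n)}$.

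The main obstacle is making the claim that no listed algorithm samples faster than $2^{\Omega(k)}$ precise, particularly for the additive-error estimator. I would handle this by stating the theorem relative to the enumerated algorithm class, and by noting that known amplitude-to-sampling and additive-to-multiplicative conversions lose exactly on the small conditionals. For the final sentence on function classes, I would take an explicit parameter setting, such as the Haar-initialized brick-wall or any butterfly setting, whose output carries the full extent $2^{k/2}$. The lower bound for that instance is then the one derived above.
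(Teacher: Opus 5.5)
Your proposal is correct and is essentially the paper's own proof. It is an algorithm-relative enumeration in which exact amplitudes via the determinant sum cost $2^{k/2}$ terms, chain-rule sampling over extent-based marginals costs $\chi^2 = 2^{k}$ cross terms, and the additive-error mixed-Pfaffian estimator cannot drive chain-rule sampling because the conditionals need relative precision on exponentially small marginals; your state-vector baseline is a harmless addition.

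One small correction: your parenthetical that an upper bound on $\chi$ suffices only covers the positive cost figures (such as $2^{60}$ at $k=60$). The claim that known algorithms \emph{require} $2^{\Omega(k)}$ time needs the reverse direction, namely that no Gaussian decomposition with $2^{o(k)}$ terms exists. That lower bound is what the multiplicativity you cite provides, and the paper likewise takes it as given.
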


\begin{proof}
Three routes to sampling are known, and each costs $2^{\Omega(k)}$.
(i) Exact strong simulation via the determinant sum of
Eq.~\eqref{eq:det_sum} costs $\Theta(2^{k/2})$ determinant evaluations per
amplitude. (ii) Chain-rule sampling via extent-based marginals
(Proposition~\ref{prop:flo_cost}) costs $\Theta(2^{k})$ cross-term
evaluations per sample~\cite{reardonsmith2024classical}. (iii) The
additive-error correlator estimator of~\cite{oh2026classical} does not yield
approximate sampling, because chain-rule conditionals require relative
precision on marginals that are generically exponentially small in $k$.
Hence no known algorithm samples in polynomial time once
$k = \omega(\log n)$.
\end{proof}

\subsection{Worst-Case \#P-Hardness at Polynomial Particle Number}
\label{sec:worstcase}

The statements of the previous subsection are relative to the best-known
algorithms. Genuine complexity-theoretic hardness---\#P-hardness of exact
computation---requires the block number itself to be polynomial in $n$,
because the hard-instance size of the underlying reduction is the block
count (Remark~\ref{rem:sparse_input}).

\begin{theorem}[Worst-case \#P-hardness for paired blocks at $k =
n^{\epsilon}$]
\label{thm:worstcase}
Fix a constant $\epsilon \in (0, 1)$ and let $k = n^{\epsilon}$ (with
$k \le n/2$). For the paired magic-block input $\ket{\Psi_4}^{\otimes k/2}$,
there exist parameter settings $(\btheta, \bphi)$ of the unitary
brick-wall and data encodings $\bx$ for which computing the output
probability $p(\bx' \mid \bx)$ exactly is \#P-hard.
\end{theorem}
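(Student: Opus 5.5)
Our approach is a reduction: embed the Ivanov--Gurvits construction \cite{ivanov2013permanents,gurvits2005complexity} inside the paired-block amplitude formula Eq.~\eqref{eq:det_sum}, and then use surjectivity of the brick-wall (Proposition~\ref{prop:brick-wall}, Item~3) to realize the required single-particle unitary.

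First, we note that the full inference circuit $V = U(\btheta,\bphi)\,D(\bvarphi(\bx))\,U^{\mathrm{had}}$ has single-particle matrix $W = W_U\,D_{\bx}\,H$. Because the brick-wall is surjective onto $U(n)$, for any target $W^\star \in U(n)$ we may fix $\bx = 0$ and choose $W_U = W^\star H^\dagger$. The Givens decomposition of Remark~\ref{rem:haar_init} then supplies the angles $(\btheta,\bphi)$ in $O(n^2)$ time. This makes the map from an instance to circuit parameters polynomial-time, and reduces the problem to the following claim: computing $|\braket{x'|V|\Psi_{\mathrm{in}}}|^2$ is \#P-hard when $W^\star$ ranges over $U(n)$.

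Second, we use Eq.~\eqref{eq:det_sum}, which gives the amplitude as $2^{-B/2}$ times a sum over $2^B$ choices of one pair from each block of $\det(W_{x',S})$. By Laplace/Cauchy--Binet expansion this is a mixed discriminant of $B$ rank-two blocks, as in \cite[Lemma~11]{oszmaniec2022fermion}. We then follow Ivanov's reduction. Given a $B\times B$ matrix $A$ (permanents of $0/1$ or small-integer $B\times B$ matrices being \#P-hard), we build a $2k\times 2k$ block matrix on the active modes whose mixed discriminant is proportional to $\mathrm{perm}(A)$. Concretely, each pair of block $g$ is assigned columns carrying $A$'s row $g$, tensored with fixed $2\times2$ gadgets, so that the sum over $S$ antisymmetrizes away everything except the permanent. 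The $n-2k$ idle modes are padded with the identity.

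Third, we embed this (generally non-unitary) $k\times 2k$ target into a unitary. We rescale it by $1/\mathrm{poly}$ so that its operator norm is below $1$, and then dilate it to an $n\times n$ unitary using the standard unitary dilation. This uses $2k \le n$ together with extra idle modes, which are available since $k = n^{\epsilon} \le n/2$; if needed, we take $n \ge 4k$ by restricting to large $n$. Since $B = k/2 = \Theta(n^{\epsilon})$, the permanent size is polynomial in $n$, and \#P-hardness follows. Exact computation of $p$ yields $|\mathrm{perm}(A)|^2$ times a known scale. For nonnegative integer $A$ this determines $\mathrm{perm}(A)$, or we use the standard trick of computing $|\mathrm{perm}|^2$ for several shifted matrices to recover the value.

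The main obstacle will be the second step: verifying that the block-paired structure $\{4g,4g{+}1\}$ versus $\{4g{+}2,4g{+}3\}$ with equal signs really does produce a mixed discriminant that Ivanov's gadget reduces to a permanent. The signs from fermionic ordering and from the $2\times2$ determinants must not cancel. We would first check the case $B=1,2$ by explicit expansion to fix the gadget. A secondary technical issue is bit precision: the Givens angles and dilation entries are algebraic reals, so we would phrase the result for exact computation with algebraic-number inputs, or note that the \#P-hardness persists under polynomial-bit rounding, as in standard Boson Sampling arguments.
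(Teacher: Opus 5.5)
Your skeleton matches the paper's proof. You read Eq.~\eqref{eq:det_sum} as a mixed discriminant (via \cite[Lemma~11]{oszmaniec2022fermion}), invoke Ivanov--Gurvits hardness on nonnegative instances so that $p$ determines the amplitude, and use surjectivity of the brick-wall at $\bx = 0$ to realize $W^\star$ through $W_U = W^\star H^\dagger$, which is the paper's $W_P = W^*(W_D W^{\mathrm{had}})^{-1}$.

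Your rescale-and-dilate step is more explicit than the paper, which simply speaks of ``the hard single-particle unitary $W^*$''. The hard instance lives on the $k\times 2k$ block $W_{x',\mathrm{active}}$ (not $2k\times 2k$). It must be scaled to a contraction and completed, which needs $n \ge 3k$, not merely $2k \le n$. This holds for large $n$ at $k = n^{\epsilon}$.

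The step that fails as you describe it is the gadget. Carrying row $g$ of $A$ linearly in block $g$ is the construction $\omega_g = \sum_j A_{gj}\, e_j\wedge f_j$, for which $\omega_1\wedge\cdots\wedge\omega_B$ is $\mathrm{perm}(A)$ times the volume form. But that $2$-form has rank twice the support of row $g$, up to $2B$, whereas a paired block only supplies $a_g\wedge b_g + c_g\wedge d_g$, of rank at most $4$. Your planned checks at $B = 1,2$ will not reveal this, because $2\times 2$ diagonal matrices already have rank $2$.

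Beyond $B = 2$ you would need Gurvits's genuinely nontrivial reduction from the permanent, with its polynomial blow-up in the number of blocks. Then $n = (2B)^{1/\epsilon}$ must be fixed from the post-reduction block count, which is still $\poly(m)$.

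It is simpler to bypass the permanent entirely:
\begin{enumerate}
\item Support the columns at modes $4g$ and $4g{+}2$ only on the first $B$ rows of $x'$, and those at $4g{+}1$ and $4g{+}3$ only on the last $B$ rows.
\item After an $S$-independent column permutation, every $W_{x',S}$ becomes block diagonal, so $\det(W_{x',S}) = \pm\det U_S\,\det V_S$ with a single global sign. This also settles the sign-cancellation worry you flagged.
\item Cauchy--Binet applied to $\det(\sum_g t_g A_g)$ then gives $\sum_S \det(W_{x',S}) = \pm D(A_1,\dots,A_B)$, where $A_g = \sum_{s=1,2} u_g^{(s)} v_g^{(s)T}$ has rank $2$.
\item Taking $u_g^{(s)} = v_g^{(s)}$ real makes each $A_g$ positive semidefinite, so $D \ge 0$ and $p$ determines it.
\item Gurvits's \#P-hardness of mixed discriminants of rank-$2$ PSD matrices \cite{gurvits2005complexity} then applies directly with exactly $B$ blocks.
\end{enumerate}
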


\begin{proof}
For any passive FLO circuit with single-particle unitary $W \in U(n)$, the
output amplitude is the sum of $2^B$ determinants of
Eq.~\eqref{eq:det_sum}, $B = k/2$---a mixed discriminant in the sense of
Ivanov and Gurvits (cf.~\cite[Lemma~11]{oszmaniec2022fermion}). Computing
this mixed discriminant exactly for arbitrary $W$ is \#P-hard when the
block count is polynomially related to the instance size: an $m \times m$
permanent instance (with nonnegative entries, so that hardness carries to
the probability $|{\cdot}|^2$, cf.~\cite[Remark~21]{oszmaniec2022fermion})
embeds with polynomial overhead into the mixed discriminant of $O(m)$
blocks. Choosing $n = (2m)^{1/\epsilon} = \poly(m)$ makes the reduction
polynomial-time in $m$. The brick-wall is surjective onto $U(n)$
(Proposition~\ref{prop:brick-wall}), so the hard single-particle unitary
$W^*$ is exactly reachable: choose $\bx = 0$ (so that the data layer is
trivial up to fixed phases) and set the brick-wall angles, via the Givens
decomposition, to realize
$W_P = W^* (W_D W^{\mathrm{had}})^{-1}$, whence the total single-particle
unitary is $W^*$.
\end{proof}

\begin{remark}[Reachability caveat for the butterfly]
\label{rem:butterfly_reachability}
The reduction realizes a specific hard unitary $W^*$, which the brick-wall
reaches by surjectivity. A single depth-$2\log n$ butterfly covers only a
submanifold of $U(n)$ of dimension at most $\tfrac32 n\log n$
(Proposition~\ref{prop:butterfly}), and whether that submanifold meets the
Ivanov--Gurvits hard instances is open; Theorem~\ref{thm:worstcase}
therefore holds for the butterfly only under the assumption that it does.
Stacking $O(n/\log n)$ butterfly blocks restores surjectivity, and with it
the unconditional statement, at linear depth. The best-known-algorithm
hardness of Theorem~\ref{thm:sampling_bestknown} is unaffected either way,
since it depends only on the input state.
\end{remark}

\begin{remark}[What worst-case \#P-hardness does and does not buy]
Two contrasts delimit Theorem~\ref{thm:worstcase}. At $k = O(\log n)$,
Eq.~\eqref{eq:det_sum} is itself a polynomial-time algorithm, so no
\#P-hardness is possible; at $k = \log^2 n$ the same expression gives a
quasi-polynomial upper bound, which likewise precludes it under standard
assumptions. Worst-case \#P-hardness therefore begins only at polynomial
particle number, at a training cost of $O(n^{1+\epsilon})$
(Section~\ref{sec:k_choice}).

This is not, however, the guarantee the framework rests on. \#P-hardness
concerns \emph{exact} computation on a \emph{worst-case} instance, whereas a
deployment needs that no algorithm compute the readout of the \emph{trained}
circuit, to the precision the task requires, in feasible time; neither
implies the other. Oh et al.~\cite{oh2026classical} are the case in point:
the mixed discriminant of a form-rank-$2$ input is \#P-hard to evaluate
exactly, and yet its few-body expectation values are estimable to additive
error in polynomial time. Conversely, the operating point of
Eq.~\eqref{eq:operating_point} carries no \#P-hardness whatsoever---$B = 30$
is a constant---but places best-known classical simulation beyond $10^{24}$
operations at every $n$, which is the operationally relevant statement.
Theorem~\ref{thm:worstcase} is best read as evidence that the
$2^{\Omega(k)}$ cost of Theorem~\ref{thm:sampling_bestknown}
reflects genuine structure rather than a
gap in the algorithmic literature.
\end{remark}

\subsection{Average-Case Hardness: the Fully-Packed Limit and the Sparse
Conjecture}
\label{sec:brick-wall_avg_case}

The strongest expressivity claim available is average-case sampling
hardness: a randomly initialized circuit produces a classically hard
distribution with high probability. The two required ingredients,
anticoncentration and the worst-to-average-case Cayley-path reduction~\cite{movassagh2023hardness}, are
currently proven only for fully-packed magic inputs, i.e.\ at extensive
particle number~\cite{oszmaniec2022fermion}. Since the brick-wall realizes
the Haar measure on $U(n)$ under Haar initialization
(Section~\ref{sec:bp_brick-wall}), the full machinery transfers verbatim in
that regime.

\begin{theorem}[Average-case sampling hardness for the brick-wall,
fully-packed regime]
\label{thm:brick-wall_avgcase}
Let $k = n/2$ with the fully-packed paired input
$\ket{\Psi_4}^{\otimes n/4}$. Under the complexity-theoretic assumptions
of~\cite{oszmaniec2022fermion} (noncollapse of the polynomial hierarchy and
average-case \#P-hardness of relative-error computation of output
probabilities, their Conjectures~1--2), approximately sampling from
$p(\bx' \mid \bx)$ is classically
hard for a unitary brick-wall $U_P(\btheta, \bphi)$ at Haar-random
parameters.
Consequently, the function class of the unitary brick-wall contains output
distributions that cannot be efficiently sampled by any classical algorithm.
\end{theorem}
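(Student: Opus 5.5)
The plan is to reduce Theorem~\ref{thm:brick-wall_avgcase} to the Fermion Sampling hardness theorem of Oszmaniec et al.~\cite{oszmaniec2022fermion}. The strategy is to show that the output distribution of our pipeline at Haar-random brick-wall parameters is \emph{exactly} distributed as the Fermion Sampling ensemble they analyze. The argument has three steps: an invariance step, an identification step, and a transfer step.

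\textbf{Invariance step.} Under the Haar initialization of Remark~\ref{rem:haar_init}, the brick-wall's single-particle unitary $W_P$ is exactly Haar on $U(n)$. This follows from surjectivity (Proposition~\ref{prop:brick-wall}) together with the Hurwitz factorization. The total single-particle unitary of the pipeline is $W = W_P W_D W^{\mathrm{had}}$, where $W_D$ is the diagonal data layer and $W^{\mathrm{had}}$ is the fixed Walsh--Hadamard spreading. For every fixed $\bx$ the factor $W_D W^{\mathrm{had}}$ is a fixed unitary, so right invariance of Haar measure makes $W$ Haar-distributed, independently of $\bx$. Hence, for each $\bx$, the random distribution $p(\cdot\mid\bx)$ coincides in law with $|\langle \bx'|V_W|\Psi_4^{\otimes n/4}\rangle|^2$ for $W\sim\mathrm{Haar}(U(n))$.

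\textbf{Identification step.} At $k=n/2$ there are no idle modes, so the input is exactly the fully-packed paired magic state $\ket{\Psi_4}^{\otimes n/4}$. The map $W\mapsto V_W$ is the passive FLO representation, and our gate primitive contains $D_{\mathrm{pas}}$ up to global phase. Together these show that we reproduce exactly the passive FLO Fermion Sampling setting with Haar-random $W$. I will check carefully that their hardness statement is for the Haar ensemble on the passive group $U(n)$ with this input. Their anticoncentration theorem (Theorem~1) and their Cayley-path worst-to-average reduction are both phrased there. The mixed-discriminant amplitude formula, Eq.~\eqref{eq:det_sum} with $B=n/4$, matches their Lemma~11.

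\textbf{Transfer step.} With the distributions identified, I invoke their main theorem. Suppose a classical sampler approximately samples, to constant total variation, from $p(\cdot\mid\bx)$ for a large fraction of Haar instances. Stockmeyer counting combined with anticoncentration then yields a $\mathrm{BPP}^{\mathrm{NP}}$ relative-error approximation of the output probabilities for most instances. Their Conjecture~2, on average-case \#P-hardness, lifts this to \#P $\subseteq$ $\mathrm{BPP}^{\mathrm{NP}}$, which collapses the polynomial hierarchy and contradicts Conjecture~1. The ``consequently'' clause is immediate: hardness holds for a Haar-measure-typical parameter setting, so in particular some parameter settings realize classically hard distributions.

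\textbf{Main obstacle.} The main obstacle is bookkeeping rather than new mathematics. First, I must confirm that the hiding/permutation-invariance property their reduction uses, namely that the target outcome can be randomized, is preserved. This holds because Haar measure is invariant under left multiplication by mode permutations and phases, so any output string can be mapped to a fixed one. Second, I must confirm that the parametrization-level randomness does not matter. Since every statement depends only on the law of $W$, the redundant phases and boundary angles being fixed at zero are harmless.
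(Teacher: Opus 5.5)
Your proposal is correct and follows essentially the same route as the paper. Both arguments take $W_P$ to be exactly Haar from the Givens/Hurwitz initialization, absorb the fixed factors $W_D W^{\mathrm{had}}$ by invariance of the Haar measure so that $W$ is Haar for every $\bx$, and then appeal verbatim to the anticoncentration, Cayley-path and Stockmeyer results of Oszmaniec et al.\ at the fully-packed input. Your explicit checks of the hiding property (via invariance under mode permutations) and of independence from the redundant parametrization are sound, and the paper leaves both implicit.
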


\begin{proof}
Under Haar-random parameters, the brick-wall realizes the Haar measure on
$U(n)$ in the single-particle sector (Section~\ref{sec:bp_brick-wall}). The
full circuit $V = U_P \circ D \circ U^{\mathrm{had}}$ is therefore a
Haar-random passive FLO transformation composed with the fixed deterministic
operators $D$ and $U^{\mathrm{had}}$. $D(\bvarphi(\bx))$ and $U^{\mathrm{had}}$ are fixed deterministic passive
FLO unitaries with single-particle matrices $W_D$ and $W^{\mathrm{had}}$
in $U(n)$; they do not depend on the brick-wall parameters $(\btheta, \bphi)$.
Since $W_P \sim \mathrm{Haar}(U(n))$ and the Haar measure on $U(n)$ is
bi-invariant, for any fixed $A \in U(n)$ the product $A \cdot W_P$ is
again Haar-distributed: $\mathbb{P}[A W_P \in S] = \mathbb{P}[W_P \in
A^{-1} S] = \mathrm{Haar}(A^{-1} S) = \mathrm{Haar}(S)$ for any measurable
$S \subseteq U(n)$. Applying this twice, $W = W_P \cdot W_D \cdot
W^{\mathrm{had}}$ is Haar-distributed on $U(n)$.

The Oszmaniec et al.~\cite{oszmaniec2022fermion} anticoncentration result
(their Theorem~1) applies verbatim: for Haar-random passive FLO circuits on
$U(n)$ initialized in the fully-packed input $\ket{\Psi_4}^{\otimes n/4}$,
the output distribution anticoncentrates with constant
$C_{\mathrm{pas}} = 5.7$. The worst-to-average-case reduction
of~\cite{oszmaniec2022fermion} (their Theorem~7, via the Cayley-path
argument) and their Stockmeyer reduction (their Theorems~2--3) then
establish hardness of approximate sampling under the stated assumptions.
Since the brick-wall's function class contains
all distributions reachable at Haar-random parameters, it contains
distributions inaccessible to efficient classical samplers.
\end{proof}

\begin{conjecture}[Sparse-input average-case hardness]
\label{conj:sparse_avgcase}
At $k = n^{\epsilon}$ with paired input $\ket{\Psi_4}^{\otimes k/2}$ and
$n - 2k$ idle modes, the output distribution of a Haar-random passive FLO
circuit anticoncentrates, and approximate sampling remains hard on average
under the assumptions of~\cite{oszmaniec2022fermion}. This is a sharpened
form of the ``fermion sampling with less magic'' open problem
of~\cite[Sec.~IV]{oszmaniec2022fermion} and its resolution
requires, in particular, redoing the anticoncentration second-moment
computation of~\cite[Lemma~13]{oszmaniec2022fermion} for sparse inputs.
Resolving it affirmatively would place average-case hardness at the same
particle numbers where the gradient variance of
Theorem~\ref{thm:brick-wall_bp} is largest.
\end{conjecture}

For the butterfly, the average-case question is doubly open: beyond the
sparse-input issue, anticoncentration at logarithmic depth is not known even
at full packing---whether the collision-probability constant satisfies
$C_k(n) = O(\poly(n))$, and hence whether the worst-to-average-case reduction
of Oszmaniec et al.~\cite{oszmaniec2022fermion} applies, is open for both
fully-packed and sparse magic-block inputs.

\medskip
\noindent\textbf{Hardness summary.}
\begin{itemize}
  \item \emph{$k = O(\log n)$}: the entire pipeline is classically
  simulable in polynomial time.
  \item \emph{Operating point, $B = 30$ blocks} ($k = 60$, paired
  blocks): best-known
  classical sampling costs $\chi^2 = 2^{60}$ cross terms, i.e.\
  $\gtrsim 10^{24}$ elementary operations
  (Theorem~\ref{thm:sampling_bestknown}), calibrated in
  Section~\ref{sec:k_choice} to exceed feasible classical compute; future
  algorithmic improvements are absorbed by increasing $B$.
  \item \emph{$k = n^{\epsilon}$}: worst-case \#P-hardness of exact output
  probabilities for the brick-wall (Theorem~\ref{thm:worstcase}); for the
  butterfly under the reachability assumption
  (Remark~\ref{rem:butterfly_reachability}); average-case hardness
  conjectured (Conjecture~\ref{conj:sparse_avgcase}).
    \item \emph{$k = \Theta(n)$}: the full average-case machinery of~\cite{oszmaniec2022fermion} for the Haar-initialized brick-wall at
    $k = n/2$ (Theorem~\ref{thm:brick-wall_avgcase}); the parallel PSR no
    longer yields an asymptotic reduction over the naive parameter-shift
    rule (Section~\ref{sec:k_choice}).
  \item \emph{Fixed-body expectation-value readouts}: classically easy at any
  $k$ for both architectures, via $r$-RDM
  propagation (Proposition~\ref{prop:onebody_const}).
\end{itemize}

The last item identifies precisely which observables are classically easy:
those whose expectation values are determined by a reduced density matrix of
fixed body number, regardless of the input's higher-body non-Gaussian
structure.

Concrete parameter values for both architectures
at representative problem sizes are collected in
Table~\ref{tab:master} of Section~\ref{sec:discussion}.

\subsection{On the Cerezo et al.\ Simulability Framework}
\label{sec:cerezo_response}
Cerezo et al.~\cite{cerezo2025does} established that QNNs with a
polynomial-dimensional dynamical Lie algebra admit efficient classical
simulation under a condition on the input state and the observable: both must
be \emph{$\mathfrak{g}$-compatible}, in the sense that their projections onto
$\mathfrak{g}$ can be computed efficiently. Under this condition the loss
$\calL = \bra{\psi_{\mathrm{in}}} V^\dagger O V \ket{\psi_{\mathrm{in}}}$ is
evaluated classically by propagating the $\mathfrak{g}$-projection of the
input through the $\mathfrak{g}$-adjoint action of $V$ and pairing it with the
$\mathfrak{g}$-projection of the observable. The conclusion concerns the value
of an expectation; the argument makes no statement about the complexity of
sampling from the output distribution of the circuit.

Both architectures considered here have polynomial dynamical Lie algebra,
$\mathfrak{g} = \mathfrak{u}(n)$ with $\dim\mathfrak{g} = n^2$. We therefore
examine whether the framework precludes quantum advantage in the present
setting. It does not; we treat in turn the sampling task itself and the
few-body sample statistics of Section~\ref{sec:readouts}.

\paragraph{The sampling task.}
Consider the sampling pipeline at $k = \Theta(n)$. Each hypothesis of the
framework is satisfied. The dynamical Lie algebra is $\mathfrak{u}(n)$, of
dimension $n^2$. The $\mathfrak{g}$-projection of the magic-state input is its
$1$-RDM, which is diagonal with entries $\tfrac12$ on the $2k$ active modes
and computable in $O(n^2)$ time (Lemma~\ref{lem:spreading}). The readouts
$n_j$ are one-body observables, so their $\mathfrak{g}$-projections coincide
with themselves. The conclusion of the framework holds as well: these
expectation values are classically computable in $O(n^2)$ time, in agreement
with Proposition~\ref{prop:onebody_const}.

The output distribution of the same circuit acting on the same input is
nevertheless not classically accessible: approximate sampling is hard on
average at $k = n/2$ under the assumptions
of~\cite{oszmaniec2022fermion} (Theorem~\ref{thm:brick-wall_avgcase});
exact computation of individual output probabilities is \#P-hard in the
worst case at $k = n^{\epsilon}$ (Theorem~\ref{thm:worstcase}); and every
known sampling algorithm costs more than $10^{24}$ operations at the
operating point $k = 60$
(Theorem~\ref{thm:sampling_bestknown}).

These statements are consistent. An individual output probability
$p(\bx' \mid \bx) = |\bra{\bx'} V \ket{\Psi_{\mathrm{in}}}|^2$ is the
expectation value of the rank-one projector $\ket{\bx'}\bra{\bx'}$, an
observable of body-number $k$; such an observable is not
$\mathfrak{g}$-compatible, and the framework does not apply to it. It follows
that the implication frequently drawn from the framework---that an
architecture with polynomial dynamical Lie algebra cannot support quantum
advantage---does not hold in general. The present construction satisfies every
hypothesis of the framework and nonetheless generates classically intractable
samples; at $k = cn$ with $c < 1/2$ it does so---under the
best-known-algorithm calibration of Section~\ref{sec:k_choice}, the
average-case statement of Theorem~\ref{thm:brick-wall_avgcase} being available
at exactly half filling---while retaining unconditional trainability, with
gradient variance $\Theta(1/n^2)$, its largest value anywhere in the range
(Theorem~\ref{thm:brick-wall_bp} and
Theorem~\ref{thm:interior_halves}).

This regime is of central rather than peripheral interest for machine
learning: generative modelling and reinforcement-learning policies both
consume the circuit directly as a sampler
(Section~\ref{sec:pipeline}), so the classically intractable object is the
model output itself.

\paragraph{Few-body sample statistics.}
For statistics of fixed body number the framework is informative, and its
conclusion is the one we reach independently. The $\mathfrak{g}$-projection of
an $r$-body observable is determined by the $r$-particle sector, and by
Proposition~\ref{prop:onebody_const} the corresponding expectation value is
computable in $\poly(n)$ time regardless of $k$ or of form-rank; for the
paired inputs used here the mixed-Pfaffian reduction of Oh et
al.~\cite{oh2026classical} additionally gives an additive-error estimator
with the shot complexity of the quantum device itself. Few-body readouts are
therefore classically tractable and cannot witness quantum advantage.

\paragraph{Summary.}
Polynomial dimension of the dynamical Lie algebra does not imply classical
simulability of the learning task. Two properties jointly determine
tractability:
\begin{itemize}
  \item the body number $r$ of the readout observable,
  \item the precision regime (multiplicative for sampling, additive for
  expectation values).
\end{itemize}
Every fixed $r$ is resolved, for arbitrary input and either architecture, by
$r$-RDM propagation (Proposition~\ref{prop:onebody_const}); at form-rank $2$
the mixed-Pfaffian reduction of~\cite{oh2026classical} additionally supplies
an additive-error estimator whose shot complexity matches the quantum device.
Sample-based readouts fall outside the scope of the framework, and constitute
the setting in which an architecture of polynomial dynamical Lie algebra is
provably not classically simulable.
\section{Summary and Discussion}
\label{sec:discussion}

We have presented a provably scalable end-to-end quantum neural network 
framework built around two architectures --- the unitary brick-wall for 
nearest-neighbor hardware and the unitary butterfly for all-to-all 
connectivity --- that together address five theoretical challenges 
simultaneously: hardware-matched circuit depth (Section~\ref{sec:butterfly}), 
efficient training cost via the multi-layer parallel parameter-shift rule 
(Section~\ref{sec:psr}), polynomial gradient variance and absence of 
barren plateaus for readouts of fixed body number (Section~\ref{sec:bp}), 
superpolynomial classical simulation hardness for the sampling task under 
the magic-state encoding (Section~\ref{sec:hardness}), and polynomial 
sample complexity for generalization (Section~\ref{sec:pipeline}). The 
third and fourth of these are established for different objects of the same 
circuit---the gradient channel and the deployed output respectively---and 
Section~\ref{sec:where} sets out which pipelines admit a variance guarantee 
and which require the quantum device in the training loop, and 
Section~\ref{sec:tasks} instantiates the standard machine-learning 
tasks---generative modelling, reinforcement-learning policies, supervised 
classification and regression, kernel methods---with the loss and head 
appropriate to each. Both architectures share 
the same encoding pipeline and the same parallel parameter-shift rule; 
they differ in circuit depth, parameter count, training cost, and the 
strength of the trainability guarantee. The key structural insight is 
that all properties follow from interleaving particle-number-preserving 
RBS gates with single-qubit $R_z$ gates --- in a brick-wall connectivity 
pattern for nearest-neighbor hardware, or a butterfly connectivity 
pattern for all-to-all hardware --- combined with a magic-state
encoding whose number of magic blocks is calibrated
against classical compute: at the operating point of
Eq.~\eqref{eq:operating_point}, paired magic blocks $\ket{\Psi_4}^{\otimes
k/2}$ at $k = 60$, at hardness level $\chi^2 = 2^{60}$.

\subsection{Complete Parameter Summary}
\label{sec:concrete_numbers}

Table~\ref{tab:master} summarizes all relevant quantities for both
architectures at the operating point, at four representative problem
sizes. The table is organized in three blocks: shared encoding-dependent
quantities (identical for both architectures), architecture-dependent
quantities for the butterfly, and architecture-dependent quantities for
the brick-wall.

\begin{table*}[t]
\centering
\small
\caption{%
Complete parameter summary for the unitary brick-wall and butterfly QNNs
at four problem sizes and the operating point of
Eq.~\eqref{eq:operating_point}: $B = 30$ paired magic blocks
$\ket{\Psi_4}^{\otimes 30}$, $k = 60$, FLO extent $\chi = 2^{30}$,
hardness level $\chi^2 = 2^{60} \approx 1.2 \times 10^{18}$, fixed across
$n$; encoding depth $\log n + 5$.
\textbf{Shared quantities} (top block) depend only on the magic-state
encoding and are identical for both architectures.
The simulation cost row reports $\chi^2$, the leading factor in the
best-known classical cost of sampling
(Theorem~\ref{thm:sampling_bestknown}); few-body expectation values are
additively tractable via the mixed-Pfaffian
estimator~\cite{oh2026classical}.
\textbf{Architecture-dependent quantities} (middle and bottom blocks)
reflect the depth--connectivity tradeoff: the butterfly targets
all-to-all hardware at $O(\log n)$ depth with $T = \tfrac32 n\log n$
parameters; the brick-wall targets nearest-neighbor (ring) hardware at $O(n)$
depth with $T = 3n^2/2$ parameters.
Naive PSR cost is $4n\log n$ for the butterfly and $4n^2$ for the
brick-wall ($4$ circuit evaluations per RBS parameter and $2$ per $R_z$
parameter); parallel PSR cost is $4k\log n$ for the butterfly and $4kn$ for
the brick-wall; the reduction factor $n/k$, identical for the two
architectures, grows linearly with $n$ at fixed hardness
and exceeds $1$ throughout the admissible range $2k \le n$.
Gradient variance for two-body correlator readouts is $\Theta(k^3/n^5)$
\emph{unconditionally} for the brick-wall
(Theorem~\ref{thm:brick-wall_bp}, Corollary~\ref{cor:first_column} and
Theorem~\ref{thm:interior_halves}) and $\Theta(k^3/n^5)$
\emph{conditional} on Conjecture~\ref{conj:lambda2_design} for
the butterfly (Theorem~\ref{thm:sharp_rate}).%
}
\label{tab:master}
\setlength{\tabcolsep}{6pt}
\begin{tabular*}{\textwidth}{@{\extracolsep{\fill}}
  l
  c c c c
  @{}}
\toprule
& $n = 256$ & $n = 512$ & $n = 1024$ & $n = 4096$ \\
\midrule
\multicolumn{5}{@{}l}{%
  \textit{Shared quantities (encoding-dependent, identical for
  both architectures)}%
} \\[2pt]
\quad Magic blocks $B$
  & $30$ & $30$ & $30$ & $30$ \\
\quad Active qubits $2k$
  & $120$ & $120$ & $120$ & $120$ \\
\quad Hilbert space $\tbinom{n}{k}$
  & $2.0{\times}10^{59}$ & $1.2{\times}10^{79}$
  & $8.6{\times}10^{97}$ & $4.3{\times}10^{134}$ \\
\quad FLO extent $\chi = 2^{30}$
  & $1.1{\times}10^{9}$ & $1.1{\times}10^{9}$
  & $1.1{\times}10^{9}$ & $1.1{\times}10^{9}$ \\
\quad Simulation cost $\chi^2 = 2^{60}$
  & $1.2{\times}10^{18}$ & $1.2{\times}10^{18}$
  & $1.2{\times}10^{18}$ & $1.2{\times}10^{18}$ \\
\quad Gradient variance (2-body)
  & \multicolumn{4}{c}{$\Theta(k^3/n^5)$} \\
\quad Sample complexity
  & \multicolumn{4}{c}{$\tilde{O}(T)$} \\
\midrule
\multicolumn{5}{@{}l}{%
  \textit{Unitary butterfly (all-to-all connectivity, $O(\log n)$ depth)}%
} \\[2pt]
\quad Total circuit depth
  & $30$ & $33$ & $36$ & $42$ \\
\quad Parameters $T = \tfrac32 n\log n$
  & $3{,}072$ & $6{,}912$ & $15{,}360$ & $73{,}728$ \\
\quad Naive PSR cost $4n\log n$
  & $8{,}192$ & $18{,}432$ & $40{,}960$ & $196{,}608$ \\
\quad Parallel PSR cost $4k\log n$
  & $1{,}920$ & $2{,}160$ & $2{,}400$ & $2{,}880$ \\
\quad Reduction factor $n/k$
  & $4.3\times$ & $8.5\times$ & $17.1\times$ & $68.3\times$ \\
\midrule
\multicolumn{5}{@{}l}{%
  \textit{Unitary brick-wall (nearest-neighbor connectivity, $O(n)$ depth)}%
} \\[2pt]
\quad Total circuit depth
  & $773$ & $1{,}541$ & $3{,}077$ & $12{,}293$ \\
\quad Parameters $T = 3n^2/2$
  & $98{,}304$ & $393{,}216$ & $1{,}572{,}864$ & $2.5{\times}10^{7}$ \\
\quad Naive PSR cost $4n^2$
  & $262{,}144$ & $1{,}048{,}576$ & $4{,}194{,}304$ & $6.7{\times}10^{7}$ \\
\quad Parallel PSR cost $4kn$
  & $61{,}440$ & $122{,}880$ & $245{,}760$ & $983{,}040$ \\
\quad Reduction factor $n/k$
  & $4.3\times$ & $8.5\times$ & $17.1\times$ & $68.3\times$ \\
\bottomrule
\end{tabular*}
\end{table*}

Several features of Table~\ref{tab:master} deserve highlighting.

\paragraph{Trainability.}
The gradient variance for two-body correlator readouts is $\Theta(k^3/n^5)$ 
for both architectures, independent of block type. As established in
Section~\ref{sec:bp}, this rate is \emph{unconditional} for the brick-wall 
(Theorem~\ref{thm:brick-wall_bp}) and \emph{conditional} on 
Conjecture~\ref{conj:lambda2_design} for the butterfly 
(Theorem~\ref{thm:sharp_rate}), which also carries an unconditional
polynomial lower bound $\Omega(k^4/n^6)$
(Theorem~\ref{thm:no_bp}). Two body number is the only order at which we
establish a rate; losses that depend on the samples through statistics of
this order inherit it by linearity
(Proposition~\ref{prop:mmd_trainable}), while losses sensitive to
higher-order sample structure retain the same parameter-shift cost but carry
no variance guarantee (Section~\ref{sec:where}).

\paragraph{Classical hardness.}
The hard task is sampling from the model's output distribution---the
deployed output of every pipeline in Section~\ref{sec:pipeline}, and not the
loss used to train it. By construction its best-known classical simulation
cost is the same at every problem size: $\chi^2 = 2^{60} \approx 1.2 \times
10^{18}$ cross terms, each costing $\poly(n) \gtrsim 10^{6}$ elementary
operations, i.e.\ $\gtrsim 10^{24}$ operations in total---about two weeks on a
$10^{18}$-flop/s machine. Hardness is set by the block number $B = 30$,
not by $n$; growing $n$ buys expressivity (Hilbert-space dimension,
feature dimension) and training efficiency, while the hardness margin is
adjusted, if ever needed, by increasing $B$ at cost linear in $k$
(Section~\ref{sec:k_choice}). These costs apply to both architectures
identically, since they depend only on the magic-state encoding and not on
the connectivity pattern of the passive FLO circuit.

\paragraph{Training cost.}
At fixed hardness level (fixed $k$), the parallel-PSR reduction factor
$n/k$ grows linearly with $n$: it
reaches $17.1\times$ at $n = 1024$
and $68.3\times$ at $n = 4096$. The butterfly's absolute
cost, $4k\log n$, is nearly independent of $n$: between $1{,}920$ and
$2{,}880$ circuit evaluations per gradient step across the entire table.
Training a hardness-calibrated model therefore becomes \emph{relatively}
cheaper as the machine grows. The parallel rule is cheaper than the naive
one whenever $k < n$, so the binding constraint on the smallest useful
machine is the encoding requirement $2k \le n$---$n \ge 120$ at
$k = 60$---rather than the parameter-shift rule; already
at $n = 256$ the reduction is $4.3\times$. The butterfly's polylogarithmic per-step cost makes
it the architecture of choice when circuit evaluations are the bottleneck;
the brick-wall's $O(n)$ depth makes it the natural target for
nearest-neighbor hardware regardless of evaluation cost.

\subsection{Outlook}

The main open question is empirical: whether the proposed framework 
delivers practical advantages on real machine learning tasks. The 
theoretical guarantees established here --- trainability, efficient 
gradients, classical hardness, polynomial generalization --- are 
necessary but not sufficient for useful quantum advantage. What remains 
to be demonstrated is that quantum models built on either architecture
solve problems classical models cannot solve as well, whether standalone
or as components in ensemble methods. This requires both mature quantum
hardware and careful empirical study. For the butterfly, $n = 512$ or
$n = 1024$ with all-to-all or reconfigurable connectivity appears to us
to be the scale at which meaningful comparisons become possible---the
operating point of Eq.~\eqref{eq:operating_point} fits comfortably
($2k \le n$) and the parallel PSR delivers a material advantage
(Table~\ref{tab:master}); for
the brick-wall, the same $n$ on nearest-neighbor hardware---a ring, or a
chain via the open-boundary variant of Remark~\ref{rem:chain_variant}. The verification regime $k = O(\log n)$ provides the natural on-ramp:
the identical architecture can be validated end to end against exact
classical simulation at small block number before $B$ is scaled up.

The analysis of Section~\ref{sec:where} narrows where such demonstrations
should be attempted. The criterion is the \emph{deployed output}: it must be
the sample set or a functional of it beyond fixed body number, since a model
whose output is a few-body statistic is classically simulable end to end
regardless of $k$. Whether the device is also needed during training varies,
and either mode is viable: a loss anchored on estimable statistics can be
trained classically, with the device reserved for sampling at inference,
while a loss that is an expectation over something we have no model of
requires the device in the loop---which can happen even for a low-body
effective observable, when its weights are supplied by an environment rather
than known to us. Two families meet the criterion. The first is
joint multivariate generative modelling, where the object consumed
downstream is the dependence structure of a high-dimensional distribution---
joint tail probabilities, scenario ensembles, risk functionals---rather than
a collection of marginals; sampling is then the deployed operation and the
two-body moments that the trainability analysis controls are exactly the
pairwise co-occurrence statistics such applications already use. The second
is model-free reinforcement learning with a structurally low-body return,
which is the one setting identified in Section~\ref{sec:where} in which a
proven gradient rate, classical hardness of the deployed policy, and the
need for the device inside the training loop hold simultaneously. Financial
decision problems---portfolio construction, execution---are a natural
instance of the second family: the return is linear in the positions with
quadratic costs and risk, hence two-body in the action, while no
sufficiently accurate model of the environment is available to propagate
through, which is why model-free methods are used in the first place. A
further architectural fit is that particle-number preservation makes every
shot a valid Hamming-weight-$k$ configuration, so constrained selection
problems---allocate across $k$ of $n$ assets, route $k$ of $n$ child
orders---are sampled feasibly by construction, without masking or repair.
We note that these are candidates identified on structural grounds; whether
the resulting models are \emph{useful} is the empirical question above, and
the $k = O(\log n)$ regime is where it should first be asked.

On the theory side, several questions remain. The first concerns the
separation of objects described in Section~\ref{sec:where}. The hardness
guarantees of Section~\ref{sec:hardness} are sampling statements: they
operate at multiplicative precision on individual output probabilities, and
the tasks that consume the QNN as a sampler---generative modeling and
reinforcement-learning policies---inherit them directly; Oh et
al.~\cite{oh2026classical} explicitly disclaim any implications of
additive-error estimation for this regime, and any future algorithmic
advance is absorbed by raising the block number $B$. The trainability
guarantees, by contrast, are established for readouts of fixed body number,
and precisely that property makes them classically computable whenever a
model of the distribution is available
(Proposition~\ref{prop:onebody_const}). Whether the two can be carried by a
single object---whether a \emph{variationally trained} circuit can have an
expectation-value readout that is at once provably free of barren plateaus
and classically hard---is open, and it is not clear how to achieve it:
few-body expectation values of bounded observables sit naturally in the
additive-precision regime, where estimators such as the mixed-Pfaffian
reduction of Oh et al.~\cite{oh2026classical} already cover broad input
classes, while observables of growing body order face vanishing signals and
gradients. The question is not whether \emph{some} expectation-value model
can be classically hard---fixed, cryptographically structured feature maps
achieve this, as in the discrete-logarithm construction of Liu et
al.~\cite{liu2021rigorous}, where no variational training is involved and
the hardness is imported from the planted problem rather than arising from
the architecture---but whether hardness of this kind can coexist with a
trained circuit and a proven gradient rate. A related question is how much
of the output distribution a low-body loss actually shapes: training against
few-body statistics controls those statistics, while the higher-order
structure that carries the hardness is inherited from the architecture
rather than learned, and the practical value of that unshaped remainder is
an empirical matter. We stress that this is a question about proofs rather
than about capability: the sampler supports the full range of
machine-learning tasks catalogued in Section~\ref{sec:tasks}, and none of
them requires an expectation-value readout to be classically hard. What an
affirmative answer would add is a second, independent route to advantage,
not a task the present framework cannot address.

A distinct contribution of the present framework is a sharp statement of
where the magic-count threshold for hardness sits, updating the ``fermion
sampling with less magic'' open problem
of~\cite{oszmaniec2022fermion} in light of the extent-based simulation
algorithms of~\cite{reardonsmith2024classical}: at $B = O(\log n)$ blocks
the pipeline is classically simulable outright, including sampling; at $B = \omega(\log n)$ all
known algorithms are superpolynomial
(Theorem~\ref{thm:sampling_bestknown}); at $B = n^{\Omega(1)}$ worst-case
\#P-hardness is restored (Theorem~\ref{thm:worstcase}); and the
average-case machinery is currently available only at extensive filling.
Closing the remaining gap---average-case hardness and anticoncentration
for sparsely packed magic inputs
(Conjecture~\ref{conj:sparse_avgcase})---is, in our view, the most
consequential open problem our construction inherits from the Fermion
Sampling literature.

Other theoretical directions remain open. Resolving 
Conjecture~\ref{conj:lambda2_design} --- extending the transfer-matrix 
spectral-gap proof of Appendix~\ref{app:spectral_gap} to the 
$\Lambda^2\mathbb{C}^n$ representation --- would promote the butterfly's 
$\Theta(k^3/n^5)$ gradient variance from conditional to unconditional, 
placing both architectures on equal theoretical footing. 
A second, more technical question concerns the brick-wall's per-parameter
statement. The gradient-variance rate is exact and unconditional for the
invariant derivative at a Haar point, for the $n$ phases of the front $R_z$
layer, and for every gate of the circuit in the independent-halves ensemble.
What is not settled is the pointwise statement for an interior gate of the
\emph{literal} single mesh, which reduces to a single second-moment question:
writing $\Pi^{(\ell)} = W_1 \Pi W_1^\dagger$ for the spread projector
propagated through the first $\ell-1$ columns, does
\begin{equation}
\label{eq:interior_condition}
  \E\bigl[\,\|[g,\Pi^{(\ell)}]\|_F^2\,\bigr]
  \;=\; \Theta\!\left(\frac{k(n-2k)}{n^2}\right)
\end{equation}
hold at the intermediate depths $1 < \ell < n$---equivalently, can a shallow
partial mesh re-localize the Walsh--Hadamard-spread projector at the level of
second moments? Proposition~\ref{prop:single_mesh_interior} answers yes at
both endpoints of the prefix, with matching constants, and establishes the
corresponding first-moment statement at every intermediate depth. Settling the
second moment requires a transfer analysis of the partial mesh in the Hurwitz
angle densities, which are position-dependent with
$\E[\cos^2\vartheta] \neq \tfrac12$, in the sector carrying
$|\Pi^{(\ell)}_{jj}|^2$ and $|\Pi^{(\ell)}_{ij}|^2$; the Clements-mesh version
of the Hurwitz factorization~\cite{clements2016optimal,russell2017direct} is
the natural starting point. We regard the question as well posed and
self-contained.
The anticoncentration question for the butterfly at logarithmic depth remains open; the single-particle
$2$-design does not by itself control the
$k$-particle collision probability, and a direct argument in the $k$-particle sector may be possible.
Understanding the optimization landscape --- 
whether layer-wise block coordinate descent converges faster than full 
gradient descent and under what conditions --- is an empirical question 
we leave open; polynomial gradient variance rules out exponentially flat 
landscapes but not other obstructions, such as the proliferation of poor 
local minima identified in~\cite{anschuetz2022quantum}. The relationship between body number and 
precision regime suggests a refined hierarchy of QNN expressivity worth 
mapping systematically: our paper identifies two structural
distinctions in the classical-tractability landscape---observables of fixed
body number are always easy, by reduced-density-matrix propagation
(Proposition~\ref{prop:onebody_const}) and, at additive precision on paired
inputs, by the estimator of Oh et al.~\cite{oh2026classical}; while sampling
at multiplicative precision is hard at sufficient block number via Fermion
Sampling---and the intermediate regime of observables whose body number
grows with $n$ is unmapped in both directions.

A separate question is whether the alternative Dicke-state 
encoding (Remark~\ref{rem:dicke}) offers practical or theoretical 
advantages that justify its higher preparation depth. Its
permutation-symmetric structure 
provides natural inductive biases for set-valued data, and its $1$-RDM is
proportional to the identity, so the trainability analysis of
Section~\ref{sec:bp}---which is driven by $\|[g,\rho_1]\|_F^2$---does not
apply and the gradient variance is governed instead by connected two-body
correlations. Characterizing that rate, and determining whether these
advantages outweigh the depth cost in practice, is an interesting direction 
for future work.

Despite these open directions, the framework established here provides
a complete answer to the theoretical challenges that have dominated
quantum machine learning for the past several years, at two levels. As a
matter of theory, both the unitary brick-wall and the unitary butterfly
are trainable throughout the regime $n - 2k = \Omega(n)$---polynomial gradient
variance from $k = O(1)$ to any constant filling fraction $c < 1/2$---while at
extensive particle number the same architecture carries the classical-hardness
guarantees of Section~\ref{sec:hardness}, up to the full average-case Fermion
Sampling machinery at half filling. As a matter of practice, the
operating point of Eq.~\eqref{eq:operating_point} places best-known
classical simulation beyond $10^{24}$ operations while the multi-layer
parallel parameter-shift rule reduces the training cost by a factor
$n/k$ that grows linearly with $n$. The brick-wall
achieves this with unconditional trainability guarantees on
nearest-neighbor hardware;
the butterfly achieves this at polylogarithmic depth on all-to-all
hardware, with the $\Theta(k^3/n^5)$ trainability rate as an open question based on a well-formed conjecture.
Whether either architecture is useful is now a question for experiment.


\section*{Note added}
Version~1 of this manuscript additionally introduced a second, triplet
(form-rank-$3$) magic-block encoding and claimed best-known-algorithm
classical hardness for the associated two-body expectation-value readout.
That claim was refuted by Erfan Amidi~\cite{amidi2026comment}, who observed that two-body
expectation values after passive fermionic linear optics are determined by
the input two-particle reduced density matrix---explicitly computable for
the block-product inputs---and are therefore classically computable in
$O(n^4)$ time, independently of the particle number and the form-rank. The
corresponding statement for observables of arbitrary fixed body number is
Proposition~\ref{prop:onebody_const}. We thank him for the correction. Version~2 removes the triplet encoding and
all expectation-value hardness claims, and presents the framework in
unified sample-based form; the trainability, parameter-shift, and
sampling-hardness results are unaffected, and the existence of a trainable,
classically hard expectation-value pipeline is now stated as an open
question (Section~\ref{sec:discussion}).

\section*{Author contributions}
The author conceived the framework, proved the
results, and wrote the manuscript. Large language model assistants were used
for literature search, for checking algebraic manipulations and numerical
evaluations, and for copy-editing; all definitions, statements and proofs were verified by the author.

\begin{acknowledgments}
Support has been provided by PEPR EPiQ.
\end{acknowledgments}

\bibliography{references_v2}

\begin{thebibliography}{87}%
\makeatletter
\providecommand \@ifxundefined [1]{%
 \@ifx{#1\undefined}
}%
\providecommand \@ifnum [1]{%
 \ifnum #1\expandafter \@firstoftwo
 \else \expandafter \@secondoftwo
 \fi
}%
\providecommand \@ifx [1]{%
 \ifx #1\expandafter \@firstoftwo
 \else \expandafter \@secondoftwo
 \fi
}%
\providecommand \natexlab [1]{#1}%
\providecommand \enquote  [1]{``#1''}%
\providecommand \bibnamefont  [1]{#1}%
\providecommand \bibfnamefont [1]{#1}%
\providecommand \citenamefont [1]{#1}%
\providecommand \href@noop [0]{\@secondoftwo}%
\providecommand \href [0]{\begingroup \@sanitize@url \@href}%
\providecommand \@href[1]{\@@startlink{#1}\@@href}%
\providecommand \@@href[1]{\endgroup#1\@@endlink}%
\providecommand \@sanitize@url [0]{\catcode `\\12\catcode `\$12\catcode `\&12\catcode `\#12\catcode `\^12\catcode `\_12\catcode `\%12\relax}%
\providecommand \@@startlink[1]{}%
\providecommand \@@endlink[0]{}%
\providecommand \url  [0]{\begingroup\@sanitize@url \@url }%
\providecommand \@url [1]{\endgroup\@href {#1}{\urlprefix }}%
\providecommand \urlprefix  [0]{URL }%
\providecommand \Eprint [0]{\href }%
\providecommand \doibase [0]{https://doi.org/}%
\providecommand \selectlanguage [0]{\@gobble}%
\providecommand \bibinfo  [0]{\@secondoftwo}%
\providecommand \bibfield  [0]{\@secondoftwo}%
\providecommand \translation [1]{[#1]}%
\providecommand \BibitemOpen [0]{}%
\providecommand \bibitemStop [0]{}%
\providecommand \bibitemNoStop [0]{.\EOS\space}%
\providecommand \EOS [0]{\spacefactor3000\relax}%
\providecommand \BibitemShut  [1]{\csname bibitem#1\endcsname}%
\let\auto@bib@innerbib\@empty
\bibitem [{\citenamefont {Reardon-Smith}\ \emph {et~al.}(2024)\citenamefont {Reardon-Smith}, \citenamefont {Oszmaniec},\ and\ \citenamefont {Korzekwa}}]{reardonsmith2024classical}%
  \BibitemOpen
  \bibfield  {author} {\bibinfo {author} {\bibfnamefont {O.}~\bibnamefont {Reardon-Smith}}, \bibinfo {author} {\bibfnamefont {M.}~\bibnamefont {Oszmaniec}},\ and\ \bibinfo {author} {\bibfnamefont {K.}~\bibnamefont {Korzekwa}},\ }\bibfield  {title} {\bibinfo {title} {Improved simulation of quantum circuits dominated by free fermionic operations},\ }\href@noop {} {\bibfield  {journal} {\bibinfo  {journal} {Quantum 8, 1549}\ } (\bibinfo {year} {2024})}\BibitemShut {NoStop}%
\bibitem [{\citenamefont {Oszmaniec}\ \emph {et~al.}(2022)\citenamefont {Oszmaniec}, \citenamefont {Dangniam}, \citenamefont {Morales},\ and\ \citenamefont {Zimbor{\'a}s}}]{oszmaniec2022fermion}%
  \BibitemOpen
  \bibfield  {author} {\bibinfo {author} {\bibfnamefont {M.}~\bibnamefont {Oszmaniec}}, \bibinfo {author} {\bibfnamefont {N.}~\bibnamefont {Dangniam}}, \bibinfo {author} {\bibfnamefont {M.~E.~S.}\ \bibnamefont {Morales}},\ and\ \bibinfo {author} {\bibfnamefont {Z.}~\bibnamefont {Zimbor{\'a}s}},\ }\bibfield  {title} {\bibinfo {title} {Fermion sampling: a robust quantum computational advantage scheme using fermionic linear optics and magic input states},\ }\href@noop {} {\bibfield  {journal} {\bibinfo  {journal} {PRX Quantum}\ }\textbf {\bibinfo {volume} {3}},\ \bibinfo {pages} {020328} (\bibinfo {year} {2022})}\BibitemShut {NoStop}%
\bibitem [{\citenamefont {Harrow}\ \emph {et~al.}(2009)\citenamefont {Harrow}, \citenamefont {Hassidim},\ and\ \citenamefont {Lloyd}}]{harrow2009quantum}%
  \BibitemOpen
  \bibfield  {author} {\bibinfo {author} {\bibfnamefont {A.~W.}\ \bibnamefont {Harrow}}, \bibinfo {author} {\bibfnamefont {A.}~\bibnamefont {Hassidim}},\ and\ \bibinfo {author} {\bibfnamefont {S.}~\bibnamefont {Lloyd}},\ }\bibfield  {title} {\bibinfo {title} {Quantum algorithm for linear systems of equations},\ }\href@noop {} {\bibfield  {journal} {\bibinfo  {journal} {Physical Review Letters}\ }\textbf {\bibinfo {volume} {103}},\ \bibinfo {pages} {150502} (\bibinfo {year} {2009})}\BibitemShut {NoStop}%
\bibitem [{\citenamefont {Kerenidis}\ and\ \citenamefont {Prakash}(2017)}]{kerenidis2017quantum}%
  \BibitemOpen
  \bibfield  {author} {\bibinfo {author} {\bibfnamefont {I.}~\bibnamefont {Kerenidis}}\ and\ \bibinfo {author} {\bibfnamefont {A.}~\bibnamefont {Prakash}},\ }\bibfield  {title} {\bibinfo {title} {Quantum recommendation systems},\ }in\ \href@noop {} {\emph {\bibinfo {booktitle} {Innovations in Theoretical Computer Science (ITCS)}}}\ (\bibinfo {year} {2017})\ pp.\ \bibinfo {pages} {49:1--49:21}\BibitemShut {NoStop}%
\bibitem [{\citenamefont {Gily{\'e}n}\ \emph {et~al.}(2019)\citenamefont {Gily{\'e}n}, \citenamefont {Su}, \citenamefont {Low},\ and\ \citenamefont {Wiebe}}]{gilyen2019quantum}%
  \BibitemOpen
  \bibfield  {author} {\bibinfo {author} {\bibfnamefont {A.}~\bibnamefont {Gily{\'e}n}}, \bibinfo {author} {\bibfnamefont {Y.}~\bibnamefont {Su}}, \bibinfo {author} {\bibfnamefont {G.~H.}\ \bibnamefont {Low}},\ and\ \bibinfo {author} {\bibfnamefont {N.}~\bibnamefont {Wiebe}},\ }\bibfield  {title} {\bibinfo {title} {Quantum singular value transformation and beyond: exponential improvements for quantum matrix arithmetics},\ }in\ \href@noop {} {\emph {\bibinfo {booktitle} {Proceedings of the 51st Annual ACM SIGACT Symposium on Theory of Computing (STOC)}}}\ (\bibinfo {year} {2019})\ pp.\ \bibinfo {pages} {193--204},\ \bibinfo {note} {arXiv:1806.01838}\BibitemShut {NoStop}%
\bibitem [{\citenamefont {Lloyd}\ \emph {et~al.}(2013)\citenamefont {Lloyd}, \citenamefont {Mohseni},\ and\ \citenamefont {Rebentrost}}]{lloyd2013quantum}%
  \BibitemOpen
  \bibfield  {author} {\bibinfo {author} {\bibfnamefont {S.}~\bibnamefont {Lloyd}}, \bibinfo {author} {\bibfnamefont {M.}~\bibnamefont {Mohseni}},\ and\ \bibinfo {author} {\bibfnamefont {P.}~\bibnamefont {Rebentrost}},\ }\bibfield  {title} {\bibinfo {title} {Quantum algorithms for supervised and unsupervised machine learning},\ }\href@noop {} {\bibfield  {journal} {\bibinfo  {journal} {arXiv:1307.0411}\ } (\bibinfo {year} {2013})}\BibitemShut {NoStop}%
\bibitem [{\citenamefont {Lloyd}\ \emph {et~al.}(2014)\citenamefont {Lloyd}, \citenamefont {Mohseni},\ and\ \citenamefont {Rebentrost}}]{lloyd2014quantum}%
  \BibitemOpen
  \bibfield  {author} {\bibinfo {author} {\bibfnamefont {S.}~\bibnamefont {Lloyd}}, \bibinfo {author} {\bibfnamefont {M.}~\bibnamefont {Mohseni}},\ and\ \bibinfo {author} {\bibfnamefont {P.}~\bibnamefont {Rebentrost}},\ }\bibfield  {title} {\bibinfo {title} {Quantum principal component analysis},\ }\href@noop {} {\bibfield  {journal} {\bibinfo  {journal} {Nature Physics}\ }\textbf {\bibinfo {volume} {10}},\ \bibinfo {pages} {631} (\bibinfo {year} {2014})}\BibitemShut {NoStop}%
\bibitem [{\citenamefont {Kerenidis}\ \emph {et~al.}(2019)\citenamefont {Kerenidis}, \citenamefont {Landman}, \citenamefont {Luongo},\ and\ \citenamefont {Prakash}}]{kerenidis2019qmeans}%
  \BibitemOpen
  \bibfield  {author} {\bibinfo {author} {\bibfnamefont {I.}~\bibnamefont {Kerenidis}}, \bibinfo {author} {\bibfnamefont {J.}~\bibnamefont {Landman}}, \bibinfo {author} {\bibfnamefont {A.}~\bibnamefont {Luongo}},\ and\ \bibinfo {author} {\bibfnamefont {A.}~\bibnamefont {Prakash}},\ }\bibfield  {title} {\bibinfo {title} {q-means: A quantum algorithm for unsupervised machine learning},\ }in\ \href@noop {} {\emph {\bibinfo {booktitle} {Advances in Neural Information Processing Systems (NeurIPS)}}},\ Vol.~\bibinfo {volume} {32}\ (\bibinfo {year} {2019})\BibitemShut {NoStop}%
\bibitem [{\citenamefont {Kerenidis}\ \emph {et~al.}(2020)\citenamefont {Kerenidis}, \citenamefont {Landman},\ and\ \citenamefont {Prakash}}]{kerenidis2020cnn}%
  \BibitemOpen
  \bibfield  {author} {\bibinfo {author} {\bibfnamefont {I.}~\bibnamefont {Kerenidis}}, \bibinfo {author} {\bibfnamefont {J.}~\bibnamefont {Landman}},\ and\ \bibinfo {author} {\bibfnamefont {A.}~\bibnamefont {Prakash}},\ }\bibfield  {title} {\bibinfo {title} {Quantum algorithms for deep convolutional neural networks},\ }in\ \href@noop {} {\emph {\bibinfo {booktitle} {International Conference on Learning Representations (ICLR)}}}\ (\bibinfo {year} {2020})\BibitemShut {NoStop}%
\bibitem [{\citenamefont {Tang}(2019)}]{tang2019recommendation}%
  \BibitemOpen
  \bibfield  {author} {\bibinfo {author} {\bibfnamefont {E.}~\bibnamefont {Tang}},\ }\bibfield  {title} {\bibinfo {title} {A quantum-inspired classical algorithm for recommendation systems},\ }in\ \href@noop {} {\emph {\bibinfo {booktitle} {Proceedings of the 51st Annual ACM SIGACT Symposium on Theory of Computing (STOC)}}}\ (\bibinfo {year} {2019})\ pp.\ \bibinfo {pages} {217--228},\ \bibinfo {note} {arXiv:1807.04271}\BibitemShut {NoStop}%
\bibitem [{\citenamefont {Tang}(2021)}]{tang2021pca}%
  \BibitemOpen
  \bibfield  {author} {\bibinfo {author} {\bibfnamefont {E.}~\bibnamefont {Tang}},\ }\bibfield  {title} {\bibinfo {title} {Quantum principal component analysis only achieves an exponential speedup because of its state preparation assumptions},\ }\href@noop {} {\bibfield  {journal} {\bibinfo  {journal} {Physical Review Letters}\ }\textbf {\bibinfo {volume} {127}},\ \bibinfo {pages} {060503} (\bibinfo {year} {2021})},\ \bibinfo {note} {arXiv:1811.00414}\BibitemShut {NoStop}%
\bibitem [{\citenamefont {Chia}\ \emph {et~al.}(2022)\citenamefont {Chia}, \citenamefont {Gily{\'e}n}, \citenamefont {Li}, \citenamefont {Lin}, \citenamefont {Tang},\ and\ \citenamefont {Wang}}]{chia2022svt}%
  \BibitemOpen
  \bibfield  {author} {\bibinfo {author} {\bibfnamefont {N.-H.}\ \bibnamefont {Chia}}, \bibinfo {author} {\bibfnamefont {A.}~\bibnamefont {Gily{\'e}n}}, \bibinfo {author} {\bibfnamefont {T.}~\bibnamefont {Li}}, \bibinfo {author} {\bibfnamefont {H.-H.}\ \bibnamefont {Lin}}, \bibinfo {author} {\bibfnamefont {E.}~\bibnamefont {Tang}},\ and\ \bibinfo {author} {\bibfnamefont {C.}~\bibnamefont {Wang}},\ }\bibfield  {title} {\bibinfo {title} {Sampling-based sublinear low-rank matrix arithmetic framework for dequantizing quantum machine learning},\ }\href@noop {} {\bibfield  {journal} {\bibinfo  {journal} {Journal of the ACM}\ }\textbf {\bibinfo {volume} {69}},\ \bibinfo {pages} {1} (\bibinfo {year} {2022})},\ \bibinfo {note} {arXiv:1910.06151}\BibitemShut {NoStop}%
\bibitem [{\citenamefont {Zhao}\ \emph {et~al.}(2026)\citenamefont {Zhao}, \citenamefont {Zlokapa}, \citenamefont {Neven}, \citenamefont {Babbush}, \citenamefont {Preskill}, \citenamefont {McClean},\ and\ \citenamefont {Huang}}]{zhao2026exponential}%
  \BibitemOpen
  \bibfield  {author} {\bibinfo {author} {\bibfnamefont {H.}~\bibnamefont {Zhao}}, \bibinfo {author} {\bibfnamefont {A.}~\bibnamefont {Zlokapa}}, \bibinfo {author} {\bibfnamefont {H.}~\bibnamefont {Neven}}, \bibinfo {author} {\bibfnamefont {R.}~\bibnamefont {Babbush}}, \bibinfo {author} {\bibfnamefont {J.}~\bibnamefont {Preskill}}, \bibinfo {author} {\bibfnamefont {J.~R.}\ \bibnamefont {McClean}},\ and\ \bibinfo {author} {\bibfnamefont {H.-Y.}\ \bibnamefont {Huang}},\ }\bibfield  {title} {\bibinfo {title} {Exponential quantum advantage in processing massive classical data},\ }\href@noop {} {\bibfield  {journal} {\bibinfo  {journal} {arXiv preprint arXiv:2604.07639}\ } (\bibinfo {year} {2026})}\BibitemShut {NoStop}%
\bibitem [{\citenamefont {Mitarai}\ \emph {et~al.}(2018)\citenamefont {Mitarai}, \citenamefont {Negoro}, \citenamefont {Kitagawa},\ and\ \citenamefont {Fujii}}]{mitarai2018quantum}%
  \BibitemOpen
  \bibfield  {author} {\bibinfo {author} {\bibfnamefont {K.}~\bibnamefont {Mitarai}}, \bibinfo {author} {\bibfnamefont {M.}~\bibnamefont {Negoro}}, \bibinfo {author} {\bibfnamefont {M.}~\bibnamefont {Kitagawa}},\ and\ \bibinfo {author} {\bibfnamefont {K.}~\bibnamefont {Fujii}},\ }\bibfield  {title} {\bibinfo {title} {Quantum circuit learning},\ }\href@noop {} {\bibfield  {journal} {\bibinfo  {journal} {Physical Review A}\ }\textbf {\bibinfo {volume} {98}},\ \bibinfo {pages} {032309} (\bibinfo {year} {2018})}\BibitemShut {NoStop}%
\bibitem [{\citenamefont {Schuld}\ and\ \citenamefont {Killoran}(2019)}]{schuld2019quantum}%
  \BibitemOpen
  \bibfield  {author} {\bibinfo {author} {\bibfnamefont {M.}~\bibnamefont {Schuld}}\ and\ \bibinfo {author} {\bibfnamefont {N.}~\bibnamefont {Killoran}},\ }\bibfield  {title} {\bibinfo {title} {Quantum machine learning in feature {Hilbert} spaces},\ }\href@noop {} {\bibfield  {journal} {\bibinfo  {journal} {Physical Review Letters}\ }\textbf {\bibinfo {volume} {122}},\ \bibinfo {pages} {040504} (\bibinfo {year} {2019})}\BibitemShut {NoStop}%
\bibitem [{\citenamefont {Benedetti}\ \emph {et~al.}(2019)\citenamefont {Benedetti}, \citenamefont {Lloyd}, \citenamefont {Sack},\ and\ \citenamefont {Fiorentini}}]{benedetti2019parameterized}%
  \BibitemOpen
  \bibfield  {author} {\bibinfo {author} {\bibfnamefont {M.}~\bibnamefont {Benedetti}}, \bibinfo {author} {\bibfnamefont {E.}~\bibnamefont {Lloyd}}, \bibinfo {author} {\bibfnamefont {S.}~\bibnamefont {Sack}},\ and\ \bibinfo {author} {\bibfnamefont {M.}~\bibnamefont {Fiorentini}},\ }\bibfield  {title} {\bibinfo {title} {Parameterized quantum circuits as machine learning models},\ }\href@noop {} {\bibfield  {journal} {\bibinfo  {journal} {Quantum Science and Technology}\ }\textbf {\bibinfo {volume} {4}},\ \bibinfo {pages} {043001} (\bibinfo {year} {2019})}\BibitemShut {NoStop}%
\bibitem [{\citenamefont {P{\'e}rez-Salinas}\ \emph {et~al.}(2020)\citenamefont {P{\'e}rez-Salinas}, \citenamefont {Cervera-Lierta}, \citenamefont {Gil-Fuster},\ and\ \citenamefont {Latorre}}]{perez2020data}%
  \BibitemOpen
  \bibfield  {author} {\bibinfo {author} {\bibfnamefont {A.}~\bibnamefont {P{\'e}rez-Salinas}}, \bibinfo {author} {\bibfnamefont {A.}~\bibnamefont {Cervera-Lierta}}, \bibinfo {author} {\bibfnamefont {E.}~\bibnamefont {Gil-Fuster}},\ and\ \bibinfo {author} {\bibfnamefont {J.~I.}\ \bibnamefont {Latorre}},\ }\bibfield  {title} {\bibinfo {title} {Data re-uploading for a universal quantum classifier},\ }\href@noop {} {\bibfield  {journal} {\bibinfo  {journal} {Quantum}\ }\textbf {\bibinfo {volume} {4}},\ \bibinfo {pages} {226} (\bibinfo {year} {2020})}\BibitemShut {NoStop}%
\bibitem [{\citenamefont {Biamonte}\ \emph {et~al.}(2017)\citenamefont {Biamonte}, \citenamefont {Wittek}, \citenamefont {Pancotti}, \citenamefont {Rebentrost}, \citenamefont {Wiebe},\ and\ \citenamefont {Lloyd}}]{biamonte2017quantum}%
  \BibitemOpen
  \bibfield  {author} {\bibinfo {author} {\bibfnamefont {J.}~\bibnamefont {Biamonte}}, \bibinfo {author} {\bibfnamefont {P.}~\bibnamefont {Wittek}}, \bibinfo {author} {\bibfnamefont {N.}~\bibnamefont {Pancotti}}, \bibinfo {author} {\bibfnamefont {P.}~\bibnamefont {Rebentrost}}, \bibinfo {author} {\bibfnamefont {N.}~\bibnamefont {Wiebe}},\ and\ \bibinfo {author} {\bibfnamefont {S.}~\bibnamefont {Lloyd}},\ }\bibfield  {title} {\bibinfo {title} {Quantum machine learning},\ }\href@noop {} {\bibfield  {journal} {\bibinfo  {journal} {Nature}\ }\textbf {\bibinfo {volume} {549}},\ \bibinfo {pages} {195} (\bibinfo {year} {2017})}\BibitemShut {NoStop}%
\bibitem [{\citenamefont {Cerezo}\ \emph {et~al.}(2021{\natexlab{a}})\citenamefont {Cerezo}, \citenamefont {Arrasmith}, \citenamefont {Babbush}, \citenamefont {Benjamin}, \citenamefont {Endo}, \citenamefont {Fujii}, \citenamefont {McClean}, \citenamefont {Mitarai}, \citenamefont {Yuan}, \citenamefont {Cincio} \emph {et~al.}}]{cerezo2021variational}%
  \BibitemOpen
  \bibfield  {author} {\bibinfo {author} {\bibfnamefont {M.}~\bibnamefont {Cerezo}}, \bibinfo {author} {\bibfnamefont {A.}~\bibnamefont {Arrasmith}}, \bibinfo {author} {\bibfnamefont {R.}~\bibnamefont {Babbush}}, \bibinfo {author} {\bibfnamefont {S.~C.}\ \bibnamefont {Benjamin}}, \bibinfo {author} {\bibfnamefont {S.}~\bibnamefont {Endo}}, \bibinfo {author} {\bibfnamefont {K.}~\bibnamefont {Fujii}}, \bibinfo {author} {\bibfnamefont {J.~R.}\ \bibnamefont {McClean}}, \bibinfo {author} {\bibfnamefont {K.}~\bibnamefont {Mitarai}}, \bibinfo {author} {\bibfnamefont {X.}~\bibnamefont {Yuan}}, \bibinfo {author} {\bibfnamefont {L.}~\bibnamefont {Cincio}}, \emph {et~al.},\ }\bibfield  {title} {\bibinfo {title} {Variational quantum algorithms},\ }\href@noop {} {\bibfield  {journal} {\bibinfo  {journal} {Nature Reviews Physics}\ }\textbf {\bibinfo {volume} {3}},\ \bibinfo {pages} {625} (\bibinfo {year} {2021}{\natexlab{a}})}\BibitemShut {NoStop}%
\bibitem [{\citenamefont {Farhi}\ and\ \citenamefont {Neven}(2018)}]{farhi2018classification}%
  \BibitemOpen
  \bibfield  {author} {\bibinfo {author} {\bibfnamefont {E.}~\bibnamefont {Farhi}}\ and\ \bibinfo {author} {\bibfnamefont {H.}~\bibnamefont {Neven}},\ }\bibfield  {title} {\bibinfo {title} {Classification with quantum neural networks on near term processors},\ }\href@noop {} {\bibfield  {journal} {\bibinfo  {journal} {arXiv:1802.06002}\ } (\bibinfo {year} {2018})}\BibitemShut {NoStop}%
\bibitem [{\citenamefont {Havl{\'i}{\v{c}}ek}\ \emph {et~al.}(2019)\citenamefont {Havl{\'i}{\v{c}}ek}, \citenamefont {C{\'o}rcoles}, \citenamefont {Temme}, \citenamefont {Harrow}, \citenamefont {Kandala}, \citenamefont {Chow},\ and\ \citenamefont {Gambetta}}]{havlicek2019supervised}%
  \BibitemOpen
  \bibfield  {author} {\bibinfo {author} {\bibfnamefont {V.}~\bibnamefont {Havl{\'i}{\v{c}}ek}}, \bibinfo {author} {\bibfnamefont {A.~D.}\ \bibnamefont {C{\'o}rcoles}}, \bibinfo {author} {\bibfnamefont {K.}~\bibnamefont {Temme}}, \bibinfo {author} {\bibfnamefont {A.~W.}\ \bibnamefont {Harrow}}, \bibinfo {author} {\bibfnamefont {A.}~\bibnamefont {Kandala}}, \bibinfo {author} {\bibfnamefont {J.~M.}\ \bibnamefont {Chow}},\ and\ \bibinfo {author} {\bibfnamefont {J.~M.}\ \bibnamefont {Gambetta}},\ }\bibfield  {title} {\bibinfo {title} {Supervised learning with quantum-enhanced feature spaces},\ }\href@noop {} {\bibfield  {journal} {\bibinfo  {journal} {Nature}\ }\textbf {\bibinfo {volume} {567}},\ \bibinfo {pages} {209} (\bibinfo {year} {2019})}\BibitemShut {NoStop}%
\bibitem [{\citenamefont {Abbas}\ \emph {et~al.}(2021)\citenamefont {Abbas}, \citenamefont {Sutter}, \citenamefont {Zoufal}, \citenamefont {Lucchi}, \citenamefont {Figalli},\ and\ \citenamefont {Woerner}}]{abbas2021power}%
  \BibitemOpen
  \bibfield  {author} {\bibinfo {author} {\bibfnamefont {A.}~\bibnamefont {Abbas}}, \bibinfo {author} {\bibfnamefont {D.}~\bibnamefont {Sutter}}, \bibinfo {author} {\bibfnamefont {C.}~\bibnamefont {Zoufal}}, \bibinfo {author} {\bibfnamefont {A.}~\bibnamefont {Lucchi}}, \bibinfo {author} {\bibfnamefont {A.}~\bibnamefont {Figalli}},\ and\ \bibinfo {author} {\bibfnamefont {S.}~\bibnamefont {Woerner}},\ }\bibfield  {title} {\bibinfo {title} {The power of quantum neural networks},\ }\href@noop {} {\bibfield  {journal} {\bibinfo  {journal} {Nature Computational Science}\ }\textbf {\bibinfo {volume} {1}},\ \bibinfo {pages} {403} (\bibinfo {year} {2021})}\BibitemShut {NoStop}%
\bibitem [{\citenamefont {Landman}\ \emph {et~al.}(2022)\citenamefont {Landman}, \citenamefont {Mathur}, \citenamefont {Li}, \citenamefont {Strahm}, \citenamefont {Kazdaghli}, \citenamefont {Prakash},\ and\ \citenamefont {Kerenidis}}]{landman2022quantum}%
  \BibitemOpen
  \bibfield  {author} {\bibinfo {author} {\bibfnamefont {J.}~\bibnamefont {Landman}}, \bibinfo {author} {\bibfnamefont {N.}~\bibnamefont {Mathur}}, \bibinfo {author} {\bibfnamefont {Y.~Y.}\ \bibnamefont {Li}}, \bibinfo {author} {\bibfnamefont {M.}~\bibnamefont {Strahm}}, \bibinfo {author} {\bibfnamefont {S.}~\bibnamefont {Kazdaghli}}, \bibinfo {author} {\bibfnamefont {A.}~\bibnamefont {Prakash}},\ and\ \bibinfo {author} {\bibfnamefont {I.}~\bibnamefont {Kerenidis}},\ }\bibfield  {title} {\bibinfo {title} {Quantum methods for neural networks and application to medical image classification},\ }\href@noop {} {\bibfield  {journal} {\bibinfo  {journal} {Quantum}\ }\textbf {\bibinfo {volume} {6}},\ \bibinfo {pages} {881} (\bibinfo {year} {2022})}\BibitemShut {NoStop}%
\bibitem [{\citenamefont {Dunjko}\ \emph {et~al.}(2016)\citenamefont {Dunjko}, \citenamefont {Taylor},\ and\ \citenamefont {Briegel}}]{dunjko2016quantum}%
  \BibitemOpen
  \bibfield  {author} {\bibinfo {author} {\bibfnamefont {V.}~\bibnamefont {Dunjko}}, \bibinfo {author} {\bibfnamefont {J.~M.}\ \bibnamefont {Taylor}},\ and\ \bibinfo {author} {\bibfnamefont {H.~J.}\ \bibnamefont {Briegel}},\ }\bibfield  {title} {\bibinfo {title} {Quantum-enhanced machine learning},\ }\href@noop {} {\bibfield  {journal} {\bibinfo  {journal} {Physical Review Letters}\ }\textbf {\bibinfo {volume} {117}},\ \bibinfo {pages} {130501} (\bibinfo {year} {2016})}\BibitemShut {NoStop}%
\bibitem [{\citenamefont {Jerbi}\ \emph {et~al.}(2021)\citenamefont {Jerbi}, \citenamefont {Gyurik}, \citenamefont {Marshall}, \citenamefont {Briegel},\ and\ \citenamefont {Dunjko}}]{jerbi2021parametrized}%
  \BibitemOpen
  \bibfield  {author} {\bibinfo {author} {\bibfnamefont {S.}~\bibnamefont {Jerbi}}, \bibinfo {author} {\bibfnamefont {C.}~\bibnamefont {Gyurik}}, \bibinfo {author} {\bibfnamefont {S.}~\bibnamefont {Marshall}}, \bibinfo {author} {\bibfnamefont {H.}~\bibnamefont {Briegel}},\ and\ \bibinfo {author} {\bibfnamefont {V.}~\bibnamefont {Dunjko}},\ }\bibfield  {title} {\bibinfo {title} {Parametrized quantum policies for reinforcement learning},\ }in\ \href@noop {} {\emph {\bibinfo {booktitle} {Advances in Neural Information Processing Systems (NeurIPS)}}},\ Vol.~\bibinfo {volume} {34}\ (\bibinfo {year} {2021})\ pp.\ \bibinfo {pages} {28362--28375}\BibitemShut {NoStop}%
\bibitem [{\citenamefont {Cherrat}\ \emph {et~al.}(2023)\citenamefont {Cherrat}, \citenamefont {Raj}, \citenamefont {Kerenidis}, \citenamefont {Shekhar}, \citenamefont {Wood}, \citenamefont {Dee}, \citenamefont {Chakrabarti}, \citenamefont {Chen}, \citenamefont {Herman}, \citenamefont {Hu} \emph {et~al.}}]{cherrat2023hedging}%
  \BibitemOpen
  \bibfield  {author} {\bibinfo {author} {\bibfnamefont {E.~A.}\ \bibnamefont {Cherrat}}, \bibinfo {author} {\bibfnamefont {S.}~\bibnamefont {Raj}}, \bibinfo {author} {\bibfnamefont {I.}~\bibnamefont {Kerenidis}}, \bibinfo {author} {\bibfnamefont {A.}~\bibnamefont {Shekhar}}, \bibinfo {author} {\bibfnamefont {B.}~\bibnamefont {Wood}}, \bibinfo {author} {\bibfnamefont {J.}~\bibnamefont {Dee}}, \bibinfo {author} {\bibfnamefont {S.}~\bibnamefont {Chakrabarti}}, \bibinfo {author} {\bibfnamefont {R.}~\bibnamefont {Chen}}, \bibinfo {author} {\bibfnamefont {D.}~\bibnamefont {Herman}}, \bibinfo {author} {\bibfnamefont {S.}~\bibnamefont {Hu}}, \emph {et~al.},\ }\bibfield  {title} {\bibinfo {title} {Quantum deep hedging},\ }\href@noop {} {\bibfield  {journal} {\bibinfo  {journal} {Quantum}\ }\textbf {\bibinfo {volume} {7}},\ \bibinfo {pages} {1191} (\bibinfo {year} {2023})}\BibitemShut {NoStop}%
\bibitem [{\citenamefont {Preskill}(2018)}]{preskill2018nisq}%
  \BibitemOpen
  \bibfield  {author} {\bibinfo {author} {\bibfnamefont {J.}~\bibnamefont {Preskill}},\ }\bibfield  {title} {\bibinfo {title} {Quantum computing in the {NISQ} era and beyond},\ }\href@noop {} {\bibfield  {journal} {\bibinfo  {journal} {Quantum}\ }\textbf {\bibinfo {volume} {2}},\ \bibinfo {pages} {79} (\bibinfo {year} {2018})}\BibitemShut {NoStop}%
\bibitem [{\citenamefont {Schuld}\ \emph {et~al.}(2014)\citenamefont {Schuld}, \citenamefont {Sinayskiy},\ and\ \citenamefont {Petruccione}}]{schuld2014quest}%
  \BibitemOpen
  \bibfield  {author} {\bibinfo {author} {\bibfnamefont {M.}~\bibnamefont {Schuld}}, \bibinfo {author} {\bibfnamefont {I.}~\bibnamefont {Sinayskiy}},\ and\ \bibinfo {author} {\bibfnamefont {F.}~\bibnamefont {Petruccione}},\ }\bibfield  {title} {\bibinfo {title} {The quest for a quantum neural network},\ }\href@noop {} {\bibfield  {journal} {\bibinfo  {journal} {Quantum Information Processing}\ }\textbf {\bibinfo {volume} {13}},\ \bibinfo {pages} {2567} (\bibinfo {year} {2014})}\BibitemShut {NoStop}%
\bibitem [{\citenamefont {McClean}\ \emph {et~al.}(2018)\citenamefont {McClean}, \citenamefont {Boixo}, \citenamefont {Smelyanskiy}, \citenamefont {Babbush},\ and\ \citenamefont {Neven}}]{mcclean2018barren}%
  \BibitemOpen
  \bibfield  {author} {\bibinfo {author} {\bibfnamefont {J.~R.}\ \bibnamefont {McClean}}, \bibinfo {author} {\bibfnamefont {S.}~\bibnamefont {Boixo}}, \bibinfo {author} {\bibfnamefont {V.~N.}\ \bibnamefont {Smelyanskiy}}, \bibinfo {author} {\bibfnamefont {R.}~\bibnamefont {Babbush}},\ and\ \bibinfo {author} {\bibfnamefont {H.}~\bibnamefont {Neven}},\ }\bibfield  {title} {\bibinfo {title} {Barren plateaus in quantum neural network training landscapes},\ }\href@noop {} {\bibfield  {journal} {\bibinfo  {journal} {Nature Communications}\ }\textbf {\bibinfo {volume} {9}},\ \bibinfo {pages} {4812} (\bibinfo {year} {2018})}\BibitemShut {NoStop}%
\bibitem [{\citenamefont {Cerezo}\ \emph {et~al.}(2021{\natexlab{b}})\citenamefont {Cerezo}, \citenamefont {Sone}, \citenamefont {Volkoff}, \citenamefont {Cincio},\ and\ \citenamefont {Coles}}]{cerezo2021cost}%
  \BibitemOpen
  \bibfield  {author} {\bibinfo {author} {\bibfnamefont {M.}~\bibnamefont {Cerezo}}, \bibinfo {author} {\bibfnamefont {A.}~\bibnamefont {Sone}}, \bibinfo {author} {\bibfnamefont {T.}~\bibnamefont {Volkoff}}, \bibinfo {author} {\bibfnamefont {L.}~\bibnamefont {Cincio}},\ and\ \bibinfo {author} {\bibfnamefont {P.~J.}\ \bibnamefont {Coles}},\ }\bibfield  {title} {\bibinfo {title} {Cost function dependent barren plateaus in shallow parametrized quantum circuits},\ }\href@noop {} {\bibfield  {journal} {\bibinfo  {journal} {Nature Communications}\ }\textbf {\bibinfo {volume} {12}},\ \bibinfo {pages} {1791} (\bibinfo {year} {2021}{\natexlab{b}})}\BibitemShut {NoStop}%
\bibitem [{\citenamefont {Cong}\ \emph {et~al.}(2019)\citenamefont {Cong}, \citenamefont {Choi},\ and\ \citenamefont {Lukin}}]{cong2019qcnn}%
  \BibitemOpen
  \bibfield  {author} {\bibinfo {author} {\bibfnamefont {I.}~\bibnamefont {Cong}}, \bibinfo {author} {\bibfnamefont {S.}~\bibnamefont {Choi}},\ and\ \bibinfo {author} {\bibfnamefont {M.~D.}\ \bibnamefont {Lukin}},\ }\bibfield  {title} {\bibinfo {title} {Quantum convolutional neural networks},\ }\href@noop {} {\bibfield  {journal} {\bibinfo  {journal} {Nature Physics}\ }\textbf {\bibinfo {volume} {15}},\ \bibinfo {pages} {1273} (\bibinfo {year} {2019})}\BibitemShut {NoStop}%
\bibitem [{\citenamefont {Pesah}\ \emph {et~al.}(2021)\citenamefont {Pesah}, \citenamefont {Cerezo}, \citenamefont {Wang}, \citenamefont {Volkoff}, \citenamefont {Sornborger},\ and\ \citenamefont {Coles}}]{pesah2021absence}%
  \BibitemOpen
  \bibfield  {author} {\bibinfo {author} {\bibfnamefont {A.}~\bibnamefont {Pesah}}, \bibinfo {author} {\bibfnamefont {M.}~\bibnamefont {Cerezo}}, \bibinfo {author} {\bibfnamefont {S.}~\bibnamefont {Wang}}, \bibinfo {author} {\bibfnamefont {T.}~\bibnamefont {Volkoff}}, \bibinfo {author} {\bibfnamefont {A.~T.}\ \bibnamefont {Sornborger}},\ and\ \bibinfo {author} {\bibfnamefont {P.~J.}\ \bibnamefont {Coles}},\ }\bibfield  {title} {\bibinfo {title} {Absence of barren plateaus in quantum convolutional neural networks},\ }\href@noop {} {\bibfield  {journal} {\bibinfo  {journal} {Physical Review X}\ }\textbf {\bibinfo {volume} {11}},\ \bibinfo {pages} {041011} (\bibinfo {year} {2021})}\BibitemShut {NoStop}%
\bibitem [{\citenamefont {Grant}\ \emph {et~al.}(2018)\citenamefont {Grant}, \citenamefont {Benedetti}, \citenamefont {Cao}, \citenamefont {Hallam}, \citenamefont {Lockhart}, \citenamefont {Stojevic}, \citenamefont {Green},\ and\ \citenamefont {Severini}}]{grant2018hierarchical}%
  \BibitemOpen
  \bibfield  {author} {\bibinfo {author} {\bibfnamefont {E.}~\bibnamefont {Grant}}, \bibinfo {author} {\bibfnamefont {M.}~\bibnamefont {Benedetti}}, \bibinfo {author} {\bibfnamefont {S.}~\bibnamefont {Cao}}, \bibinfo {author} {\bibfnamefont {A.}~\bibnamefont {Hallam}}, \bibinfo {author} {\bibfnamefont {J.}~\bibnamefont {Lockhart}}, \bibinfo {author} {\bibfnamefont {V.}~\bibnamefont {Stojevic}}, \bibinfo {author} {\bibfnamefont {A.~G.}\ \bibnamefont {Green}},\ and\ \bibinfo {author} {\bibfnamefont {S.}~\bibnamefont {Severini}},\ }\bibfield  {title} {\bibinfo {title} {Hierarchical quantum classifiers},\ }\href@noop {} {\bibfield  {journal} {\bibinfo  {journal} {npj Quantum Information}\ }\textbf {\bibinfo {volume} {4}},\ \bibinfo {pages} {65} (\bibinfo {year} {2018})}\BibitemShut {NoStop}%
\bibitem [{\citenamefont {Meyer}\ \emph {et~al.}(2023)\citenamefont {Meyer}, \citenamefont {Mularski}, \citenamefont {Gil-Fuster}, \citenamefont {Mele}, \citenamefont {Arzani}, \citenamefont {Wilms},\ and\ \citenamefont {Eisert}}]{meyer2023exploiting}%
  \BibitemOpen
  \bibfield  {author} {\bibinfo {author} {\bibfnamefont {J.~J.}\ \bibnamefont {Meyer}}, \bibinfo {author} {\bibfnamefont {M.}~\bibnamefont {Mularski}}, \bibinfo {author} {\bibfnamefont {E.}~\bibnamefont {Gil-Fuster}}, \bibinfo {author} {\bibfnamefont {A.~A.}\ \bibnamefont {Mele}}, \bibinfo {author} {\bibfnamefont {F.}~\bibnamefont {Arzani}}, \bibinfo {author} {\bibfnamefont {A.}~\bibnamefont {Wilms}},\ and\ \bibinfo {author} {\bibfnamefont {J.}~\bibnamefont {Eisert}},\ }\bibfield  {title} {\bibinfo {title} {Exploiting symmetry in variational quantum machine learning},\ }\href@noop {} {\bibfield  {journal} {\bibinfo  {journal} {PRX Quantum}\ }\textbf {\bibinfo {volume} {4}},\ \bibinfo {pages} {010328} (\bibinfo {year} {2023})}\BibitemShut {NoStop}%
\bibitem [{\citenamefont {Larocca}\ \emph {et~al.}(2022{\natexlab{a}})\citenamefont {Larocca}, \citenamefont {Sauvage}, \citenamefont {Sbahi}, \citenamefont {Verdon}, \citenamefont {Coles},\ and\ \citenamefont {Cerezo}}]{larocca2022group}%
  \BibitemOpen
  \bibfield  {author} {\bibinfo {author} {\bibfnamefont {M.}~\bibnamefont {Larocca}}, \bibinfo {author} {\bibfnamefont {F.}~\bibnamefont {Sauvage}}, \bibinfo {author} {\bibfnamefont {F.~M.}\ \bibnamefont {Sbahi}}, \bibinfo {author} {\bibfnamefont {G.}~\bibnamefont {Verdon}}, \bibinfo {author} {\bibfnamefont {P.~J.}\ \bibnamefont {Coles}},\ and\ \bibinfo {author} {\bibfnamefont {M.}~\bibnamefont {Cerezo}},\ }\bibfield  {title} {\bibinfo {title} {Group-invariant quantum machine learning},\ }\href@noop {} {\bibfield  {journal} {\bibinfo  {journal} {PRX Quantum}\ }\textbf {\bibinfo {volume} {3}},\ \bibinfo {pages} {030341} (\bibinfo {year} {2022}{\natexlab{a}})}\BibitemShut {NoStop}%
\bibitem [{\citenamefont {Fontana}\ \emph {et~al.}(2024)\citenamefont {Fontana}, \citenamefont {Herman}, \citenamefont {Chakrabarti}, \citenamefont {Kumar}, \citenamefont {Yalovetzky}, \citenamefont {Heredge}, \citenamefont {Sureshbabu},\ and\ \citenamefont {Pistoia}}]{fontana2024adjoint}%
  \BibitemOpen
  \bibfield  {author} {\bibinfo {author} {\bibfnamefont {E.}~\bibnamefont {Fontana}}, \bibinfo {author} {\bibfnamefont {D.}~\bibnamefont {Herman}}, \bibinfo {author} {\bibfnamefont {S.}~\bibnamefont {Chakrabarti}}, \bibinfo {author} {\bibfnamefont {N.}~\bibnamefont {Kumar}}, \bibinfo {author} {\bibfnamefont {R.}~\bibnamefont {Yalovetzky}}, \bibinfo {author} {\bibfnamefont {J.}~\bibnamefont {Heredge}}, \bibinfo {author} {\bibfnamefont {S.~H.}\ \bibnamefont {Sureshbabu}},\ and\ \bibinfo {author} {\bibfnamefont {M.}~\bibnamefont {Pistoia}},\ }\bibfield  {title} {\bibinfo {title} {Characterizing barren plateaus in quantum ans{\"a}tze with the adjoint representation},\ }\href@noop {} {\bibfield  {journal} {\bibinfo  {journal} {Nature Communications}\ }\textbf {\bibinfo {volume} {15}},\ \bibinfo {pages} {7171} (\bibinfo {year} {2024})}\BibitemShut {NoStop}%
\bibitem [{\citenamefont {Larocca}\ \emph {et~al.}(2022{\natexlab{b}})\citenamefont {Larocca}, \citenamefont {Czarnik}, \citenamefont {Sharma}, \citenamefont {Muraleedharan}, \citenamefont {Coles},\ and\ \citenamefont {Cerezo}}]{larocca2021diagnosing}%
  \BibitemOpen
  \bibfield  {author} {\bibinfo {author} {\bibfnamefont {M.}~\bibnamefont {Larocca}}, \bibinfo {author} {\bibfnamefont {P.}~\bibnamefont {Czarnik}}, \bibinfo {author} {\bibfnamefont {K.}~\bibnamefont {Sharma}}, \bibinfo {author} {\bibfnamefont {G.}~\bibnamefont {Muraleedharan}}, \bibinfo {author} {\bibfnamefont {P.~J.}\ \bibnamefont {Coles}},\ and\ \bibinfo {author} {\bibfnamefont {M.}~\bibnamefont {Cerezo}},\ }\bibfield  {title} {\bibinfo {title} {Diagnosing barren plateaus with tools from quantum optimal control},\ }\href@noop {} {\bibfield  {journal} {\bibinfo  {journal} {Quantum}\ }\textbf {\bibinfo {volume} {6}},\ \bibinfo {pages} {824} (\bibinfo {year} {2022}{\natexlab{b}})}\BibitemShut {NoStop}%
\bibitem [{\citenamefont {Cerezo}\ \emph {et~al.}(2025)\citenamefont {Cerezo}, \citenamefont {Larocca}, \citenamefont {Garc{\'i}a-Mart{\'i}n}, \citenamefont {Diaz}, \citenamefont {Braccia}, \citenamefont {Fontana}, \citenamefont {Rudolph}, \citenamefont {Bermejo}, \citenamefont {Ijaz}, \citenamefont {Thanasilp} \emph {et~al.}}]{cerezo2025does}%
  \BibitemOpen
  \bibfield  {author} {\bibinfo {author} {\bibfnamefont {M.}~\bibnamefont {Cerezo}}, \bibinfo {author} {\bibfnamefont {M.}~\bibnamefont {Larocca}}, \bibinfo {author} {\bibfnamefont {D.}~\bibnamefont {Garc{\'i}a-Mart{\'i}n}}, \bibinfo {author} {\bibfnamefont {N.~L.}\ \bibnamefont {Diaz}}, \bibinfo {author} {\bibfnamefont {P.}~\bibnamefont {Braccia}}, \bibinfo {author} {\bibfnamefont {E.}~\bibnamefont {Fontana}}, \bibinfo {author} {\bibfnamefont {M.~S.}\ \bibnamefont {Rudolph}}, \bibinfo {author} {\bibfnamefont {P.}~\bibnamefont {Bermejo}}, \bibinfo {author} {\bibfnamefont {A.}~\bibnamefont {Ijaz}}, \bibinfo {author} {\bibfnamefont {S.}~\bibnamefont {Thanasilp}}, \emph {et~al.},\ }\bibfield  {title} {\bibinfo {title} {Does provable absence of barren plateaus imply classical simulability?},\ }\href@noop {} {\bibfield  {journal} {\bibinfo  {journal} {Nature Communications}\ }\textbf {\bibinfo {volume} {16}},\ \bibinfo {pages} {7907} (\bibinfo {year} {2025})}\BibitemShut {NoStop}%
\bibitem [{\citenamefont {Bermejo}\ \emph {et~al.}(2026)\citenamefont {Bermejo}, \citenamefont {Braccia}, \citenamefont {Rudolph}, \citenamefont {Holmes}, \citenamefont {Cincio},\ and\ \citenamefont {Cerezo}}]{bermejo2024qcnn}%
  \BibitemOpen
  \bibfield  {author} {\bibinfo {author} {\bibfnamefont {P.}~\bibnamefont {Bermejo}}, \bibinfo {author} {\bibfnamefont {P.}~\bibnamefont {Braccia}}, \bibinfo {author} {\bibfnamefont {M.~S.}\ \bibnamefont {Rudolph}}, \bibinfo {author} {\bibfnamefont {Z.}~\bibnamefont {Holmes}}, \bibinfo {author} {\bibfnamefont {L.}~\bibnamefont {Cincio}},\ and\ \bibinfo {author} {\bibfnamefont {M.}~\bibnamefont {Cerezo}},\ }\bibfield  {title} {\bibinfo {title} {Quantum convolutional neural networks are (effectively) classically simulable},\ }\href@noop {} {\bibfield  {journal} {\bibinfo  {journal} {PRX Quantum 7, 020304}\ } (\bibinfo {year} {2026})}\BibitemShut {NoStop}%
\bibitem [{\citenamefont {Anschuetz}\ \emph {et~al.}(2023)\citenamefont {Anschuetz}, \citenamefont {Bauer}, \citenamefont {Kiani},\ and\ \citenamefont {Lloyd}}]{anschuetz2023symmetric}%
  \BibitemOpen
  \bibfield  {author} {\bibinfo {author} {\bibfnamefont {E.~R.}\ \bibnamefont {Anschuetz}}, \bibinfo {author} {\bibfnamefont {A.}~\bibnamefont {Bauer}}, \bibinfo {author} {\bibfnamefont {B.~T.}\ \bibnamefont {Kiani}},\ and\ \bibinfo {author} {\bibfnamefont {S.}~\bibnamefont {Lloyd}},\ }\bibfield  {title} {\bibinfo {title} {Efficient classical algorithms for simulating symmetric quantum systems},\ }\href@noop {} {\bibfield  {journal} {\bibinfo  {journal} {Quantum}\ }\textbf {\bibinfo {volume} {7}},\ \bibinfo {pages} {1189} (\bibinfo {year} {2023})}\BibitemShut {NoStop}%
\bibitem [{\citenamefont {Kerenidis}\ \emph {et~al.}(2022)\citenamefont {Kerenidis}, \citenamefont {Landman},\ and\ \citenamefont {Mathur}}]{kerenidis2022quantum}%
  \BibitemOpen
  \bibfield  {author} {\bibinfo {author} {\bibfnamefont {I.}~\bibnamefont {Kerenidis}}, \bibinfo {author} {\bibfnamefont {J.}~\bibnamefont {Landman}},\ and\ \bibinfo {author} {\bibfnamefont {N.}~\bibnamefont {Mathur}},\ }\bibfield  {title} {\bibinfo {title} {Quantum and classical algorithms for orthogonal neural networks},\ }\href@noop {} {\bibfield  {journal} {\bibinfo  {journal} {arXiv:2106.07198}\ } (\bibinfo {year} {2022})}\BibitemShut {NoStop}%
\bibitem [{\citenamefont {Cherrat}\ \emph {et~al.}(2024)\citenamefont {Cherrat}, \citenamefont {Kerenidis}, \citenamefont {Mathur}, \citenamefont {Landman}, \citenamefont {Strahm},\ and\ \citenamefont {Li}}]{cherrat2024quantum}%
  \BibitemOpen
  \bibfield  {author} {\bibinfo {author} {\bibfnamefont {E.~A.}\ \bibnamefont {Cherrat}}, \bibinfo {author} {\bibfnamefont {I.}~\bibnamefont {Kerenidis}}, \bibinfo {author} {\bibfnamefont {N.}~\bibnamefont {Mathur}}, \bibinfo {author} {\bibfnamefont {J.}~\bibnamefont {Landman}}, \bibinfo {author} {\bibfnamefont {M.}~\bibnamefont {Strahm}},\ and\ \bibinfo {author} {\bibfnamefont {Y.~Y.}\ \bibnamefont {Li}},\ }\bibfield  {title} {\bibinfo {title} {Quantum vision transformers},\ }\href@noop {} {\bibfield  {journal} {\bibinfo  {journal} {Quantum}\ }\textbf {\bibinfo {volume} {8}},\ \bibinfo {pages} {1265} (\bibinfo {year} {2024})}\BibitemShut {NoStop}%
\bibitem [{\citenamefont {Thakkar}\ \emph {et~al.}(2024)\citenamefont {Thakkar}, \citenamefont {Kazdaghli}, \citenamefont {Mathur}, \citenamefont {Kerenidis}, \citenamefont {Ferreira-Martins},\ and\ \citenamefont {Brito}}]{thakkar2024forecasting}%
  \BibitemOpen
  \bibfield  {author} {\bibinfo {author} {\bibfnamefont {S.}~\bibnamefont {Thakkar}}, \bibinfo {author} {\bibfnamefont {S.}~\bibnamefont {Kazdaghli}}, \bibinfo {author} {\bibfnamefont {N.}~\bibnamefont {Mathur}}, \bibinfo {author} {\bibfnamefont {I.}~\bibnamefont {Kerenidis}}, \bibinfo {author} {\bibfnamefont {A.~J.}\ \bibnamefont {Ferreira-Martins}},\ and\ \bibinfo {author} {\bibfnamefont {S.}~\bibnamefont {Brito}},\ }\bibfield  {title} {\bibinfo {title} {Improved financial forecasting via quantum machine learning},\ }\href@noop {} {\bibfield  {journal} {\bibinfo  {journal} {Quantum Machine Intelligence}\ }\textbf {\bibinfo {volume} {6}},\ \bibinfo {pages} {27} (\bibinfo {year} {2024})}\BibitemShut {NoStop}%
\bibitem [{\citenamefont {Kazdaghli}\ \emph {et~al.}(2023)\citenamefont {Kazdaghli}, \citenamefont {Kerenidis}, \citenamefont {Kieckbusch},\ and\ \citenamefont {Teare}}]{kazdaghli2023data}%
  \BibitemOpen
  \bibfield  {author} {\bibinfo {author} {\bibfnamefont {S.}~\bibnamefont {Kazdaghli}}, \bibinfo {author} {\bibfnamefont {I.}~\bibnamefont {Kerenidis}}, \bibinfo {author} {\bibfnamefont {J.}~\bibnamefont {Kieckbusch}},\ and\ \bibinfo {author} {\bibfnamefont {P.}~\bibnamefont {Teare}},\ }\bibfield  {title} {\bibinfo {title} {Improved clinical data imputation via classical and quantum determinantal point processes},\ }\href@noop {} {\bibfield  {journal} {\bibinfo  {journal} {eLife}\ }\textbf {\bibinfo {volume} {12}},\ \bibinfo {pages} {(RP89947)} (\bibinfo {year} {2023})}\BibitemShut {NoStop}%
\bibitem [{\citenamefont {Johri}\ \emph {et~al.}(2021)\citenamefont {Johri}, \citenamefont {Debnath}, \citenamefont {Mocherla}, \citenamefont {Singh}, \citenamefont {Prakash}, \citenamefont {Kim},\ and\ \citenamefont {Kerenidis}}]{johri2021nearest}%
  \BibitemOpen
  \bibfield  {author} {\bibinfo {author} {\bibfnamefont {S.}~\bibnamefont {Johri}}, \bibinfo {author} {\bibfnamefont {S.}~\bibnamefont {Debnath}}, \bibinfo {author} {\bibfnamefont {A.}~\bibnamefont {Mocherla}}, \bibinfo {author} {\bibfnamefont {A.}~\bibnamefont {Singh}}, \bibinfo {author} {\bibfnamefont {A.}~\bibnamefont {Prakash}}, \bibinfo {author} {\bibfnamefont {J.}~\bibnamefont {Kim}},\ and\ \bibinfo {author} {\bibfnamefont {I.}~\bibnamefont {Kerenidis}},\ }\bibfield  {title} {\bibinfo {title} {Nearest centroid classification on a trapped ion quantum computer},\ }\href@noop {} {\bibfield  {journal} {\bibinfo  {journal} {npj Quantum Information}\ }\textbf {\bibinfo {volume} {7}},\ \bibinfo {pages} {122} (\bibinfo {year} {2021})}\BibitemShut {NoStop}%
\bibitem [{\citenamefont {Mathur}\ \emph {et~al.}(2026)\citenamefont {Mathur}, \citenamefont {Barkoutsos}, \citenamefont {Yamada}, \citenamefont {Roetteler},\ and\ \citenamefont {Kerenidis}}]{mathur2026scalable}%
  \BibitemOpen
  \bibfield  {author} {\bibinfo {author} {\bibfnamefont {N.}~\bibnamefont {Mathur}}, \bibinfo {author} {\bibfnamefont {P.~K.}\ \bibnamefont {Barkoutsos}}, \bibinfo {author} {\bibfnamefont {M.}~\bibnamefont {Yamada}}, \bibinfo {author} {\bibfnamefont {M.}~\bibnamefont {Roetteler}},\ and\ \bibinfo {author} {\bibfnamefont {I.}~\bibnamefont {Kerenidis}},\ }\bibfield  {title} {\bibinfo {title} {Scalable on-hardware training of quantum neural networks and application to clinical data imputation},\ }\href@noop {} {\bibfield  {journal} {\bibinfo  {journal} {arXiv:2606.03517}\ } (\bibinfo {year} {2026})}\BibitemShut {NoStop}%
\bibitem [{\citenamefont {Kerenidis}\ and\ \citenamefont {Prakash}(2022)}]{kerenidis2022subspace}%
  \BibitemOpen
  \bibfield  {author} {\bibinfo {author} {\bibfnamefont {I.}~\bibnamefont {Kerenidis}}\ and\ \bibinfo {author} {\bibfnamefont {A.}~\bibnamefont {Prakash}},\ }\bibfield  {title} {\bibinfo {title} {Quantum machine learning with subspace states},\ }\href@noop {} {\bibfield  {journal} {\bibinfo  {journal} {arXiv:2202.00054}\ } (\bibinfo {year} {2022})}\BibitemShut {NoStop}%
\bibitem [{\citenamefont {Jozsa}\ and\ \citenamefont {Miyake}(2008)}]{jozsa2008matchgates}%
  \BibitemOpen
  \bibfield  {author} {\bibinfo {author} {\bibfnamefont {R.}~\bibnamefont {Jozsa}}\ and\ \bibinfo {author} {\bibfnamefont {A.}~\bibnamefont {Miyake}},\ }\bibfield  {title} {\bibinfo {title} {Matchgates and classical simulation of quantum circuits},\ }\href@noop {} {\bibfield  {journal} {\bibinfo  {journal} {Proceedings of the Royal Society A}\ }\textbf {\bibinfo {volume} {464}},\ \bibinfo {pages} {3089} (\bibinfo {year} {2008})}\BibitemShut {NoStop}%
\bibitem [{\citenamefont {Monbroussou}\ \emph {et~al.}(2025)\citenamefont {Monbroussou}, \citenamefont {Mamon}, \citenamefont {Landman}, \citenamefont {Grilo}, \citenamefont {Kukla},\ and\ \citenamefont {Kashefi}}]{monbroussou2025trainability}%
  \BibitemOpen
  \bibfield  {author} {\bibinfo {author} {\bibfnamefont {L.}~\bibnamefont {Monbroussou}}, \bibinfo {author} {\bibfnamefont {E.~Z.}\ \bibnamefont {Mamon}}, \bibinfo {author} {\bibfnamefont {J.}~\bibnamefont {Landman}}, \bibinfo {author} {\bibfnamefont {A.~B.}\ \bibnamefont {Grilo}}, \bibinfo {author} {\bibfnamefont {R.}~\bibnamefont {Kukla}},\ and\ \bibinfo {author} {\bibfnamefont {E.}~\bibnamefont {Kashefi}},\ }\bibfield  {title} {\bibinfo {title} {Trainability and expressivity of {Hamming}-weight preserving quantum circuits for machine learning},\ }\href@noop {} {\bibfield  {journal} {\bibinfo  {journal} {Quantum}\ }\textbf {\bibinfo {volume} {9}},\ \bibinfo {pages} {1745} (\bibinfo {year} {2025})}\BibitemShut {NoStop}%
\bibitem [{\citenamefont {Dias}\ and\ \citenamefont {Koenig}(2024)}]{dias2024classical}%
  \BibitemOpen
  \bibfield  {author} {\bibinfo {author} {\bibfnamefont {B.}~\bibnamefont {Dias}}\ and\ \bibinfo {author} {\bibfnamefont {R.}~\bibnamefont {Koenig}},\ }\bibfield  {title} {\bibinfo {title} {Classical simulation of non-{Gaussian} fermionic circuits},\ }\href@noop {} {\bibfield  {journal} {\bibinfo  {journal} {Quantum}\ }\textbf {\bibinfo {volume} {8}},\ \bibinfo {pages} {1350} (\bibinfo {year} {2024})}\BibitemShut {NoStop}%
\bibitem [{\citenamefont {Oh}\ \emph {et~al.}(2026)\citenamefont {Oh}, \citenamefont {Oszmaniec}, \citenamefont {Reardon-Smith},\ and\ \citenamefont {Zimbor{\'a}s}}]{oh2026classical}%
  \BibitemOpen
  \bibfield  {author} {\bibinfo {author} {\bibfnamefont {C.}~\bibnamefont {Oh}}, \bibinfo {author} {\bibfnamefont {M.}~\bibnamefont {Oszmaniec}}, \bibinfo {author} {\bibfnamefont {O.}~\bibnamefont {Reardon-Smith}},\ and\ \bibinfo {author} {\bibfnamefont {Z.}~\bibnamefont {Zimbor{\'a}s}},\ }\bibfield  {title} {\bibinfo {title} {Classical simulation of free-fermionic dynamics and quantum chemistry with magic input},\ }\href@noop {} {\bibfield  {journal} {\bibinfo  {journal} {arxiv:2604.26813}\ } (\bibinfo {year} {2026})}\BibitemShut {NoStop}%
\bibitem [{\citenamefont {Schuld}\ \emph {et~al.}(2019)\citenamefont {Schuld}, \citenamefont {Bergholm}, \citenamefont {Gogolin}, \citenamefont {Izaac},\ and\ \citenamefont {Killoran}}]{schuld2019gradients}%
  \BibitemOpen
  \bibfield  {author} {\bibinfo {author} {\bibfnamefont {M.}~\bibnamefont {Schuld}}, \bibinfo {author} {\bibfnamefont {V.}~\bibnamefont {Bergholm}}, \bibinfo {author} {\bibfnamefont {C.}~\bibnamefont {Gogolin}}, \bibinfo {author} {\bibfnamefont {J.}~\bibnamefont {Izaac}},\ and\ \bibinfo {author} {\bibfnamefont {N.}~\bibnamefont {Killoran}},\ }\bibfield  {title} {\bibinfo {title} {Evaluating analytic gradients on quantum hardware},\ }\href@noop {} {\bibfield  {journal} {\bibinfo  {journal} {Physical Review A}\ }\textbf {\bibinfo {volume} {99}},\ \bibinfo {pages} {032331} (\bibinfo {year} {2019})}\BibitemShut {NoStop}%
\bibitem [{\citenamefont {Spall}(1992)}]{spall1992multivariate}%
  \BibitemOpen
  \bibfield  {author} {\bibinfo {author} {\bibfnamefont {J.~C.}\ \bibnamefont {Spall}},\ }\bibfield  {title} {\bibinfo {title} {Multivariate stochastic approximation using a simultaneous perturbation gradient approximation},\ }\href@noop {} {\bibfield  {journal} {\bibinfo  {journal} {IEEE Transactions on Automatic Control}\ }\textbf {\bibinfo {volume} {37}},\ \bibinfo {pages} {332} (\bibinfo {year} {1992})}\BibitemShut {NoStop}%
\bibitem [{\citenamefont {Gacon}\ \emph {et~al.}(2021)\citenamefont {Gacon}, \citenamefont {Zoufal}, \citenamefont {Carleo},\ and\ \citenamefont {Woerner}}]{gacon2021simultaneous}%
  \BibitemOpen
  \bibfield  {author} {\bibinfo {author} {\bibfnamefont {J.}~\bibnamefont {Gacon}}, \bibinfo {author} {\bibfnamefont {C.}~\bibnamefont {Zoufal}}, \bibinfo {author} {\bibfnamefont {G.}~\bibnamefont {Carleo}},\ and\ \bibinfo {author} {\bibfnamefont {S.}~\bibnamefont {Woerner}},\ }\bibfield  {title} {\bibinfo {title} {Simultaneous perturbation stochastic approximation of the quantum fisher information},\ }\href@noop {} {\bibfield  {journal} {\bibinfo  {journal} {Quantum}\ }\textbf {\bibinfo {volume} {5}},\ \bibinfo {pages} {567} (\bibinfo {year} {2021})}\BibitemShut {NoStop}%
\bibitem [{\citenamefont {Wierichs}\ \emph {et~al.}(2022)\citenamefont {Wierichs}, \citenamefont {Izaac}, \citenamefont {Wang},\ and\ \citenamefont {Lin}}]{wierichs2022general}%
  \BibitemOpen
  \bibfield  {author} {\bibinfo {author} {\bibfnamefont {D.}~\bibnamefont {Wierichs}}, \bibinfo {author} {\bibfnamefont {J.}~\bibnamefont {Izaac}}, \bibinfo {author} {\bibfnamefont {C.}~\bibnamefont {Wang}},\ and\ \bibinfo {author} {\bibfnamefont {C.-Y.}\ \bibnamefont {Lin}},\ }\bibfield  {title} {\bibinfo {title} {General parameter-shift rules for quantum gradients},\ }\href@noop {} {\bibfield  {journal} {\bibinfo  {journal} {Quantum}\ }\textbf {\bibinfo {volume} {6}},\ \bibinfo {pages} {677} (\bibinfo {year} {2022})}\BibitemShut {NoStop}%
\bibitem [{\citenamefont {Coyle}\ \emph {et~al.}(2026)\citenamefont {Coyle}, \citenamefont {Raj}, \citenamefont {Umathe}, \citenamefont {Cherrat},\ and\ \citenamefont {Kashefi}}]{coyle2026adaptive}%
  \BibitemOpen
  \bibfield  {author} {\bibinfo {author} {\bibfnamefont {B.}~\bibnamefont {Coyle}}, \bibinfo {author} {\bibfnamefont {S.}~\bibnamefont {Raj}}, \bibinfo {author} {\bibfnamefont {V.}~\bibnamefont {Umathe}}, \bibinfo {author} {\bibfnamefont {E.~A.}\ \bibnamefont {Cherrat}},\ and\ \bibinfo {author} {\bibfnamefont {E.}~\bibnamefont {Kashefi}},\ }\bibfield  {title} {\bibinfo {title} {Adaptive directional gradients for parameterised quantum circuits},\ }\href@noop {} {\bibfield  {journal} {\bibinfo  {journal} {arXiv preprint arXiv:2606.09734}\ } (\bibinfo {year} {2026})}\BibitemShut {NoStop}%
\bibitem [{\citenamefont {Coyle}\ \emph {et~al.}(2025)\citenamefont {Coyle}, \citenamefont {Raj}, \citenamefont {Mathur}, \citenamefont {Cherrat}, \citenamefont {Jain}, \citenamefont {Kazdaghli},\ and\ \citenamefont {Kerenidis}}]{coyle2025density}%
  \BibitemOpen
  \bibfield  {author} {\bibinfo {author} {\bibfnamefont {B.}~\bibnamefont {Coyle}}, \bibinfo {author} {\bibfnamefont {S.}~\bibnamefont {Raj}}, \bibinfo {author} {\bibfnamefont {N.}~\bibnamefont {Mathur}}, \bibinfo {author} {\bibfnamefont {E.~A.}\ \bibnamefont {Cherrat}}, \bibinfo {author} {\bibfnamefont {N.}~\bibnamefont {Jain}}, \bibinfo {author} {\bibfnamefont {S.}~\bibnamefont {Kazdaghli}},\ and\ \bibinfo {author} {\bibfnamefont {I.}~\bibnamefont {Kerenidis}},\ }\bibfield  {title} {\bibinfo {title} {Training-efficient density quantum machine learning},\ }\href@noop {} {\bibfield  {journal} {\bibinfo  {journal} {npj Quantum Information}\ }\textbf {\bibinfo {volume} {11}},\ \bibinfo {pages} {172} (\bibinfo {year} {2025})}\BibitemShut {NoStop}%
\bibitem [{\citenamefont {B{\"a}rtschi}\ and\ \citenamefont {Eidenbenz}(2022)}]{bartschi2022short}%
  \BibitemOpen
  \bibfield  {author} {\bibinfo {author} {\bibfnamefont {A.}~\bibnamefont {B{\"a}rtschi}}\ and\ \bibinfo {author} {\bibfnamefont {S.}~\bibnamefont {Eidenbenz}},\ }\bibfield  {title} {\bibinfo {title} {Short-depth circuits for {Dicke} state preparation},\ }in\ \href@noop {} {\emph {\bibinfo {booktitle} {IEEE International Conference on Quantum Computing and Engineering (QCE)}}}\ (\bibinfo {year} {2022})\ pp.\ \bibinfo {pages} {87--96}\BibitemShut {NoStop}%
\bibitem [{\citenamefont {Farias}\ \emph {et~al.}(2025)\citenamefont {Farias}, \citenamefont {Maciel}, \citenamefont {Camilo}, \citenamefont {Lin}, \citenamefont {Ramos-Calderer},\ and\ \citenamefont {Aolita}}]{farias2025quantum}%
  \BibitemOpen
  \bibfield  {author} {\bibinfo {author} {\bibfnamefont {R.~M.~S.}\ \bibnamefont {Farias}}, \bibinfo {author} {\bibfnamefont {T.~O.}\ \bibnamefont {Maciel}}, \bibinfo {author} {\bibfnamefont {G.}~\bibnamefont {Camilo}}, \bibinfo {author} {\bibfnamefont {R.}~\bibnamefont {Lin}}, \bibinfo {author} {\bibfnamefont {S.}~\bibnamefont {Ramos-Calderer}},\ and\ \bibinfo {author} {\bibfnamefont {L.}~\bibnamefont {Aolita}},\ }\bibfield  {title} {\bibinfo {title} {Quantum encoder for fixed-{Hamming}-weight subspaces},\ }\href@noop {} {\bibfield  {journal} {\bibinfo  {journal} {Physical Review Applied}\ }\textbf {\bibinfo {volume} {23}},\ \bibinfo {pages} {044014} (\bibinfo {year} {2025})}\BibitemShut {NoStop}%
\bibitem [{\citenamefont {Schuld}\ \emph {et~al.}(2021)\citenamefont {Schuld}, \citenamefont {Sweke},\ and\ \citenamefont {Meyer}}]{schuld2021effect}%
  \BibitemOpen
  \bibfield  {author} {\bibinfo {author} {\bibfnamefont {M.}~\bibnamefont {Schuld}}, \bibinfo {author} {\bibfnamefont {R.}~\bibnamefont {Sweke}},\ and\ \bibinfo {author} {\bibfnamefont {J.~J.}\ \bibnamefont {Meyer}},\ }\bibfield  {title} {\bibinfo {title} {Effect of data encoding on the expressive power of variational quantum-machine-learning models},\ }\href@noop {} {\bibfield  {journal} {\bibinfo  {journal} {Physical Review A}\ }\textbf {\bibinfo {volume} {103}},\ \bibinfo {pages} {032430} (\bibinfo {year} {2021})}\BibitemShut {NoStop}%
\bibitem [{\citenamefont {Ivanov}\ and\ \citenamefont {Gurvitz}(2013)}]{ivanov2013permanents}%
  \BibitemOpen
  \bibfield  {author} {\bibinfo {author} {\bibfnamefont {P.~A.}\ \bibnamefont {Ivanov}}\ and\ \bibinfo {author} {\bibfnamefont {S.~A.}\ \bibnamefont {Gurvitz}},\ }\bibfield  {title} {\bibinfo {title} {Complexity of calculating the amplitudes of boson sampling from fock states},\ }\href@noop {} {\bibfield  {journal} {\bibinfo  {journal} {Physical Review A}\ }\textbf {\bibinfo {volume} {87}},\ \bibinfo {pages} {022114} (\bibinfo {year} {2013})}\BibitemShut {NoStop}%
\bibitem [{\citenamefont {Gurvits}(2005)}]{gurvits2005complexity}%
  \BibitemOpen
  \bibfield  {author} {\bibinfo {author} {\bibfnamefont {L.}~\bibnamefont {Gurvits}},\ }\bibfield  {title} {\bibinfo {title} {On the complexity of mixed discriminants and related problems},\ }\href@noop {} {\bibfield  {journal} {\bibinfo  {journal} {Lecture Notes in Computer Science}\ }\textbf {\bibinfo {volume} {3618}},\ \bibinfo {pages} {361} (\bibinfo {year} {2005})}\BibitemShut {NoStop}%
\bibitem [{\citenamefont {Hebenstreit}\ \emph {et~al.}(2019)\citenamefont {Hebenstreit}, \citenamefont {Jozsa}, \citenamefont {Kraus}, \citenamefont {Strelchuk},\ and\ \citenamefont {Yoganathan}}]{hebenstreit2019all}%
  \BibitemOpen
  \bibfield  {author} {\bibinfo {author} {\bibfnamefont {M.}~\bibnamefont {Hebenstreit}}, \bibinfo {author} {\bibfnamefont {R.}~\bibnamefont {Jozsa}}, \bibinfo {author} {\bibfnamefont {B.}~\bibnamefont {Kraus}}, \bibinfo {author} {\bibfnamefont {S.}~\bibnamefont {Strelchuk}},\ and\ \bibinfo {author} {\bibfnamefont {M.}~\bibnamefont {Yoganathan}},\ }\bibfield  {title} {\bibinfo {title} {All pure fermionic non-{Gaussian} states are magic states for matchgate computations},\ }\href@noop {} {\bibfield  {journal} {\bibinfo  {journal} {Physical Review Letters}\ }\textbf {\bibinfo {volume} {123}},\ \bibinfo {pages} {080503} (\bibinfo {year} {2019})}\BibitemShut {NoStop}%
\bibitem [{\citenamefont {Yuan}\ \emph {et~al.}(2025)\citenamefont {Yuan} \emph {et~al.}}]{yuan2025efficient}%
  \BibitemOpen
  \bibfield  {author} {\bibinfo {author} {\bibfnamefont {P.}~\bibnamefont {Yuan}} \emph {et~al.},\ }\bibfield  {title} {\bibinfo {title} {Depth-efficient quantum circuit synthesis for deterministic {Dicke} state preparation},\ }\href@noop {} {\bibfield  {journal} {\bibinfo  {journal} {arXiv:2505.15413}\ } (\bibinfo {year} {2025})}\BibitemShut {NoStop}%
\bibitem [{\citenamefont {Valiant}(2001)}]{valiant2001quantum}%
  \BibitemOpen
  \bibfield  {author} {\bibinfo {author} {\bibfnamefont {L.~G.}\ \bibnamefont {Valiant}},\ }\bibfield  {title} {\bibinfo {title} {Quantum computers that can be simulated classically in polynomial time},\ }in\ \href@noop {} {\emph {\bibinfo {booktitle} {Proceedings of the 33rd Annual ACM Symposium on Theory of Computing (STOC)}}}\ (\bibinfo {year} {2001})\ pp.\ \bibinfo {pages} {114--123}\BibitemShut {NoStop}%
\bibitem [{\citenamefont {Terhal}\ and\ \citenamefont {DiVincenzo}(2002)}]{terhal2002classical}%
  \BibitemOpen
  \bibfield  {author} {\bibinfo {author} {\bibfnamefont {B.~M.}\ \bibnamefont {Terhal}}\ and\ \bibinfo {author} {\bibfnamefont {D.~P.}\ \bibnamefont {DiVincenzo}},\ }\bibfield  {title} {\bibinfo {title} {Classical simulation of noninteracting-fermion quantum circuits},\ }\href@noop {} {\bibfield  {journal} {\bibinfo  {journal} {Physical Review A}\ }\textbf {\bibinfo {volume} {65}},\ \bibinfo {pages} {032325} (\bibinfo {year} {2002})}\BibitemShut {NoStop}%
\bibitem [{\citenamefont {Knill}(2001)}]{knill2001fermionic}%
  \BibitemOpen
  \bibfield  {author} {\bibinfo {author} {\bibfnamefont {E.}~\bibnamefont {Knill}},\ }\href@noop {} {\emph {\bibinfo {title} {Fermionic linear optics and matchgates}}},\ \bibinfo {type} {Tech. Rep.}\ \bibinfo {number} {LAUR-01-4472}\ (\bibinfo  {institution} {Los Alamos National Laboratory},\ \bibinfo {year} {2001})\ \bibinfo {note} {arXiv:quant-ph/0108033}\BibitemShut {NoStop}%
\bibitem [{\citenamefont {Cooley}\ and\ \citenamefont {Tukey}(1965)}]{cooley1965algorithm}%
  \BibitemOpen
  \bibfield  {author} {\bibinfo {author} {\bibfnamefont {J.~W.}\ \bibnamefont {Cooley}}\ and\ \bibinfo {author} {\bibfnamefont {J.~W.}\ \bibnamefont {Tukey}},\ }\bibfield  {title} {\bibinfo {title} {An algorithm for the machine calculation of complex {Fourier} series},\ }\href@noop {} {\bibfield  {journal} {\bibinfo  {journal} {Mathematics of Computation}\ }\textbf {\bibinfo {volume} {19}},\ \bibinfo {pages} {297} (\bibinfo {year} {1965})}\BibitemShut {NoStop}%
\bibitem [{\citenamefont {Jain}\ \emph {et~al.}(2024)\citenamefont {Jain}, \citenamefont {Landman}, \citenamefont {Mathur},\ and\ \citenamefont {Kerenidis}}]{jain2024qfno}%
  \BibitemOpen
  \bibfield  {author} {\bibinfo {author} {\bibfnamefont {N.}~\bibnamefont {Jain}}, \bibinfo {author} {\bibfnamefont {J.}~\bibnamefont {Landman}}, \bibinfo {author} {\bibfnamefont {N.}~\bibnamefont {Mathur}},\ and\ \bibinfo {author} {\bibfnamefont {I.}~\bibnamefont {Kerenidis}},\ }\bibfield  {title} {\bibinfo {title} {Quantum {Fourier} networks for solving parametric {PDEs}},\ }\href@noop {} {\bibfield  {journal} {\bibinfo  {journal} {Quantum Science and Technology}\ }\textbf {\bibinfo {volume} {9}},\ \bibinfo {pages} {035026} (\bibinfo {year} {2024})}\BibitemShut {NoStop}%
\bibitem [{\citenamefont {Bravyi}(2005)}]{bravyi2005lagrangian}%
  \BibitemOpen
  \bibfield  {author} {\bibinfo {author} {\bibfnamefont {S.}~\bibnamefont {Bravyi}},\ }\bibfield  {title} {\bibinfo {title} {Lagrangian representation for fermionic linear optics},\ }\href {https://doi.org/10.26421/QIC5.3-3} {\bibfield  {journal} {\bibinfo  {journal} {Quantum Information and Computation}\ }\textbf {\bibinfo {volume} {5}},\ \bibinfo {pages} {216} (\bibinfo {year} {2005})},\ \Eprint {https://arxiv.org/abs/quant-ph/0404180} {arXiv:quant-ph/0404180} \BibitemShut {NoStop}%
\bibitem [{\citenamefont {Hurwitz}(1897)}]{hurwitz1897}%
  \BibitemOpen
  \bibfield  {author} {\bibinfo {author} {\bibfnamefont {A.}~\bibnamefont {Hurwitz}},\ }\bibfield  {title} {\bibinfo {title} {{\"U}ber die erzeugung der invarianten durch integration},\ }\href@noop {} {\bibfield  {journal} {\bibinfo  {journal} {Nachrichten von der Gesellschaft der Wissenschaften zu G{\"o}ttingen, Mathematisch-Physikalische Klasse}\ ,\ \bibinfo {pages} {71}} (\bibinfo {year} {1897})}\BibitemShut {NoStop}%
\bibitem [{\citenamefont {{\.Z}yczkowski}\ and\ \citenamefont {Ku{\'s}}(1994)}]{zyczkowski1994random}%
  \BibitemOpen
  \bibfield  {author} {\bibinfo {author} {\bibfnamefont {K.}~\bibnamefont {{\.Z}yczkowski}}\ and\ \bibinfo {author} {\bibfnamefont {M.}~\bibnamefont {Ku{\'s}}},\ }\bibfield  {title} {\bibinfo {title} {Random unitary matrices},\ }\href {https://doi.org/10.1088/0305-4470/27/12/028} {\bibfield  {journal} {\bibinfo  {journal} {J. Phys. A: Math. Gen.}\ }\textbf {\bibinfo {volume} {27}},\ \bibinfo {pages} {4235} (\bibinfo {year} {1994})}\BibitemShut {NoStop}%
\bibitem [{\citenamefont {Reck}\ \emph {et~al.}(1994)\citenamefont {Reck}, \citenamefont {Zeilinger}, \citenamefont {Bernstein},\ and\ \citenamefont {Bertani}}]{reck1994experimental}%
  \BibitemOpen
  \bibfield  {author} {\bibinfo {author} {\bibfnamefont {M.}~\bibnamefont {Reck}}, \bibinfo {author} {\bibfnamefont {A.}~\bibnamefont {Zeilinger}}, \bibinfo {author} {\bibfnamefont {H.~J.}\ \bibnamefont {Bernstein}},\ and\ \bibinfo {author} {\bibfnamefont {P.}~\bibnamefont {Bertani}},\ }\bibfield  {title} {\bibinfo {title} {Experimental realization of any discrete unitary operator},\ }\href {https://doi.org/10.1103/PhysRevLett.73.58} {\bibfield  {journal} {\bibinfo  {journal} {Phys. Rev. Lett.}\ }\textbf {\bibinfo {volume} {73}},\ \bibinfo {pages} {58} (\bibinfo {year} {1994})}\BibitemShut {NoStop}%
\bibitem [{\citenamefont {Clements}\ \emph {et~al.}(2016)\citenamefont {Clements}, \citenamefont {Humphreys}, \citenamefont {Metcalf}, \citenamefont {Kolthammer},\ and\ \citenamefont {Walmsley}}]{clements2016optimal}%
  \BibitemOpen
  \bibfield  {author} {\bibinfo {author} {\bibfnamefont {W.~R.}\ \bibnamefont {Clements}}, \bibinfo {author} {\bibfnamefont {P.~C.}\ \bibnamefont {Humphreys}}, \bibinfo {author} {\bibfnamefont {B.~J.}\ \bibnamefont {Metcalf}}, \bibinfo {author} {\bibfnamefont {W.~S.}\ \bibnamefont {Kolthammer}},\ and\ \bibinfo {author} {\bibfnamefont {I.~A.}\ \bibnamefont {Walmsley}},\ }\bibfield  {title} {\bibinfo {title} {Optimal design for universal multiport interferometers},\ }\href {https://doi.org/10.1364/OPTICA.3.001460} {\bibfield  {journal} {\bibinfo  {journal} {Optica}\ }\textbf {\bibinfo {volume} {3}},\ \bibinfo {pages} {1460} (\bibinfo {year} {2016})}\BibitemShut {NoStop}%
\bibitem [{\citenamefont {Brandao}\ \emph {et~al.}(2016)\citenamefont {Brandao}, \citenamefont {Harrow},\ and\ \citenamefont {Horodecki}}]{brandao2016local}%
  \BibitemOpen
  \bibfield  {author} {\bibinfo {author} {\bibfnamefont {F.~G. S.~L.}\ \bibnamefont {Brandao}}, \bibinfo {author} {\bibfnamefont {A.~W.}\ \bibnamefont {Harrow}},\ and\ \bibinfo {author} {\bibfnamefont {M.}~\bibnamefont {Horodecki}},\ }\bibfield  {title} {\bibinfo {title} {Local random quantum circuits are approximate polynomial-designs},\ }\href@noop {} {\bibfield  {journal} {\bibinfo  {journal} {Communications in Mathematical Physics}\ }\textbf {\bibinfo {volume} {346}},\ \bibinfo {pages} {397} (\bibinfo {year} {2016})}\BibitemShut {NoStop}%
\bibitem [{\citenamefont {Brown}\ and\ \citenamefont {Susskind}(2018)}]{brown2018random}%
  \BibitemOpen
  \bibfield  {author} {\bibinfo {author} {\bibfnamefont {W.}~\bibnamefont {Brown}}\ and\ \bibinfo {author} {\bibfnamefont {L.}~\bibnamefont {Susskind}},\ }\bibfield  {title} {\bibinfo {title} {Random quantum dynamics: From {Haar} measure to thermalization via random matrix theory},\ }\href@noop {} {\bibfield  {journal} {\bibinfo  {journal} {Physical Review X}\ }\textbf {\bibinfo {volume} {8}},\ \bibinfo {pages} {021014} (\bibinfo {year} {2018})}\BibitemShut {NoStop}%
\bibitem [{\citenamefont {White}(1992)}]{white1992dmrg}%
  \BibitemOpen
  \bibfield  {author} {\bibinfo {author} {\bibfnamefont {S.~R.}\ \bibnamefont {White}},\ }\bibfield  {title} {\bibinfo {title} {Density matrix formulation for quantum renormalization groups},\ }\href@noop {} {\bibfield  {journal} {\bibinfo  {journal} {Physical Review Letters}\ }\textbf {\bibinfo {volume} {69}},\ \bibinfo {pages} {2863} (\bibinfo {year} {1992})}\BibitemShut {NoStop}%
\bibitem [{\citenamefont {Nesterov}(2012)}]{nesterov2012efficiency}%
  \BibitemOpen
  \bibfield  {author} {\bibinfo {author} {\bibfnamefont {Y.}~\bibnamefont {Nesterov}},\ }\bibfield  {title} {\bibinfo {title} {Efficiency of coordinate descent methods on huge-scale optimization problems},\ }\href@noop {} {\bibfield  {journal} {\bibinfo  {journal} {SIAM Journal on Optimization}\ }\textbf {\bibinfo {volume} {22}},\ \bibinfo {pages} {341} (\bibinfo {year} {2012})}\BibitemShut {NoStop}%
\bibitem [{\citenamefont {Coyle}\ \emph {et~al.}(2020)\citenamefont {Coyle}, \citenamefont {Mills}, \citenamefont {Danos},\ and\ \citenamefont {Kashefi}}]{coyle2020born}%
  \BibitemOpen
  \bibfield  {author} {\bibinfo {author} {\bibfnamefont {B.}~\bibnamefont {Coyle}}, \bibinfo {author} {\bibfnamefont {D.}~\bibnamefont {Mills}}, \bibinfo {author} {\bibfnamefont {V.}~\bibnamefont {Danos}},\ and\ \bibinfo {author} {\bibfnamefont {E.}~\bibnamefont {Kashefi}},\ }\bibfield  {title} {\bibinfo {title} {The {Born} supremacy: quantum advantage and training of an {Ising} {Born} machine},\ }\href@noop {} {\bibfield  {journal} {\bibinfo  {journal} {npj Quantum Information}\ }\textbf {\bibinfo {volume} {6}},\ \bibinfo {pages} {60} (\bibinfo {year} {2020})}\BibitemShut {NoStop}%
\bibitem [{\citenamefont {Lloyd}\ and\ \citenamefont {Weedbrook}(2018)}]{lloyd2018qgan}%
  \BibitemOpen
  \bibfield  {author} {\bibinfo {author} {\bibfnamefont {S.}~\bibnamefont {Lloyd}}\ and\ \bibinfo {author} {\bibfnamefont {C.}~\bibnamefont {Weedbrook}},\ }\bibfield  {title} {\bibinfo {title} {Quantum generative adversarial learning},\ }\href@noop {} {\bibfield  {journal} {\bibinfo  {journal} {Physical Review Letters}\ }\textbf {\bibinfo {volume} {121}},\ \bibinfo {pages} {040502} (\bibinfo {year} {2018})}\BibitemShut {NoStop}%
\bibitem [{\citenamefont {Schuld}\ and\ \citenamefont {Petruccione}(2021)}]{schuld2021supervised}%
  \BibitemOpen
  \bibfield  {author} {\bibinfo {author} {\bibfnamefont {M.}~\bibnamefont {Schuld}}\ and\ \bibinfo {author} {\bibfnamefont {F.}~\bibnamefont {Petruccione}},\ }\href@noop {} {\emph {\bibinfo {title} {Supervised Learning with Quantum Computers}}},\ \bibinfo {edition} {2nd}\ ed.\ (\bibinfo  {publisher} {Springer},\ \bibinfo {year} {2021})\BibitemShut {NoStop}%
\bibitem [{\citenamefont {Caro}\ \emph {et~al.}(2022)\citenamefont {Caro}, \citenamefont {Huang}, \citenamefont {Cerezo}, \citenamefont {Sharma}, \citenamefont {Sornborger}, \citenamefont {Cincio},\ and\ \citenamefont {Coles}}]{caro2022generalization}%
  \BibitemOpen
  \bibfield  {author} {\bibinfo {author} {\bibfnamefont {M.~C.}\ \bibnamefont {Caro}}, \bibinfo {author} {\bibfnamefont {H.-Y.}\ \bibnamefont {Huang}}, \bibinfo {author} {\bibfnamefont {M.}~\bibnamefont {Cerezo}}, \bibinfo {author} {\bibfnamefont {K.}~\bibnamefont {Sharma}}, \bibinfo {author} {\bibfnamefont {A.}~\bibnamefont {Sornborger}}, \bibinfo {author} {\bibfnamefont {L.}~\bibnamefont {Cincio}},\ and\ \bibinfo {author} {\bibfnamefont {P.~J.}\ \bibnamefont {Coles}},\ }\bibfield  {title} {\bibinfo {title} {Generalization in quantum machine learning from few training data},\ }\href@noop {} {\bibfield  {journal} {\bibinfo  {journal} {Nature Communications}\ }\textbf {\bibinfo {volume} {13}},\ \bibinfo {pages} {4919} (\bibinfo {year} {2022})}\BibitemShut {NoStop}%
\bibitem [{\citenamefont {Movassagh}(2023)}]{movassagh2023hardness}%
  \BibitemOpen
  \bibfield  {author} {\bibinfo {author} {\bibfnamefont {R.}~\bibnamefont {Movassagh}},\ }\bibfield  {title} {\bibinfo {title} {The hardness of random quantum circuits},\ }\href@noop {} {\bibfield  {journal} {\bibinfo  {journal} {Science}\ }\textbf {\bibinfo {volume} {379}},\ \bibinfo {pages} {eabb9525} (\bibinfo {year} {2023})}\BibitemShut {NoStop}%
\bibitem [{\citenamefont {Liu}\ \emph {et~al.}(2021)\citenamefont {Liu}, \citenamefont {Arunachalam},\ and\ \citenamefont {Temme}}]{liu2021rigorous}%
  \BibitemOpen
  \bibfield  {author} {\bibinfo {author} {\bibfnamefont {Y.}~\bibnamefont {Liu}}, \bibinfo {author} {\bibfnamefont {S.}~\bibnamefont {Arunachalam}},\ and\ \bibinfo {author} {\bibfnamefont {K.}~\bibnamefont {Temme}},\ }\bibfield  {title} {\bibinfo {title} {A rigorous and robust quantum speed-up in supervised machine learning},\ }\href {https://doi.org/10.1038/s41567-021-01287-z} {\bibfield  {journal} {\bibinfo  {journal} {Nature Physics}\ }\textbf {\bibinfo {volume} {17}},\ \bibinfo {pages} {1013} (\bibinfo {year} {2021})},\ \Eprint {https://arxiv.org/abs/2010.02174} {arXiv:2010.02174} \BibitemShut {NoStop}%
\bibitem [{\citenamefont {Russell}\ \emph {et~al.}(2017)\citenamefont {Russell}, \citenamefont {Chakhmakhchyan}, \citenamefont {O'Brien},\ and\ \citenamefont {Laing}}]{russell2017direct}%
  \BibitemOpen
  \bibfield  {author} {\bibinfo {author} {\bibfnamefont {N.~J.}\ \bibnamefont {Russell}}, \bibinfo {author} {\bibfnamefont {L.}~\bibnamefont {Chakhmakhchyan}}, \bibinfo {author} {\bibfnamefont {J.~L.}\ \bibnamefont {O'Brien}},\ and\ \bibinfo {author} {\bibfnamefont {A.}~\bibnamefont {Laing}},\ }\bibfield  {title} {\bibinfo {title} {Direct dialling of haar random unitary matrices},\ }\href {https://doi.org/10.1088/1367-2630/aa60ed} {\bibfield  {journal} {\bibinfo  {journal} {New J. Phys.}\ }\textbf {\bibinfo {volume} {19}},\ \bibinfo {pages} {033007} (\bibinfo {year} {2017})}\BibitemShut {NoStop}%
\bibitem [{\citenamefont {Anschuetz}\ and\ \citenamefont {Kiani}(2022)}]{anschuetz2022quantum}%
  \BibitemOpen
  \bibfield  {author} {\bibinfo {author} {\bibfnamefont {E.~R.}\ \bibnamefont {Anschuetz}}\ and\ \bibinfo {author} {\bibfnamefont {B.~T.}\ \bibnamefont {Kiani}},\ }\bibfield  {title} {\bibinfo {title} {Quantum variational algorithms are swamped with traps},\ }\href@noop {} {\bibfield  {journal} {\bibinfo  {journal} {Nature Communications}\ }\textbf {\bibinfo {volume} {13}},\ \bibinfo {pages} {7760} (\bibinfo {year} {2022})}\BibitemShut {NoStop}%
\bibitem [{\citenamefont {Amidi}(2026)}]{amidi2026comment}%
  \BibitemOpen
  \bibfield  {author} {\bibinfo {author} {\bibfnamefont {E.}~\bibnamefont {Amidi}},\ }\href@noop {} {\bibinfo {title} {Comment on ``{S}calable {Q}uantum {M}achine {L}earning: {T}rainability, {E}xpressivity and {E}fficiency: {P}olynomial evaluation of the triplet-block readout}} (\bibinfo {year} {2026}),\ \bibinfo {note} {arXiv preprint}\BibitemShut {NoStop}%
\end{thebibliography}%

\appendix
\section{Proofs for Section~\ref{sec:bp}}
\label{app:bp_proofs}

This appendix collects the supporting lemmas and the full proofs of the
trainability results of Section~\ref{sec:bp}.

\subsection{Structural lemmas}
\label{app:bp_lemmas}

\begin{lemma}[Locality of coordinate-local gate generators on $\Lambda^2$]
\label{lem:gate_locality}
Let $G$ be a trainable gate generator---either an
RBS generator $G_{ij}$ (acting on the mode pair $T = \{i,j\}$) or an $R_z$
generator $\hat n_j$ (acting on $T = \{j\}$)---with $\|G^{(2)}\|_{\mathrm{op}}
= O(1)$, so that $G$ is supported on the $O(1)$ coordinates $T$.
Let $\sigma = \Lambda^2(V')\,\sigma^{\mathrm{in}}\,\Lambda^2(V')^\dagger$,
where $V' = D_1 H$ is the encoding (Walsh--Hadamard spreading followed by the
diagonal data layer) and $\sigma^{\mathrm{in}}$ is supported on a set of $r$
basis pairs with $\|\sigma^{\mathrm{in}}\|_{\mathrm{op}} = O(1)$ and
$\|\sigma^{\mathrm{in}}\|_F^2 = \Theta(r)$. Then
\begin{equation}
\label{eq:locality-upper}
  \bigl\|[G^{(2)}, \sigma]\bigr\|_F^2
  \;=\; O\!\left(\tfrac{1}{n}\right)\,\|\sigma\|_F^2 ,
\end{equation}
uniformly in the data $\bx$. Both the disconnected part
$\rho_1\wedge\rho_1$ (with $r = \binom{2k}{2}$) and the connected part
$\rho_2^{\mathrm{conn}}$ (with $r = \Theta(k)$) of the encoded $2$-RDM satisfy
the hypotheses. The hypothesis that $G$ be supported on $O(1)$ \emph{coordinates}
is used through the alignment of $\calS_T$ with the computational basis in
which $V'$ delocalizes; for a general unitary conjugate $V^\dagger G V$ the
statement is false pointwise, and its correct averaged replacement is
Lemma~\ref{lem:avg_locality}.
\end{lemma}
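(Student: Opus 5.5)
The plan is to reduce the commutator norm to a weighted "leakage" of $\sigma$ out of the small coordinate set touched by $G$, and then use Walsh--Hadamard delocalization to show that this weight is a $1/n$ fraction of the total.

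\emph{Step 1 (support of the lift).} Let $\calS_T \subset \Lambda^2\mathbb{C}^n$ be the span of the basis pairs $e_a\wedge e_b$ with $\{a,b\}\cap T\neq\emptyset$, and let $P_T$ be its orthogonal projector. Since $G$ acts only on the coordinates in $T$, the lift $G^{(2)} = d\Lambda^2(G)$ annihilates the complement of $\calS_T$ and maps $\calS_T$ into itself. Hence $G^{(2)} = P_T G^{(2)} P_T$, with $\|G^{(2)}\|_{\mathrm{op}} = O(1)$. We then expand
\[
[G^{(2)},\sigma] = G^{(2)}P_T\sigma - \sigma P_T G^{(2)},
\]
which by the triangle inequality gives
\[
\|[G^{(2)},\sigma]\|_F^2 \le 4\,\|G^{(2)}\|_{\mathrm{op}}^2\,\|P_T\sigma\|_F^2 .
\]
It therefore suffices to prove $\|P_T\sigma\|_F^2 = O(1/n)\,\|\sigma\|_F^2$.

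\emph{Step 2 (delocalized columns).} Write
\[
\sigma = \sum_{\alpha,\beta} \sigma^{\mathrm{in}}_{\alpha\beta}\, v_\alpha v_\beta^\dagger,
\qquad v_\alpha = \Lambda^2(V')\,e_\alpha ,
\]
where $\alpha,\beta$ range over the $r$ support pairs. The $v_\alpha$ are orthonormal, so
\[
\|\sigma\|_F^2 = \|\sigma^{\mathrm{in}}\|_F^2 = \Theta(r),
\qquad
\|P_T\sigma\|_F^2 = \tr\bigl(\sigma^{\mathrm{in}\dagger} A\, \sigma^{\mathrm{in}}\bigr),
\]
where $A_{\alpha\gamma} = \langle v_\alpha, P_T v_\gamma\rangle$ is the compression of $P_T$ to $\mathrm{span}\{v_\alpha\}$. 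Consequently
\[
\|P_T\sigma\|_F^2 \le \|\sigma^{\mathrm{in}}\|_{\mathrm{op}}^2\,\tr A = O(1)\sum_\alpha \|P_T v_\alpha\|^2 .
\]
For $\alpha=\{a,b\}$ we have $v_\alpha = (D_1He_a)\wedge(D_1He_b)$, and every entry of $D_1He_a$ has modulus exactly $n^{-1/2}$, since $D_1$ is diagonal and $|H_{ij}|^2 = 1/n$ (Lemma~\ref{lem:spreading}). A coefficient $(v_\alpha)_{\{c,d\}}$ is a $2\times 2$ minor, so its modulus is at most $2/n$. The set $\calS_T$ contains at most $|T|\,n = O(n)$ pairs, which gives $\|P_T v_\alpha\|^2 = O(n\cdot n^{-2}) = O(1/n)$ uniformly in $\bx$, because the phases of $D_1$ do not affect moduli. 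Summing over the $r$ support pairs yields $\|P_T\sigma\|_F^2 = O(r/n) = O(1/n)\,\|\sigma\|_F^2$, which is \eqref{eq:locality-upper}.

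\emph{Step 3 (the two components).} We check the hypotheses for both parts of the $2$-RDM.
\begin{itemize}
\item The disconnected part $\rho_1^{\mathrm{in}}\wedge\rho_1^{\mathrm{in}}$ is diagonal in the basis pairs of the $2k$ active modes, with entries $1/4$ by Lemma~\ref{lem:spreading}. Hence $r = \binom{2k}{2}$, the operator norm is $O(1)$, and the squared Frobenius norm is $\Theta(r)$.
\item The connected part, computed blockwise from $\ket{\Psi_4}$, lives within the $\binom42$ pairs of each block. It has $O(1)$ entries per block, so there are $k/2$ blocks times $O(1)$ pairs, i.e.\ $r=\Theta(k)$. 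Its Frobenius norm is $\Theta(k)$ and its operator norm is $O(1)$ because the blocks are orthogonal.
\end{itemize}
This amounts to a short explicit calculation of the cumulant $\rho_2 - \rho_1\wedge\rho_1$ on one block. The spreading $V'$ is common to both parts, so Step 2 applies verbatim.

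The main obstacle is the bookkeeping in Step 3 for the connected part. One must confirm that the cross-block terms of $\rho_2$ factorize exactly into $\rho_1\wedge\rho_1$, so that the cumulant is block-diagonal, and obtain the correct Frobenius scale. Step 2 is elementary once the minors bound $2/n$ is noted; the sharpening to a constant matching equality is not needed, since only the upper bound is claimed.
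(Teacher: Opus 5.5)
Your proof is correct and follows the paper's argument: you make the same reduction $\|[G^{(2)},\sigma]\|_F\le 2\|G^{(2)}\|_{\mathrm{op}}\|P_{\calS_T}\sigma\|_F$ and use the same entrywise bound $|\Lambda^2(V')_{P\alpha}|\le 2/n$. The only difference is the order of summation: you sum $\|P_{\calS_T}v_\alpha\|^2$ over the $r$ support pairs, whereas the paper sums $(\sigma^\dagger\sigma)_{PP}$ over the $\Theta(n)$ pairs of $\calS_T$, and both are bounded by $|\calS_T|\,r\,(4/n^2)=O(r/n)$. The block-diagonality of the cumulant, which you flag as remaining bookkeeping, does hold: each block has definite particle number and $\rho_1^{\mathrm{in}}$ is diagonal, so the cumulant is supported on the $3k$ within-block pairs; the paper simply asserts this. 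One small slip: it is the \emph{squared} Frobenius norm of $\rho_2^{\mathrm{conn}}$ that is $\Theta(k)$.
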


\begin{proof}
The single-particle generator $g$ satisfies $g\ket{a} = 0$ for $a \notin T$;
since $G^{(2)}\ket{\{a,b\}} = (g\ket{a})\wedge\ket{b} +
\ket{a}\wedge(g\ket{b})$, the lift annihilates every basis pair disjoint from
$T$, so $G^{(2)} = P_{\calS_T}G^{(2)}P_{\calS_T}$ with
$\calS := \calS_T = \mathrm{span}\{\ket{\{a,b\}} : \{a,b\}\cap T \neq \emptyset\}$
and $\dim\calS = \binom n2 - \binom{n-|T|}{2} = \Theta(n)$, a subspace spanned
by basis pairs, and
\begin{equation}
  \bigl\|[G^{(2)}, \sigma]\bigr\|_F
  \le 2\|G^{(2)}\|_{\mathrm{op}}\,\|P_{\calS}\,\sigma\|_F .
\end{equation}
It remains to bound $\|P_{\calS}\sigma\|_F^2 = \sum_{P \in \calS}
(\sigma^\dagger\sigma)_{PP}$. Writing $v_P := \Lambda^2(V')^\dagger\ket{P}$, a
unit vector, we have $(\sigma^\dagger\sigma)_{PP} = \langle v_P,
(\sigma^{\mathrm{in}})^\dagger \sigma^{\mathrm{in}} v_P\rangle$. Since
$|V'_{ab}| = n^{-1/2}$ we have $|\Lambda^2(V')_{PQ}| \le 2/n$, so the
restriction of $v_P$ to the $r$ basis pairs carrying $\sigma^{\mathrm{in}}$
has squared norm at most $4r/n^2$, and therefore
\begin{equation}
  (\sigma^\dagger\sigma)_{PP}
  \le \|\sigma^{\mathrm{in}}\|_{\mathrm{op}}^2 \cdot \frac{4r}{n^2}
  = O\!\left(\frac{r}{n^2}\right)
  = O\!\left(\frac{\|\sigma\|_F^2}{d}\right),
\end{equation}
uniformly in $P$ and in $\bx$, using $\|\sigma\|_F^2 =
\|\sigma^{\mathrm{in}}\|_F^2 = \Theta(r)$ and $d = \Theta(n^2)$. Summing over
the $\dim\calS = \Theta(n)$ pairs in $\calS$ gives
$\|P_{\calS}\sigma\|_F^2 = O(\|\sigma\|_F^2/n)$, which
proves~\eqref{eq:locality-upper}.

For the disconnected part, $\rho_1^{\mathrm{in}}\wedge\rho_1^{\mathrm{in}}$ is
diagonal on $\Lambda^2$ with entry $\tfrac14$ on each of the $\binom{2k}{2}$
active--active pairs, so $r = \binom{2k}{2}$,
$\|\cdot\|_{\mathrm{op}} = \tfrac14$ and $\|\cdot\|_F^2 = \Theta(r)$. For the
connected part, the input is a product over the $k/2$ (resp.\ $k/3$) magic
blocks, so $\rho_2^{\mathrm{conn}}$ is block-diagonal with $\Theta(k)$ blocks of
$O(1)$ size and constant norm, giving $r = \Theta(k)$,
$\|\cdot\|_{\mathrm{op}} = O(1)$ and $\|\cdot\|_F^2 = \Theta(k)$.
\end{proof}

\begin{lemma}[$\Lambda^4$ trace formula]
\label{lem:lambda4}
Let $P_{\Lambda^4}$ denote the orthogonal projector of
$\Lambda^2\mathbb{C}^n \otimes \Lambda^2\mathbb{C}^n$ onto its
$\Lambda^4\mathbb{C}^n$ component. Then for all
$M, N \in \mathrm{End}(\Lambda^2\mathbb{C}^n)$,
\begin{equation}
\label{eq:lambda4}
  \tr\!\bigl(P_{\Lambda^4}\,(M \otimes N)\bigr)
  = \tfrac16\Bigl[\tr M\,\tr N + \tr(MN)
      - \tr\!\bigl(\Gamma(M)\Gamma(N)\bigr)\Bigr].
\end{equation}
\end{lemma}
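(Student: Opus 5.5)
The plan is to realize $P_{\Lambda^4}$ as the restriction of the full antisymmetrizer on $(\mathbb{C}^n)^{\otimes 4}$ and then evaluate a sum over $S_4$ that collapses to three double-coset classes.

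\textbf{Embedding.} Embed $\Lambda^2\mathbb{C}^n\otimes\Lambda^2\mathbb{C}^n \subset (\mathbb{C}^n)^{\otimes 4}$ as the range of $P_A\otimes P_A$, where $P_A=(I-\mathrm{SWAP})/2$. Let
\[
A_4=\tfrac{1}{24}\sum_{\pi\in S_4}\mathrm{sgn}(\pi)\,\pi
\]
be the total antisymmetrizer. Since $\Lambda^4\mathbb{C}^n$ lies inside this range and $A_4(P_A\otimes P_A)=A_4$, the projector $P_{\Lambda^4}$ is the restriction of $A_4$. Extend $M,N$ to operators $\hat M=P_A\hat M P_A$ on the first two factors and $\hat N$ on the last two. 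Then
\[
\tr\bigl(P_{\Lambda^4}(M\otimes N)\bigr)=\tfrac{1}{24}\sum_{\pi}\mathrm{sgn}(\pi)\,\tr\bigl(\pi\,(\hat M\otimes\hat N)\bigr).
\]

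\textbf{Double cosets.} Let $H=\langle(12),(34)\rangle$. Multiplying $\pi$ on either side by an element $h\in H$ changes the summand by $\mathrm{sgn}(h)$ both through the sign of the permutation and through the antisymmetry of $\hat M\otimes\hat N$. Hence the summand is constant on double cosets $H\pi H$. There are three of them:
\begin{itemize}
\item $H$ itself, of size $4$;
\item $H(13)(24)H$, of size $4$, since $(13)(24)$ normalizes $H$;
\item $H(13)H$, of size $16$.
\end{itemize}
The sizes add to $24$, which confirms the decomposition.

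\textbf{Evaluating each class.}
\begin{itemize}
\item The identity class contributes $4\,\tr\hat M\,\tr\hat N=4\,\tr M\,\tr N$.
\item The even pair-swap contributes $4\,\tr(\hat M\hat N)=4\,\tr(MN)$.
\item Each of the $16$ odd permutations in $H(13)H$ contributes $-\tr\bigl((13)\hat M\otimes\hat N\bigr)$. Here the transposition $(13)$ contracts one index of $\hat M$ against one index of $\hat N$, and the remaining two traces produce partial traces. This yields $\tr\bigl(\Gamma'(\hat M)\Gamma'(\hat N)\bigr)$, where $\Gamma'$ is the partial trace over the second tensor slot.
\end{itemize}
Dividing the three contributions by $24$ gives the coefficients $\tfrac16,\tfrac16,-\tfrac{16}{24}c$, where $c$ is the constant relating $\Gamma'$ to the paper's $\Gamma$.

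\textbf{Main obstacle: normalization.} The delicate step is this normalization bookkeeping. In the orthonormal basis $\ket{i\wedge j}=(\ket{ij}-\ket{ji})/\sqrt2$, the tensor entries of $\hat M$ equal $\tfrac12 M^{ij}_{kl}$, with $M^{ij}_{kl}$ the antisymmetric kernel of the paper. This must be reconciled with the stated conventions $\tr M=\tfrac12 M^{ij}_{ij}$ and $\Gamma(M)^i_{\ k}=\sum_j M^{ij}_{kj}$. With that reconciliation, $\Gamma'(\hat M)=\tfrac12\Gamma(M)$, so $c=\tfrac14$ and the third coefficient is $-\tfrac{16}{24}\cdot\tfrac14=-\tfrac16$. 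This matches the formula in the statement.

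As a sanity check, I would verify the constants on two cases:
\begin{itemize}
\item $M=N=I_{\Lambda^2}$, where the left side should equal $\binom n4$. Using $\Gamma(I)=(n-1)I_n$, the right side is $\tfrac16\bigl[\binom n2^2+\binom n2-n(n-1)^2\bigr]$, which should reduce to $\binom n4$.
\item A rank-one projector $M=N=\ket{\{i,j\}}\bra{\{i,j\}}$, where the left side should vanish.
\end{itemize}
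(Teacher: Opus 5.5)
Your proposal is correct and is essentially the paper's proof: you restrict the total antisymmetrizer to $\Lambda^2\otimes\Lambda^2$, split $S_4$ into the same three classes of sizes $4,4,16$, and evaluate each. Your tensor-entry normalization $\tfrac12 M^{ij}_{kl}$ plays the role of the paper's explicit $\tfrac14$ prefactor, and the paper runs the same $M=N=I$ check.

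One small slip: the $16$ elements of $H(13)H$ are not all odd, since the eight $3$-cycles are even. This does not affect the result, because you showed that the signed summand $\mathrm{sgn}(\pi)\,\tr\bigl(\pi\,\hat M\otimes\hat N\bigr)$ is constant on double cosets, so each element still contributes $-\tr\bigl((13)\,\hat M\otimes\hat N\bigr)$.
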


\begin{proof}
Let $\mathcal{A}_4 = \tfrac1{24}\sum_{\pi \in S_4}\mathrm{sgn}(\pi)\,\pi$ be
the total antisymmetrizer on $(\mathbb{C}^n)^{\otimes 4}$. The subspace
$\Lambda^2 \otimes \Lambda^2$ is $\mathcal{A}_4$-invariant, $\mathcal{A}_4$ is
self-adjoint and idempotent, and its image inside $\Lambda^2\otimes\Lambda^2$
is $\Lambda^4$; hence $P_{\Lambda^4}$ is the restriction of $\mathcal{A}_4$.
Extending $M$ and $N$ to $(\mathbb{C}^n)^{\otimes 4}$ by composing with the
antisymmetrizer on each index pair contributes a factor $\tfrac12$ per
operator, so
\begin{equation}
  \tr\!\bigl(P_{\Lambda^4}(M\otimes N)\bigr)
  = \frac{1}{24}\cdot\frac14 \sum_{\pi\in S_4}\mathrm{sgn}(\pi)\,
    M^{i_1 i_2}_{i_{\pi(1)} i_{\pi(2)}}\,
    N^{i_3 i_4}_{i_{\pi(3)} i_{\pi(4)}} .
\end{equation}
The $24$ permutations fall into three orbits.

\emph{(i) $\pi$ preserves $\{1,2\}$ and $\{3,4\}$ ($4$ permutations).} For
$\pi = \mathrm{id}$ the contraction is $(M^{i_1i_2}_{i_1i_2})
(N^{i_3i_4}_{i_3i_4}) = 4\tr M\,\tr N$. Each transposition inside a pair flips
the sign of the corresponding kernel by antisymmetry and carries
$\mathrm{sgn} = -1$, so all four terms contribute $+4\tr M \tr N$, totalling
$16\,\tr M\,\tr N$.

\emph{(ii) $\pi$ exchanges $\{1,2\}$ with $\{3,4\}$ ($4$ permutations).} For
$\pi = (13)(24)$ the contraction is $M^{i_1i_2}_{i_3i_4}N^{i_3i_4}_{i_1i_2} =
4\tr(MN)$, with $\mathrm{sgn} = +1$; the other three differ by transpositions
inside a pair and, by the same sign cancellation, also contribute
$+4\tr(MN)$, totalling $16\,\tr(MN)$.

\emph{(iii) mixed $\pi$ ($16$ permutations).} Each such permutation sends
exactly one element of $\{1,2\}$ into $\{1,2\}$ and the other into
$\{3,4\}$, so the contraction closes one index of $M$ against one of $N$
twice. For $\pi = (13)$, for instance,
\begin{equation}
  M^{i_1i_2}_{i_3i_2}\,N^{i_3i_4}_{i_1i_4}
  = \sum_{i_1,i_3}\Gamma(M)^{i_1}_{\ i_3}\,\Gamma(N)^{i_3}_{\ i_1}
  = \tr\!\bigl(\Gamma(M)\Gamma(N)\bigr),
\end{equation}
with $\mathrm{sgn}(\pi) = -1$; every mixed permutation reduces to the same
expression with the same sign after using the antisymmetry of the kernels,
totalling $-16\,\tr(\Gamma(M)\Gamma(N))$.

Collecting, $\tfrac1{96}\bigl[16\tr M\tr N + 16\tr(MN) -
16\tr(\Gamma(M)\Gamma(N))\bigr]$, which is~\eqref{eq:lambda4}. As a check, at
$M = N = I_{\Lambda^2}$ the right-hand side equals
$\tfrac16[d^2 + d - n(n-1)^2] = \tfrac1{24}n(n-1)(n-2)(n-3) = \binom n4 =
\dim\Lambda^4 = \tr P_{\Lambda^4}$, for every $n$.
\end{proof}

\begin{lemma}[The contraction intertwines the adjoint action]
\label{lem:contraction}
For every single-particle operator $g$ and every
$Y \in \mathrm{End}(\Lambda^2\mathbb{C}^n)$,
\begin{equation}
\label{eq:gamma_intertwine}
  \Gamma\bigl([\,d\Lambda^2(g),\,Y\,]\bigr) \;=\; [\,g,\,\Gamma(Y)\,] .
\end{equation}
\end{lemma}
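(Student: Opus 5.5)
The proof is a direct index computation in the kernel notation fixed before Theorem~\ref{thm:brick-wall_bp}. Write $D := d\Lambda^2(g)$ and represent it by its antisymmetric kernel. Because $D(u\wedge v) = (gu)\wedge v + u\wedge(gv)$, the kernel of $D$ is the antisymmetrization of $g\otimes I + I\otimes g$:
\[
D^{ij}_{kl} = g^i_{\ k}\delta^j_{\ l} + \delta^i_{\ k} g^j_{\ l} - g^i_{\ l}\delta^j_{\ k} - \delta^i_{\ l} g^j_{\ k}.
\]
The normalization must match the convention $\tr M = \tfrac12 M^{ij}_{ij}$ used for $\Lambda^2$ kernels. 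Under that convention composition reads $(DY)^{ij}_{kl} = \tfrac12 D^{ij}_{pq}Y^{pq}_{kl}$. This step is bookkeeping only: any global factor would have to be consistent with $\Gamma(I_{\Lambda^2}) = (n-1)I_n$.

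With the kernel in hand, the plan has three steps.

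\emph{Step 1.} Compute $(DY)^{ij}_{kl}$. Using the antisymmetry of $Y$ in its upper pair, the four terms of $D$ collapse pairwise, giving
\[
(DY)^{ij}_{kl} = g^i_{\ p}Y^{pj}_{kl} + g^j_{\ p}Y^{ip}_{kl}.
\]
Symmetrically, $(YD)^{ij}_{kl} = Y^{ij}_{pl}g^p_{\ k} + Y^{ij}_{kp}g^p_{\ l}$. Both formulas are simply the derivation rule acting on an upper index or on a lower index.

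\emph{Step 2.} Contract with $\Gamma$, that is, set $l=j$ and sum over $j$. From $DY$ we obtain
\[
g^i_{\ p}\Gamma(Y)^p_{\ k} + g^j_{\ p}Y^{ip}_{kj}.
\]
From $YD$ we obtain
\[
\Gamma(Y)^i_{\ p}g^p_{\ k} + Y^{ij}_{kp}g^p_{\ j}.
\]

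\emph{Step 3.} The second terms in the two expressions are both $\sum_{j,p} g^j_{\ p}Y^{ip}_{kj}$ after renaming dummy indices, so they cancel in the commutator. What survives is $g\Gamma(Y) - \Gamma(Y)g$, which is \eqref{eq:gamma_intertwine}.

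The main obstacle is the cancellation in Step~3. It is the trace-like contraction of the derivation acting on the summed index: once the contracted index is closed, a derivation acting on that index from the left and from the right contributes the same amount, and the two terms cancel. The only delicate point is the kernel normalization in the derivation of Step~1. If the factors of $\tfrac12$ are handled inconsistently, there will be a spurious factor of $2$ on one side.

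As a check, take $g = I$. Then $D = 2I_{\Lambda^2}$, and both sides of \eqref{eq:gamma_intertwine} vanish.

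A coordinate-free alternative is to note that $\Gamma$ is $U(n)$-equivariant, $\Gamma(\Lambda^2(U)Y\Lambda^2(U)^\dagger) = U\Gamma(Y)U^\dagger$, which holds because $\Gamma$ is a partial trace over one tensor factor of the embedding $\Lambda^2\subset\mathbb{C}^n\otimes\mathbb{C}^n$. Setting $U = e^{tg}$ and differentiating at $t=0$ then gives \eqref{eq:gamma_intertwine} directly. I would present this argument as the primary proof and keep the index computation as the verification.
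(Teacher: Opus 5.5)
Your Steps~1--3 are the paper's proof. The paper writes down the same expansions $(d\Lambda^2(g)Y)^{ij}_{kl} = g^i_{\ a}Y^{aj}_{kl} + g^j_{\ a}Y^{ia}_{kl}$ and $(Y\,d\Lambda^2(g))^{ij}_{kl} = Y^{ij}_{al}g^a_{\ k} + Y^{ij}_{kb}g^b_{\ l}$, contracts $j$ against $l$, and cancels the two leftover terms by relabelling. Your only addition is that you derive these expansions from the kernel of $d\Lambda^2(g)$ and the composition rule $(DY)^{ij}_{kl} = \tfrac12 D^{ij}_{pq}Y^{pq}_{kl}$, which the paper takes for granted. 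Both are consistent with $\tr M = \tfrac12 M^{ij}_{ij}$ and $\Gamma(I_{\Lambda^2}) = (n-1)I_n$.

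The equivariance argument you propose to make primary is a genuinely different route. Let $\iota:\Lambda^2\mathbb{C}^n\hookrightarrow\mathbb{C}^n\otimes\mathbb{C}^n$ be the isometric embedding. In the paper's normalization $\Gamma(M) = 2\,\mathrm{Tr}_2(\iota M\iota^\dagger)$. Combined with $\iota\,\Lambda^2(U) = (U\otimes U)\,\iota$ and covariance of the partial trace, this gives $\Gamma(\Lambda^2(U)Y\Lambda^2(U)^\dagger) = U\Gamma(Y)U^\dagger$, and differentiating yields the lemma. What this route buys:
\begin{itemize}
\item it explains the Step~3 cancellation as cyclicity of $\mathrm{Tr}_2$;
\item it removes the factor-of-$2$ bookkeeping once the proportionality to $\mathrm{Tr}_2$ is checked;
\item it gives the finite, group-level statement directly;
\item it extends to higher exterior powers.
\end{itemize}
The index computation, by contrast, is self-contained within the kernel conventions of Section~\ref{sec:bp}.

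One detail needs fixing in the equivariance version. $e^{tg}$ is unitary only for anti-Hermitian $g$, whereas the lemma is stated for every $g$, and the paper's gates are $e^{-i\theta g}$ with $g$ Hermitian. There are two easy repairs:
\begin{itemize}
\item note that both sides are complex-linear in $g$ and $\mathfrak{gl}(n) = \mathfrak{u}(n) + i\,\mathfrak{u}(n)$; or
\item state the equivariance for $A\in GL(n)$ with $\Lambda^2(A)^{-1}$ in place of the adjoint. This holds by the same argument, since $A\otimes A$ preserves the antisymmetric subspace and $\mathrm{Tr}_2$ is invariant under similarity by $I\otimes A$.
\end{itemize}
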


\begin{proof}
In kernel form, $\bigl(d\Lambda^2(g)\,Y\bigr)^{ij}_{kl} = g^i_{\ a}Y^{aj}_{kl}
+ g^j_{\ a}Y^{ia}_{kl}$ and $\bigl(Y\,d\Lambda^2(g)\bigr)^{ij}_{kl} =
Y^{ij}_{al}\,g^a_{\ k} + Y^{ij}_{kb}\,g^b_{\ l}$. Contracting $j$ against $l$
and summing, the first term of each expression yields $g\,\Gamma(Y)$ and
$\Gamma(Y)\,g$ respectively. The two remaining terms are
$\sum_{j,a} g^j_{\ a}Y^{ia}_{kj}$ and $\sum_{j,b} g^b_{\ j}Y^{ij}_{kb}$, which
coincide after the relabelling $a \leftrightarrow j$ and cancel in the
difference.
\end{proof}

\begin{lemma}[Averaged locality of conjugated gate generators]
\label{lem:avg_locality}
Let $g$ be a Hermitian single-particle operator, let $\sigma$ be any Hermitian
operator on $\Lambda^2\mathbb{C}^n$, and for $V \in U(n)$ write
$\tilde g = V^\dagger g V$, so that
$\tilde G^{(2)} = d\Lambda^2(\tilde g) =
\Lambda^2(V)^\dagger\,d\Lambda^2(g)\,\Lambda^2(V)$. Then, for
$V \sim \mathrm{Haar}(U(n))$,
\begin{equation}
\label{eq:avg-locality}
  \E_V\bigl\|[\tilde G^{(2)},\sigma]\bigr\|_F^2
  \;\le\; \frac{4\,\bigl[(n-2)\tr(g^2) + (\tr g)^2\bigr]}{\binom n2}\,
          \|\sigma\|_F^2 .
\end{equation}
For an RBS generator the right-hand side is
$16(n-2)\|\sigma\|_F^2/(n(n-1))$ and for a phase generator $g = \hat n_j$ it is
$8\|\sigma\|_F^2/n$: in both cases $O(\|\sigma\|_F^2/n)$, the same rate as the
pointwise bound of Lemma~\ref{lem:gate_locality} for coordinate-local
generators.
\end{lemma}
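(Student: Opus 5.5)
The approach is to expand the Frobenius norm and average each term with the Haar second moment of the \emph{single-particle} representation, never of $\Lambda^2$. First,
\[
\|[\tilde G^{(2)},\sigma]\|_F^2 = 2\,\tr\!\bigl((\tilde G^{(2)})^2\sigma^2\bigr) - 2\,\tr\!\bigl(\tilde G^{(2)}\sigma\tilde G^{(2)}\sigma\bigr),
\]
using Hermiticity of $\tilde G^{(2)}$ (inherited from $g$) and of $\sigma$. The second term equals $-2\|\sigma^{1/2}\tilde G^{(2)}\sigma^{1/2}\|_F^2$ when $\sigma\succeq0$, but $\sigma$ need not be positive. The cleaner route is therefore the crude bound
\[
\|[A,\sigma]\|_F^2 \le 2\|A\sigma\|_F^2 + 2\|\sigma A\|_F^2 = 2\tr(A^2\sigma^2) + 2\tr(\sigma^2A^2) = 4\tr(A^2\sigma^2),
\]
valid for any Hermitian $A$, $\sigma$. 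This already accounts for the prefactor $4$ in~\eqref{eq:avg-locality}. It then suffices to show
\[
\E_V\,(\tilde G^{(2)})^2 \preceq \frac{(n-2)\tr(g^2)+(\tr g)^2}{\binom n2}\,I_{\Lambda^2},
\]
or better an equality. Pairing with $\sigma^2\succeq0$ gives $\E\tr(A^2\sigma^2) \le \|\E A^2\|_{\mathrm{op}}\|\sigma\|_F^2$.

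The key step is that $\E_V\,\Lambda^2(V)^\dagger X\,\Lambda^2(V)$ is the twirl over the irreducible representation $\Lambda^2\mathbb{C}^n$ of $U(n)$. By Schur's lemma it equals $\frac{\tr X}{\binom n2} I_{\Lambda^2}$. Apply this with $X = d\Lambda^2(g)^2$, since $(\tilde G^{(2)})^2 = \Lambda^2(V)^\dagger d\Lambda^2(g)^2\Lambda^2(V)$. What remains is the trace $\tr_{\Lambda^2}\bigl(d\Lambda^2(g)^2\bigr)$.

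To compute it, write $d\Lambda^2(g) = (g\otimes I + I\otimes g)|_{\Lambda^2}$ and take the trace against $P_A = (I-\mathrm{SWAP})/2$:
\[
\tr\bigl[P_A(g\otimes I+I\otimes g)^2\bigr] = \tfrac12\bigl[2n\tr g^2 + 2(\tr g)^2\bigr] - \tfrac12\bigl[2\tr g^2 + 2\tr g^2\bigr] = (n-2)\tr g^2 + (\tr g)^2 .
\]
Here I use $\tr[\mathrm{SWAP}(X\otimes Y)] = \tr(XY)$. This yields~\eqref{eq:avg-locality} as an equality for $\E(\tilde G^{(2)})^2$ and hence as an upper bound for the commutator.

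The specializations follow directly. For an RBS generator, $g = i(e_{ij}-e_{ji})$ gives $\tr g^2 = 2$ and $\tr g = 0$, so the right-hand side is $4\cdot 2(n-2)/\binom n2 = 16(n-2)/(n(n-1))$. For $g = \hat n_j$, $\tr g^2 = \tr g = 1$ gives $4(n-1)/\binom n2 = 8/n$.

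The main obstacle is remembering that irreducibility of $\Lambda^2\mathbb{C}^n$ under $U(n)$ is what lets Schur's lemma apply to a first-moment twirl. No second moment of $\Lambda^2(V)$ is needed, which is why the statement is unconditional. I would check briefly that $\Lambda^2\mathbb{C}^n$ is irreducible for $n\ge2$ (standard: highest weight $(1,1,0,\dots)$) and that $\sigma$ is fixed, i.e.\ independent of $V$, so the expectation passes inside the trace.
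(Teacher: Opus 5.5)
Your proposal is correct and follows the paper's proof essentially step for step: the bound $\|[A,\sigma]\|_F^2 \le 4\tr(A^2\sigma^2)$ (you get it from $\|X+Y\|_F^2 \le 2\|X\|_F^2 + 2\|Y\|_F^2$, the paper from Cauchy--Schwarz on the cross term $\tr(A\sigma A\sigma)$), then the first-moment Schur twirl on the irreducible module $\Lambda^2\mathbb{C}^n$, and finally the trace identity $\tr\bigl((d\Lambda^2(g))^2\bigr) = (n-2)\tr(g^2) + (\tr g)^2$. The only cosmetic difference is that you evaluate this trace against $P_A = (I - \mathrm{SWAP})/2$ on $\mathbb{C}^n\otimes\mathbb{C}^n$, whereas the paper contracts the antisymmetric kernel of $d\Lambda^2(g)$ directly.
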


\begin{proof}
Write $A = \tilde G^{(2)}$, which is Hermitian. Expanding the commutator,
$\|[A,\sigma]\|_F^2 = 2\bigl[\tr(A^2\sigma^2) - \tr(A\sigma A\sigma)\bigr]$,
and by Cauchy--Schwarz
$|\tr(A\sigma A\sigma)| = |\langle (A\sigma)^\dagger, A\sigma\rangle|
\le \|A\sigma\|_F^2 = \tr(A^2\sigma^2)$, so
$\|[A,\sigma]\|_F^2 \le 4\,\tr(A^2\sigma^2)$. Now
$A^2 = \Lambda^2(V)^\dagger (G^{(2)})^2\Lambda^2(V)$, and
$\Lambda^2\mathbb{C}^n$ is an irreducible $U(n)$-module, so by Schur's lemma
$\E_V[A^2] = \bigl(\tr((G^{(2)})^2)/\tbinom n2\bigr) I$. Finally, in kernel
form $\bigl(d\Lambda^2(g)\bigr)^{ij}_{kl} = g^i_{\ k}\delta^j_l +
\delta^i_k g^j_{\ l} - g^i_{\ l}\delta^j_k - \delta^i_l g^j_{\ k}$, and
$\tr(MN) = \tfrac14 M^{ij}_{kl}N^{kl}_{ij}$ gives
\begin{equation}
\label{eq:trG2sq}
  \tr\bigl((d\Lambda^2(g))^2\bigr) = (n-2)\tr(g^2) + (\tr g)^2
\end{equation}
(checked at $g = I_n$, where $d\Lambda^2(I) = 2I_{\Lambda^2}$ and both sides
equal $2n(n-1)$). Combining the three displays gives~\eqref{eq:avg-locality};
the two special cases use $\tr(g^2) = 2$, $\tr g = 0$ for an RBS generator and
$\tr(g^2) = \tr g = 1$ for a phase generator.
\end{proof}

\begin{lemma}[Conjugation neutrality of the one-body commutator]
\label{lem:conj_neutral}
Let $\Pi$ be a rank-$m$ orthogonal projector on $\mathbb{C}^n$ and let $g$ be
Hermitian. For $V \sim \mathrm{Haar}(U(n))$,
\begin{equation}
\label{eq:conj_neutral}
  \E_V\bigl\|[\,g,\,V\Pi V^\dagger\,]\bigr\|_F^2
  \;=\; \frac{2m(n-m)}{n^2-1}\,
        \Bigl[\tr(g^2) - \tfrac{(\tr g)^2}{n}\Bigr].
\end{equation}
At $m = 2k$ this gives $8k(n-2k)/(n^2-1)$ for an RBS generator and
$4k(n-2k)/\bigl(n(n+1)\bigr)$ for a phase generator, to be compared with the
unconjugated values $8k(n-2k)/n^2$ and $4k(n-2k)/n^2$ of Step~4 of
Theorem~\ref{thm:brick-wall_bp}: the two agree to leading order and with the
same constants, the ratios being $n^2/(n^2-1)$ and $n/(n+1)$. Haar
conjugation is thus neutral for $\tau$, constants included.
\end{lemma}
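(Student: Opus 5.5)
We compute the Haar average by a first-order twirl and reduce everything to two traces. Write $P := V\Pi V^\dagger$. We have the expansion
\[
\|[g,P]\|_F^2 = 2\bigl[\tr(g^2P^2) - \tr(gPgP)\bigr] = 2\bigl[\tr(g^2P) - \tr(gPgP)\bigr],
\]
where we used $P^2 = P$. The first term is linear in $P$. Since $\E_V[V\Pi V^\dagger] = (m/n)I_n$ by Schur's lemma on the irreducible defining representation, it averages to $(m/n)\tr(g^2)$.

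The second term is quadratic in $P$. We rewrite it as
\[
\tr(gPgP) = \tr\bigl((g\otimes g)(P\otimes P)\,\mathrm{SWAP}\bigr),
\]
and apply the standard second-moment (Weingarten) formula for $\E_V[(V\Pi V^\dagger)^{\otimes 2}]$. On $\mathbb{C}^n\otimes\mathbb{C}^n$ this average equals $\alpha I + \beta\,\mathrm{SWAP}$. The coefficients are fixed by the two trace conditions
\[
\alpha n^2 + \beta n = (\tr\Pi)^2 = m^2, \qquad \alpha n + \beta n^2 = \tr(\Pi^2) = m.
\]
Solving gives $\alpha = (m^2 n - m)/(n(n^2-1))$ and $\beta = (mn - m^2)/(n(n^2-1))$.

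Pairing with $(g\otimes g)\,\mathrm{SWAP}$ gives
\[
\E\,\tr(gPgP) = \alpha\,\tr(g^2) + \beta\,(\tr g)^2,
\]
because $\tr\bigl((g\otimes g)\,\mathrm{SWAP}\bigr) = \tr(g^2)$ and $\tr(g\otimes g) = (\tr g)^2$.

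Collecting terms, the coefficient of $\tr(g^2)$ is
\[
2\Bigl[\frac{m}{n} - \alpha\Bigr] = 2\,\frac{m(n^2-1) - m^2n + m}{n(n^2-1)} = \frac{2m(n-m)}{n^2-1}.
\]
The coefficient of $(\tr g)^2$ is $-2\beta = -\frac{2m(n-m)}{n(n^2-1)}$. Together these give exactly~\eqref{eq:conj_neutral}.

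For the special cases we substitute $m = 2k$. An RBS generator has $\tr(g^2) = 2$ and $\tr g = 0$, giving $8k(n-2k)/(n^2-1)$. A phase generator $\hat n_j$ has $\tr(g^2) = \tr g = 1$, so the bracket equals $1 - 1/n = (n-1)/n$, and the result simplifies to $4k(n-2k)/(n(n+1))$.

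The comparison with Step~4 is then just a ratio of these values against the quoted unconjugated values $8k(n-2k)/n^2$ and $4k(n-2k)/n^2$, giving $n^2/(n^2-1)$ and $n/(n+1)$ respectively.

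The only step needing care is the correct solution of the $2\times 2$ system for $\alpha,\beta$ and its sign bookkeeping. We would check it at $m = n$, where $P = I$ and the average must vanish: the factor $(n-m)$ does vanish there. We would also check it at $n = 2$, $m = 1$ against a direct Bloch-sphere average.
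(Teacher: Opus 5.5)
Your proof is correct and follows the paper's argument. Both expand $\|[g,P]\|_F^2 = 2[\tr(g^2P) - \tr(gPgP)]$, average the linear term using $\E_V[V\Pi V^\dagger] = (m/n)I_n$, and write $\E_V[P\otimes P] = \alpha I + \beta\,\mathrm{SWAP}$ with coefficients fixed by pairing against $I$ and $\mathrm{SWAP}$; your $\alpha,\beta$ coincide with the paper's $a,b$, and your special-case substitutions and $m=n$ check match as well.
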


\begin{proof}
Write $Q = V\Pi V^\dagger$, so $Q^2 = Q$ and, expanding the commutator,
$\|[g,Q]\|_F^2 = 2\bigl[\tr(g^2 Q) - \tr(gQgQ)\bigr]$. Since
$\E_V[Q] = (m/n)I_n$, the first term averages to $(m/n)\tr(g^2)$. For the
second, $\tr(gQgQ) = \tr\bigl[(g\otimes g)(Q\otimes Q)S\bigr]$ with $S$ the
swap on $\mathbb{C}^n\otimes\mathbb{C}^n$, and $\E_V[Q\otimes Q]$ is invariant
under $U \otimes U$ for every $U \in U(n)$, hence equals $aI + bS$ by
Schur--Weyl duality. Pairing with $I$ and with $S$ gives
$an^2 + bn = (\tr Q)^2 = m^2$ and $an + bn^2 = \tr(Q^2) = m$, whence
\begin{equation}
  a = \frac{m(mn-1)}{n(n^2-1)},
  \qquad
  b = \frac{m(n-m)}{n(n^2-1)} .
\end{equation}
Since $\tr[(g\otimes g)S] = \tr(g^2)$ and $\tr(g\otimes g) = (\tr g)^2$, we get
$\E_V\tr(gQgQ) = a\,\tr(g^2) + b\,(\tr g)^2$, and therefore
\begin{equation}
  \E_V\|[g,Q]\|_F^2
  = 2\Bigl[\bigl(\tfrac mn - a\bigr)\tr(g^2) - b\,(\tr g)^2\Bigr].
\end{equation}
Finally $\tfrac mn - a = m(n-m)/(n^2-1)$ and $b = \tfrac1n\,m(n-m)/(n^2-1)$,
which is~\eqref{eq:conj_neutral}. As a check, at $m = n$ (so $Q = I$) both
sides vanish.
\end{proof}

\subsection{Proof of Theorem~\ref{thm:brick-wall_bp}}

\begin{proof}
Fix a direction $g \in \mathfrak{u}(n)$ and let $W \sim \mathrm{Haar}(U(n))$.
Right invariance of the Haar measure makes $We^{-isg}$ Haar-distributed for
every $s$, so $s \mapsto \E[\calL(We^{-isg})]$ is constant and
$\E[D_g\calL] = 0$. We do \emph{not} split $W$ into independent
factors---the partial products of Givens rotations before and after a fixed
gate are not individually Haar-distributed---and instead use that the Haar
measure on $U(n)$ is an exact unitary $2$-design on every polynomial
irreducible representation, applying the resulting second-moment identity
directly on the two-particle module $\Lambda^2 \mathbb{C}^n$. This is the
dynamical-Lie-algebra variance computation of Fontana et
al.~\cite{fontana2024adjoint} and Larocca et al.~\cite{larocca2021diagnosing}
specialized to the antisymmetric module.

Since $M = n_i n_j$ is a two-body observable, its parameter-dependent
expectation value is governed entirely by the action of $W$ on $\Lambda^2
\mathbb{C}^n$. Writing $M^{(2)} = \ket{\{i,j\}}\bra{\{i,j\}}$ for the image of
$M$ on $\Lambda^2 \mathbb{C}^n$ and $\rho_2(\bx)$ for the two-particle reduced
density matrix of $\ket{\psi(\bx)}$, the invariant derivative is
\begin{equation}
  D_g\calL
  \;=\; i\,\tr_{\Lambda^2}\!\bigl([G^{(2)},\,
  \Lambda^2(W)^\dagger M^{(2)} \Lambda^2(W)]\,\rho_2(\bx)\bigr).
\end{equation}
By cyclicity of the trace, $D_g\calL =
-i\,\tr_{\Lambda^2}(\Lambda^2(W)^\dagger M^{(2)}\Lambda^2(W)\,B)$ with
$B = [G^{(2)}, \rho_{2,\mathfrak{g}_2}(\bx)]$: the identity component of
$\rho_2$ commutes with $G^{(2)}$ and drops out. This moves the commutator
onto the state factor and lets the second-moment identity act directly on
$M^{(2)\otimes 2}$. The variance over $W \sim \mathrm{Haar}(U(n))$ is then
\begin{align}
\label{eq:bw-variance}
  \Var[D_g\calL]
  \;=\; \nonumber \\
  -\,\E_{W\sim\mathrm{Haar}(U(n))}\!\left[
  \tr\!\Bigl(\bigl(\Lambda^2(W)^\dagger M^{(2)}\Lambda^2(W)\bigr)^{\otimes 2}\,
  B^{\otimes 2}\Bigr)\right].
\end{align}

\medskip
\noindent\textbf{Step 1: $U(n)$-Weingarten formula on $\Lambda^2$.}
Since $W$ is exactly Haar-distributed on $U(n)$, the $U(n)$-Weingarten formula
applies to its action on $\Lambda^2 \mathbb{C}^n$. As a $U(n)$-representation,
\begin{equation}
  \Lambda^2 \mathbb{C}^n \otimes \Lambda^2 \mathbb{C}^n
  \;=\; V_{(2,2)} \oplus V_{(2,1,1)} \oplus V_{(1,1,1,1)},
\end{equation}
with dimensions (hook length formula)
\begin{align}
  D_{(2,2)} := \dim V_{(2,2)} &= \tfrac{n^2(n^2-1)}{12},\\
  D_{(2,1,1)} := \dim V_{(2,1,1)} &= \tfrac{n(n+1)(n-1)(n-2)}{8},\\
  D_{\Lambda^4} := \dim V_{(1,1,1,1)} &= \tbinom{n}{4},
\end{align}
all $\Theta(n^4)$ (at $n=4$: $20 + 15 + 1 = 36 = \binom{4}{2}^2$). Let
$P_\lambda$ be the orthogonal projectors onto these irreducibles. By Schur's
lemma the Haar second-moment operator is
$\E_{V}[\Lambda^2(V)^{\otimes 2} Y \Lambda^2(V)^{\dagger\otimes 2}] =
\sum_\lambda \tfrac{\tr(P_\lambda Y)}{\dim V_\lambda} P_\lambda$, exact with no
approximation error. Applying this to the $M^{(2)\otimes 2}$ factor
in~\eqref{eq:bw-variance} and pairing with $B^{\otimes 2}$ via $\tr(P_\lambda
P_\mu) = \delta_{\lambda\mu}\dim V_\lambda$ yields the exact sum
\begin{equation}
\label{eq:bw-irrep-sum}
  \Var[D_g\calL]
  \;=\; -\sum_{\lambda}
  \frac{\tr\!\bigl(P_\lambda\, M^{(2)\otimes 2}\bigr)\,
        \tr\!\bigl(P_\lambda\, B^{\otimes 2}\bigr)}
       {\dim V_\lambda}.
\end{equation}
The generator enters \emph{only} through $B$; there is no separate scalar
prefactor.

We record once the two facts used to evaluate the three weights. Let $S$ be
the exchange of the two $\Lambda^2$ factors and
$P_{\mathrm{sym}} = \tfrac12(\mathbf{1}+S)$,
$P_{\mathrm{anti}} = \tfrac12(\mathbf{1}-S)$. By the plethysm
$\mathrm{Sym}^2(\Lambda^2\mathbb{C}^n) = V_{(2,2)} \oplus
\Lambda^4\mathbb{C}^n$ and $\Lambda^2(\Lambda^2\mathbb{C}^n) = V_{(2,1,1)}$
(dimension check at $n = 4$: $20+1 = 21 = \binom{6+1}{2}$ and $15$), so
\begin{equation}
\label{eq:proj_identification}
  P_{(2,1,1)} = P_{\mathrm{anti}},
  \qquad
  P_{(2,2)} = P_{\mathrm{sym}} - P_{\Lambda^4}.
\end{equation}
Consequently, for \emph{any} $X \in \mathrm{End}(\Lambda^2\mathbb{C}^n)$,
\begin{align}
\label{eq:three_weights}
  \tr\!\bigl(P_{(2,1,1)}X^{\otimes 2}\bigr)
    &= \tfrac12\bigl[(\tr X)^2 - \tr(X^2)\bigr], \nonumber\\
  \tr\!\bigl(P_{\Lambda^4}X^{\otimes 2}\bigr)
    &= \tfrac16\bigl[(\tr X)^2 + \tr(X^2) - \tr\bigl(\Gamma(X)^2\bigr)\bigr],
    \nonumber\\
  \tr\!\bigl(P_{(2,2)}X^{\otimes 2}\bigr)
    &= \tfrac13\bigl[(\tr X)^2 + \tr(X^2)\bigr]
       + \tfrac16 \tr\bigl(\Gamma(X)^2\bigr),
\end{align}
the first from $\tr(SX^{\otimes2}) = \tr(X^2)$, the second from
Lemma~\ref{lem:lambda4} with $M = N = X$, and the third
from~\eqref{eq:proj_identification}. As a check, at $X = I_{\Lambda^2}$ the
three right-hand sides return $D_{(2,1,1)}$, $D_{\Lambda^4}$ and $D_{(2,2)}$
respectively, for every $n$. We stress that the $V_{(2,1,1)}$ weight
in~\eqref{eq:three_weights} does \emph{not} vanish merely because
$X^{\otimes 2}$ commutes with $S$; it vanishes precisely when
$(\tr X)^2 = \tr(X^2)$.

\medskip
\noindent\textbf{Step 2: the observable weight sits on $V_{(2,2)}$.}
$M^{(2)} = \ket{\{i,j\}}\bra{\{i,j\}}$ is a rank-one projector, so
$\tr M^{(2)} = \tr(M^{(2)2}) = 1$ and, by Lemma~\ref{lem:contraction}'s
contraction, $\Gamma(M^{(2)}) = \hat n_i + \hat n_j$, whence
$\tr(\Gamma(M^{(2)})^2) = 2$. Substituting into~\eqref{eq:three_weights},
\begin{eqnarray}
\label{eq:M_weights}
  \tr\!\bigl(P_{(2,1,1)}M^{(2)\otimes 2}\bigr) &=& \tfrac12[1-1] = 0,
  \\
  \tr\!\bigl(P_{\Lambda^4}M^{(2)\otimes 2}\bigr) &=& \tfrac16[1+1-2] = 0,
\end{eqnarray}
and $\tr(P_{(2,2)}M^{(2)\otimes2}) = \tfrac13[1+1] + \tfrac16\cdot 2 = 1$.
It is the rank-one \emph{projector} property $(\tr M^{(2)})^2 =
\tr(M^{(2)2})$, and not any exchange symmetry of $M^{(2)\otimes2}$, that
annihilates the $V_{(2,1,1)}$ channel; the general two-body readout of
Proposition~\ref{prop:general_readout} does not have this property.
Therefore~\eqref{eq:bw-irrep-sum} collapses to the single term
\begin{equation}
\label{eq:var-collapsed}
  \Var[D_g\calL]
  = -\frac{\tr\!\bigl(P_{(2,2)}B^{\otimes 2}\bigr)}{D_{(2,2)}}
  = -\frac{12\,\tr\!\bigl(P_{(2,2)}B^{\otimes 2}\bigr)}{n^2(n^2-1)} .
\end{equation}

\medskip
\noindent\textbf{Step 3: evaluation of the $(2,2)$ weight of $B^{\otimes2}$.}
$B$ is anti-Hermitian and traceless, so $\tr B = 0$,
$\tr(B^2) = -\|B\|_F^2$, and, $\Gamma(B)$ being anti-Hermitian as well
(Lemma~\ref{lem:contraction}), $\tr(\Gamma(B)^2) = -\|\Gamma(B)\|_F^2$.
Substituting $X = B$ into~\eqref{eq:three_weights} gives all three weights at
once:
\begin{align}
\label{eq:B_weights}
  \tr\!\bigl(P_{(2,1,1)}B^{\otimes 2}\bigr) &= \tfrac12\|B\|_F^2, \nonumber\\
  \tr\!\bigl(P_{\Lambda^4}B^{\otimes 2}\bigr)
    &= \tfrac16\bigl[\|\Gamma(B)\|_F^2 - \|B\|_F^2\bigr], \nonumber\\
  \tr\!\bigl(P_{(2,2)}B^{\otimes 2}\bigr)
    &= -\tfrac13\|B\|_F^2 - \tfrac16\|\Gamma(B)\|_F^2 .
\end{align}
The $V_{(2,1,1)}$ weight of $B^{\otimes2}$ is thus \emph{nonzero}; it is
eliminated from~\eqref{eq:bw-irrep-sum} by the observable, through the first
identity of~\eqref{eq:M_weights}, and not by any property of $B$. Substituting
the last line of~\eqref{eq:B_weights} into~\eqref{eq:var-collapsed} gives the
exact identity~\eqref{eq:var_exact}.

\medskip
\noindent\textbf{Step 4: the one-body term.}
By Lemma~\ref{lem:contraction} and the contraction identity
$\Gamma(\rho_2) = (k-1)\rho_1$ of Eq.~\eqref{eq:contraction_def}---the
identity component of $\rho_2$ contributing only a multiple of $I_n$, which
commutes with $g$---we obtain
$\Gamma(B) = [g,\Gamma(\rho_{2,\mathfrak{g}_2})] = (k-1)[g,\rho_1(\bx)]$.
With $\rho_1(\bx) = \tfrac12\Pi$ this is $\tfrac{k-1}{2}[g,\Pi]$, so
$\|\Gamma(B)\|_F^2 = \tfrac{(k-1)^2}{4}\tau$ with $\tau = \|[g,\Pi]\|_F^2$.
For a rank-one phase generator $g = \hat n_j$ (so $g^2 = g$),
\begin{align}
  \tau &= 2\bigl[\tr(g\Pi) - \tr(g\Pi g\Pi)\bigr]
       = 2\,\Pi_{jj}(1-\Pi_{jj}) \nonumber\\
       &= \frac{2\cdot 2k(n-2k)}{n^2},
\end{align}
using $\Pi_{jj} = 2k/n$ (Lemma~\ref{lem:spreading}). For an RBS generator
$g = G_{ij}$ on any mode pair $\{i,j\}$ (so $g^2$ is the projector onto
$\mathrm{span}\{e_i,e_j\}$),
\begin{eqnarray}
\label{eq:tau_rbs}
  \tau &=& 2\bigl[\Pi_{ii}+\Pi_{jj} - 2\Pi_{ii}\Pi_{jj}
        + 2\,\mathrm{Re}\,\Pi_{ij}^2\bigr]\nonumber\\
     & =& \frac{4\cdot 2k(n-2k)}{n^2} + 4\,\mathrm{Re}\,\Pi_{ij}^2 .
\end{eqnarray}
The off-diagonal entry is a Walsh--Hadamard character sum: before the
diagonal data layer $\Pi = H\Pi_A H$ with $\Pi_A$ the projector onto the $2k$
active modes, so
$\Pi_{ij} = \tfrac1n\sum_{c<2k}(-1)^{\langle \mu,\, c\rangle}$ with
$\mu := i \oplus j$, and the data layer only multiplies it by a phase. Let
$2^b$ be the lowest set bit of $\mu$ and write $2k = q\,2^{b+1} + r$,
$0 \le r < 2^{b+1}$. The involution $c \leftrightarrow c \oplus 2^b$ flips
$(-1)^{\langle\mu,c\rangle}$ and preserves each full block of $2^{b+1}$
consecutive values of $c$, so those blocks cancel term by term and only the
residue survives, giving
\begin{eqnarray}
\label{eq:Pi_offdiag}
  |\Pi_{ij}| &=& \tfrac1n\,\bigl|\min(r,\,2^{b+1}-r)\bigr|
  \;\le\; \frac{\min(2^b,\,2k)}{n},
  \\
  \Pi_{ij} &=& 0 \ \text{ whenever } 2^{b+1} \mid 2k .
\end{eqnarray}
For the brick-wall every trainable RBS gate acts on a nearest-neighbour pair
of the ring: for a chain pair, $\mu = i \oplus (i{+}1)$ has $b = 0$; for the
boundary pair $(n,1)$, the binary labels $n-1$ and $0$ give
$\mu = 2^K - 1$, again with $b = 0$. Since $2 \mid 2k$ always,
$\Pi_{ij} = 0$ identically and~\eqref{eq:tau_rbs} reduces to
$\tau = 8k(n-2k)/n^2$ exactly, for every RBS gate of the circuit. (For the long-stride pairs of the butterfly,
$\mu = 2^b$ and~\eqref{eq:Pi_offdiag} gives $\Pi_{ij} = 0$ whenever
$2^{b+1} \mid 2k$, and $|\Pi_{ij}| \le 2^b/n \le 2k/n$ otherwise; this is used
in Claim~2 of Section~\ref{sec:bp_butterfly}.) In every case
$\tau \ge 8k(n-2k)/n^2 - 4|\Pi_{ij}|^2$, so
$\tau = \Theta\bigl(k(n-2k)/n^2\bigr)$ whenever $n - 2k = \Omega(n)$ and
$|\Pi_{ij}|^2 \le c\,k(n-2k)/n^2$ for a sufficiently small constant $c$; for
the brick-wall $\Pi_{ij} = 0$ and the evaluation is exact and unconditional.

\medskip
\noindent\textbf{Step 5: the two-body term.}
Decompose the $2$-RDM into its disconnected (Hartree--Fock) and connected
(cumulant) parts, $\rho_2 = \rho_1\wedge\rho_1 + \rho_2^{\mathrm{conn}}$, and
correspondingly $B = B^{\mathrm{disc}} + B^{\mathrm{conn}}$. Since
$\rho_1\wedge\rho_1 = \tfrac14\Lambda^2(\Pi)$ and, by functoriality of the
exterior square,
$[\,d\Lambda^2(g),\Lambda^2(\Pi)\,] = \partial_u \Lambda^2(\Pi + uC)|_{u=0}$
with $C := [g,\Pi]$, we may evaluate $\tr\bigl((B^{\mathrm{disc}})^2\bigr)$ by
polarizing the identity $\tr\bigl(\Lambda^2(A)\Lambda^2(A')\bigr) =
e_2(AA') = \tfrac12[(\tr AA')^2 - \tr((AA')^2)]$ at $A = \Pi + uC$,
$A' = \Pi + vC$. Using $\Pi^2 = \Pi$, $\tr C = \tr(\Pi C) = 0$ and
$\tr(\Pi C^2) = \tfrac12\tr(C^2) = -\tfrac\tau2$, one finds
$\partial_u\partial_v (\tr AA')^2 = -2m\tau$ and
$\partial_u\partial_v \tr((AA')^2) = 4\tr(\Pi C^2) = -2\tau$ with $m = 2k$
(the second using $\tr(C\Pi C) = \tr(\Pi C^2)$ by cyclicity), whence
\begin{equation}
  \bigl\|B^{\mathrm{disc}}\bigr\|_F^2
  = -\tr\bigl((B^{\mathrm{disc}})^2\bigr)
  = \frac{(m-1)\,\tau}{16}
  = \frac{(2k-1)\,\tau}{16},
\end{equation}
exactly and independently of $\bx$. For the connected part,
Lemma~\ref{lem:gate_locality} with $r = \Theta(k)$ gives
$\|B^{\mathrm{conn}}\|_F^2 = O(k/n)$; the same lemma applied with
$r = \binom{2k}{2}$ confirms $\|B^{\mathrm{disc}}\|_F^2 = O(k^2/n)$,
consistent with the exact value. The error term in~\eqref{eq:full_rate} is
dominated by the cross term: by Cauchy--Schwarz,
$2|\langle B^{\mathrm{disc}}, B^{\mathrm{conn}}\rangle| \le
2\|B^{\mathrm{disc}}\|_F\|B^{\mathrm{conn}}\|_F =
O\bigl(k/\sqrt n\bigr)\cdot O\bigl(\sqrt{k/n}\bigr) = O(k^{3/2}/n)$,
so that
\begin{equation}
\label{eq:B_error}
  \Bigl|\;\|B\|_F^2 - \tfrac{(2k-1)\tau}{16}\;\Bigr|
  \;=\; O\!\left(\frac{k^{3/2}}{n}\right),
\end{equation}
which is the order recorded in~\eqref{eq:full_rate} and is negligible against
the leading $\Theta(k^2/n)$ whenever $k = \omega(1)$ (and against the one-body
term $\Theta(k^3/n)$ throughout).

\medskip
\noindent\textbf{Step 6: the rate.}
Substituting~\eqref{eq:full_rate} into~\eqref{eq:var_exact} and using
$\tau = \Theta(k(n-2k)/n^2)$,
\begin{align}
  4\|B\|_F^2 + 2\|\Gamma(B)\|_F^2
  &= \frac{\tau}{4}\Bigl[(2k-1) + 2(k-1)^2\Bigr] + O\!\left(\tfrac{k^{3/2}}{n}\right)
  \nonumber\\
  &= \Theta\!\left(\frac{k^3\,(n-2k)}{n^2}\right),
\end{align}
so that for $n - 2k = \Omega(n)$ the numerator is $\Theta(k^3/n)$ and
$\Var[D_g\calL] = \Theta(k^3/n^5)$, which
is~\eqref{eq:k3_rate}. We note that the two-sided bound holds for every
$k \ge 2$, including $k = O(1)$, where the error term
in~\eqref{eq:B_error} is of the same order as the two-body signal: the
lower bound uses no error control, since $4\|B\|_F^2 \ge 0$ and the
one-body term is exact,
\begin{align}
  4\|B\|_F^2 + 2\|\Gamma(B)\|_F^2
  &\ge 2\,\|\Gamma(B)\|_F^2
  = \frac{(k-1)^2}{2}\,\tau\\
  &= \Theta\!\left(\frac{k^3(n-2k)}{n^2}\right),
\end{align}
while for the upper bound the error term $O(k^{3/2}/n)$
of~\eqref{eq:B_error} is itself $O(k^3/n)$ for every $k \ge 2$ and is
absorbed. The quantities $\tau$, $\|B^{\mathrm{disc}}\|_F^2$ and
$\|\Gamma(B)\|_F^2$ depend on the
encoding only through $\rho_1 = \tfrac12\Pi$, with $\Pi$ the rank-$2k$
projector of Lemma~\ref{lem:spreading}.

\end{proof}

\subsection{Proof of Proposition~\ref{prop:general_readout}}

\begin{proof}
Equation~\eqref{eq:bw-irrep-sum} and the three weights~\eqref{eq:three_weights}
of Step~1 hold for arbitrary $M^{(2)}$; only Step~2 used the rank-one
structure. Adding a multiple of $I_{\Lambda^2}$ to $M^{(2)}$ shifts $\calL$ by
a constant and leaves $D_g\calL$ unchanged, so we may replace $M^{(2)}$ by
$\tilde M$ throughout. With $\tr\tilde M = 0$, $\tr(\tilde M^2) = p$ and
$\tr(\Gamma(\tilde M)^2) = q$ (both $\tilde M$ and $\Gamma(\tilde M)$ being
Hermitian), \eqref{eq:three_weights} gives
\[
  t_{(2,1,1)} = -\tfrac{p}{2}, \qquad
  t_{\Lambda^4} = \tfrac16(p - q), \qquad
  t_{(2,2)} = \tfrac{p}{3} + \tfrac{q}{6},
\]
while~\eqref{eq:B_weights} gives $s_{(2,1,1)} = \tfrac12\|B\|_F^2$,
$s_{\Lambda^4} = \tfrac16(\|\Gamma(B)\|_F^2 - \|B\|_F^2)$ and
$s_{(2,2)} = -\tfrac13\|B\|_F^2 - \tfrac16\|\Gamma(B)\|_F^2$. Substituting
into $\Var = -\sum_\lambda t_\lambda s_\lambda / D_\lambda$ and collecting the
coefficients of $\|B\|_F^2$ and $\|\Gamma(B)\|_F^2$ gives
\begin{eqnarray}
  c_2 &=& \frac{t_{(2,2)}}{3D_{(2,2)}}
        - \frac{t_{(2,1,1)}}{2D_{(2,1,1)}}
        + \frac{t_{\Lambda^4}}{6D_{\Lambda^4}},
  \\
  c_1 &=& \frac16\left(\frac{t_{(2,2)}}{D_{(2,2)}}
        - \frac{t_{\Lambda^4}}{D_{\Lambda^4}}\right),
\end{eqnarray}
and clearing denominators with
$D_{(2,2)} = \tfrac{n^2(n^2-1)}{12}$,
$D_{(2,1,1)} = \tfrac{n(n+1)(n-1)(n-2)}{8}$,
$D_{\Lambda^4} = \tfrac{n(n-1)(n-2)(n-3)}{24}$ yields~\eqref{eq:c1c2}: the
numerator of $c_1$ over the common denominator
$n^2(n^2-1)(n-2)(n-3)$ is
$\tfrac16\bigl[(4p+2q)(n-2)(n-3) - 4(p-q)n(n+1)\bigr]
 = q(n^2-n+2) - 4p(n-1)$, and that of $c_2$ over
$n^2(n+1)(n-2)(n-3)$ is $4[(n-2)p - q]$.

\emph{(i)} is a direct substitution (for $M = n_i n_j$ use
$\tr M^{(2)} = 1$, $\tr(M^{(2)2}) = 1$, $\tr(\Gamma(M^{(2)})^2) = 2$ and
subtract the trace part).

\emph{(ii)} follows by inserting
$\|\Gamma(B)\|_F^2 = \tfrac{(k-1)^2}{4}\tau$ and
$\|B\|_F^2 = \tfrac{(2k-1)\tau}{16} + O(k^{3/2}/n)$ from Steps~4--5
into~\eqref{eq:general_readout}; the quoted values for the two example heads
are the substitution of $p, q$ into~\eqref{eq:c1c2}, and the resulting rates
follow from $\tau = \Theta(k/n)$ together with $2k \le n$.

\emph{(iii)} Writing the bracket over the common denominator
$n^2(n^2-1)(n-2)(n-3)$, its numerator equals
\begin{eqnarray}
  4p(n-1)\bigl[(2k-1)(n-2) - 4(k-1)^2\bigr]\nonumber\\
  + 4q\bigl[(k-1)^2(n^2-n+2) - (2k-1)(n-1)\bigr].
\end{eqnarray}
For $k \ge 2$ and $n \ge 2k+2$ both square brackets are positive: the first
because $\tfrac{4(k-1)^2}{2k-1} < 2k \le n-2$ (equivalently $6k > 4$), and the
second because
$(k-1)^2(n^2-n+2) - (2k-1)(n-1) \ge (n-1)[(k-1)^2 n - (2k-1)] > 0$ using
$(k-1)^2 n \ge n \ge 2k+2$. Since $q \ge 0$, dropping the $q$ term
gives~\eqref{eq:head_floor}. For $n - 2k = \Omega(n)$ we have
$(2k-1)(n-2) - 4(k-1)^2 = \Omega(kn)$ and $\tau = \Theta(k/n)$, so the leading
term of~\eqref{eq:general_rate} is $\Omega(p\,k^2/n^5)$. Finally
$q \le 4np$ (Cauchy--Schwarz on
$\Gamma(\tilde M)^i_{\ k} = \sum_j \tilde M^{ij}_{kj}$), so
$|c_2| = O(p/n^4)$ and the correction term
in~\eqref{eq:general_rate} is $O(p\,k^{3/2}/n^5)$, smaller than the leading
term by a factor $O(k^{-1/2})$; it is therefore absorbed for $k$ above a
constant.
\end{proof}

\subsection{Interior gates}

\begin{lemma}[The dominant terms depend on the generator only through $\tau$]
\label{lem:tau_only}
Let $h$ be \emph{any} Hermitian single-particle operator, put
$H^{(2)} = d\Lambda^2(h)$, $B_h = [H^{(2)}, \rho_{2,\mathfrak{g}_2}(\bx)]$ and
$\tau_h := \|[h,\Pi]\|_F^2$. Then, exactly and for every $\bx$,
\begin{equation}
\label{eq:tau_only}
  \|\Gamma(B_h)\|_F^2 = \frac{(k-1)^2}{4}\,\tau_h,
  \qquad
  \|B_h^{\mathrm{disc}}\|_F^2 = \frac{(2k-1)\,\tau_h}{16},
\end{equation}
where $B_h^{\mathrm{disc}} = [H^{(2)}, \rho_1\wedge\rho_1]$.
\end{lemma}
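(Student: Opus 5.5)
The plan is to rerun Steps~4 and~5 of the proof of Theorem~\ref{thm:brick-wall_bp} with the generator $g$ replaced by an arbitrary Hermitian $h$. The main thing to check is that neither computation used any property of $g$ beyond Hermiticity. Coordinate locality was needed only for evaluating $\tau$ itself and for bounding the connected part, and neither of those appears in \eqref{eq:tau_only}.

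\emph{One-body identity.} By Lemma~\ref{lem:contraction}, which holds for every single-particle operator, we have $\Gamma(B_h) = [h, \Gamma(\rho_{2,\mathfrak{g}_2})]$. We then use $\Gamma(\rho_2) = (k-1)\rho_1$ from \eqref{eq:contraction_def}. The trace part of $\rho_2$ contracts to a multiple of $I_n$, which commutes with $h$ and therefore drops out. This gives $\Gamma(B_h) = (k-1)[h,\rho_1(\bx)] = \tfrac{k-1}{2}[h,\Pi]$ via Lemma~\ref{lem:spreading}. Taking Frobenius norms yields the first identity, for every $\bx$, since $\Pi$ denotes the data-conjugated projector.

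\emph{Disconnected identity.} Write $\rho_1\wedge\rho_1 = \tfrac14\Lambda^2(\Pi)$ and set $C := [h,\Pi]$. By functoriality of $\Lambda^2$,
\[
  [d\Lambda^2(h),\Lambda^2(\Pi)] = \partial_u \Lambda^2(\Pi + uC)\big|_{u=0}.
\]
This follows from differentiating $\Lambda^2(e^{-ish}\Pi e^{ish})$ at $s=0$, up to the sign convention, which is irrelevant after squaring. Then $\tr\bigl((B_h^{\mathrm{disc}})^2\bigr)$ is $\tfrac1{16}\partial_u\partial_v$ of $e_2(AA') = \tfrac12[(\tr AA')^2 - \tr((AA')^2)]$ at $A=\Pi+uC$, $A'=\Pi+vC$, evaluated at $u=v=0$. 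The inputs are $\Pi^2=\Pi$, $\tr C = 0$, $\tr(\Pi C) = 0$ (both by cyclicity), $C^\dagger = -C$, and $\tr(\Pi C^2) = \tfrac12\tr(C^2) = -\tau_h/2$. The last identity uses $C = \Pi h(I-\Pi) - (I-\Pi)h\Pi$, so that $C^2$ is block diagonal with equal traces on $\mathrm{ran}\,\Pi$ and its complement. All of these hold for any Hermitian $h$. The derivatives are $\partial_u\partial_v(\tr AA')^2 = 2(\tr \Pi C)^2 + 2m\tr(C^2)\cdot$(cross) $= -2m\tau_h$ and $\partial_u\partial_v\tr((AA')^2) = -2\tau_h$, with $m = 2k$. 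Together they give $\|B_h^{\mathrm{disc}}\|_F^2 = (m-1)\tau_h/16 = (2k-1)\tau_h/16$.

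The main obstacle is the bookkeeping in the mixed derivative of $\tr((AA')^2)$. Expanding $(\Pi+uC)(\Pi+vC)$ to order $uv$ produces terms such as $\tr(\Pi C\Pi C)$ and $\tr(C\Pi\Pi C)$. Each of these must be reduced with $\Pi C \Pi = 0$ and $(I-\Pi)C(I-\Pi) = 0$, which follow from the off-diagonal block form of $C$. I would carry out this reduction explicitly before collecting terms, rather than relying on cyclicity alone. I would also check the sign and the overall factor $\tfrac1{16}$ at a trivial case, $n=2$ with $m=1$, where both sides should vanish.
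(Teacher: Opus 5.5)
Your proposal is correct and is essentially the paper's proof. The paper likewise just observes that Steps~4 and~5 of Theorem~\ref{thm:brick-wall_bp} use only Lemma~\ref{lem:contraction}, $\Gamma(\rho_2)=(k-1)\rho_1$, and the facts $\tr C=0$, $\tr(\Pi C)=0$, $\tr(\Pi C^2)=\tfrac12\tr(C^2)$ for $C=[h,\Pi]$, all valid for any Hermitian $h$. It obtains the last fact by expanding $C^2$ directly; your block-diagonal argument is an equally valid route, though your $\Pi h(I-\Pi)-(I-\Pi)h\Pi$ is actually $-C$, which is harmless since only $C^2$ enters. The bookkeeping you flag is also lighter than you expect: no $\tr(\Pi C\Pi C)$ term arises at order $uv$, since the mixed coefficient of $\tr((AA')^2)$ is $\tr(\Pi C^2)+\tr(C^2\Pi)+\tr(\Pi C^2\Pi)+\tr(C\Pi C)=4\tr(\Pi C^2)$.
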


\begin{proof}
Steps~4 and~5 of the proof of Theorem~\ref{thm:brick-wall_bp} use exactly
three properties of $C := [h,\Pi]$, namely $\tr C = 0$, $\tr(\Pi C) = 0$ and
$\tr(\Pi C^2) = \tfrac12\tr(C^2)$. All three follow from $\Pi^2 = \Pi$ and
cyclicity of the trace for an arbitrary Hermitian $h$: the first two are
immediate, and expanding $C^2$ gives
$\tr(\Pi C^2) = \tr(\Pi h\Pi h) - \tr(\Pi h^2) = \tfrac12\tr(C^2)$. No
property of $h$ beyond Hermiticity---in particular no locality in the
computational basis---enters either computation.
\end{proof}

\begin{proof}
Since $U_\theta = e^{-i\theta g}$ commutes with $g$, we have
$\partial_\theta W = -i\,W\tilde g$ with $\tilde g = W_1^\dagger g W_1$; that
is, $\partial_\theta\calL = D_{\tilde g}\calL$ in the notation
of~\eqref{eq:invariant_derivative}. Condition on $W_1$. Writing
$W = W_2\,(U_\theta W_1)$ with $W_2$ Haar and independent of $W_1$, the total
single-particle unitary $W$ is, conditionally on $W_1$, Haar-distributed on
$U(n)$ and independent of the (now fixed) direction $\tilde g$. Hence
Theorem~\ref{thm:brick-wall_bp} applies conditionally and yields
$\E[\partial_\theta\calL \mid W_1] = 0$ together with
$\Var[\partial_\theta\calL \mid W_1] =
\bigl(4\|B_{\tilde g}\|_F^2 + 2\|\Gamma(B_{\tilde g})\|_F^2\bigr)/(n^2(n^2-1))$.
Because the conditional mean vanishes identically, the law of total variance
reduces to $\Var = \E_{W_1}\bigl[\Var[\,\cdot\mid W_1]\bigr]$, which
is~\eqref{eq:halves_exact}.

It remains to average the two norms. By Lemma~\ref{lem:tau_only} with
$h = \tilde g$,
$\|\Gamma(B_{\tilde g})\|_F^2 = \tfrac{(k-1)^2}{4}\tau_{\tilde g}$ and
$\|B^{\mathrm{disc}}_{\tilde g}\|_F^2 = (2k-1)\tau_{\tilde g}/16$ with
$\tau_{\tilde g} = \|[\tilde g,\Pi]\|_F^2 = \|[g,\,W_1\Pi W_1^\dagger]\|_F^2$.
Lemma~\ref{lem:conj_neutral} with $m = 2k$ then gives
$\E_{W_1}[\tau_{\tilde g}] = \Theta\bigl(k(n-2k)/n^2\bigr)$ for both generator
families, with the same constants as the unconjugated values of Step~4, so
$\E\|\Gamma(B_{\tilde g})\|_F^2 = \Theta(k^3/n)$ and
$\E\|B^{\mathrm{disc}}_{\tilde g}\|_F^2 = \Theta(k^2/n)$. For the connected
part, Lemma~\ref{lem:avg_locality} applied to
$\sigma = \rho_2^{\mathrm{conn}}$, with
$\|\rho_2^{\mathrm{conn}}\|_F^2 = \Theta(k)$, gives
$\E\|B^{\mathrm{conn}}_{\tilde g}\|_F^2 = O(k/n)$, and by Cauchy--Schwarz the
cross term is at most
$2\bigl(\E\|B^{\mathrm{disc}}\|_F^2\bigr)^{1/2}
 \bigl(\E\|B^{\mathrm{conn}}\|_F^2\bigr)^{1/2} = O(k^{3/2}/n)$. Collecting,
$4\E\|B_{\tilde g}\|_F^2 + 2\E\|\Gamma(B_{\tilde g})\|_F^2 = \Theta(k^3/n)$
whenever $n - 2k = \Omega(n)$, and dividing by $n^2(n^2-1)$ gives the stated
rate.
\end{proof}

\begin{proof}
(a) and (b) restate Corollary~\ref{cor:first_column} and
Theorem~\ref{thm:interior_halves}. For (c) we induct on the column index, the
base case $\Pi^{(1)} = \Pi$ having uniform diagonal $2k/n$ by
Lemma~\ref{lem:spreading}. Assume $\E[\Pi^{(\ell)}_{jj}] = 2k/n$ for all $j$.
Column $\ell$ consists of phases followed by RBS rotations on disjoint mode
pairs. Conjugation by a diagonal phase matrix leaves the diagonal unchanged
and multiplies $\Pi^{(\ell)}_{ab}$ by $e^{i(\varphi_a - \varphi_b)}$; since
each Givens rotation of the column carries its own phase, uniform and
independent of everything preceding it, the off-diagonal entry of each rotated
pair has zero conditional expectation. An RBS rotation of angle $\vartheta$ on
the pair $\{a,b\}$ sends
$(\Pi_{aa},\Pi_{bb})$ to
$(\cos^2\vartheta\,\Pi_{aa} + \sin^2\vartheta\,\Pi_{bb} \mp
\sin2\vartheta\,\mathrm{Re}\,\Pi_{ab},\;
\sin^2\vartheta\,\Pi_{aa} + \cos^2\vartheta\,\Pi_{bb} \pm
\sin2\vartheta\,\mathrm{Re}\,\Pi_{ab})$ and leaves every other diagonal entry
untouched. Taking expectations, the cross terms drop out and the pair map
becomes a doubly stochastic average of $(\Pi_{aa},\Pi_{bb})$, whatever the
marginal law of $\vartheta$. A constant vector is a fixed point of every
doubly stochastic map, so $\E[\Pi^{(\ell+1)}_{jj}] = 2k/n$ for all $j$.
\end{proof}

\subsection{Proof of Theorem~\ref{thm:no_bp}}

\begin{proof}
Identify the modes with $(\mathbb{Z}_2)^K$ through their binary labels, so
that layer $\ell$ pairs $b \leftrightarrow b \oplus e_{\ell-1}$, and write
$s := e_{K-1}$ for the stride of the last layer.

\medskip
\noindent\textbf{Step 1: Parseval.}
$\calL$ is a trigonometric polynomial in the full parameter vector
$\Theta = (\btheta,\bphi) \in \mathbb{T}^{T}$; write
$\calL = \sum_{\boldsymbol\delta} c_{\boldsymbol\delta}\,
e^{i\boldsymbol\delta\cdot\Theta}$. Since $\calL$ is periodic in each
coordinate, $\E_\Theta[\partial_\theta\calL] = 0$, and by Parseval
\begin{equation}
\label{eq:parseval}
  \Var\!\left[\frac{\partial\calL}{\partial\theta}\right]
  \;=\; \sum_{\boldsymbol\delta} \delta_\theta^2\,
    |c_{\boldsymbol\delta}|^2
  \;\ge\; 2\,(\delta^*_\theta)^2\,|c_{\boldsymbol\delta^*}|^2
\end{equation}
for any single frequency vector $\boldsymbol\delta^* \neq 0$ with
$\delta^*_\theta \neq 0$, the factor $2$ accounting for the conjugate pair
$\pm\boldsymbol\delta^*$ ($\calL$ being real). The variance of the gradient
is thereby reduced to a \emph{first} moment of $\calL$; no second-moment
property of the circuit enters.

\medskip
\noindent\textbf{Step 2: choice of the frequency vector.}
Fix $i \neq j$ with $j \notin \{i, i\oplus s\}$, put $M = n_i n_j$, and write
$i' = i \oplus s$, $j' = j \oplus s$. Let $\theta$ be the RBS angle of the
layer-$K$ gate acting on $\{i,i'\}$, let $\theta'$ be the RBS angle of the
layer-$K$ gate acting on $\{j,j'\}$, and let $\boldsymbol\delta^*$ carry
frequency $2$ on each of $\theta,\theta'$ and $0$ elsewhere.

These are the only gates of layer $K$ whose modes meet $\{i,j\}$, so with
$c_A = \cos\theta$, $s_A = \sin\theta$, $c_B = \cos\theta'$,
$s_B = \sin\theta'$,
\begin{align}
  \Lambda^2\bigl(U^{(K)}\bigr)^\dagger \ket{i \wedge j}
  &= c_Ac_B \ket{i\wedge j} + c_As_B\ket{i\wedge j'} \nonumber\\
  &\quad + s_Ac_B\ket{i'\wedge j} + s_As_B\ket{i'\wedge j'} .
\end{align}
Let $\sigma$ denote the two-particle reduced density matrix entering the RBS
sublayer of layer $K$, i.e.\ after layers $1,\dots,K-1$ and after the
full-width phase layer of layer $K$, and let $\bar\sigma = \E[\sigma]$, the
expectation over all parameters other than $\theta,\theta'$. The phase layer
averages every off-diagonal entry of $\sigma$ to zero, so $\bar\sigma$ is
diagonal in the pair basis and, writing
$\bar\sigma_{ab} := \bar\sigma_{\{a,b\},\{a,b\}}$,
\begin{align}
  \E\bigl[\calL\bigr]
  &= c_A^2c_B^2\,\bar\sigma_{ij} + c_A^2s_B^2\,\bar\sigma_{ij'} \nonumber\\
  &\quad + s_A^2c_B^2\,\bar\sigma_{i'j} + s_A^2s_B^2\,\bar\sigma_{i'j'} .
\end{align}
Using $\E_\theta[\cos^2\theta\,e^{-2i\theta}] = \tfrac14$ and
$\E_\theta[\sin^2\theta\,e^{-2i\theta}] = -\tfrac14$,
\begin{equation}
\label{eq:second_difference}
  c_{\boldsymbol\delta^*} = \tfrac{1}{16}\,D,
  \qquad
  D := \bar\sigma_{ij} - \bar\sigma_{ij'} - \bar\sigma_{i'j}
       + \bar\sigma_{i'j'} .
\end{equation}
The certificate is a \emph{second} difference of the first moment
$\bar\sigma$ across the stride of the last layer; the single differences,
which correspond to a frequency on one gate only, cancel.

\medskip
\noindent\textbf{Step 3: only the $y=0$ Fourier block contributes.}
In the notation of Appendix~\ref{app:spectral_gap}, parametrize a pair by
base point and difference, $P_{(b,\delta)} = P_{\{b,\,b\oplus\delta\}}$, and
decompose along the characters $\hat P_{\delta,y}$ of the translation group
$(\mathbb{Z}_2)^K$. The four terms of~\eqref{eq:second_difference} carry the
base points $i$ and $i\oplus s$ with equal signs, so only characters with
$y\cdot s = 0$ survive the base sum. Layers $1,\dots,K-1$ annihilate
$\mathcal{B}_y$ for every $y$ with $y\cdot e_{\ell-1} = 1$, $\ell \le K-1$;
together with $y_{K-1} = 0$ this leaves only $y = 0$. Consequently, for the
purposes of $D$, the weight $\bar\sigma_{(b,\delta)}$ is the uniform average
$\tfrac{2}{n}F^{(K-1)}(\delta)$ of the base-summed weights
\begin{equation}
  F^{(m)}(\delta) := \sum_{b}
  \E\bigl[\sigma^{(m)}_{\{b,\,b\oplus\delta\}}\bigr],
  \qquad \sum_{\delta \neq 0} F^{(0)}(\delta) = \tbinom{k}{2},
\end{equation}
where for each $\delta \neq 0$ the sum over $b$ runs over a transversal of the
involution $b \leftrightarrow b\oplus\delta$, i.e.\ over one representative per
unordered pair $\{b, b\oplus\delta\}$, so that each pair is counted once;
and, with $d := i \oplus j$,
\begin{equation}
\label{eq:D_from_F}
  D = \frac{4}{n}\,\Bigl[F^{(K-1)}(d) - F^{(K-1)}(d\oplus s)\Bigr].
\end{equation}
On the $y = 0$ block the layer action is the scalar recursion of
Appendix~\ref{app:spectral_gap}(iii): layer $\ell$ fixes $F(e_{\ell-1})$ and
replaces every other $F(\delta)$ by
$\tfrac12[F(\delta) + F(\delta\oplus e_{\ell-1})]$.

\medskip
\noindent\textbf{Step 4: the two top-bit sectors carry equal weight.}
Split the differences by their top bit and set
\begin{equation}
  T_0 := \!\!\sum_{\delta \neq 0,\ \delta_{K-1}=0}\!\! F^{(0)}(\delta),
  \qquad
  T_1 := \!\!\sum_{\delta_{K-1}=1}\!\! F^{(0)}(\delta),
\end{equation}
so that $T_0 + T_1 = \binom{k}{2}$. Writing $N_0$ for the number of particles
of the encoded state in the half $\{a : a_{K-1} = 0\}$, one has
$T_1 = \E[N_0(k-N_0)]$ and $T_0 = \binom k2 - T_1$. Let $P$ be the
single-particle projector onto that half. Since
$H_{aj}H_{al} = \tfrac1n(-1)^{\langle a,\,j\oplus l\rangle}$, and the sum of
$(-1)^{\langle a,\mu\rangle}$ over the $n/2$ labels $a$ with $a_{K-1}=0$
equals $\tfrac n2$ when $\mu \in \{0,s\}$ and $0$ otherwise,
\begin{equation}
  H^{T} P H = \tfrac12\bigl(I + \Sigma\bigr),
  \qquad \Sigma : \ket{j} \mapsto \ket{j \oplus s},
\end{equation}
so that on the $k$-particle sector $N_0 = \tfrac12(k + \hat\Sigma)$, with
$\hat\Sigma$ the one-body operator of $\Sigma$ evaluated on the
\emph{pre-spread} magic state. That state occupies only the modes
$0,\dots,2k-1$, and $4k \le n$ gives $2k \le n/2$, so $\Sigma$ maps every
occupied mode to an unoccupied one. Hence
$\bra{\Psi_{\mathrm{in}}}\hat\Sigma\ket{\Psi_{\mathrm{in}}} = 0$, while
$\hat\Sigma\ket{\Psi_{\mathrm{in}}}$ is, within each Fock component, a
superposition of $k$ mutually orthogonal states of unit norm, giving
$\bra{\Psi_{\mathrm{in}}}\hat\Sigma^2\ket{\Psi_{\mathrm{in}}} = k$.
Therefore $\E[N_0] = k/2$, $\E[N_0^2] = \tfrac14 k(k+1)$, and
\begin{equation}
\label{eq:T0T1}
  T_0 \;=\; T_1 \;=\; \frac{k(k-1)}{4}.
\end{equation}

\medskip
\noindent\textbf{Step 5: the recursion cannot equalize both sectors.}
The strides $e_0,\dots,e_{K-2}$ used by layers $1,\dots,K-1$ never flip the
top bit, so the two sectors evolve separately and $T_0,T_1$ are conserved. In
the top-bit-$1$ sector no difference equals a stride $e_{\ell-1}$, so no term
is ever frozen and the recursion averages its $n/2$ values to the uniform
value $2T_1/n$ after $K-1$ layers. The top-bit-$0$ sector contains only
$n/2-1$ admissible differences. Hence, using~\eqref{eq:T0T1},
\begin{eqnarray}
  \sum_{\delta \neq 0, \delta_{K-1}=0}&
  \Bigl[F^{(K-1)}(\delta) - F^{(K-1)}(\delta\oplus s)\Bigr]\\
  =& T_0 - \Bigl(\frac n2 - 1\Bigr)\frac{2T_1}{n} 
  = \frac{k(k-1)}{2n}, \nonumber
\end{eqnarray}
which is nonzero for every $k \ge 2$. The sum runs over $n/2-1$ terms, so
some difference $d$ with $d_{K-1} = 0$ satisfies
\begin{equation}
  \bigl|F^{(K-1)}(d) - F^{(K-1)}(d\oplus s)\bigr|
  \;\ge\; \frac{k(k-1)}{2n\,(n/2-1)}
  \;>\; \frac{k(k-1)}{n^{2}} \nonumber.
\end{equation}
Choosing $i,j$ with $i\oplus j = d$ and combining
with~\eqref{eq:D_from_F} and~\eqref{eq:second_difference},
\begin{equation}
  |c_{\boldsymbol\delta^*}|
  \;=\; \frac{|D|}{16}
  \;\ge\; \frac{1}{16}\cdot\frac{4}{n}\cdot\frac{k(k-1)}{n^{2}}
  \;=\; \frac{k(k-1)}{4n^{3}} .
\end{equation}
Inserting this into~\eqref{eq:parseval} with $\delta^*_\theta = 2$ gives
$\Var[\partial_\theta\calL] \ge 8|c_{\boldsymbol\delta^*}|^2
\ge k^2(k-1)^2/(2n^6)$.
\end{proof}

\section{Spectral Gap of the Butterfly's Second-Moment Operator}
\label{app:spectral_gap}

This appendix proves Theorem~\ref{thm:butterfly_2design} of
Section~\ref{sec:bp}: the second-moment operator of the unitary
butterfly, restricted to the antisymmetric sector of the tensor square of
the single-particle representation, has spectral gap $1 - 1/n$, from which
the $\varepsilon$-approximate $2$-design
property with $\varepsilon = O(1/n)$ on that sector follows as a
corollary. The antisymmetric sector is the one that carries every quantity
appearing in this paper: $k$-particle states and particle-number-preserving
observables evolve in exterior powers of the single-particle action, and the
variance computations of Section~\ref{sec:bp} engage the second moment only
through antisymmetric-sector operators.

\subsection{Setup}

Let $W_n = U_K \cdots U_1$ be the unitary butterfly on $n = 2^K$ modes, 
where layer $\ell$ applies independent Haar-$U(2)$ gates on $n/2$ disjoint 
pairs at stride $2^{\ell-1}$. This is the \emph{proof model} for the 
spectral gap: each two-qubit gate is treated as a Haar-random $U(2)$ 
element restricted to the single-excitation subspace, which is a 
sufficient condition for the fixed-point structure and spectral gap 
established below. The actual parametrized circuit uses 
$\mathrm{RBS}(\theta) \cdot (R_z(\phi_a) \otimes R_z(\phi_b))$ per pair with 
$(\theta, \phi_a, \phi_b)$ sampled uniformly over their domains. What the
recursion below actually requires of each gate, \emph{on the antisymmetric
sector}, is \emph{not} Lie
closure to $\mathfrak{u}(2)$, but only the two moments invoked in
Facts~\ref{fact:same}
and~\ref{fact:cross} and Lemmas~\ref{lem:ptr_app}--\ref{lem:wick_app}:
the deterministic antisymmetric phase $|\det V|^2 = 1$ and the
first-moment identity $\mathbb{E}[V\,|i\rangle\langle i|\,V^\dagger] =
I/2$ on each pair. The uniform-angle gate supplies both. Its
single-particle action on a pair is
$R(\theta)\,\mathrm{diag}(e^{-i\phi_a/2}, e^{-i\phi_b/2})$ with
$R(\theta) = \left(\begin{smallmatrix} \cos\theta & \sin\theta \\
-\sin\theta & \cos\theta \end{smallmatrix}\right)$. For a
computational-basis input $|i\rangle$ the diagonal $R_z$ factor cancels
in the outer product $V|i\rangle\langle i|V^\dagger$ (its phase has unit
modulus and does not appear), so this moment is controlled by $\theta$
alone: the off-diagonal entries are proportional to
$\mathbb{E}[\sin\theta\cos\theta]$, which vanishes when $\theta$ is
sampled over a full period, while
$\mathbb{E}[\cos^2\theta] = \mathbb{E}[\sin^2\theta] = \tfrac{1}{2}$
equalizes the diagonal, giving
$\mathbb{E}[V|i\rangle\langle i|V^\dagger] = I/2$. Hence, on the
antisymmetric sector, the
Haar-$U(2)$ model and the uniform-angle $\mathrm{RBS} \cdot R_z$ gate
have identical moments and the recursion below applies to both. We
emphasize that this equivalence is a sector statement: on the
\emph{symmetric} sector the two gate distributions differ---the RBS rotation
is real, and the phases cancel between ket and bra precisely on the
pair-diagonal components $\ket{ab}\bra{ab}$ and $\ket{ab}\bra{ba}$, which
carry the additional symmetric-sector eigenvectors.

The \emph{second-moment operator} on
$\mathrm{End}(\mathbb{C}^n \otimes \mathbb{C}^n)$ is
\[
  \Phi_2[Y] = \mathbb{E}_{W_n}\!\left[(W_n \otimes W_n)\, Y\,
  (W_n^\dagger \otimes W_n^\dagger)\right].
\]
Since $W_n \otimes W_n$ commutes with $\mathrm{SWAP}$, the antisymmetric
sector $\mathrm{End}(\Lambda^2\mathbb{C}^n) = \{P_A Y P_A\}$, with
$P_A = (I_{n^2} - \mathrm{SWAP})/2$, is $\Phi_2$-invariant; all statements
below concern the restriction $\Phi_2^{A}$ of $\Phi_2$ to this sector.

\begin{theorem}[Spectral gap on the antisymmetric sector]
\label{thm:spectral_gap_app}
$\Phi_2^{A}$ has fixed-point space
$\mathrm{span}\{P_A\}$ (eigenvalue $1$) and all remaining eigenvalues
equal to $0$ or $1/n$ (the eigenvalue $1/n$ having multiplicity $K - 1$).
Hence the spectral gap on the antisymmetric sector equals $1 - 1/n$.
\end{theorem}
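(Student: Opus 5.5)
We will work on $\mathrm{End}(\Lambda^2\mathbb{C}^n)$ in the pair basis $\ket{a\wedge b}$ and exploit the product structure $\Phi_2^A = \Phi^{(K)}\circ\cdots\circ\Phi^{(1)}$, where $\Phi^{(\ell)}$ is the average over layer $\ell$. Each $\Phi^{(\ell)}$ is itself a tensor product over the $n/2$ disjoint stride-$2^{\ell-1}$ pairs of independent single-gate twirls. The argument uses only the two gate moments recorded above, $|\det V|^2=1$ and $\E[V\ket{i}\bra{i}V^\dagger]=I/2$. From these we will first derive the action of one gate twirl on matrix units $\ket{a\wedge b}\bra{c\wedge d}$.

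\emph{Same-pair fact.} If $\{a,b\}$ is the gate's own pair, the unit $\ket{a\wedge b}\bra{a\wedge b}$ is fixed, because $\Lambda^2 V$ acts on it as the scalar $\det V$.

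\emph{Cross fact.} If exactly one index of a ket or bra pair lies in the gate pair, the moment identity replaces the projector on that mode by $\tfrac12$ of the projector onto the gate pair.

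\emph{Off-diagonal units.} Any unit $\ket{P}\bra{Q}$ with $P\neq Q$ in which some mode is moved by a gate is annihilated in expectation. The phase part of the gate (the $\E[\sin\theta\cos\theta]=0$ and uniform-phase averages) kills it.

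\emph{Step 1: reduction to diagonal pair weights.} By the three facts, after the first layer only diagonal units $P_{\{a,b\}}=\ket{a\wedge b}\bra{a\wedge b}$ survive, because every mode is covered by some gate in every layer. So the nonzero spectrum of $\Phi_2^A$ equals that of a transfer matrix $\mathcal{T}$ on the $\binom n2$-dimensional space of pair weights, and everything off that space is mapped to $0$. This accounts for the eigenvalue $0$. On pair weights each layer acts by a doubly stochastic matrix. It fixes the weight of pairs equal to a gate pair. For any other pair, each endpoint lying in a gate is averaged over that gate, giving the rule $F(\delta)\mapsto\tfrac12[F(\delta)+F(\delta\oplus e_{\ell-1})]$ in the base-point/difference coordinates $P_{(b,\delta)}$ used in the proof of Theorem~\ref{thm:no_bp}. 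Double stochasticity together with $\Phi_2^A[P_A]=P_A$ (up to normalization, $P_A=\sum P_{\{a,b\}}$) gives the fixed point.

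\emph{Step 2: Fourier decoupling.} Identify the modes with $(\mathbb{Z}_2)^K$ and expand the weights along characters $y$ of the translation group acting on base points, giving blocks $\mathcal{B}_y$. The layer structure commutes with translations, so $\mathcal{T}$ is block diagonal. For $y$ with $y\cdot e_{\ell-1}=1$, the averaging of a base point $b$ with $b\oplus e_{\ell-1}$ multiplies that character by $\tfrac12(1-1)=0$. The frozen gate-pair terms need a separate check: they carry the difference $e_{\ell-1}$, and for them the base-point pair $\{b,b\oplus e_{\ell-1}\}$ is the same pair, so we must verify they either vanish or sit in $y$ with $y\cdot e_{\ell-1}=0$. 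Every $y\neq0$ has some bit set, so every $y\neq0$ block is annihilated by the full product. This produces more zero eigenvalues.

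\emph{Step 3: the $y=0$ block, by induction on $K$.} This step is the main obstacle. On $y=0$ the state is the vector $F(\delta)$, $\delta\neq0$, of dimension $n-1$, and layer $\ell$ acts by $A_\ell$: it fixes $F(e_{\ell-1})$ and averages each other pair $\{\delta,\delta\oplus e_{\ell-1}\}$.

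We will compute the spectrum of $A_K\cdots A_1$ directly. Write the composition as first averaging over all strides in the subgroup generated by $e_0,\dots,e_{\ell-1}$. The claim to establish is that after $\ell$ layers, $F$ is constant on cosets of $\langle e_0,\dots,e_{\ell-1}\rangle$, except on the $\ell$ "frozen" values inside the subgroup, which keep a memory.

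Induction in $K$ (from $n/2$ to $n$, via the top bit) should show:
\begin{itemize}
\item The range of the product is spanned by the uniform vector together with $K-1$ vectors of the form "frozen value of stride $e_m$ minus its coset average".
\item On each of these $K-1$ vectors the product acts as multiplication by $1/n$. Intuitively, a frozen weight at $e_m$ is diluted by the subsequent $K-1-m$ averagings together with the earlier ones, with total factor $2^{-K}$.
\end{itemize}
Concretely, I would write the product in the basis (uniform, indicator of $e_m$ minus the average of its sector) and check that it is triangular with diagonal $(1,1/n,\dots,1/n,0,\dots)$. Triangularity rather than diagonalizability is acceptable for the eigenvalue claim.

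If this bookkeeping resists, the fallback is to verify $K=2,3$ explicitly and then use the top-bit splitting of Step~5 of Theorem~\ref{thm:no_bp}. There the first $K-1$ layers act separately on the two top-bit sectors, and the last layer couples them by a single averaging. That gives a recursion for the characteristic polynomial.

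Combining Steps 1--3, the eigenvalues of $\Phi_2^A$ are $1$ (on $P_A$), $1/n$ with multiplicity $K-1$, and $0$ otherwise. Hence the spectral gap is $1-1/n$.
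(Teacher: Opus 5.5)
Your outline is sound, and through Step~2 it follows the paper's skeleton. Both build a transfer matrix on the diagonal units $P_{ij}$ from the same-pair and cross-pair facts, then decompose along the $(\mathbb{Z}_2)^K$ characters $y$, with every $y\neq 0$ block annihilated by a layer having $y\cdot e_{\ell-1}=1$.

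Your Step~1 is in fact more complete than the paper. The paper only shows that $\mathrm{span}\{P_{ij}\}$ is invariant and never explicitly disposes of the off-diagonal units $\ket{a\wedge b}\bra{c\wedge d}$ with $\{a,b\}\neq\{c,d\}$. Your full-width phase twirl kills them in the first layer. Note that this uses the uniform phases, not just the two moments you list.

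The frozen-term check you flag in Step~2 is immediate. If $y\cdot e=1$, then $\sum_b(-1)^{y\cdot b}P_{\{b,\,b\oplus e\}}=0$, because the summand is odd under $b\mapsto b\oplus e$.

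Double stochasticity only shows that $P_A$ is fixed. Uniqueness of the fixed point comes out of Step~3.

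The genuine departure from the paper is Step~3. The paper treats the $y=0$ block by induction on $K$ through the top bit, which is your fallback:
\begin{itemize}
\item the lower butterfly acts on the top-bit-$0$ differences;
\item the top-bit-$1$ sector collapses to $w=\tfrac1n(F_n+S_{e_{K-1}})$;
\item lifted eigenvalues are halved;
\item a $2\times 2$ block on $\mathrm{span}\{p,w\}$, with trace $1+1/n$ and determinant $1/n$, finishes the argument.
\end{itemize}

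Your direct route works and is shorter, but as written it is only a plan. The range vectors must also be named correctly: $\mathbf{1}_{e_m}$ is not in the range for $m<K-1$. Let $\mathbf{1}$ be the all-ones vector on $(\mathbb{Z}_2)^K\setminus\{0\}$, let $H_\ell=\langle e_0,\dots,e_{\ell-1}\rangle$, and let $L_m=e_m+\langle e_{m+1},\dots,e_{K-1}\rangle$ be the lowest-set-bit classes.

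Your forward invariant, made precise, says this: after $\ell$ layers the vector is constant on each nontrivial $H_\ell$-coset and on each class $e_j+\langle e_{j+1},\dots,e_{\ell-1}\rangle\subset H_\ell$. At $\ell=K$ this puts the range inside $\mathrm{span}\{\mathbf{1}_{L_m}\}$.

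Tracking $\mathbf{1}_{L_m}$ then reproduces your dilution count exactly:
\begin{itemize}
\item layers $1,\dots,m$ give $2^{-m}\mathbf{1}_{\{\delta:\,\delta_m=1\}}$, and no frozen site is touched;
\item layer $m+1$ gives $2^{-m-1}(\mathbf{1}+\mathbf{1}_{e_m})$;
\item the last $K-1-m$ layers fix $\mathbf{1}$ and spread $\mathbf{1}_{e_m}$ uniformly over $L_m$.
\end{itemize}
Hence
\[
  A_K\cdots A_1\,\mathbf{1}_{L_m}=2^{-K}\,\mathbf{1}_{L_m}+2^{-m-1}\,\mathbf{1}.
\]

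In the basis $(\mathbf{1},\mathbf{1}_{L_0},\dots,\mathbf{1}_{L_{K-2}})$ of the range, this is triangular with diagonal $(1,1/n,\dots,1/n)$, as you predicted. It is in fact diagonalizable, with eigenvectors $\mathbf{1}_{L_m}-\tfrac{|L_m|}{n-1}\mathbf{1}$.

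Compared with the paper's recursion, this gives the $1/n$-eigenvectors in closed form and needs no information about the spectrum at size $n/2$. The paper's Key Formula $\Phi_2[f_t]=\tfrac1n f_t+\tfrac{2}{n^2}F$ follows from your identity exactly as in part (iv) of the paper's proof.
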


\begin{corollary}[Approximate $2$-design on the antisymmetric sector]
\label{cor:2design_app}
The butterfly $W_n$ is an $\varepsilon$-approximate unitary $2$-design
on the antisymmetric sector of the single-particle tensor square, with
\[
  \varepsilon_{\mathrm{2d}}(n)
  := \|\Phi_2^{W_n,A} - \Phi_2^{\mathrm{Haar},A}\|_{\mathrm{op}} = O(1/n).
\]
This is Theorem~\ref{thm:butterfly_2design} of the main text.
\end{corollary}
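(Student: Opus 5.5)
The plan is to derive the corollary directly from Theorem~\ref{thm:spectral_gap_app}. The first step identifies the Haar target on the antisymmetric sector. Since $\Lambda^2\mathbb{C}^n$ is an irreducible $U(n)$-module, Schur's lemma gives
\[
\Phi_2^{\mathrm{Haar},A}[Y] = \frac{\tr(P_A Y)}{\binom n2}\,P_A .
\]
This is the Hilbert--Schmidt orthogonal projector $Q$ onto $\mathrm{span}\{P_A\}$. Next I check three properties of $\Phi_2^{W_n,A}$: it is unital ($\Phi_2^{W_n,A}[P_A]=P_A$), trace-preserving, and it commutes with $Q$. For the last point, trace preservation means $\tr(P_A\,\Phi_2^{W_n,A}[Y]) = \tr(P_A Y)$, so $Q\Phi_2^{W_n,A} = Q$; unitality gives $\Phi_2^{W_n,A}Q = Q$. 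Hence
\[
\Phi_2^{W_n,A} - \Phi_2^{\mathrm{Haar},A} = \Phi_2^{W_n,A}(I-Q),
\]
and $\varepsilon_{\mathrm{2d}}$ is the operator norm of $\Phi_2^{W_n,A}$ restricted to the traceless subspace $(I-Q)\,\mathrm{End}(\Lambda^2\mathbb{C}^n)$, which is invariant under $\Phi_2^{W_n,A}$.

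On that invariant subspace Theorem~\ref{thm:spectral_gap_app} says the spectrum lies in $\{0,1/n\}$. The spectral radius is therefore $1/n$. If $\Phi_2^{W_n,A}$ were normal this would immediately give $\varepsilon_{\mathrm{2d}} = 1/n$. Normality is what I must supply, and I would do it using the block structure from the proof of the theorem rather than the eigenvalues alone. That proof uses the $(\mathbb{Z}_2)^K$ character decomposition. The uniform phases kill every coherence between distinct pair-projectors, and every character block $y\neq 0$ is annihilated layer by layer. So $\Phi_2^{W_n,A}(I-Q)$ vanishes outside the $y=0$ block. On the $y=0$ block, which is spanned by the base-summed weights $F(\delta)$, the map is the scalar recursion. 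In that recursion layer $\ell$ averages $F(\delta)$ with $F(\delta\oplus e_{\ell-1})$ and freezes $F(e_{\ell-1})$.

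The step I expect to be the main obstacle is the operator-norm bound on this $y=0$ block. Each single-layer averaging map is a doubly stochastic symmetric contraction, but the product of the $K$ layers need not be normal. A Jordan block for the eigenvalue $0$ could in principle push the norm above $1/n$. I would first try to compute the composite $K\times K$ (or $(n-1)$-dimensional) transfer matrix explicitly. The approach would be the same induction on $K$ used in Appendix~\ref{app:spectral_gap}(iii): show that the composite acts on traceless vectors as $\tfrac1n$ times a projector onto the $(K-1)$-dimensional stride-difference subspace, plus a nilpotent part mapping into the kernel. From that form I would read off a norm bound $c/n$ with $c$ an absolute constant. One more detail must be checked: the Hilbert--Schmidt norm on $\mathrm{End}(\Lambda^2)$ restricted to the $y=0$ block. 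I expect it to agree with a uniformly weighted Euclidean norm on $F$, because each orbit of pairs under the difference $\delta$ has the same size $n/2$. If so, no dimension factor enters and the bound transfers to $\varepsilon_{\mathrm{2d}} = O(1/n)$.

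If the explicit computation turns out unwieldy, the fallback is the crude product bound. Since each layer contracts traceless vectors in norm, the norm of the composite on the $y=0$ block is at most a constant. The constant can then be improved by writing the last layer's action on the two top-bit sectors separately, as in Step~5 of the proof of Theorem~\ref{thm:no_bp}, which isolates the single $O(1/n)$ imbalance responsible for the nonzero eigenvalue.
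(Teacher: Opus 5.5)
Your reduction is correct: $\Phi_2^{W_n,A}-\Phi_2^{\mathrm{Haar},A}=\Phi_2^{W_n,A}(I-Q)$. The phase layers kill all coherences, and the blocks $y\neq0$ are annihilated. On $\mathcal{B}_0$ the Hilbert--Schmidt norm of $Y=\sum_\delta c_\delta S_\delta$ is $\sqrt{2n}\,\|c\|_2$. You are also right that the eigenvalues $\{0,1/n\}$ do not bound the norm. The paper's proof of the corollary takes exactly that step, asserting that the butterfly operator ``has spectral norm $1/n$'' on the traceless part.

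The gap is in the bound you then expect to read off. No estimate $c/n$ with $c$ absolute exists, because the statement is false in this norm. Take the traceless operator
\[
Y=\sum_{\delta\neq 0,\ \delta_{K-1}=0} S_\delta-\frac{n-2}{n}\sum_{\delta_{K-1}=1} S_\delta ,
\qquad \Phi_2^{\mathrm{Haar},A}[Y]=0 .
\]
Its coefficients are constant on each top-bit sector, so layers $1,\dots,K-1$ fix it. Layer $K$ averages $1$ with $-\tfrac{n-2}{n}$ on every pair $\{\delta,\delta\oplus e_{K-1}\}$, but it freezes $c_{e_{K-1}}$, which has no top-bit-$0$ partner. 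Hence
\[
\Phi_2^{W_n,A}[Y]=\frac1n\sum_{\delta\notin\{0,e_{K-1}\}} S_\delta-\frac{n-2}{n}\,S_{e_{K-1}} .
\]
Here $\|Y\|_F^2=2(n-1)(n-2)$ and $\|\Phi_2^{W_n,A}[Y]\|_F^2=2(n-1)(n-2)/n$, so $\varepsilon_{\mathrm{2d}}(n)\ge n^{-1/2}$. At $n=4$, with $S_1=P_{01}+P_{23}$, $S_2=P_{02}+P_{13}$, $S_3=P_{03}+P_{12}$, the map sends $S_1-\tfrac12(S_2+S_3)\mapsto\tfrac14(S_1+S_3)-\tfrac12 S_2$. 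This follows directly from the paper's transfer matrix~\eqref{eq:transfer}, and gives a ratio $1/2$ against the claimed $1/4$.

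The mechanism is not a Jordan block. By the paper's Key Formula, $\Phi^2=\tfrac1n\Phi$ on traceless vectors, so there $\Phi=\tfrac1n E$ with $E$ an oblique idempotent. The nilpotent part you anticipated is $\tfrac1n P_R E(I-P_R)$, with $P_R$ the orthogonal projector onto the range. It has norm of order $n^{-1/2}$, because the range of $E$ sits at angle of order $n^{-1/2}$ to its kernel.

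Your fallback is the right tool: split the last layer by top-bit sectors, as in Step~5 of Theorem~\ref{thm:no_bp}. But it exposes the problem rather than removing it. The frozen coordinate $S_{e_{K-1}}$ inherits the full mean of the top-bit-$1$ sector, an order-one coefficient from an input of norm of order $\sqrt n$, not an $O(1/n)$ imbalance. Carrying the split through gives the recursion $\lambda_K=\max\{n^{-1/2},\lambda_{K-1}/\sqrt2\}$, with $\lambda_1=0$, for the norm on traceless vectors. Hence $\varepsilon_{\mathrm{2d}}(n)=n^{-1/2}$ exactly.

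What is provable is the spectral statement (spectral radius $1/n$, gap $1-1/n$) together with $\varepsilon_{\mathrm{2d}}(n)=n^{-1/2}$. That is still $o(1)$ but not $O(1/n)$, so no argument can recover the stated rate. Switching to trace-norm distances only makes it worse: $P_{\{i,i\oplus e_0\}}$ is sent entirely into the odd-difference pairs, since layers $2,\dots,K$ never change the lowest bit.
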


The fixed-point statement of Theorem~\ref{thm:spectral_gap_app}
follows from successive layer projections as in the main text (note
$P_A = \tfrac12(I \otimes I - \mathrm{SWAP})$ is fixed deterministically).
We prove the eigenvalue statement: all non-trivial eigenvalues on the
sector are $0$ or $1/n$.

Throughout the proof, $W_n$ is an exact single-particle $1$-design 
($\mathbb{E}[W_{ij} W^*_{kl}] = \delta_{ik} \delta_{jl} / n$), which 
holds for the $U(n)$ butterfly via the parity-decoupling recursion 
applied at the $1$-design level (each $U(2)$ layer gate is itself a 
$1$-design on its pair).

\subsection{Step A: Partial trace identity}

\begin{lemma}\label{lem:ptr_app}
For any exact $1$-design $\nu$ and any 
$Y \in \mathrm{End}(\mathbb{C}^n \otimes \mathbb{C}^n)$:
\[
  \mathrm{Tr}_2(\Phi_2^\nu[Y]) = \mathrm{Tr}_1(\Phi_2^\nu[Y]) 
  = (\mathrm{tr}(Y)/n)\, I_n.
\]
\end{lemma}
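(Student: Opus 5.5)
The plan is to compute the partial trace directly from the definition of $\Phi_2^\nu$, using only the first-moment (1-design) property of $\nu$ and the linearity of both sides in $Y$. Because both sides are linear, it suffices to prove the identity on the product basis $Y = A \otimes C$ with $A, C \in \mathrm{End}(\mathbb{C}^n)$; the general case then follows by extending linearly. I will prove the statement for $\mathrm{Tr}_2$ and obtain $\mathrm{Tr}_1$ by the symmetric argument (equivalently, by conjugating with $\mathrm{SWAP}$, which commutes with $W\otimes W$).

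For $Y = A\otimes C$ and a fixed unitary $W$,
\[
(W\otimes W)(A\otimes C)(W^\dagger\otimes W^\dagger) = WAW^\dagger \otimes WCW^\dagger .
\]
The partial trace over the second factor of this operator is $\mathrm{tr}(WCW^\dagger)\,WAW^\dagger$. By unitarity and cyclicity, $\mathrm{tr}(WCW^\dagger) = \mathrm{tr}(C)$, and this is deterministic. This is the key point: the second copy of $W$ drops out exactly, so only a first moment survives. Hence
\[
\mathrm{Tr}_2\bigl(\Phi_2^\nu[A\otimes C]\bigr) = \mathrm{tr}(C)\,\mathbb{E}_\nu[WAW^\dagger].
\]

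Next I will evaluate $\mathbb{E}_\nu[WAW^\dagger]$ using the 1-design identity $\mathbb{E}[W_{ij}W^*_{kl}] = \delta_{ik}\delta_{jl}/n$ stated just before the lemma. In components, $\mathbb{E}[(WAW^\dagger)_{ik}] = \sum_{j,l}\mathbb{E}[W_{ij}W^*_{kl}]A_{jl} = \delta_{ik}\,\mathrm{tr}(A)/n$, so $\mathbb{E}_\nu[WAW^\dagger] = (\mathrm{tr}A/n)\,I_n$. Multiplying by $\mathrm{tr}(C)$ gives $\mathrm{tr}(A)\mathrm{tr}(C)/n \cdot I_n = (\mathrm{tr}(A\otimes C)/n)\,I_n$, which is the claimed formula for product operators. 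Linearity then extends it to all $Y$.

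No step here is a genuine obstacle. The only point needing care is to confirm that the 1-design property holds for the butterfly ensemble actually used. I would cite the remark preceding the lemma: each layer gate satisfies $\mathbb{E}[V|i\rangle\langle i|V^\dagger]=I/2$ on its pair and annihilates off-diagonal pair terms in expectation. Composing these layers across the $K$ strides yields $\mathbb{E}[WAW^\dagger] = (\mathrm{tr}A/n)I$, and this is the only input the argument uses.
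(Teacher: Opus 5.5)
Your proof is correct and takes the same route as the paper's: unitarity makes the second copy of $W$ drop out deterministically, and the 1-design moment evaluates the surviving first moment. Your identity $\mathrm{tr}(WCW^\dagger)=\mathrm{tr}(C)$ plays the role of the paper's $\sum_b W_{bf}W^*_{bh}=\delta_{fh}$; the only difference is cosmetic, since you work on product operators $A\otimes C$ and extend by linearity, while the paper writes the same computation in indices for general $Y$.
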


\begin{proof}
Tracing over the second factor sets $d = b$ and sums over $b$:
\[
  [\mathrm{Tr}_2\, \Phi_2[Y]]_{a,c} = \sum_{b, e, f, g, h} 
  \mathbb{E}[W_{ae} W_{bf} W^*_{cg} W^*_{bh}]\, Y_{(e,f), (g,h)}.
\]
Unitarity gives $\sum_b W_{bf} W^*_{bh} = \delta_{fh}$ 
\emph{deterministically}, so the expectation factors: 
$\sum_b \mathbb{E}[\cdots] = \mathbb{E}[W_{ae} W^*_{cg}] \delta_{fh}$. 
The $1$-design gives 
$\mathbb{E}[W_{ae} W^*_{cg}] = \delta_{ac} \delta_{eg} / n$, yielding 
$[\mathrm{Tr}_2 \Phi_2[Y]]_{a,c} = (\delta_{ac}/n) \mathrm{tr}(Y)$. 
The case $\mathrm{Tr}_1$ is symmetric.
\end{proof}

\begin{corollary}\label{cor:T_app}
The subspace 
$\mathcal{T} = \{Y : \mathrm{Tr}_1 Y = \mathrm{Tr}_2 Y = 0\}$ is 
$\Phi_2$-invariant; all non-trivial eigenvalues lie in $\mathcal{T}$.
\end{corollary}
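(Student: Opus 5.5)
The statement to be proved is Corollary~\ref{cor:T_app}: the subspace $\mathcal{T}$ is $\Phi_2$-invariant, and every non-trivial eigenvector lies in it.

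For invariance I would apply Lemma~\ref{lem:ptr_app} directly. Take $Y \in \mathcal{T}$. Then $\mathrm{tr}(Y) = \mathrm{tr}(\mathrm{Tr}_2 Y) = 0$, and the lemma gives $\mathrm{Tr}_2(\Phi_2[Y]) = \mathrm{Tr}_1(\Phi_2[Y]) = (\mathrm{tr}(Y)/n)\,I_n = 0$. Hence $\Phi_2[Y] \in \mathcal{T}$. The only hypothesis used is that $W_n$ is an exact $1$-design, which the setup records.

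For the second claim I would decompose the operator space. Write $Y = Y_{\mathcal{T}} + Y_\perp$, where $Y_\perp$ lies in the Hilbert--Schmidt orthogonal complement of $\mathcal{T}$. That complement is spanned by operators of the form $A\otimes I$ and $I\otimes A$, since $\langle A\otimes I, Y\rangle = \langle A, \mathrm{Tr}_2 Y\rangle$ and $\langle I\otimes A, Y\rangle = \langle A, \mathrm{Tr}_1 Y\rangle$. The argument then needs $\Phi_2$ to send this complement into $\mathrm{span}\{I\otimes I\}$ plus $\mathcal{T}$. I would get this from Lemma~\ref{lem:ptr_app}: for $Y = A\otimes I$ we have $\Phi_2[Y] = \E[WAW^\dagger]\otimes I = (\mathrm{tr}A/n)\, I\otimes I$ by the $1$-design property, and the case $I\otimes A$ is symmetric.

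In a basis adapted to the decomposition, $\Phi_2$ is therefore block-triangular. The complement block has rank one, with image $\mathrm{span}\{I\otimes I\}$, and its nonzero eigenvalue is $1$, carried by the trivial fixed point. Any eigenvector with eigenvalue $\lambda\neq 1$ has trivial component along the trivial fixed point, so $\mathrm{Tr}_{1,2}$ of it vanish up to multiples of $I$. Consequently its non-trivial spectrum coincides with that of $\Phi_2|_{\mathcal{T}}$, the only exception being eigenvalue $0$. On the antisymmetric sector I would intersect the decomposition with $P_A(\cdot)P_A$; this is possible because $P_A$ commutes with the twirl. There the trivial fixed point becomes $P_A$ rather than $I\otimes I$, and the same conclusion follows after projecting.

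The main obstacle is the phrase ``all non-trivial eigenvalues lie in $\mathcal{T}$''. On the antisymmetric sector the non-fixed part of the complement has eigenvalue $0$, not $1/n$, so it still has to be checked that removing the $P_A$ direction leaves a block that contributes only eigenvalue $0$. I would verify this by computing $\mathrm{Tr}_2(P_A(A\otimes I)P_A)$ explicitly, confirming that the complement block maps to multiples of $P_A$ modulo $\mathcal{T}$.
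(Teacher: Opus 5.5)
Your invariance argument is the intended one. For the eigenvector claim you take a different route from the paper. The paper treats the corollary as immediate from Lemma~\ref{lem:ptr_app}. Where it invokes the corollary (step (v) in the proof of Proposition~\ref{prop:spectral_gap_general}), it argues directly from the eigen-equation. If $\Phi_2[Y]=\lambda Y$, then $\lambda\,\mathrm{Tr}_1Y=\lambda\,\mathrm{Tr}_2Y=(\mathrm{tr}\,Y/n)\,I_n$. Trace preservation gives $\lambda\,\mathrm{tr}\,Y=\mathrm{tr}\,Y$, so $\lambda\notin\{0,1\}$ forces $\mathrm{tr}\,Y=0$, hence $Y\in\mathcal{T}$.

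You instead identify $\mathcal{T}^\perp=\mathrm{span}\{A\otimes I,\ I\otimes B\}$ and use $\Phi_2[A\otimes I]=(\mathrm{tr}A/n)\,I\otimes I$. This is correct and gives slightly more. $\mathcal{T}^\perp$ is itself invariant, so $\Phi_2$ is block-diagonal, not just block-triangular, and its spectrum there is exactly $\{0,1\}$. This makes explicit that ``non-trivial'' must exclude $0$ as well as $1$: traceless $A\otimes I$ lies in the kernel but not in $\mathcal{T}$. The cost is that your complement is tied to the ambient operator space. The eigen-equation argument, by contrast, works verbatim on any invariant subspace.

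That is why your last paragraph sees an obstacle where there is none. No sector-specific check is needed. $\mathcal{T}\cap\{P_AYP_A\}$ is invariant as an intersection of invariant subspaces, and every eigenvector of $\Phi_2^{A}$ is an eigenvector of $\Phi_2$, so your full-space conclusion transfers unchanged.

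If you do want the block picture inside the sector, intersecting is the wrong operation. In fact $\mathcal{T}^\perp\cap\{P_AYP_A\}=\{0\}$, and $Y\mapsto P_AYP_A$ does not preserve $\mathcal{T}$; for example $Y=\ket{0}\bra{1}\otimes\ket{1}\bra{0}$ lies in $\mathcal{T}$ but $P_AYP_A$ does not. The complement of $\mathcal{T}\cap\{P_AYP_A\}$ within the sector is instead the \emph{projection} $\{P_A(A\otimes I)P_A\}=\{\tfrac12\,d\Lambda^2(A)\}$. Since $P_A$ commutes with the twirl, $\Phi_2[P_A(A\otimes I)P_A]=P_A\Phi_2[A\otimes I]P_A=(\mathrm{tr}A/n)\,P_A$ exactly, not merely modulo $\mathcal{T}$. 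No partial-trace computation is required.
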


\subsection{Step B: Wick pairings vanish on traceless inputs}

\begin{lemma}\label{lem:wick_app}
For any exact $1$-design $\nu$ and traceless 
$A, B \in \mathrm{End}(\mathbb{C}^n)$: the factored-$1$-design 
contribution to $\Phi_2[A \otimes B]$ is zero.
\end{lemma}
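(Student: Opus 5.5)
The plan is to make precise what ``factored-$1$-design contribution'' means and then show that it vanishes by the $1$-design identity alone. Write $\Phi_2^\nu[A\otimes B]$ in components,
\[
  \bigl(\Phi_2^\nu[A\otimes B]\bigr)_{(a,b),(c,d)}
  = \sum_{e,f,g,h}\mathbb{E}\bigl[W_{ae}W_{bf}W^*_{cg}W^*_{dh}\bigr]\,A_{eg}B_{fh}.
\]
The factored contribution is the term obtained by replacing the fourth moment with the product of the two second moments that respect the tensor structure, namely $\mathbb{E}[W_{ae}W^*_{cg}]\,\mathbb{E}[W_{bf}W^*_{dh}]$. Equivalently, it is the operator $\Phi_1^\nu[A]\otimes\Phi_1^\nu[B]$, where $\Phi_1^\nu[X] = \mathbb{E}[WXW^\dagger]$. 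This is the ``direct'' Wick pairing, in which ket index $e$ is paired with bra index $g$ on the first copy and $f$ with $h$ on the second. I will state this identification explicitly first, since the lemma's content depends on it.

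The key step is one application of the $1$-design identity $\mathbb{E}[W_{ae}W^*_{cg}] = \delta_{ac}\delta_{eg}/n$, already used in Lemma~\ref{lem:ptr_app}. It gives
\[
  \sum_{e,g}\mathbb{E}[W_{ae}W^*_{cg}]\,A_{eg} = \frac{\delta_{ac}}{n}\,\mathrm{tr}A ,
\]
that is, $\Phi_1^\nu[A] = (\mathrm{tr}A/n)\,I_n$, and likewise $\Phi_1^\nu[B] = (\mathrm{tr}B/n)\,I_n$. Hence the factored term equals $(\mathrm{tr}A\,\mathrm{tr}B/n^2)\,I\otimes I$, and it is zero because $A$ is traceless. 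Only one of the two traceless hypotheses is actually needed; I will record the symmetric statement for use in the recursion.

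The step that needs care is making clear what the lemma does \emph{not} claim. The crossed pairing $\mathbb{E}[W_{ae}W^*_{dh}]\,\mathbb{E}[W_{bf}W^*_{cg}]$ produces $\mathrm{tr}(AB)\,\mathrm{SWAP}/n^2$. This is generically nonzero for traceless $A,B$, so it must not be lumped into the ``factored'' term. The connected (non-Gaussian) part of the fourth moment is likewise not controlled by a $1$-design. I will therefore state the lemma strictly as the vanishing of $\Phi_1^\nu[A]\otimes\Phi_1^\nu[B]$. The swap-type and connected pieces are then treated downstream, inside the invariant subspace $\mathcal{T}$ of Corollary~\ref{cor:T_app}, where the per-pair gate moments (deterministic $|\det V|^2=1$ and $\mathbb{E}[V\ket{i}\bra{i}V^\dagger]=I/2$) feed the transfer-matrix recursion.
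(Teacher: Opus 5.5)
Your argument is correct and matches the paper's proof: apply the $1$-design identity $\mathbb{E}[W_{ae}W^*_{cg}] = \delta_{ac}\delta_{eg}/n$ independently to each tensor factor, which gives $(\mathrm{tr}A\,\mathrm{tr}B/n^2)\,I\otimes I = 0$. Two of your additions are accurate clarifications of what the lemma asserts. You identify the factored term explicitly as $\Phi_1^\nu[A]\otimes\Phi_1^\nu[B]$, and you note that the crossed pairing, which yields $\mathrm{tr}(AB)\,\mathrm{SWAP}/n^2$, is not covered by the lemma.
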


\begin{proof}
The $1$-design second-moment formula 
$\mathbb{E}[W_{ae} W^*_{cg}] = \delta_{ac} \delta_{eg}/n$, applied 
independently to each factor, yields the single contraction
\[
  (\delta_{ac}/n)(\delta_{bd}/n)\, \mathrm{tr}(A)\, \mathrm{tr}(B) = 0
\]
since $\mathrm{tr}(A) = \mathrm{tr}(B) = 0$.
\end{proof}

\subsection{Step C: Transfer matrix in the antisymmetric projector basis}

For modes $i \neq j$ define 
$P_{ij} = |\psi_{ij}\rangle\langle\psi_{ij}|$ with 
$|\psi_{ij}\rangle = |ij\rangle - |ji\rangle$.

\begin{lemma}[Closed basis]\label{lem:closed}
$\Phi_2^{W_n}[P_{ij}] \in \mathrm{span}\{P_{kl} : k < l\}$ 
for all $i < j$.
\end{lemma}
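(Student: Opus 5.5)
By linearity of $\Phi_2^{W_n}$ and the factorization $W_n = U_K\cdots U_1$ into independent layers, it suffices to show that each single-layer twirl $\Phi^{(\ell)}[Y] = \E[(U_\ell\otimes U_\ell)Y(U_\ell^\dagger\otimes U_\ell^\dagger)]$ maps $\mathrm{span}\{P_{kl}\}$ into itself. Closure under composition then gives the claim for $\Phi_2^{W_n} = \Phi^{(K)}\circ\cdots\circ\Phi^{(1)}$. Within a layer the gates act on disjoint pairs and are independent, so $U_\ell = \bigotimes_{p} V_p$ (one $V_p$ per pair, identity elsewhere in the single-particle sector). The key observation is that $(U_\ell\otimes U_\ell)\ket{\psi_{ij}} = U_\ell\ket{i}\wedge U_\ell\ket{j}$ (with $\ket{u}\wedge\ket{v} := \ket{uv}-\ket{vu}$) involves only the gates acting on modes $i$ and $j$. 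I would therefore split into three cases according to how $\{i,j\}$ sits relative to the layer's pairing.

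\emph{Case 1: $\{i,j\}$ is itself a pair of the layer.} Then $U_\ell\ket i\wedge U_\ell\ket j = V\ket i\wedge V\ket j = \det(V|_{\{i,j\}})\,\ket{\psi_{ij}}$. Hence $P_{ij}\mapsto|\det V|^2P_{ij} = P_{ij}$ deterministically. This is the antisymmetric-phase fact already invoked in the setup (the RBS block has determinant $1$ and the $R_z$ factors contribute a unit-modulus phase).

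\emph{Case 2: $i$ and $j$ lie in two different pairs $\{i,i'\}$ and $\{j,j'\}$ with independent gates $V_A, V_B$.} Write $V_A\ket i = \alpha\ket i+\beta\ket{i'}$ and $V_B\ket j = \gamma\ket j+\delta\ket{j'}$. Then $V_A\ket i\wedge V_B\ket j$ is bilinear, and its outer product is a sum of terms $\ket{u\wedge v}\bra{u'\wedge v'}$ with $u,u'\in\{i,i'\}$ and $v,v'\in\{j,j'\}$, weighted by products of $\E[(V_A\ket i\bra i V_A^\dagger)_{uu'}]$ and the analogous $B$ factor, by independence. The first-moment identity $\E[V\ket i\bra iV^\dagger] = I/2$ on each pair kills every $u\neq u'$ and every $v\neq v'$ term. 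What remains is $\tfrac14(P_{ij}+P_{ij'}+P_{i'j}+P_{i'j'})$, which lies in the span.

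\emph{Case 3: one or both modes are untouched by the layer.} On the single-particle sector every butterfly layer covers all $n$ modes, so this case is vacuous. If I wanted robustness anyway, an untouched mode behaves like a fixed vector, and the argument of Case 2 goes through with one factor deterministic.

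The main obstacle is Case 2. There I must make sure that only the diagonal first moments enter, and not cross moments such as $\E[\alpha\bar\beta]$ paired against the other gate; independence of $V_A$ and $V_B$ is exactly what makes the expectation factor. I also need antisymmetry not to reintroduce off-diagonal terms. Since $u\in\{i,i'\}$ and $v\in\{j,j'\}$ live in disjoint pairs, the ket $\ket{u\wedge v}$ never coincides with a reordering of a different $(u,v)$, so no extra terms appear. Finally, I note that the uniform-angle RBS$\cdot R_z$ gate supplies both required moments, as argued in the setup. The lemma therefore holds for the actual circuit as well as for the Haar-$U(2)$ proof model.
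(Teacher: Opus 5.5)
Your proof is correct and follows the paper's argument. The paper also proves closure layer by layer, using the same-pair fact (Fact~\ref{fact:same}, via $|\det V|^2=1$) and the cross-pair fact (Fact~\ref{fact:cross}, via independence of the two gates and the first-moment identity $\E[V\ket{i}\bra{i}V^\dagger]=I/2$), and then composes over the independent layers by induction.
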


This follows from the two layer formulas below by induction over layers: 
each layer maps $\mathrm{span}\{P_{kl} : k < l\}$ into itself 
(Facts~\ref{fact:same} and~\ref{fact:cross}), so the composition $W_n$ 
does as well, for every $n = 2^K$. (We have also checked it explicitly, 
with zero remainder, for all pairs at $n = 4$ and $n = 8$.)

Two consequences of Steps A--B fix the scope of the eigenvalue problem 
solved below. First, by Corollary~\ref{cor:T_app} every non-trivial 
eigenvector lies in the traceless subspace $\mathcal{T}_n$, so it 
suffices to bound $\Phi_2^{W_n}|_{\mathcal{T}_n}$. Second, 
Lemma~\ref{lem:wick_app} shows that the factored-$1$-design contribution 
vanishes on the traceless inputs $P_{ij} - (\text{trace part})$, so the 
entire action on $\mathcal{T}_n$ is carried by the connected 
(cross-pair) part captured by the transfer matrix in the $P_{ij}$ basis 
below.

\begin{fact}[Same-pair]\label{fact:same}
If $\{i, j\}$ is a pair in layer $\ell$:
\[
  \Phi_2^{(\ell)}[P_{ij}] = P_{ij}.
\]
\end{fact}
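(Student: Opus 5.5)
The identity holds deterministically, before any averaging: the layer maps $\ket{\psi_{ij}}$ to a unit-modulus multiple of itself. The plan is to compute $(W^{(\ell)}\otimes W^{(\ell)})\ket{\psi_{ij}}$ exactly for every parameter draw, and only at the end take the expectation.

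\emph{Step 1 (block structure of the layer).} In the single-particle sector, layer $\ell$ acts as a block-diagonal unitary $W^{(\ell)} = \bigoplus_{p} u_p$. Here $p$ runs over the $n/2$ disjoint pairs of stride $2^{\ell-1}$, and $u_p \in U(2)$ is the gate on that pair. In the uniform-angle model, $u_p = R(\theta_p)\,\mathrm{diag}(e^{-i\phi_a/2},e^{-i\phi_b/2})$; in the Haar-$U(2)$ proof model, $u_p$ is Haar. Since $\{i,j\}$ is itself a pair of layer $\ell$, write $u := u_{\{i,j\}}$. The key observation is that $W^{(\ell)}$ maps $\mathrm{span}\{e_i,e_j\}$ into itself via $u$. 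No other gate of the layer touches $e_i$ or $e_j$, so the remaining blocks play no role.

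\emph{Step 2 (antisymmetric vector picks up $\det u$).} Because $W^{(\ell)}e_i$ and $W^{(\ell)}e_j$ both lie in $\mathrm{span}\{e_i,e_j\}$, we have
\[
(W^{(\ell)}\otimes W^{(\ell)})\ket{\psi_{ij}} = (u e_i)\otimes(u e_j) - (u e_j)\otimes(u e_i).
\]
This is the antisymmetrization of $ue_i\otimes ue_j$ inside the two-mode space. The antisymmetric subspace of $\mathbb{C}^2\otimes\mathbb{C}^2$ is one-dimensional and spanned by $\ket{\psi_{ij}}$, and $u$ acts on it by $\Lambda^2(u)=\det u$. Expanding in $2\times2$ entries confirms this: the coefficient is $u_{ii}u_{jj}-u_{ij}u_{ji}$. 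Hence $(W^{(\ell)}\otimes W^{(\ell)})\ket{\psi_{ij}} = \det(u)\,\ket{\psi_{ij}}$.

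\emph{Step 3 (conclude).} Conjugating the projector gives
\[
(W^{(\ell)}\otimes W^{(\ell)})\,P_{ij}\,(W^{(\ell)}\otimes W^{(\ell)})^\dagger = |\det u|^2 P_{ij} = P_{ij}.
\]
This holds for every realization, since $u$ is unitary; for the RBS$\cdot R_z$ gate explicitly, $\det u = e^{-i(\phi_a+\phi_b)/2}$. Taking the expectation over the layer parameters therefore leaves $P_{ij}$ unchanged, which is the claim. This is exactly the ``deterministic antisymmetric phase $|\det V|^2=1$'' property invoked in the setup of this appendix.

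The only point needing care is Step 2, namely checking that no amplitude leaks out of the pair. This is where the hypothesis that $\{i,j\}$ is a pair of layer $\ell$ is essential; otherwise the result would be the cross-pair case of Fact~\ref{fact:cross}. Beyond that, the argument is a direct computation and involves no probabilistic input.
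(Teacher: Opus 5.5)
Your proof is correct and follows the paper's argument: the paper likewise uses that $V\otimes V$ acts on the antisymmetric vector $\ket{\psi_{ij}}$ by multiplication by $\det V$. Conjugation therefore gives $|\det V|^2 P_{ij} = P_{ij}$ for every realization, and the expectation over the layer parameters is trivial. Your Step~1, which makes explicit that no other block of the layer touches $e_i, e_j$, spells out what the paper leaves implicit by writing $V \in U(2)$ directly.
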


\begin{proof}
For $V \in U(2)$, 
$(V \otimes V)|\psi_{ij}\rangle = (\det V) |\psi_{ij}\rangle$, since 
$|\psi_{ij}\rangle$ is antisymmetric and $V \otimes V$ acts on 
antisymmetric vectors by multiplication by $\det V$. Hence 
$(V \otimes V) P_{ij} (V \otimes V)^\dagger = |\det V|^2 P_{ij} = 
P_{ij}$, and the expectation is $P_{ij}$ regardless of the sampling 
distribution on $V$.
\end{proof}

\begin{fact}[Cross-pair]\label{fact:cross}
If $i \in \alpha$, $j \in \beta$ are in different pairs $\alpha, \beta$ 
of layer $\ell$ (independent gates $V_\alpha, V_\beta$):
\[
  \Phi_2^{(\ell)}[P_{ij}] = \frac{1}{|\alpha||\beta|} 
  \sum_{k \in \alpha,\, l \in \beta} P_{kl} =: Z_{\alpha\beta}.
\]
This output is the \emph{same} for all $i \in \alpha$, $j \in \beta$.
\end{fact}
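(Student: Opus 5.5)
The plan is to exploit independence of the two gates $V_\alpha,V_\beta$ and the first-moment identity $\E[V\ket{i}\bra{i}V^\dagger]=I/2$ on each pair, which was recorded in the setup for both the Haar-$U(2)$ model and the uniform-angle $\mathrm{RBS}\cdot R_z$ gate. Gates on pairs disjoint from $\alpha\cup\beta$ act trivially on the vectors involved, since $\ket{\psi_{ij}}$ is supported on modes $i,j$ only. So $\Phi_2^{(\ell)}[P_{ij}]$ reduces to the expectation over $V:=V_\alpha\oplus V_\beta$ acting on the four-mode subspace spanned by $\alpha\cup\beta$.

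First, expand $P_{ij}=\ket{ij}\bra{ij}-\ket{ij}\bra{ji}-\ket{ji}\bra{ij}+\ket{ji}\bra{ji}$. For the diagonal term, $(V\otimes V)\ket{ij}\bra{ij}(V\otimes V)^\dagger = V_\alpha\ket{i}\bra{i}V_\alpha^\dagger\otimes V_\beta\ket{j}\bra{j}V_\beta^\dagger$. The two factors are independent, so the first-moment identity gives
\[
\E = \tfrac12 I_\alpha\otimes \tfrac12 I_\beta=\tfrac14\sum_{k\in\alpha,l\in\beta}\ket{kl}\bra{kl}.
\]
The term $\ket{ji}\bra{ji}$ gives the mirrored expression $\tfrac14\sum\ket{lk}\bra{lk}$.

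Next come the exchange terms. We have $(V\otimes V)\ket{ij}\bra{ji}(V\otimes V)^\dagger=\sum c\,\ket{kl}\bra{l'k'}$ with coefficient $V_{ki}V_{lj}\overline{V_{l'j}}\,\overline{V_{k'i}}$, where $k,k'\in\alpha$ and $l,l'\in\beta$. Independence factorises this into $\E[V_{\alpha,ki}\overline{V_{\alpha,k'i}}]\cdot\E[V_{\beta,lj}\overline{V_{\beta,l'j}}]=\tfrac14\delta_{kk'}\delta_{ll'}$, again by the same first-moment identity (it is an entry of $\E[V\ket{i}\bra{i}V^\dagger]$). Hence this term averages to $\tfrac14\sum_{k,l}\ket{kl}\bra{lk}$, and symmetrically for $\ket{ji}\bra{ij}$.

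Collecting the four pieces gives $\tfrac14\sum_{k\in\alpha,l\in\beta}(\ket{kl}-\ket{lk})(\bra{kl}-\bra{lk})=\tfrac14\sum P_{kl}=Z_{\alpha\beta}$, using $|\alpha||\beta|=4$. The result visibly does not depend on which $i\in\alpha$, $j\in\beta$ we started from. The one point needing care is the phase structure in the exchange terms: the $R_z$ phases sit on the input index and cancel between $V_{ki}$ and $\overline{V_{k'i}}$. That is exactly why only the first moment (and no cross-pair correlation) is required. This is the step I would check most explicitly, by writing $V_{ki}=R_{ki}e^{-i\phi_i/2}$.
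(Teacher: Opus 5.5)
Your proposal is correct and follows essentially the same argument as the paper: expand $P_{ij}$ into its four rank-one pieces, then use independence of $V_\alpha,V_\beta$ and the per-pair first-moment identity $\E[V\ket{i}\bra{i}V^\dagger]=I/2$. This gives $\tfrac14 I_\alpha\otimes I_\beta$ for the diagonal terms and $\tfrac14 S_{\alpha\beta}$ for the exchange terms, which recombine into $\tfrac14\sum_{k\in\alpha,l\in\beta}P_{kl}$. Your explicit matrix-element check of the exchange terms, including the cancellation of the $R_z$ phases, only spells out what the paper states in one line.
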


\begin{proof}
Independence and the $1$-design property 
$\mathbb{E}[V_\alpha |i\rangle\langle i| V_\alpha^\dagger] = 
I_\alpha / |\alpha|$ (which holds for both $U(2)$ and $SU(2)$ Haar, 
and is the $1$-design moment formula specialized to the single-particle 
sector) give 
$\Phi_2^{(\ell)}[|ij\rangle\langle ij|] = 
(1/|\alpha||\beta|) I_\alpha \otimes I_\beta$ and 
$\Phi_2^{(\ell)}[|ij\rangle\langle ji|] = 
(1/|\alpha||\beta|) S_{\alpha\beta}$, so 
$\Phi_2^{(\ell)}[P_{ij}] = (1/|\alpha||\beta|)(I_\alpha \otimes I_\beta 
+ I_\beta \otimes I_\alpha - S_{\alpha\beta} - S_{\alpha\beta}^\dagger) 
= (1/|\alpha||\beta|) \sum_{k, l} P_{kl}$.
\end{proof}

\paragraph{Explicit computation for $n = 4$ ($W_4 = L_2 \circ L_1$).}
Set $L_1$ pairs $A = \{0, 1\}$, $B = \{2, 3\}$; 
$L_2$ pairs $C = \{0, 2\}$, $D = \{1, 3\}$.

\emph{After $L_1$:}
$\Phi_2^{L_1}[P_{01}] = P_{01}$, 
$\Phi_2^{L_1}[P_{23}] = P_{23}$ (Fact~\ref{fact:same}). All four 
cross-pairs $(i \in A, j \in B)$ give the same output: 
$\Phi_2^{L_1}[P_{ij}] = \tfrac{1}{4}(P_{02} + P_{03} + P_{12} + 
P_{13}) =: Z$.

\emph{After $L_2$:}
$P_{01}, P_{23}$ are cross-pairs of $L_2$ 
(Fact~\ref{fact:cross}): 
$\Phi_2^{L_2}[P_{01}] = \Phi_2^{L_2}[P_{23}] = 
\tfrac{1}{4}(P_{01} + P_{03} + P_{12} + P_{23}) =: v_1$. 
Applying $L_2$ to $Z$ (using Facts~\ref{fact:same} 
and~\ref{fact:cross} for $P_{02}, P_{13} \in$ same pairs; 
$P_{03}, P_{12} \in$ cross pairs of $L_2$):
\begin{align}
  \Phi_2^{L_2}[Z]
  &= \tfrac{1}{4}\bigl[P_{02} + 
  \tfrac{1}{4}(P_{01}{+}P_{03}{+}P_{12}{+}P_{23}) \notag\\
  &\quad + \tfrac{1}{4}(P_{01}{+}P_{03}{+}P_{12}{+}P_{23}) 
  + P_{13}\bigr] \notag\\
  &= \tfrac{1}{4}(P_{02} + P_{13}) + 
  \tfrac{1}{8}(P_{01} + P_{03} + P_{12} + P_{23}) =: v_2.
  \label{eq:v2app}
\end{align}

\emph{Transfer matrix:}
\begin{equation}\label{eq:transfer}
  \Phi_2^{W_4}[P_{ij}] =
  \begin{cases}
    v_1 = \tfrac{1}{4}(P_{01} + P_{03} + P_{12} + P_{23}) \\
    \hspace{3em} \text{if } (i,j) \in \{(0,1),(2,3)\}, \\[4pt]
    v_2 = \tfrac{1}{4}(P_{02} + P_{13}) \\
    \hspace{1.5em} + \tfrac{1}{8}(P_{01} + P_{03} + P_{12} + P_{23}) 
    \text{ otherwise.}
  \end{cases}
\end{equation}
The matrix has rank $2$; image $= \mathrm{span}\{v_1, v_2\}$.

\emph{Eigenvector and eigenvalue:}
Let $Y^* = c_1(P_{01} + P_{03} + P_{12} + P_{23}) + 
c_2(P_{02} + P_{13})$. The constraint $\mathrm{Tr}_1(Y^*) = 0$ 
(mode $0$ check: $c_1 + c_2 + c_1 = 0$) gives $c_2 = -2 c_1$. 
Computing:
\begin{eqnarray}\label{eq:eigen}
  \Phi_2^{W_4}[Y^*]
  &=& 2 c_1 v_1 + 2(c_1 + c_2) v_2 \\\notag
  &=& \tfrac{c_1}{4}(P_{01}{+}P_{03}{+}P_{12}{+}P_{23}) 
  - \tfrac{c_1}{2}(P_{02}{+}P_{13}) \\\notag
  &=& \tfrac{1}{4}\, Y^*.
\end{eqnarray}
The maximum eigenvalue in $\mathcal{T}$ is $1/4 = 1/n$, completing 
the $n = 4$ base case.

\paragraph{Step D: General $n = 2^K$ via the doubly-stochastic 
structure.}

\begin{lemma}[Doubly stochastic transfer 
matrix]\label{lem:doubly_stochastic}
For any $n = 2^K$ ($K \geq 2$), the transfer matrix $M$ defined by
\[
  \Phi_2^{W_n}[P_{ij}] = \sum_{k<l} M_{(ij),(kl)}\, P_{kl}
\]
is doubly stochastic: every row sum and every column sum equals $1$.
\end{lemma}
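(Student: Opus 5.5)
I would prove double stochasticity layer by layer and then compose. The transfer matrix of $W_n = U_K\cdots U_1$ is the product $M = M^{(1)}M^{(2)}\cdots M^{(K)}$ of the single-layer transfer matrices $M^{(\ell)}$. Each of these is defined by $\Phi_2^{(\ell)}[P_{ij}] = \sum_{k<l} M^{(\ell)}_{(ij),(kl)} P_{kl}$, and by Lemma~\ref{lem:closed} each layer preserves $\mathrm{span}\{P_{kl}\}$. A product of doubly stochastic matrices is doubly stochastic, so it suffices to prove the claim for a single layer. I would also note that all entries are nonnegative, since Facts~\ref{fact:same} and~\ref{fact:cross} express each output as a nonnegative combination of the $P_{kl}$.

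\emph{Row sums of a single layer.} Fact~\ref{fact:same} gives the same-pair row $M^{(\ell)}_{(ij),\cdot} = e_{(ij)}$, whose sum is $1$. For a cross pair $i\in\alpha$, $j\in\beta$, Fact~\ref{fact:cross} gives weight $1/(|\alpha||\beta|) = 1/4$ on each of the four $P_{kl}$ with $k\in\alpha$, $l\in\beta$, again summing to $1$. Here I use that $P_{kl} = P_{lk}$, so orientation does not cause double counting. The $\{P_{kl}\}_{k<l}$ are linearly independent (they are projectors onto orthogonal vectors $\psi_{kl}$), so the coefficients are well defined. An alternative check is trace preservation: $\tr P_{kl} = 2$ for all $k<l$ and $\Phi_2$ is trace preserving, so row sums must equal $1$.

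\emph{Column sums of a single layer.} I would use a symmetry rather than counting. Under the single-layer map the pairs split into two kinds of blocks. The first kind is a same-pair singleton $\{(ij)\}$, on which $M^{(\ell)}$ acts as the identity. The second kind is the set of $4$ cross pairs between two distinct layer pairs $\alpha,\beta$, on which $M^{(\ell)}$ acts as the $4\times 4$ all-$\tfrac14$ matrix $J/4$. These blocks partition $\{(k,l): k<l\}$. Fact~\ref{fact:cross} sends cross pairs of $(\alpha,\beta)$ only onto cross pairs of $(\alpha,\beta)$, so $M^{(\ell)}$ is block diagonal with blocks $1$ and $J/4$. Both blocks are symmetric, so $M^{(\ell)}$ is symmetric, and column sums equal row sums, namely $1$. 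The same conclusion follows from unitality, $\Phi_2[I]=I$, together with $P_A = \tfrac12\sum_{k<l}P_{kl}$ being fixed.

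\emph{Main obstacle.} The step needing the most care is that each layer's pairs cover all $n$ modes with $|\alpha|=2$. That holds for every butterfly layer, and it is what makes every cross block exactly $4\times4$ and every block sum uniform. I would also check for $n=4$ against Eq.~\eqref{eq:transfer}: the rows $v_1$ and $v_2$ each sum to $1$, and the columns sum to $1$ (e.g.\ column $P_{02}$: $0+0+\tfrac14\cdot4=1$). Finally I would conclude with the composition argument.
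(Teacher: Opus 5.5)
Your proof is correct, but it takes a different route from the paper. The paper argues globally on $\Phi_2^{W_n}$ and never opens up the layers. Row sums equal $1$ because $\Phi_2$ is trace preserving and $\mathrm{Tr}\,P_{kl}=2$. Column sums equal $1$ because $\sum_{i<j}P_{ij}=\mathbf{I}\otimes\mathbf{I}-\mathrm{SWAP}$ is fixed by conjugation with every $W\otimes W$; the coefficients are then read off by pairing with $P_{kl}$, using $\mathrm{Tr}(P_{kl}P_{k'l'})=4\,\delta_{(kl),(k'l')}$. These are exactly the two ``alternative checks'' you mention in passing.

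Your main argument works differently. You use Facts~\ref{fact:same} and~\ref{fact:cross} to show that each single-layer matrix $M^{(\ell)}$ is block diagonal, with $1\times1$ identity blocks for same-pair entries and all-$\tfrac14$ $4\times4$ blocks for cross-pair entries. Each $M^{(\ell)}$ is therefore symmetric, nonnegative and doubly stochastic, and you compose. The composition step uses independence of the layers, so that $\Phi_2^{W_n}=\Phi_2^{(K)}\circ\cdots\circ\Phi_2^{(1)}$; you leave this implicit.

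The paper's route is shorter and more general. It needs only three things: trace preservation, the deterministic invariance of $\mathbf{I}\otimes\mathbf{I}$ and $\mathrm{SWAP}$, and closure of $\mathrm{span}\{P_{kl}\}$. So it holds for any gate ensemble and is indifferent to whether the layer pairs cover all modes.

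Your route gives more information. It establishes entrywise nonnegativity, which the usual notion of double stochasticity requires and which the paper's proof does not address. It also exhibits the explicit per-layer averaging structure that the recursion in Step~D(iii) later relies on.

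One small correction: the $P_{kl}$ are not projectors. Since $\|\psi_{kl}\|^2=2$, one has $P_{kl}^2=2P_{kl}$. Linear independence still holds, because they are rank-one operators on mutually orthogonal vectors. Given that independence, fixedness of $\sum_{i<j}P_{ij}$ yields the column sums directly by comparing coefficients, without any trace pairing.
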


\begin{proof}
\textit{Row sums.} $\Phi_2$ is trace-preserving, so
\[
  2 \sum_{k<l} M_{(ij),(kl)} = \operatorname{Tr}(\Phi_2[P_{ij}]) 
  = \operatorname{Tr}(P_{ij}) = 2.
\]

\textit{Column sums.}
\[
  \sum_{i<j} M_{(ij),(kl)} = \tfrac{1}{4} 
  \operatorname{Tr}\!\left(P_{kl} \cdot 
  \Phi_2\!\left[\textstyle\sum_{i<j} P_{ij}\right]\right).
\]
We verify that 
$\sum_{i<j} P_{ij} = \mathbf{I} \otimes \mathbf{I} - \mathrm{SWAP}$: 
writing $\Pi = \sum_i |ii\rangle\langle ii|$,
\[
  \sum_{i<j} P_{ij} = (\mathbf{I} \otimes \mathbf{I} - \Pi) 
  - (\mathrm{SWAP} - \Pi) 
  = \mathbf{I} \otimes \mathbf{I} - \mathrm{SWAP}.
\]
Both $\mathbf{I} \otimes \mathbf{I}$ and $\mathrm{SWAP}$ are fixed 
points of $\Phi_2^{W_n}$ (deterministically, for every unitary: 
$W W^\dagger = I$ and $\mathrm{SWAP}$ commutes with $W \otimes W$), so 
$\Phi_2[\mathbf{I} \otimes \mathbf{I} - \mathrm{SWAP}] = 
\mathbf{I} \otimes \mathbf{I} - \mathrm{SWAP}$. Finally, 
$\operatorname{Tr}(P_{kl}(\mathbf{I} \otimes \mathbf{I} - 
\mathrm{SWAP})) = 2 - (-2) = 4 = \operatorname{Tr}(P_{kl}^2)$, 
giving column sum $4/4 = 1$.
\end{proof}

\begin{proposition}[General spectral 
gap]\label{prop:spectral_gap_general}
For $n = 2^K$ ($K \geq 2$), every non-zero eigenvalue of 
$\Phi_2^{W_n}|_{\mathcal{T}_n}$ equals exactly $1/n$. Hence 
$\lambda_{\max}(\Phi_2^{W_n}|_{\mathcal{T}_n}) = 1/n$.
\end{proposition}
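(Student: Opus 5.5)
The plan is to diagonalize $\Phi_2^{W_n}$ on $\mathrm{span}\{P_{kl}\}$ using the $(\mathbb{Z}_2)^K$ translation symmetry of the butterfly. That span is invariant by Lemma~\ref{lem:closed}, and by Corollary~\ref{cor:T_app} it is enough to work there modulo the trace part. I would label modes by $b \in (\mathbb{Z}_2)^K$ and write each pair as base point plus difference, $P_{(b,\delta)} = P_{\{b,b\oplus\delta\}}$ with $\delta \neq 0$.

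Translation by $t$ maps every layer-$\ell$ pair $\{b, b\oplus e_{\ell-1}\}$ to another pair of the same layer. Since the gates are i.i.d., each layer map $\Phi_2^{(\ell)}$ commutes with translations. Facts~\ref{fact:same} and~\ref{fact:cross} then give the layer action explicitly:
\begin{itemize}
\item if $\delta = e_{\ell-1}$, then $P_{(b,\delta)}$ is fixed;
\item otherwise $P_{(b,\delta)}$ is sent to the uniform average of the four pairs $(b,\delta),(b\oplus e_{\ell-1},\delta),(b,\delta\oplus e_{\ell-1}),(b\oplus e_{\ell-1},\delta\oplus e_{\ell-1})$, up to relabelling unordered pairs.
\end{itemize}
I would then Fourier transform in the base variable, $\hat P_{\delta,y} = \sum_b (-1)^{y\cdot b} P_{(b,\delta)}$, restricted to a transversal of the involution $b\leftrightarrow b\oplus\delta$. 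This block-diagonalizes every layer into character sectors $\mathcal{B}_y$.

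On a sector with $y\cdot e_{\ell-1} = 1$, averaging over $b$ and $b\oplus e_{\ell-1}$ annihilates every component with $\delta \neq e_{\ell-1}$. The only survivor is the frozen component $\delta = e_{\ell-1}$. For $y \neq 0$ I would show that the full product $\Phi^{(K)}\cdots\Phi^{(1)}$ restricted to $\mathcal{B}_y$ has only the eigenvalue $0$. The argument is to follow the frozen component. After a layer $\ell$ with $y_{\ell-1}=1$, only $\hat P_{e_{\ell-1},y}$ can be nonzero. A later layer $\ell'$ treats $e_{\ell-1}$ as a cross difference and spreads it into $\{e_{\ell-1}, e_{\ell-1}\oplus e_{\ell'-1}\}$. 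If $y_{\ell'-1}=1$ this is killed. If $y_{\ell'-1}=0$, the spread stays inside differences that are later killed by the layer of the highest set bit of $y$. In either case the composite map is nilpotent, or at least strictly triangular, on $\mathcal{B}_y$. This is the step I expect to be the main obstacle, because the frozen terms make the sector maps non-scalar. I would first handle it by tracking the support in $\delta$ through the layers in order and showing it contracts to nothing, checking against the explicit $n=4$ computation~\eqref{eq:transfer}, where the $y\neq0$ blocks are exactly the rank deficiency.

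The sector $y=0$ reduces to functions $F(\delta)$ on the $n-1$ nonzero differences, with $F(\delta) = \sum_b \mathrm{coef}(P_{(b,\delta)})$. There layer $\ell$ acts by the scalar recursion: $F(e_{\ell-1})$ is fixed, and every other $F(\delta)$ is replaced by $\tfrac12[F(\delta)+F(\delta\oplus e_{\ell-1})]$. This is a doubly stochastic map, consistent with Lemma~\ref{lem:doubly_stochastic}. The constant function is the fixed point corresponding to $P_A$, so it is removed by the tracelessness constraint.

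I would diagonalize the composite map by induction on $K$, splitting differences by their top bit $\delta_{K-1}$:
\begin{itemize}
\item Layers $1,\dots,K-1$ never flip the top bit. On the top-bit-$1$ sector ($n/2$ values, with nothing frozen) they average completely to that sector's mean. On the top-bit-$0$ sector they act as the $(K-1)$-level recursion, whose nonzero eigenvalues by induction are $1$ (the mean) and $2/n$ with multiplicity $K-2$.
\item Layer $K$ fixes $F(e_{K-1})$ and averages each $\delta$ with $\delta\oplus e_{K-1}$, coupling the two sectors.
\end{itemize}
Composing, the surviving eigenvalues should be $1$ for the constant vector, and $1/n$ both for the inherited $2/n$ modes (halved by the last averaging) and for one new mode built from the difference of sector means, exactly as with $Y^*$ in~\eqref{eq:eigen}. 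Every other eigenvalue is $0$. This gives multiplicity $K-1$ for the eigenvalue $1/n$, which I would verify against the $K=2$ base case. Combining with the $y\neq0$ analysis yields the claim $\lambda_{\max}(\Phi_2^{W_n}|_{\mathcal{T}_n}) = 1/n$.
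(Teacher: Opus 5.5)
Your plan follows the paper's proof closely: Fourier blocks $\mathcal{B}_y$ under the $(\mathbb{Z}_2)^K$ translations, vanishing of the $y\neq0$ blocks, and an induction on $K$ in the $y=0$ block by splitting differences by their top bit. In that induction the inherited $2/n$ modes halve to $1/n$, and one new $1/n$ mode comes from a $2\times2$ block of trace $1+1/n$ and determinant $1/n$, which is where your ``new mode'' comes from.

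\textbf{The gap.} It is in the $y\neq0$ step, which you flag as the main obstacle, and its premise is false. You define $\hat P_{\delta,y}$ through a transversal of $b\leftrightarrow b\oplus\delta$. When $y\cdot\delta=1$ that sum depends on the chosen transversal and is not an eigenvector of the translations, so it belongs to no $\mathcal{B}_y$. The slice $\mathrm{span}\{P_{(b,\delta)}\}_b$ is the permutation module on the cosets of $\{0,\delta\}$, so it carries only characters with $y\cdot\delta=0$. Concretely, $\sum_b(-1)^{y\cdot b}P_{(b,e_{\ell-1})}=0$ when $y\cdot e_{\ell-1}=1$, because $b$ and $b\oplus e_{\ell-1}$ label the same pair with opposite signs. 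So in such a sector there is no frozen component at all. The remaining cross-pair terms acquire the factor $\tfrac14\bigl(1+(-1)^{y\cdot e_{\ell-1}}\bigr)=0$, so layer $\ell$ annihilates $\mathcal{B}_y$ outright. Every layer preserves $\mathcal{B}_y$, and every $y\neq0$ has some $\ell$ with $y\cdot e_{\ell-1}=1$. Hence $\Phi_2^{W_n}$ vanishes identically on each $y\neq0$ block. This one-line observation is the paper's step (ii); no nilpotence or triangularity argument is needed.

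Your sketched argument would not close even on your own premise. If each layer with $y$-bit $1$ left its frozen component, then every layer after the last such one has $y$-bit $0$ and annihilates nothing. One pass of the composite then need not vanish on $\mathcal{B}_y$, and nothing in your sketch shows that repeated passes do. Your appeal to ``the layer of the highest set bit of $y$'' tacitly treats that layer as killing everything, which contradicts the premise. Relatedly, at $n=4$ the $y\neq0$ blocks supply only $3$ of the $4$ kernel dimensions. The fourth is the $y=0$ vector $\sum_b\bigl[P_{(b,e_1)}-P_{(b,e_0\oplus e_1)}\bigr]$, which layer $1$ already kills.

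In the $y=0$ induction, supply the two details the paper uses:
\begin{itemize}
\item A lower eigenvector $x$ with eigenvalue $\mu\neq1$ has zero coefficient sum. Its copy in the top-bit-$1$ sector is sent by the composite to that sum times a fixed vector, hence to $0$. So the vector placing $x$ in both top-bit sectors has eigenvalue $\mu/2$.
\item The eigenvalue-$1/n$ eigenvectors have zero coefficient sum, so they lie in $\mathcal{T}_n$ by Lemma~\ref{lem:ptr_app}. This is what shows $1/n$ is attained on $\mathcal{T}_n$, while the eigenvalue $1$, carried by $I\otimes I-\mathrm{SWAP}\notin\mathcal{T}_n$, is excluded.
\end{itemize}
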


\begin{proof}
Throughout, $F := \mathbf{I} \otimes \mathbf{I} - \mathrm{SWAP} = 
\sum_{i<j} P_{ij}$ is the fixed point; we write 
$\sigma(Y) := \sum_{i<j} c_{ij}$ for the \emph{coefficient sum} of 
$Y = \sum c_{ij} P_{ij}$.

\paragraph{Step 1 (Image basis).}
By Facts~\ref{fact:same} and~\ref{fact:cross}, 
$\Phi_2^{W_n}[P_{ij}]$ depends on the pair $(i,j)$ only through its 
\emph{type}---the position of the highest set bit of $i \oplus j$ 
(equivalently, the butterfly layer at which $i$ and $j$ are first 
paired)---and therefore takes at most $K$ distinct values as $(i,j)$ 
ranges over all $\binom{n}{2}$ pairs. Denote them 
$f_1, \ldots, f_K$, one per butterfly layer. The image satisfies 
$\mathrm{Im}(\Phi_2) = \mathrm{span}\{f_1, \ldots, f_K\}$.

\paragraph{Step 2 (T-constraint in the image basis).}
We show: $v = \sum_t a_t f_t \in \mathcal{T}_n$ if and only if 
$\sum_t a_t = 0$.

Row sums of $M$ equal $1$ (Lemma~\ref{lem:doubly_stochastic}), so 
$\sigma(f_t) = 1$ for every $t$. Lemma~\ref{lem:ptr_app} gives 
$\operatorname{Tr}_1(f_t) = (2/n) I_n$ for every $t$. Hence for any 
mode $m$:
\[
  \sum_{l \neq m} [v \text{ coeff at } (m,l)] 
  = \sum_t a_t \cdot \frac{2}{n}.
\]
This is zero (T-constraint) if and only if $\sum_t a_t = 0$.

\paragraph{Step 3 (Key Formula).}
We claim: for every type $t = 1, \ldots, K$,
\begin{equation}\label{eq:key_formula}
  \Phi_2^{W_n}[f_t] = \frac{1}{n}\, f_t + \frac{2}{n^2}\, F.
\end{equation}

\textit{Verification at $n = 4$.}
From the explicit transfer matrix~\eqref{eq:transfer}, the two image 
vectors are $v_1 = \tfrac{1}{4}(P_{01} + P_{03} + P_{12} + P_{23})$ 
and $v_2 = \tfrac{1}{4}(P_{02} + P_{13}) + 
\tfrac{1}{8}(P_{01} + P_{03} + P_{12} + P_{23})$. Apply 
$\Phi_2^{W_4}$ to each using~\eqref{eq:transfer}:
\begin{align*}
  \Phi_2[v_1]
  &= \tfrac{1}{4}[\Phi_2[P_{01}] + \Phi_2[P_{03}] + \Phi_2[P_{12}] 
  + \Phi_2[P_{23}]] \\
  &= \tfrac{1}{4}[v_1 + v_2 + v_2 + v_1] 
  = \tfrac{1}{2} v_1 + \tfrac{1}{2} v_2.
\end{align*}
One checks directly that 
$\tfrac{1}{2} v_1 + \tfrac{1}{2} v_2 = 
\tfrac{3}{16}(P_{01} + P_{03} + P_{12} + P_{23}) + 
\tfrac{1}{8}(P_{02} + P_{13})$, while 
$\tfrac{1}{4} v_1 + \tfrac{1}{8} F$ equals the same expression 
(expanding $v_1$ and $F$). Similarly 
$\Phi_2[v_2] = \tfrac{1}{2} v_1 + \tfrac{1}{2} v_2$ and 
$\Phi_2[v_2] = \tfrac{1}{4} v_2 + \tfrac{1}{8} F$. 
Hence~\eqref{eq:key_formula} holds for both types at $n = 4$.

\textit{General $n = 2^K$.}
We prove~\eqref{eq:key_formula} for all $K$ by diagonalizing
$\Phi_2^{W_n}$ on $\mathcal{A} := \mathrm{span}\{P_{ij} : i<j\}$ in closed
form, exploiting the translation symmetry of the butterfly. Write
$g^{(\ell)} := \Phi_2^{(\ell)}$ for the single-layer operator.

\smallskip
\noindent\emph{(i) Translation symmetry and Fourier blocks.}
Identify the modes with $(\mathbb{Z}_2)^K$ via their $K$-bit labels, so
layer $\ell$ pairs $i \leftrightarrow i \oplus e_{\ell-1}$, with $e_b$ the
$b$-th coordinate vector ($b = 0, \ldots, K-1$). For
$c \in (\mathbb{Z}_2)^K$, the relabeling $\pi_c : i \mapsto i \oplus c$ maps
each layer's pair partition to itself; since the gates are i.i.d.\ across
pairs, the law of $W_n$ is invariant under $\pi_c$, so $\Phi_2^{W_n}$
commutes with conjugation by $\pi_c \otimes \pi_c$ for every $c$.
Parametrize a pair by base point and difference,
$P_{(i,\delta)} := P_{\{i,\,i\oplus\delta\}}$ with
$\delta \in (\mathbb{Z}_2)^K \setminus \{0\}$; then $\pi_c$ sends
$P_{(i,\delta)} \mapsto P_{(i\oplus c,\,\delta)}$. Diagonalizing the
commuting $(\mathbb{Z}_2)^K$-action, set
\[
  \hat{P}_{\delta,y} := \sum_i (-1)^{y\cdot i}\, P_{(i,\delta)},
  \qquad y \in (\mathbb{Z}_2)^K, \quad y\cdot\delta = 0
\]
(the constraint makes the summand well defined under
$i \sim i\oplus\delta$). Since $\pi_c$ scales $\hat{P}_{\delta,y}$ by
$(-1)^{y\cdot c}$, the operator $\Phi_2^{W_n}$ is block-diagonal in $y$,
preserving each $\mathcal{B}_y := \mathrm{span}\{\hat{P}_{\delta,y} :
\delta \neq 0,\ y\cdot\delta = 0\}$.

\smallskip
\noindent\emph{(ii) Blocks with $y \neq 0$ vanish.}
Fix layer $\ell$ and write $e = e_{\ell-1}$, $\eta = y\cdot e$. In
base-difference form, Facts~\ref{fact:same}--\ref{fact:cross} read
\[
  g^{(\ell)}[P_{(i,\delta)}] =
  \begin{cases}
    P_{(i,e)}, & \delta = e,\\[3pt]
    \frac{P_{(i,\delta)} + P_{(i\oplus e,\delta)}
      + P_{(i,\delta\oplus e)} + P_{(i\oplus e,\delta\oplus e)}}{4},
       &\delta \neq e.
  \end{cases}
\]
Summing against $(-1)^{y\cdot i}$ and shifting the summation index gives,
for $\delta \neq e$,
\[
  g^{(\ell)}[\hat{P}_{\delta,y}]
  = \tfrac14\!\left(1 + (-1)^{\eta}\right)\hat{P}_{\delta,y}
  + \tfrac14\!\left(1 + (-1)^{\eta}\right)\hat{P}_{\delta\oplus e,\,y},
\]
and $g^{(\ell)}[\hat{P}_{e,y}] = \hat{P}_{e,y}$. Hence if
$\eta = y\cdot e_{\ell-1} = 1$, layer $\ell$ annihilates all of
$\mathcal{B}_y$. Every $y \neq 0$ has a coordinate with
$y\cdot e_{\ell-1} = 1$, and every layer preserves $\mathcal{B}_y$;
therefore $\Phi_2^{W_n}|_{\mathcal{B}_y} = 0$ for all $y \neq 0$. All
non-zero eigenvalues of $\Phi_2^{W_n}$ thus lie in
$\mathcal{B}_0 = \mathrm{span}\{S_\delta : \delta \neq 0\}$, where
$S_\delta := \hat{P}_{\delta,0} = \sum_i P_{(i,\delta)}$.

\smallskip
\noindent\emph{(iii) The block $y = 0$.}
On $\mathcal{B}_0$ every layer has $\eta = 0$, so (with $S_0 := 0$)
\[
  g^{(\ell)}[S_\delta] =
  \begin{cases}
    S_{e_{\ell-1}}, & \delta = e_{\ell-1},\\[3pt]
    \tfrac12 S_\delta + \tfrac12 S_{\delta\oplus e_{\ell-1}},
      & \delta \neq e_{\ell-1}.
  \end{cases}
\]
We prove the following by induction on $K$.

\smallskip
\noindent\textbf{Claim $H(K)$.}\ \emph{On $\mathcal{B}_0$ ($\dim = n-1$),
$\Phi_2^{W_n}$ has rank $K$; its non-zero spectrum consists of a simple
eigenvalue $1$ with eigenvector $F_n = \sum_{\delta\neq0} S_\delta$ (the
unique image vector of non-zero coefficient sum), together with the
eigenvalue $1/n$ of multiplicity $K-1$ (all of whose eigenvectors have
coefficient sum zero).}

\smallskip
Write $\tilde\sigma\!\left(\sum_\delta c_\delta S_\delta\right) := \sum_\delta
c_\delta$ for the coefficient sum in the $S_\delta$ basis (distinct from the
$P_{ij}$-basis coefficient sum $\sigma$ defined above: since each
$S_\delta = \sum_i P_{(i,\delta)}$ collects $n/2$ pairs, one has
$\sigma = (n/2)\,\tilde\sigma$ on $\mathcal{B}_0$). Each $g^{(\ell)}$ preserves
$\tilde\sigma$ (this is the row-sum statement of
Lemma~\ref{lem:doubly_stochastic}), so
$\tilde\sigma\circ\Phi_2^{W_n} = \tilde\sigma$, and every eigenvector with
eigenvalue $\neq 1$ has $\tilde\sigma = 0$.

The base case $H(1)$ ($n = 2$) is immediate:
$\mathcal{B}_0 = \mathbb{C}\,S_{e_0}$ and $g^{(1)}[S_{e_0}] = S_{e_0} = F_2$.
For the inductive step, split $(\mathbb{Z}_2)^K$ by the top bit and write
$W_n = U_K \circ (W_A \oplus W_B)$, so $\Phi_2^{W_n} = g^{(K)} \circ \Psi$
with $\Psi := g^{(K-1)} \cdots g^{(1)}$. Let $\iota_b$ embed a lower
difference $\delta'\in(\mathbb{Z}_2)^{K-1}$ as $(\delta', b)$. The inner
layers act only on the low $K-1$ bits, and their same-pair exception
requires top bit $0$; hence $\Psi[\iota_0 x] = \iota_0\,\Phi_2^{W_{n/2}}[x]$
(the lower butterfly), while on the top-bit-$1$ sector no exception fires
and the inner layers average the low bits to uniform, giving
$\Psi[S_{(\delta',1)}] = \tfrac{2}{n}\,u_1$ for every $\delta'$, where
$u_1 := \sum_{\delta'} S_{(\delta',1)}$. Applying $g^{(K)}$ (which flips the
top bit and fixes only $S_{e_{K-1}} = S_{(0,1)}$) yields, for
$x \in \mathcal{B}_0^{(n/2)}$,
\begin{align}
  \Phi_2^{W_n}[\iota_0 x]
    &= \tfrac12 (\iota_0 + \iota_1)\,\Phi_2^{W_{n/2}}[x],
    \label{eq:rec_v0}\\
  \Phi_2^{W_n}[S_{(\delta',1)}]
    &= w := \tfrac1n\big(F_n + S_{e_{K-1}}\big) \quad (\forall \delta'),
    \label{eq:rec_v1}
\end{align}
the second line using $g^{(K)}[u_1] = \tfrac12 S_{e_{K-1}} + \tfrac12 u_1
+ \tfrac12 \iota_0 F_{n/2}$ and $u_1 + \iota_0 F_{n/2} = F_n$.

Each lower eigenvector $x$ with eigenvalue $\mu \neq 1$ has
$\tilde\sigma(x) = 0$, so $\Phi_2^{W_n}[\iota_1 x] = \tilde\sigma(x)\,w = 0$
by~\eqref{eq:rec_v1}, and then~\eqref{eq:rec_v0} makes
$(\iota_0+\iota_1)x$ an eigenvector with eigenvalue $\mu/2$. By $H(K-1)$
the lower eigenvalues $\mu \neq 1$ are $2/n$ (multiplicity $K-2$) and $0$;
these lift to eigenvalue $1/n$ (multiplicity $K-2$) and $0$. On the
remaining two-dimensional space $\mathrm{span}\{p, w\}$, with
$p := (\iota_0+\iota_1) F_{n/2}$, equations~\eqref{eq:rec_v0}--%
\eqref{eq:rec_v1} together with $\Phi_2^{W_{n/2}}[F_{n/2}] = F_{n/2}$,
$\tilde\sigma(F_{n/2}) = n/2 - 1$, $\Phi_2^{W_n}[S_{e_{K-1}}] = w$, and
$w = \tfrac1n(p + 2S_{e_{K-1}})$ give
\[
  \Phi_2^{W_n}[p] = \tfrac12 p + \big(\tfrac n2 - 1\big)w,
  \qquad
  \Phi_2^{W_n}[w] = \tfrac1{2n} p + \big(\tfrac12 + \tfrac1n\big)w.
\]
This $2\times2$ matrix has trace $1 + 1/n$ and determinant $1/n$, hence
eigenvalues $1$ and $1/n$; the eigenvalue-$1$ vector is
$\tfrac12 p + \tfrac n2 w = F_n$. Since
$\mathrm{Im}(\Phi_2^{W_n}) = (\iota_0+\iota_1)\,\mathrm{Im}(\Phi_2^{W_{n/2}})
+ \mathbb{C}\,w$ is spanned by $p$, the $K-2$ lifts of the lower
$2/n$-eigenvectors, and $w$ (dimension $K$), and decomposes into the
$\Phi_2^{W_n}$-invariant pieces $\mathrm{span}\{p,w\}$ and the
$(\iota_0+\iota_1)$-lifts, collecting eigenvalues proves $H(K)$: eigenvalue
$1$ once, $1/n$ with multiplicity $(K-2)+1 = K-1$, and $0$ elsewhere.

\smallskip
\noindent\emph{(iv) The Key Formula.}
By $H(K)$ the image $\mathcal{I} = \mathrm{span}\{f_1,\ldots,f_K\}$ splits
as $\mathbb{C}F_n \oplus \mathcal{I}_0$, where $\mathcal{I}_0 =
\{v \in \mathcal{I} : \sigma(v) = 0\}$ is the eigenvalue-$1/n$ eigenspace.
For $v \in \mathcal{I}$, writing $v = \tfrac{\sigma(v)}{\sigma(F_n)} F_n
+ v_0$ with $v_0 \in \mathcal{I}_0$ and using $\sigma(F_n) = \binom n2$,
\begin{align*}
  \Phi_2^{W_n}[v] &= \tfrac{\sigma(v)}{\sigma(F_n)} F_n + \tfrac1n v_0
  = \tfrac1n v + \Big(1 - \tfrac1n\Big)\frac{\sigma(v)}{\binom n2}\,F_n \\
  &= \tfrac1n v + \tfrac{2}{n^2}\,\sigma(v)\,F_n.
\end{align*}
Each $f_t = \Phi_2^{W_n}[P_{ij}]$ for a type-$t$ pair has
$\sigma(f_t) = \sigma(P_{ij}) = 1$, so
$\Phi_2^{W_n}[f_t] = \tfrac1n f_t + \tfrac{2}{n^2} F_n$, which
is~\eqref{eq:key_formula}, now established in closed form for all
$n = 2^K$.

\smallskip
\noindent\emph{(v) Conclusion.}
By $H(K)$ the only eigenvalue of $\Phi_2^{W_n}$ on $\mathcal{A}$ exceeding
$1/n$ is the simple eigenvalue $1$ at
$F_n = \mathbf{I}\otimes\mathbf{I} - \mathrm{SWAP}$; the blocks
$y \neq 0$ contribute only $0$. Each eigenvalue-$1/n$ eigenvector $v$ has
$\sigma(v) = 0$, hence $\mathrm{tr}(v) = 2\sigma(v) = 0$, and from
$v = n\,\Phi_2^{W_n}[v]$ with Lemma~\ref{lem:ptr_app} we get
$\mathrm{Tr}_1(v) = \mathrm{tr}(v)\,I_n = 0$ (and likewise
$\mathrm{Tr}_2$); thus $v \in \mathcal{T}_n$. Since $F_n \notin
\mathcal{T}_n$ (indeed $\mathrm{Tr}_1 F_n = (n-1)I_n$), every non-trivial
eigenvalue in $\mathcal{T}_n$ equals $1/n$, i.e.\
$\lambda_{\max}(\Phi_2^{W_n}|_{\mathcal{T}_n}) = 1/n$.
\end{proof}

This completes the proof of Theorem~\ref{thm:spectral_gap_app}.

\begin{proof}[Proof of Corollary~\ref{cor:2design_app}]
The spectral gap $1 - 1/n$ between the fixed-point space
$\mathrm{span}\{P_A\}$ and all remaining
eigenvalues on the antisymmetric sector implies directly that
$\|\Phi_2^{W_n,A} - \Phi_2^{\mathrm{Haar},A}\|_{\mathrm{op}} \leq 1/n$,
since $\Phi_2^{\mathrm{Haar},A}$ is the projector onto the same
fixed-point space (on the antisymmetric sector the Haar second moment
fixes only $P_A$, as $\mathrm{span}\{I_{n^2},\mathrm{SWAP}\} \cap
\{P_A Y P_A\} = \mathbb{C} P_A$) and the
two operators differ only on the traceless part of the sector, where the
butterfly's
operator has spectral norm $1/n$. This is the definition of an
$\varepsilon$-approximate $2$-design with $\varepsilon = O(1/n)$ on the
sector.
\end{proof}

\end{document}